\theoremstyle{plain}
\newtheorem{theorem}{Theorem}[section]
\newtheorem{lemma}[theorem]{Lemma}
\newtheorem{cor}[theorem]{Corollary}
\theoremstyle{definition}
\newtheorem{definition}[theorem]{Definition}
\newtheorem{claim}[theorem]{Claim}
\newtheorem{remark}[theorem]{Remark}
\newtheorem{fact}[theorem]{Fact}
\newtheorem{task}[theorem]{Task}
\newcommand {\eps} {\varepsilon}
\newcommand {\br} [1] {\ensuremath{ \left( #1 \right) }}
\newcommand {\Br} [1] {\ensuremath{ \left[ #1 \right] }}
\newcommand {\minusspace} {\: \! \!}
\newcommand {\smallspace} {\: \!}
\newcommand {\fn} [2] {\ensuremath{ #1 \minusspace \br{ #2 } }}
\newcommand {\Fn} [2] {\ensuremath{ #1 \minusspace \Br{ #2 } }}
\newcommand {\ball} [2] {\fn{\mathcal{B}^{#1}}{#2}}
\newcommand {\defeq} {\ensuremath{ \stackrel{\mathrm{def}}{=} }}
\newcommand {\mutinf} [2] {\fn{\mathrm{I}}{#1 \smallspace : \smallspace #2}}
\newcommand {\imax} [2] {\fn{\mathrm{I}_{\max}}{#1 \smallspace : \smallspace #2}}
\newcommand {\imaxeps} [2] {\fn{\mathrm{I}^{\varepsilon}_{\max}}{#1 \smallspace : \smallspace #2}}
\newcommand {\imaxdelta} [2] {\fn{\mathrm{I}^{\delta}_{\max}}{#1 \smallspace : \smallspace #2}}
\newcommand {\condmutinf} [3] {\mutinf{#1}{#2 \smallspace \middle\vert \smallspace #3}}
\newcommand {\prob} [1] {\Fn{\Pr}{#1}}
\newcommand {\abs} [1] {\ensuremath{ \left| #1 \right| }}
\newcommand {\norm} [1] {\ensuremath{ \left\| #1 \right\| }}
\newcommand {\normsub} [2] {\ensuremath{ \norm{#1}_{#2} }}
\newcommand {\onenorm} [1] {\normsub{#1}{1}}
\newcommand {\ent} [1] {\fn{\mathrm{H}}{#1}}
\newcommand {\relent} [2] {\fn{\mathrm{D}}{#1 \middle\| #2}}
\newcommand {\dmax} [2] {\fn{\mathrm{D}_{\max}}{#1 \middle\| #2}}
\newcommand {\dmaxeps} [2] {\fn{\mathrm{D}^{\eta}_{\max}}{#1 \middle\| #2}}
\newcommand {\dmaxepss} [3] {\fn{\mathrm{D}^{#3}_{\max}}{#1 \middle\| #2}}
\DeclareMathOperator*{\bigE}{\mathbb{E}}
\newcommand {\expec} [2] {\Fn{\bigE_{\substack{#1}}}{#2}}
\newcommand {\email} [1] {\href{mailto:#1}{\texttt{#1}}}
\newcommand {\bra} [1] {\ensuremath{ \left\langle #1 \right| }}
\newcommand {\ket} [1] {\ensuremath{ \left| #1 \right\rangle }}
\newcommand {\ketbratwo} [2] {\ensuremath{ \left| #1 \middle\rangle \middle\langle #2 \right| }}
\newcommand {\ketbra} [1] {\ketbratwo{#1}{#1}}
\newcommand {\cspace} [1] {\ensuremath{\mathnormal{#1}}}
\newcommand {\Tr} {\ensuremath{ \mathrm{Tr} }}
\newcommand {\partrace} [2] {\fn{\Tr_{#1}}{#2}}
\newcommand {\id} {\ensuremath{\mathrm{I}}}
\newcommand {\suppress}[1]{}
\newcommand {\set} [1] {\ensuremath{ \left\lbrace #1 \right\rbrace }}
\newcommand {\reg} [1] {\ensuremath{ \mathnormal{#1}}}
\newcommand {\Typ}{\mathsf{Typ}}
\def\X{\CMcal{X}}
\def\Y{\CMcal{Y}}
\def\B{\CMcal{B}}
\def\cD{\mathcal{D}}
\def\N{\mathcal{N}}
\def\G{\CMcal{G}}
\def\P{\mathrm{P}}
\def\Q{\CMcal{Q}}
\def\T{\CMcal{T}}
\def\O{\CMcal{O}}
\def\H{\mathcal{H}}
\def\cL{\mathcal{L}}
\def\F{\mathrm{F}}
\def\cF{\mathcal{F}}
\def\inf{\mathrm{inf}}
\def\max{\mathrm{max}}
\def\cI{\mathcal{I}}
\def\EE{\mathbb{E}}
\def\E{\mathcal{E}}
\def\M{\CMcal{M}}
\newcommand{\etaa}{\nu}
\newcommand{\Pos}{\cL_{\geq 0}}
\newcommand{\braket}[2]{\langle#1|#2\rangle}
\newcommand {\mytitle} {Expected communication cost of distributed quantum tasks}
\newcommand {\Anurag} {Anurag Anshu}
\newcommand{\Ankit} {Ankit Garg}
\newcommand{\Aram} {Aram W.~Harrow}
\newcommand{\Penghui} {Penghui Yao}
\newcommand {\CQT} {Centre for Quantum Technologies}
\newcommand {\NUS} {National University of Singapore}
\newcommand {\PRI} {Princeton University}
\newcommand {\MIT} {Center for Theoretical Physics, Massachusetts Institute of Technology}
\newcommand {\QUICS} {Joint Center for Quantum Information and Computer Science, University of Maryland}
\newcommand {\authorblock} [3] {
	\begin{minipage}[t]{0.2\linewidth}
		\centering
		{#1}\\[0.8ex]
		{\footnotesize {#2}\\[-0.7ex]
		\email{#3}}
	\end{minipage}\vspace{1ex}
}
\begin{document}
\begin{titlepage}
\title{\textbf{\mytitle}\\[2ex]}

\suppress{
\author{
    \authorblock{\Anurag}{\CQT, \NUS}{a0109169@u.nus.edu}
	\authorblock{\Ankit}{\PRI}{garg@cs.princeton.edu}
	\authorblock{\Aram}{\MIT}{aram@mit.edu}
	\authorblock{\Penghui}{\QUICS}{phyao1985@gmail.com}
}
}
\author{
Anurag Anshu\footnote{Centre for Quantum Technologies, National University of Singapore, Singapore. \texttt{a0109169@u.nus.edu}} \qquad
Ankit Garg\footnote{Microsoft Research New England, USA. \texttt{garga@microsoft.com}} \qquad 
Aram W.~Harrow\footnote{Center for Theoretical Physics, Massachusetts Institute of Technology, USA. \texttt{aram@mit.edu}} \qquad Penghui Yao\footnote{Joint Center for Quantum Information and Computer Science, University of Maryland, USA \texttt{phyao1985@gmail.com}} 
\footnote{Part of this work appeared in proceedings of the 11th Conference on the Theory of Quantum Communication and Cryptography (TQC 2016)}}

\clearpage\maketitle
\thispagestyle{empty}

\begin{abstract}
A central question in classical information theory is that of source compression,  which is the task where Alice receives a sample from a known probability distribution and needs to transmit it to the receiver Bob with small error. This problem has a one-shot solution due to Huffman, in which the messages are of variable length and the expected length of the messages matches the asymptotic and i.i.d. compression rate of the Shannon entropy of the source. 
 
In this work, we consider a quantum extension of above task, where Alice receives a sample from a known probability distribution and needs to transmit a part of a pure quantum state (that is associated to the sample) to Bob. We allow entanglement assistance in the protocol, so that the communication is possible through classical messages, for example using quantum teleportation. The classical messages can have a variable length and the goal is to minimize their expected length. We provide a characterization of the expected communication cost of this task, by giving a lower bound that is near optimal up to some additive factors. 

A special case of above task, and the quantum analogue of the source compression problem, is when Alice needs to transmit the whole of her pure quantum state. Here we show that there is no one-shot interactive scheme which matches the asymptotic and i.i.d. compression rate of the von Neumann entropy of the average quantum state. This is a relatively rare case in quantum information theory where the cost of a quantum task is significantly different from its classical analogue. Further, we also exhibit similar results for the fully quantum task of quantum state redistribution, employing some different techniques. We show implications for the one-shot version of the problem of quantum channel simulation.

\end{abstract}
\end{titlepage}
\section{Introduction}\label{sec:introduction}

Shannon, in his celebrated work \cite{Shannon}, initiated the idea of source compression by showing that in the asymptotic and i.i.d. setting  (where i.i.d. refers to independent and identically distributed), compression could be achieved up to the \textit{Shannon entropy} of the message source. The framework introduced in his work has led to the field of `information theory', which encapsulates various revolutionary ideas such as error-correcting codes, cryptography and noisy transmission. In the past few decades, information theory has also permeated into physics, a well studied consequence of which is the quantum information theory. 

\subsection*{Some fundamental tasks in quantum information theory}

A first example of quantum source compression was given by Schumacher~\cite{Schumacher95}, who showed that quantum source could be compressed to the \textit{von Neumann} entropy of the source in the asymptotic and i.i.d. setting. Since then, a large family of quantum source compression tasks and their compression schemes have been discovered. The communication task considered by Schumacher in~\cite{Schumacher95}, which is fundamental to all the subsequent tasks, can be formulated as a problem of {\em classical-quantum state transfer}. We informally define it below in the one-shot setting (which drops the asymptotic and i.i.d. assumption) and point out that it is a \textit{classical-quantum} task, where the communicating parties receive a classical input from an external source.

\vspace{0.1in}

\noindent\textbf{Classical-quantum state transfer:}  Alice (henceforth the sender) receives an input $x$ with probability $p(x)$ associated with a pure quantum state $\ketbra{\Psi^x}$, where $p\br{\cdot}$ is a distribution over a finite set $\X$ and $x\in\X$. The goal is that Bob (henceforth the receiver) outputs a quantum state $\Phi^x$ such that the distance between $\Phi^x$ and $\ketbra{\Psi^x}$, averaged over $x$, is smaller than $\eta \in (0,1)$.

\vspace{0.1in}

The distance between the quantum states will be measured in terms of the purified distance, formally defined in Section \ref{sec:preliminaries}. As mentioned above, the \textit{worst case} quantum communication cost for this task in the asymptotic and i.i.d. setting is characterized by $S(\sum_xp(x)\Psi_x)$~\cite{Schumacher95}, where $S(.)$ denotes the von Neumann entropy. Another example of a classical-quantum task concerns the distribution of a pure bipartite quantum state between Alice and Bob. This task can be viewed as a classical-quantum analogue of the task of quantum state splitting~\cite{AbeyesingheDHW09} and is an extension of the task of classical-quantum state transfer.

\vspace{0.1in}
  
\noindent\textbf{Classical-quantum state splitting:} Alice receives an input $x$ with probability $p(x)$ associated with a bipartite pure quantum state $\ket{\Psi^x}_{A'C}$. The goal for Alice and Bob is to share a state $\Phi^x_{A'C}$ with Alice holding $A'$ and Bob holding $C$, such that the distance between $\Phi^x_{A'C}$ and $\ketbra{\Psi^x}_{A'C}$, averaged over $x$, is smaller than 
$\eta \in (0,1)$.

\vspace{0.1in}

For this task, the worst case quantum communication cost in the asymptotic and i.i.d. setting is characterized by the {\em Holevo information}: $S(\sum_xp(x)\Psi^x_C) - \sum_xS(\Psi^x_C)$~\cite{Holevo73}. 

A natural setting for the above two classical-quantum tasks is quantum communication complexity. Here, the communicating parties Alice and Bob receive inputs $x \in \X$ and $y \in \Y$ and their goal is to compute a joint function $f(x,y)$. They are allowed to communicate quantum messages (the amount of which they wish to minimize) and may or may not have pre-shared entanglement. The two tasks correspond to the instances of a quantum communication protocol where Alice sends her first quantum message to Bob conditioned on her input $x$. 


A limitation of the classical-quantum tasks is that they do not completely capture the information theoretic properties of the quantum systems. This is achieved by the \textit{fully quantum} or coherent quantum tasks, where Alice and Bob are required to maintain quantum correlation with the environment (which we shall refer to as the Reference system). A well known example of such tasks is quantum state merging \cite{horodecki07}, which provided the first operational interpretation to the negativity of quantum conditional information. A generalization of this is the task of \textit{quantum state redistribution},  which was first introduced in \cite{Devatakyard,YardD09} and applied to the setting of quantum communication complexity in \cite{Dave14}. 


\vspace{0.1in}

\noindent\textbf{Quantum state redistribution:} Alice (A,C), Bob(B) and Reference(R) share a joint pure quantum state $\ket{\Psi}_{RBCA}$. Alice needs to transfer the register $C$ to Bob such that the final state between Alice (A), Bob (B,C) and Reference (R) is $\Psi'_{RBCA}$. It is required that the distance between $\Psi'_{RBCA}$ and $\Psi_{RBCA}$ is smaller than $\eps\in (0,1)$.

\vspace{0.1in}

The worst case communication cost of this task in the asymptotic and i.i.d. setting is characterized by the {\em conditional quantum mutual information} $\condmutinf{C}{R}{B}_{\Psi}$~\cite{Devatakyard,YardD09}. All of the above tasks are illustrated in Figure~\ref{fig:stateredist}.

\begin{figure}[ht]
\centering
\begin{tikzpicture}[xscale=0.9,yscale=1.1]

\draw[ultra thick] (-5,11) rectangle (12,1);

\draw[thick] (3.5,11) -- (3.5,6);

\node at (-3,10) {\fbox{\small$\{p(x),\ket{\Psi^x}\}_x$}};
\node at (-3,9) {Alice};
\node at (2,9) {Bob};
\draw[->] (-2,9.7) -- (-2,8.8);
\node at (-1.8,9.2) {$x$};
\draw[->] (2,8.0) -- (2,7.3);
\node at (2,7) {\small\fbox{$\Phi^x\overset{\eta}\approx \Psi^x$}};
\draw[->] (-1.5,8.5) -- (0.7,8.5);
\draw[->] (0.7,8.3) -- (-1.5,8.3);
\draw[->] (-1.5,8.1) -- (0.7,8.1);

\node at (5.5,10) {\fbox{\small$\{p(x),\ket{\Psi^x}_{A'C}\}_x$}};
\node at (5.5,9) {Alice};
\node at (10.5,9) {Bob};
\draw[->] (6.5,9.7) -- (6.5,8.8);
\node at (6.7,9.2) {$x$};

\draw[->](5.8,7.7)--(5.8,7.0);
\draw[->] (10.4,7.7) -- (10.4,7.0);
\draw[thick] (5.5,6.7) rectangle (10.5,6.9);
\node at (5.2,6.8) {$A'$};
\node at (10.8,6.8) {$C$};
\node at (8.5,7.2) {\small$\Phi^x_{A'C}\overset{\eta}\approx \Psi^x_{A'C}$};
\draw[->] (7, 8.5) -- (9.2,8.5);
\draw[->] (9.2,8.3) -- (7,8.3);
\draw[->] (7,8.1) -- (9.2,8.1);

\draw[thick] (-5, 6) -- (12,6);

\draw[thick] (-0.5,5) -- (-2,3) -- (-1.7,2) -- (1,2.5) -- (-0.5,5);
\node at (-0.2,5) {$R$};
\node at (-2.2,3) {$A$};
\node at (-2,2) {$C$};
\node at (1.2,2.5) {$B$};
\node at (-0.5,3) {$\ket{\Psi}$};

\node at (-0.7,5.6) {Reference};
\node at (-2.8,1.6) {Alice};
\node at (1.4,1.6) {Bob};

\draw[->, thick] (2.3,3.5) -- (4.2,3.5);

\draw[thick] (7.5,5) -- (6,2.5) -- (8.8,2) -- (9,3) -- (7.5,5);
\node at (7.8,5) {$R$};
\node at (5.8,2.5) {$A$};
\node at (9.2,3) {$B$};
\node at (9.1,2) {$C$};
\node at (7.8,3) {$\Psi' \overset{\eps}\approx \Psi$};

\node at (7.3,5.6) {Reference};
\node at (5.4,1.6) {Alice};
\node at (9.6,1.6) {Bob};

\end{tikzpicture}
\caption{\small The top left task is classical-quantum state transfer and the top right task is classical-quantum state splitting. The bottom task is quantum state redistribution. In all the cases, Alice and Bob are allowed to pre-share arbitrary entanglement and perform interactive communication.}
 \label{fig:stateredist}
\end{figure}
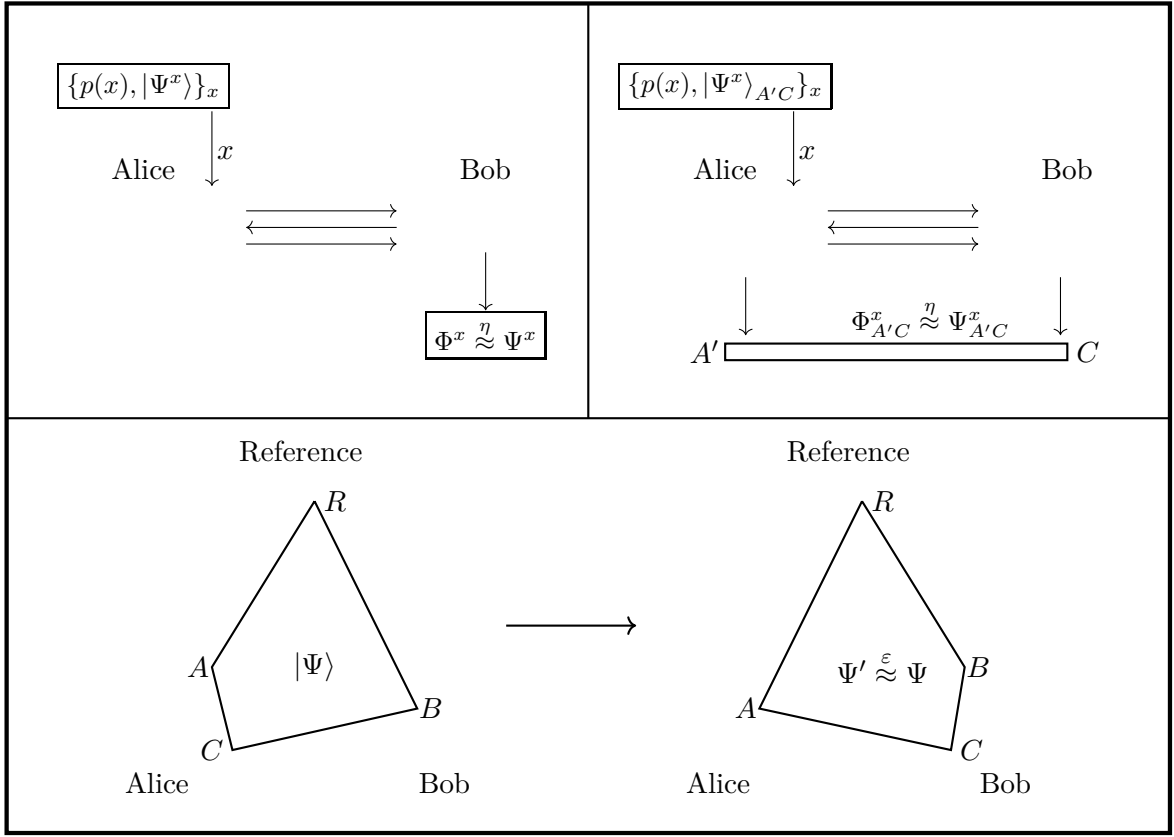

Coherent quantum tasks such as quantum state merging~\cite{Horodecki2005,horodecki07} have also found applications to the problems related to quantum channel coding, through the notion of quantum channel simulation. This task is informally defined as follows.

\vspace{0.1in}

\noindent\textbf{Entanglement-assisted quantum channel simulation:} Given a quantum channel $\E$ with input register $A$ and output register $B$, the goal of Alice and Bob is to simulate the action of $\E$. That is, if Alice is given an input quantum state $\rho_A$, Bob must output a quantum state $\sigma_B$ such that the distance between $\sigma_B$ and $\E(\rho_A)$ is smaller than $\eta\in (0,1)$.
\vspace{0.1in}

The Quantum Reverse Shannon Theorem~\cite{Bennett14, Renner11} states that using the pre-shared entanglement, Alice and Bob can simulate the action of $n$ copies of $\E$ using a number of bits equal to $n$ times the entanglement-assisted classical capacity of $\E$, as $n\rightarrow \infty$. 

\subsection*{Two communication costs of a protocol and their relevance}

There are two measures of the communication cost of a classical or quantum protocol. The worst case communication cost measures the total number of bits or qubits exchanged between the players. The \textit{expected communication cost} measures the expected number of bits that are exchanged between the players. For the case of classical protocols that involve public or private randomness, the expectation is over the distribution of the inputs and all the randomness involved in the protocol. It can also be similarly defined in the classical-quantum or the fully-quantum case, by using quantum teleportation \cite{BennettBCJPW:1993} to convert quantum messages into classical messages and taking expectation over the measurement outcomes. 

The first demonstration of the usefulness of the notion of expected communication cost was shown by Huffman \cite{Huffman52}. For the task of communicating a sample from a probability distribution $p$, he proved that by encoding each message into a codeword of different length, one could construct a code with expected length at most $H(p)+1$. Here, the Shannon entropy $H(\cdot)$ is the worst case communication cost for this task in the asymptotic and i.i.d. setting \cite{Shannon}. This, remarkably, led to an operational interpretation of the Shannon entropy of a source in the one-shot setting. In the subsequent works \cite{HJMR10, bravermanrao11}, a number of elegant one-shot communication protocols were discovered that achieved the near-optimal expected communication costs for their respective tasks. Not only are these results significant in information theory, they have been proved to be useful in classical communication complexity, as discussed below.  

In classical communication complexity, it is very natural to consider interactive protocols, since interaction gives more ability to solve a function~\cite{PONZIO2001323}. This comes at a price that the interactive protocols are hard to compress up to their overall information content (which is also known as the information complexity \cite{braverman12}). To understand this difficulty, observe that compression of any one round protocol incurs an error, which can accumulate over the rounds. If one needs to compress an $r$ round protocol in a round by round fashion, the error in each round must be below $O(1/r)$, which can lead to a large overhead in the worst case communication cost. The notion of expected communication cost serves as the right tool for overcoming this difficulty, as the dependence on error is much weaker and it composes well across several rounds. Indeed, using this notion, the aforementioned results \cite{HJMR10, bravermanrao11} have obtained important applications to direct sum problems in communication complexity. 

The notion of the expected communication cost has largely been left unexplored in the one-shot quantum domain. Protocols for various one-shot classical-quantum tasks \cite{Jain:2005, Jain:2008,AnshuJMSY2014} and fully quantum tasks \cite{Berta09, Renner11, Berta14, AJW17} have been investigated in the past two decades. However, all of these results only consider the worst case quantum communication cost. An exception is the work by Braunstein \textit{et.~al.~}\cite{Braunstein98}, which considered the classical-quantum state transfer task and noted several issues in generalizing directly the techniques of the classical Huffman coding scheme to the quantum setting. Still, recent progress in the field of quantum communication complexity, such as the direct sum result for bounded-round entanglement-assisted quantum communication complexity \cite{Dave14}, has raised the question of compressing quantum protocols in terms of the expected communication cost. 

\subsection*{Our results}

In this work, we investigate the possibility of having quantum protocols with small expected communication cost (that is, close to the worst case quantum communication cost in the asymptotic and i.i.d. setting) for the tasks of classical-quantum state transfer, classical-quantum state splitting and quantum state redistribution. Our main results are that no such compression scheme is possible for either of the three tasks. Using these results, we also give a no-go theorem for the one-shot version of the \textit{Quantum reverse shannon theorem}. 

We refer to a collection of pairs $\{\br{p(x),\Psi^x}\}_x$ as an \textit{ensemble} of states, where $x$ is drawn from a distribution $p\br{\cdot}$ over a domain $\X$. The results obtained in this manuscript are as follows, which hold in the model of entanglement-assisted quantum communication.

\begin{enumerate}
	\item In Section \ref{sec:oneway}, we give a near optimal characterization of the expected communication cost of the task of classical-quantum state splitting, for the protocols that achieve this task with a bounded number of rounds. More precisely, we show that the expected communication cost of any $r$-round interactive entanglement-assisted protocol is lower bounded by $Q(\eta, r)$ (a quantity defined in Definition \ref{def:tightcharac}, where $\eta$ is the average error). Furthermore, there exists a \textit{one-way} protocol that achieves the task of classical-quantum state splitting with the average error $O(\eta)$ and the expected communication cost $Q(\eta, r) + O(r\log Q(\eta, r))$. The additive factor of $O(r\log Q(\eta, r))$ arises due to the prefix-free encoding of the integers (as given in Fact \ref{prefixfree}).
	
	\item Specializing to the sub-case of classical-quantum state transfer, we show a large separation between the expected communication cost and the \textit{quantum information cost} (which is the worst case quantum communication cost in the asymptotic and i.i.d. setting). This shows there exists no coding scheme for quantum messages that performs as well as the Huffman coding scheme \cite{Huffman52} in the classical case. Our main result in Theorem \ref{thm:maingeneral} is as follows. 
	
	Fix some parameter $\delta \in (0,1)$ and an integer $d$. We construct an ensemble $\{\br{p(x),\ketbra{\Psi^x}}\}_x$ such that the quantum states $\ketbra{\Psi^x}$ belong to a $d$-dimensional Hilbert space and the index $x$ takes values over a set $\X$ of size $8d^7$. The important properties of this ensemble are as follows.
	\begin{enumerate}
		\item The quantum information cost, or the von Neumann entropy of the average state $\sum_xp(x)\ketbra{\Psi^x}$, is $\delta\log d + O(1)$. 
		\item Any one-way protocol achieving the classical-quantum state transfer of this ensemble with average error $\eta < \br{\frac{\delta}{8}}^2$ must communicate at least $(1-\delta)\log \br{d\delta} - O(1)$ bits. 
		\item Any interactive protocol achieving the classical-quantum state transfer of this ensemble with average error $\eta < \br{\frac{\delta}{8}}^4$ must communicate at least $\Omega\br{\frac{\log \br{d\delta}}{\log\log d}}$ bits. 
	\end{enumerate}
	Thus, arbitrarily large separations between the information cost and the expected communication cost can be obtained for small enough average error, by choosing a small enough $\delta$. For instance, setting $\delta = \frac{1}{\log d}$ and allowing an average error of at most $O(\frac{1}{\log^4 d})$, the information cost is a constant, whereas the expected communication cost is $\Omega\br{\frac{\log \br{d}}{\log\log d}}$. In contrast, the separation between the information cost and the expected communication cost for the Huffman coding scheme \cite{Huffman52} (which incurs no error) is at most by an additive factor of $1$.

Another property of our construction is that the number of bits of input given to Alice is $\approx 7\log d$. Thus, the lower bound on the expected communication cost is of the order of the input size. This may be contrasted with the well known exponential separations between information and communication \cite{GanorKR14,GanorKR15} and their recent quantum counterpart \cite{ATYY17}, where the lower bound on the communication cost is doubly exponentially smaller than the input size.
	
	\item We show the following result for the fully-quantum task of quantum state redistribution in Theorem \ref{thm:coherentmainagain}. We construct a pure quantum state $\ket{\Psi}_{RBCA}$ such that the expected communication cost of any interactive protocol (achieving the quantum state redistribution of above state with error $\eps$) is at least $\condmutinf{C}{R}{B}_{\Psi}\cdot \br{\frac{1}{\eps}}^{\Omega(1)}$. While the lower bounds on the classical-quantum tasks imply the same lower bounds on their fully-quantum counterparts (as it is harder to maintain coherence with a Reference system), Theorem \ref{thm:coherentmainagain} is stronger than Theorem \ref{thm:maingeneral}, as the former has no dependence on the number of rounds. Furthermore, the techniques used in the proof of Theorem \ref{thm:coherentmainagain} are different from those used in the proof of Theorem \ref{thm:maingeneral}.
	
	\item For the task of one-shot entanglement-assisted simulation of a quantum channel, we show in Theorem \ref{reverseshannonseparationgeneral} that the expected communication cost of simulating the quantum channel can be much larger than the entanglement-assisted classical capacity. This also implies a similar separation between the worst case communication cost of simulating the quantum channel and the entanglement-assisted capacity, as the expected communication cost is smaller than the worst case communication cost. We point out that this is in contrast with the corresponding classical result, since it was shown in \cite{HJMR10} that a classical channel can be simulated with an expected communication cost close to its channel capacity.  
\end{enumerate}

We point out that a part of this work has appeared in~\cite{AGHY16}, where one-way protocols for classical-quantum state transfer are studied (more specifically, Section \ref{sec:oneway}).  The present manuscript is substantially expanded including the treatment of interactive protocols for all the three tasks mentioned above, and a near optimal characterization of the expected communication cost of the task of classical-quantum state splitting. In addition, the present results have exponential improvements on the input size to Alice, which gives robustness to the results. Furthermore, an additional application to the one-shot quantum channel simulation is given.

\subsection*{Our techniques}

To illustrate the ideas in this manuscript, we consider a one-way protocol for the task of classical-quantum state transfer. All the quantum states appearing below are assumed to be in a Hilbert space of dimension $d$. We first show that for every message $i$ sent from Alice to Bob, there exists a quantum state $\sigma_i$ independent of $x$, such that the probability $p^x_i$ of this message conditioned on the input $x$ is upper bounded by $p^x_i\leq 2^{-\dmaxeps{\Psi^x}{\sigma_i}}$, where $\eta$ is the average error and $\dmaxeps{\cdot}{\cdot}$ is the {\em smooth max-relative entropy}. The quantity $Q(\eta, 1)$ essentially expresses the expected communication cost given this constraint. The protocol that achieves the expected communication cost of $Q(\eta,1)$ uses a variant of the convex split lemma \cite{ADJ14} (Lemma \ref{lem:convsplit}) and the classical-quantum rejection sampling approach of \cite{Jain:2005, Jain:2008}.

The upper bound on $p^x_i$ implies that the probability $p_i$ of a message $i$, averaged over $x$, is upper bounded as $p_i \leq \sum_x p(x) 2^{-\dmaxeps{\Psi^x}{\sigma_i}}$.  Next, we construct an ensemble $\{\br{p(x), \Psi^x}\}$ for which the quantity $\sum_x p(x) 2^{-\dmaxeps{\Psi^x}{\sigma_i}}$ is small, as a result of which the expected communication cost is large.

It can be verified that given any quantum state $\sigma_i$, and a random pure state $\ket{\Psi}$ chosen
according to the Haar measure, the smooth max-relative entropy ($=\dmaxeps{\Psi}{\sigma_i}$) attains a large value ($\approx
\log(d)$) with high probability. This suggests that the desired ensemble $\{(p(x), \Psi^x)\}_x$ might be constructed by choosing pure states according to the Haar measure, making the quantity $\sum_x p(x) 2^{-\dmaxeps{\Psi^x}{\sigma_i}}$ close to $\mathcal{O}(1)\cdot
2^{-\log(d)}$. This leads to the upper bound $p_i\leq \frac{\mathcal{O}(1)}{d}$ and hence the expected communication cost is at least $\log(d)-\mathcal{O}(1)$. However, this choice of the ensemble makes the von Neumann entropy of the average state $\sum_x p(x) \Psi^x$ equal to $\log d$, which is too large.

We remedy this issue by setting
$\ket{\Psi^x} = \sqrt{1-\delta}\ket{0}+\sqrt{\delta}\ket{x}$ for a parameter $\delta$, where
$\ket{0}$ is some fixed vector and $\ket{x}$ belongs to the $d-1$ dimensional subspace orthogonal to $\ket{0}$. We choose $\ket{x}$ according to the Haar measure in the $d-1$ dimensional subspace and show that the smooth max-relative entropy $\dmaxeps{\Psi^x}{\sigma}$ is still large ($\approx \log(d\delta)$ with high probability),  while the von Neumann entropy of the average state $\sum_x p(x) \Psi^x$ is $\approx \delta\log(d)$, which is much smaller than the expected communication cost.

The situation as above, where we can find an ensemble with small von Neumann entropy but large smooth max-relative entropy, does not arise when the quantum states in the ensemble are drawn from a fixed orthogonal basis (that is $\{\ket{\Psi^x}\}_x$ are mutually orthogonal). This explains why our lower bound technique does not apply to the classical case. 

For the task of quantum state redistribution, we start with an example of a quantum state $\Psi_{RABC}$ which has small quantum conditional mutual information $\condmutinf{R}{C}{B}_{\Psi}$. Given a communication protocol $\mathcal{P}$ with expected communication cost $C$ that achieves a quantum state redistribution for this quantum state, we apply a non-trace preserving operation on the register $R$ to obtain a new quantum state $\Phi_{RABC}$. This non-trace preserving operation serves to rescale the eigenvalues of $\Psi_R$. We show, exploiting the coherent quantum setting, that the protocol $\mathcal{P}$ still achieves the quantum state redistribution of $\Phi_{RABC}$, with a small increase in error and a bounded amount of the worst case quantum communication cost (upper bounded in terms of $C$). This is where we differ from the classical or classical-quantum settings, where applying such non-trace preserving maps may lead to a large worst case communication cost. To achieve the lower bound on $C$, it suffices to show that the state $\Phi_{RABC}$ requires a large worst case quantum communication cost for interactive protocols. For this, we use lower bounds for the task of quantum state redistribution as obtained in \cite{Berta14}. We find that the resulting lower bound is much larger than $\condmutinf{R}{C}{B}_{\Psi}$.

\section{Preliminaries}
\label{sec:preliminaries}

In this section we present some notations, definitions, facts and lemmas that we will use in our proofs.
\subsection*{Information theory}

For a natural number $n$, let $[n]$ represent the set $\{1,2, \dots,
n\}$. For a real $q>1$, let $[q]$ represent the set $[\lfloor q \rfloor]$. For a set $S$, let $|S|$ be the size of $S$. A \textit{tuple} is
a finite collection of positive integers, such as $(i_1,i_2\ldots
i_r)$ for some finite $r$. Let $\log$ represent the logarithm to the
base $2$ and $\ln$ represent the logarithm to the base $\mathrm{e}$. 

Consider a finite dimensional Hilbert space $\H$ endowed with an inner product $\langle \cdot, \cdot \rangle$ (in this paper, we only consider finite dimensional Hilbert spaces). The $\ell_1$ norm of an operator $X$ acting on $\H$ is
$\onenorm{X}\defeq\Tr\sqrt{X^{\dag}X}$; the $\ell_2$ norm is $\norm{X}_2\defeq\sqrt{\Tr XX^{\dag}}$ and the $\ell_{\infty}$ norm (spectral norm) $\norm{X}_{\infty}$ is the largest singular eigenvalue of $X$. A quantum state (or just a state) is a positive semi-definite matrix acting on $\H$ with trace equal to $1$. It is called {\em pure} if and only if the rank is $1$. Let $\ket{\psi}$ be a unit vector.  We use $\psi$ to represent the state
and also the density matrix  $\ketbra{\psi}$, associated with $\ket{\psi}$. 

A sub-normalized state is a positive semi-definite matrix with trace less than or equal to
$1$. A {\em quantum register} $A$ is associated with some Hilbert space $\H_A$. Define
$|A| \defeq \dim(\H_A)$. We denote by $\cD(\H_A)$, the set of quantum states in the
Hilbert space $\H_A$ and by $\cD_{\leq}(\H_A)$, the set of all sub-normalized states on
register $A$. Writing state $\rho$ with subscript $A$ indicates $\rho_A \in \cD(\H_A)$. The identity operator in Hilbert space $\H_A$ (and associated register $A$) is denoted by $\id_A$. The set of all linear operators associated to $\H_A$ is represented by $\cL(\H_A)$ and the set of all positive semi-definite operators is represented by $\Pos(\H_A)$.

 For two quantum states $\rho$ and $\sigma$, $\rho\otimes\sigma$ represents the tensor product (Kronecker product) of $\rho$ and $\sigma$.  The composition of two registers $A$ and $B$, denoted $AB$, is associated with the Hilbert space $\H_A \otimes \H_B$. If two registers $A,B$ are associated with the same Hilbert space, we shall denote it by $A\equiv B$. Let $\rho_{AB}$ be a bipartite quantum state in registers $AB$.  We define
\[ \rho_{\reg{B}} \defeq \partrace{\reg{A}}{\rho_{AB}}
\defeq \sum_i (\bra{i} \otimes \id_{\cspace{B}})
\rho_{AB} (\ket{i} \otimes \id_{\cspace{B}}) , \]
where $\set{\ket{i}}_i$ is an orthonormal basis for the Hilbert space $\cspace{A}$. The state $\rho_B$ is referred to as the marginal state of $\rho_{AB}$ in register $B$. Unless otherwise stated, a missing register from the subscript in a state will represent the partial trace over that register. A quantum map $\E: \cL(\H_A)\rightarrow \cL(\H_B)$ is a completely positive and trace preserving (CPTP) linear map (mapping states from $\mathcal{D}(\H_A)$ to states in $\mathcal{D}(\H_B)$). A completely positive and trace non-increasing linear map $\tilde{\E}:\cL(\H_A)\rightarrow \cL(\H_B)$ maps quantum states to sub-normalized states. A quantum measurement on a register $A$ is a collection of operators $\{M_1, M_2, \ldots\}$ such that $M_i \in \cL(\H_A)$ and $\sum_i M^{\dagger}_iM_i = \id_A$. If a quantum measurement is performed on a quantum state $\rho_A$, the post-measured state corresponding to the outcome $i$ is $\frac{M_i\rho_A M^{\dagger}_i}{\Tr(M_i\rho_A M^{\dagger}_i)}$. A {\em unitary} operator $U_A:\H_A \rightarrow \H_A$ is such that $U_A^{\dagger}U_A = U_A U_A^{\dagger} = \id_A$. An {\em isometry}  $V:\H_A \rightarrow \H_B$ is such that $V^{\dagger}V = \id_A$ and $VV^{\dagger} = \Pi_B$ where $\Pi_B$ is a projector on $\H_B$. The set of all unitary operations on register $A$ is  denoted by $\mathcal{U}(\H_A)$. 


\begin{definition}
We shall consider the following information theoretic quantities. Let $\varepsilon \in (0,1)$. 
\begin{enumerate}
\item {\bf Generalized fidelity:} (\cite{uhlmann76}, see also \cite{Tomamichel12}) For $\rho,\sigma \in \mathcal{D}_{\leq}(\H_A)$, $$\F(\rho,\sigma)\defeq \onenorm{\sqrt{\rho}\sqrt{\sigma}} + \sqrt{(1-\Tr(\rho))(1-\Tr(\sigma))}.$$ 
\item {\bf Purified distance:} (\cite{GilchristLN05}) For $\rho,\sigma \in \mathcal{D}_{\leq}(\H_A)$, $$\P(\rho,\sigma) = \sqrt{1-\F^2(\rho,\sigma)}.$$
\item {\bf $\varepsilon$-ball:} For $\rho_A\in \mathcal{D}(\H_A)$, $$\ball{\eps}{\rho_A} \defeq \{\rho'_A\in \mathcal{D}(\H_A)|~\F(\rho_A,\rho'_A) \geq 1-\varepsilon\}. $$ 
\item {\bf Von Neumann entropy:} (\cite{Neumann32}) For $\rho_A\in \mathcal{D}(\H_A)$, $$\ent{A}_{\rho} \defeq - \Tr(\rho_A\log\rho_A) .$$ 
\item {\bf Relative entropy:} (\cite{umegaki1954}) For $\rho_A,\sigma_A\in \mathcal{D}(\H_A)$, $$\relent{\rho_A}{\sigma_A} \defeq \Tr(\rho_A\log\rho_A) - \Tr(\rho_A\log\sigma_A) .$$ 
\item {\bf Max-relative entropy:} (\cite{Datta09}) For $\rho_A,\sigma_A\in \mathcal{D}(\H_A)$, $$ \dmax{\rho_A}{\sigma_A}  \defeq  \inf \{ \lambda \in \mathbb{R} : 2^{\lambda} \sigma_A \geq \rho_A \}  .$$ 
\item {\bf Smooth max-relative entropy:} (\cite{Datta09}, see also \cite{Jain:2009}) For $\rho_A,\sigma_A\in \mathcal{D}(\H_A)$, $$ \dmaxeps{\rho_A}{\sigma_A}  \defeq  \inf_{\rho'_A\in\ball{\eta}{\rho_A}}\dmax{\rho'_A}{\sigma_A} .$$
\item {\bf Mutual information:}  For $\rho_{AB}\in \mathcal{D}(\H_{AB})$,$$\mutinf{A}{B}_{\rho}  \defeq \relent{\rho_{AB}}{\rho_A\otimes\rho_B}= \ent{A}_{\rho} + \ent{B}_{\rho} - \ent{AB}_{\rho}.$$
\item {\bf Max-information:} (\cite{Renner11}) For $\rho_{AB}\in \mathcal{D}(\H_{AB})$, $$ \imax{A}{B}_{\rho} \defeq   \inf_{\sigma_{B}\in \mathcal{D}(B)}\dmax{\rho_{AB}}{\rho_{A}\otimes\sigma_{B}} .$$
\item {\bf Smooth max-information:} (\cite{Renner11}) For $\rho_{AB}\in \mathcal{D}(\H_{AB})$, $$\imaxeps{A}{B}_{\rho} \defeq \inf_{\rho'\in \ball{\eps}{\rho}} \imax{A}{B}_{\rho'} .$$	
\end{enumerate}
\label{def:infquant}
\end{definition}	
We will use the following facts. 

\begin{fact}[Triangle inequality for purified distance,~\cite{GilchristLN05}]
\label{fact:trianglepurified}
For states $\rho^1_A, \rho^2_A, \rho^3_A \in \mathcal{D}(\H_A)$,
$$\P(\rho^1_A,\rho^3_A) \leq \P(\rho^1_A,\rho^2_A)  + \P(\rho^2_A,\rho^3_A) . $$ 
\end{fact}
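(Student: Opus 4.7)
The plan is to reduce the problem to a triangle inequality for angles between unit vectors, using Uhlmann's theorem. Introduce the Bures angle $A(\rho,\sigma) \defeq \arccos \F(\rho,\sigma) \in [0,\pi/2]$, so that $\P(\rho,\sigma) = \sin A(\rho,\sigma)$. I would first show the triangle inequality for $A$, and then transfer it to $\P$ using monotonicity and subadditivity of $\sin$ on $[0,\pi/2]$.

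To bound $A(\rho^1_A,\rho^3_A)$, I would embed all three states into an auxiliary Hilbert space $\H_A \otimes \H_{A'}$ large enough to purify each of them, fix an arbitrary purification $\ket{\psi^2}$ of $\rho^2_A$, and invoke Uhlmann's theorem to pick purifications $\ket{\psi^1}$ and $\ket{\psi^3}$ of $\rho^1_A$ and $\rho^3_A$ that saturate the fidelities with $\ket{\psi^2}$, i.e., $|\braket{\psi^1}{\psi^2}| = \F(\rho^1_A,\rho^2_A)$ and $|\braket{\psi^2}{\psi^3}| = \F(\rho^2_A,\rho^3_A)$. Write $\alpha \defeq A(\rho^1_A,\rho^2_A)$, $\beta \defeq A(\rho^2_A,\rho^3_A)$, and let $\gamma \in [0,\pi/2]$ be defined by $\cos\gamma = |\braket{\psi^1}{\psi^3}|$. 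The variational (max) formula for fidelity yields $\F(\rho^1_A,\rho^3_A) \geq |\braket{\psi^1}{\psi^3}| = \cos\gamma$, and hence $A(\rho^1_A,\rho^3_A) \leq \gamma$.

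The key technical step is then the angle estimate $\gamma \leq \alpha + \beta$, a purely Hilbert-space statement about three unit vectors. A clean proof decomposes $\ket{\psi^1}$ and $\ket{\psi^3}$ into their components parallel and orthogonal to $\ket{\psi^2}$ and applies Cauchy--Schwarz to the orthogonal pieces to get
\[
|\braket{\psi^1}{\psi^3}| \;\geq\; |\braket{\psi^1}{\psi^2}|\cdot|\braket{\psi^2}{\psi^3}| \;-\; \sqrt{1-|\braket{\psi^1}{\psi^2}|^2}\sqrt{1-|\braket{\psi^2}{\psi^3}|^2} \;=\; \cos(\alpha+\beta),
\]
which gives $\gamma \leq \alpha+\beta$ whenever $\alpha+\beta \leq \pi/2$; if $\alpha+\beta > \pi/2$ the bound $\gamma \leq \pi/2 \leq \alpha+\beta$ is automatic. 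Chaining these inequalities yields the triangle inequality $A(\rho^1_A,\rho^3_A) \leq A(\rho^1_A,\rho^2_A) + A(\rho^2_A,\rho^3_A)$ for the Bures angle.

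To finish, monotonicity of $\sin$ on $[0,\pi/2]$ gives $\P(\rho^1_A,\rho^3_A) = \sin A(\rho^1_A,\rho^3_A) \leq \sin(\alpha+\beta)$. When $\alpha+\beta \leq \pi/2$, the angle-addition formula yields $\sin(\alpha+\beta) = \sin\alpha\cos\beta + \cos\alpha\sin\beta \leq \sin\alpha + \sin\beta$; when $\alpha+\beta > \pi/2$ one has $\sin\alpha + \sin\beta \geq \sin\alpha + \cos\alpha \geq 1 \geq \P(\rho^1_A,\rho^3_A)$, so the inequality is trivial. Either way, the claimed $\P(\rho^1_A,\rho^3_A) \leq \P(\rho^1_A,\rho^2_A) + \P(\rho^2_A,\rho^3_A)$ follows. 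The main obstacle is the angle estimate $\gamma \leq \alpha + \beta$ in the middle step, but the Cauchy--Schwarz computation displayed above handles it in a few lines.
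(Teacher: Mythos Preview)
The paper does not prove this statement; it records it as a cited fact from \cite{GilchristLN05} (see also \cite{Tomamichel12}) with no accompanying argument. So there is nothing to compare against on the paper's side.

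Your proof is correct and is essentially the standard one. The reduction to the Bures angle $A=\arccos\F$ via Uhlmann purifications, the Cauchy--Schwarz/reverse-triangle estimate $|\braket{\psi^1}{\psi^3}|\ge\cos\alpha\cos\beta-\sin\alpha\sin\beta=\cos(\alpha+\beta)$, and the final transfer from $A$ to $\P=\sin A$ using subadditivity of $\sin$ on $[0,\pi/2]$ are all sound. The case split at $\alpha+\beta=\pi/2$ is handled cleanly in both places. One small remark: the paper's Definition of purified distance and fidelity is stated for sub-normalized states via the generalized fidelity, whereas your argument (and the statement of the Fact itself) is for normalized states in $\mathcal{D}(\H_A)$; your Uhlmann-based proof covers exactly that case, which is all that is claimed.
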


\suppress{\begin{fact}[Purified distance and trace distance, ~\cite{Tomamichel12}, Proposition 3.3]
\label{fact:purifiedtrace}
For subnormalized states $\rho^1_A,\rho^2_A \in \cD(\H_A)$
$$\frac{1}{2}\|\rho^1_A-\rho^2_A\|_1\leq \P(\rho^1_A,\rho^2_A) \leq \sqrt{\|\rho^1_A-\rho^2_A\|_1}.$$
\end{fact}
}
\begin{fact}[Uhlmann's theorem,~\cite{uhlmann76}]
\label{uhlmann}
Let $\rho_A,\sigma_A\in \mathcal{D}(\H_A)$. Let $\ket{\rho}_{AB}$ be a purification of $\rho_A$ and $\ket{\sigma}_{AC}$ be a purification of $\sigma_A$. There exists an isometry $V: \H_C \rightarrow \H_B$ such that,
 $$\F(\ketbra{\theta}_{AB}, \ketbra{\rho}_{AB}) = \F(\rho_A,\sigma_A) ,$$
 where $\ket{\theta}_{AB} = (I_A \otimes V) \ket{\sigma}_{AC}$.
\end{fact}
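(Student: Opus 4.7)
The plan is to prove the theorem by reducing to canonical purifications and then invoking the polar decomposition. First I would introduce an auxiliary register $A'\equiv A$ and the unnormalized maximally entangled vector $\ket{\Omega}_{AA'}\defeq\sum_i\ket{i}_A\ket{i}_{A'}$ in a fixed basis, so that the canonical purification of any $\tau_A\in\mathcal{D}(\H_A)$ on $AA'$ is $(\sqrt{\tau_A}\otimes\id_{A'})\ket{\Omega}_{AA'}$. Since any two purifications of the same mixed state are related by an isometry on the purifying register (provided the purifying register has dimension at least the rank of the state), I can write
\[ \ket{\rho}_{AB} = (\id_A\otimes W_1)(\sqrt{\rho_A}\otimes\id_{A'})\ket{\Omega}_{AA'}, \quad \ket{\sigma}_{AC} = (\id_A\otimes W_2)(\sqrt{\sigma_A}\otimes\id_{A'})\ket{\Omega}_{AA'}, \]
for suitable isometries $W_1:\H_{A'}\to\H_B$ and $W_2:\H_{A'}\to\H_C$, possibly after embedding into larger spaces.

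Next, for any candidate isometry $V:\H_C\to\H_B$ and $\ket{\theta}_{AB}\defeq(\id_A\otimes V)\ket{\sigma}_{AC}$, I would compute the overlap $\braket{\rho}{\theta}$. Substituting the canonical forms and applying the ``ricochet'' identity $\bra{\Omega}(X\otimes Y)\ket{\Omega}=\Tr(XY^{\mathrm T})$, where the transpose is taken in the basis defining $\ket{\Omega}$, yields
\[ \braket{\rho}{\theta} = \Tr\!\left(\sqrt{\rho_A}\sqrt{\sigma_A}\,(W_1^{*}VW_2)^{\mathrm T}\right). \]
Because $W_1^{*}VW_2$ is a contraction (a product of an isometry and adjoints of isometries), H\"older's inequality $\abs{\Tr(AB)}\le\onenorm{A}\cdot\normsub{B}{\infty}$ delivers the upper bound $\abs{\braket{\rho}{\theta}}\le\onenorm{\sqrt{\rho_A}\sqrt{\sigma_A}}=\F(\rho_A,\sigma_A)$, where the last equality uses that $\rho_A$ and $\sigma_A$ are normalized.

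To saturate the bound, I would take the polar decomposition $\sqrt{\rho_A}\sqrt{\sigma_A}=U\abs{\sqrt{\rho_A}\sqrt{\sigma_A}}$ and choose $V$ so that $(W_1^{*}VW_2)^{\mathrm T}$ agrees with $U^{*}$ on the support of $\abs{\sqrt{\rho_A}\sqrt{\sigma_A}}$; this partial constraint can always be completed to a genuine isometry $V:\H_C\to\H_B$ when $\dim\H_B\ge\dim\H_C$, and otherwise after embedding $\H_B$ into a sufficiently large ancillary space. Since both $\ket{\rho}_{AB}$ and $\ket{\theta}_{AB}$ are unit vectors, the pure-state fidelity $\F(\ketbra{\theta}_{AB},\ketbra{\rho}_{AB})$ reduces to $\abs{\braket{\rho}{\theta}}$, which we have arranged to equal $\F(\rho_A,\sigma_A)$, completing the proof. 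The main obstacle I anticipate is the dimension bookkeeping and correct tracking of transposes and complex conjugates so that the extremal $V$ is an honest isometry on all of $\H_C$; the fidelity inequality itself follows routinely from H\"older once the algebraic setup is in place.
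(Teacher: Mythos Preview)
The paper does not prove this statement: it is presented as Fact~\ref{uhlmann}, a citation to Uhlmann's original work, with no accompanying argument. So there is no ``paper's own proof'' to compare against.

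Your approach is the standard one and is essentially correct. You reduce to canonical purifications via the unnormalized maximally entangled vector, compute the overlap using the transpose trick, bound it by H\"older, and saturate via the polar decomposition. One minor comment on the bookkeeping you flagged: the extremal choice $V=W_1\,\overline{U}\,W_2^{*}$ is in general only a partial isometry on $\H_C$ (since $W_2^{*}$ kills the orthogonal complement of $W_2(\H_{A'})$), and extending it to a genuine isometry $\H_C\to\H_B$ requires $\dim\H_B\ge\dim\H_C$. The paper's statement does not impose this dimension constraint explicitly, so strictly speaking it is only correct as written when $\dim\H_B\ge\dim\H_C$ (otherwise no isometry $\H_C\to\H_B$ exists at all); your caveat about embedding into a larger space is the right fix, and in every place the paper actually invokes Uhlmann's theorem the purifying registers can be taken large enough that this is a non-issue.
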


\begin{fact}[Monotonicity of quantum operations, \cite{lindblad75, barnum96}]
\label{fact:monotonequantumoperation}
For states $\rho_A$, $\sigma_A \in \mathcal{D}(\H_A)$, and quantum map $\E: \cL(\H_A)\rightarrow \cL(\H_B)$, 
$$\onenorm{\E(\rho_A) - \E(\sigma_A)} \leq \onenorm{\rho_A - \sigma_A} , \F(\rho_A,\sigma_A) \leq \F(\E(\rho_A),\E(\sigma_A))  , \dmax{\rho_A}{\sigma_A} \geq \dmax{\E(\rho_A)}{\E(\sigma_A)}.$$
\end{fact}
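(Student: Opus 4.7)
The plan is to prove the three inequalities separately, exploiting the standard structural properties of completely positive trace-preserving (CPTP) maps and the tools already introduced in the preliminaries.

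For the trace-norm contraction $\onenorm{\E(\rho_A) - \E(\sigma_A)} \leq \onenorm{\rho_A - \sigma_A}$, I would use the variational formula
\[
\onenorm{X} \;=\; 2\max_{0 \leq M \leq \id} \Tr(MX)
\]
valid for any Hermitian $X$ with $\Tr(X) = 0$. Since $\E$ is trace-preserving, $\E(\rho_A) - \E(\sigma_A)$ is traceless, so the formula applies. Let $M_\star$ achieve the maximum on the left side for $X = \E(\rho_A) - \E(\sigma_A)$. Writing $\Tr(M_\star \E(\rho_A - \sigma_A)) = \Tr(\E^\dagger(M_\star)(\rho_A - \sigma_A))$ and noting that the adjoint $\E^\dagger$ is completely positive and unital (hence $0 \leq \E^\dagger(M_\star) \leq \id$), the right-hand side is bounded by $\tfrac{1}{2}\onenorm{\rho_A - \sigma_A}$, giving the result.

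For the fidelity inequality $\F(\rho_A,\sigma_A) \leq \F(\E(\rho_A), \E(\sigma_A))$, my plan is to invoke Uhlmann's theorem (Fact~\ref{uhlmann}) together with a Stinespring dilation. Write $\E(X) = \Tr_E(VXV^\dagger)$ for an isometry $V:\H_A \to \H_B \otimes \H_E$. Take purifications $\ket{\rho}_{AR}$ and $\ket{\sigma}_{AR}$ of $\rho_A$ and $\sigma_A$ chosen so that $\F(\rho_A,\sigma_A) = \abs{\braket{\rho}{\sigma}}$ (this exists by Uhlmann). Then $(V \otimes \id_R)\ket{\rho}_{AR}$ and $(V \otimes \id_R)\ket{\sigma}_{AR}$ are purifications of $\E(\rho_A)$ and $\E(\sigma_A)$ respectively, viewed over the composite system $BER$, with the same overlap $\abs{\braket{\rho}{\sigma}}$ since $V^\dagger V = \id_A$. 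Because fidelity equals the maximum overlap over all purifications, $\F(\E(\rho_A), \E(\sigma_A)) \geq \abs{\braket{\rho}{\sigma}} = \F(\rho_A, \sigma_A)$. For normalized states the generalized fidelity in Definition~\ref{def:infquant} reduces to the standard one, so this suffices.

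For the max-relative entropy inequality $\dmax{\rho_A}{\sigma_A} \geq \dmax{\E(\rho_A)}{\E(\sigma_A)}$, the proof is immediate from positivity. If $\lambda$ is any real number with $2^{\lambda}\sigma_A \geq \rho_A$, then applying $\E$ (which, being completely positive, preserves the positive semidefinite order) yields $2^{\lambda}\E(\sigma_A) \geq \E(\rho_A)$, so $\lambda$ is also feasible for the infimum defining $\dmax{\E(\rho_A)}{\E(\sigma_A)}$; taking the infimum over such $\lambda$ gives the desired bound.

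The only subtlety I anticipate is the trace-distance step, which requires the adjoint-channel trick (unitality of $\E^\dagger$ paired with complete positivity); the fidelity step is mechanical once Uhlmann is invoked, and the $\D_{\max}$ step is essentially a one-line monotonicity argument. No additional machinery beyond Fact~\ref{uhlmann} and the definitions in Definition~\ref{def:infquant} is required.
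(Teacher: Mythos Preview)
Your proof is correct in all three parts; the paper does not supply its own proof of this fact but simply states it with references to \cite{lindblad75, barnum96}, so there is nothing to compare against. The arguments you give---the adjoint-channel variational bound for the trace norm, Uhlmann plus Stinespring for the fidelity, and order-preservation of CP maps for $\D_{\max}$---are exactly the standard ones found in the cited literature.
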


Following fact implies the Pinsker's inequality.

\begin{fact}[Lemma 5, \cite{Jain:2003a}]
\label{pinsker}
For quantum states $\rho_A,\sigma_A\in\cD(\H_A)$, 
$$\F(\rho_A,\sigma_A) \geq 2^{-\frac{1}{2}\relent{\rho_A}{\sigma_A}} \geq 2^{-\frac{1}{2}\dmax{\rho_A}{\sigma_A}}.$$
\end{fact}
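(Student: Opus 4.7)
The statement packages two inequalities, which I would handle in order of increasing subtlety. For the second bound $\relent{\rho_A}{\sigma_A}\leq \dmax{\rho_A}{\sigma_A}$, I would set $\lambda = \dmax{\rho_A}{\sigma_A}$, which by the definition of max-relative entropy gives the operator inequality $2^{\lambda}\sigma_A\geq \rho_A$ (so in particular the support of $\rho_A$ is contained in that of $\sigma_A$). Since $t\mapsto \log t$ is operator monotone on $(0,\infty)$, regularizing by $\sigma_A+\varepsilon\id$ and $\rho_A+\varepsilon\id$ and letting $\varepsilon\to 0$ yields $\log(2^{\lambda}\sigma_A)\geq \log \rho_A$ on the support of $\sigma_A$. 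Taking the trace against $\rho_A$ and using $\log(2^{\lambda}\sigma_A)=\lambda\id+\log\sigma_A$ produces $\lambda+\Tr(\rho_A\log\sigma_A)\geq \Tr(\rho_A\log\rho_A)$, which rearranges to $\lambda\geq \relent{\rho_A}{\sigma_A}$.

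For the first inequality $\F(\rho_A,\sigma_A)\geq 2^{-\frac{1}{2}\relent{\rho_A}{\sigma_A}}$, the plan is to reduce to its classical counterpart via a carefully chosen measurement. Since both $\rho_A$ and $\sigma_A$ are normalized, the generalized fidelity coincides with $\|\sqrt{\rho_A}\sqrt{\sigma_A}\|_1$. I would appeal to the Fuchs--Caves theorem, which produces a POVM $\{M_i\}$ on $\H_A$ such that, setting $P_i\defeq \Tr(M_i\rho_A)$ and $Q_i\defeq \Tr(M_i\sigma_A)$, the classical Bhattacharyya coefficient matches the fidelity: $\sum_i \sqrt{P_iQ_i}=\F(\rho_A,\sigma_A)$. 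Applying Jensen's inequality to the concave function $\log$ against the probability distribution $P$ then gives
\[ \log\sum_i \sqrt{P_iQ_i}=\log\sum_i P_i\sqrt{Q_i/P_i}\geq \sum_i P_i\log\sqrt{Q_i/P_i}=-\tfrac{1}{2}\relent{P}{Q}, \]
and data processing of the relative entropy under the measurement channel supplies $\relent{P}{Q}\leq \relent{\rho_A}{\sigma_A}$. Chaining the two bounds yields $\F(\rho_A,\sigma_A)\geq 2^{-\frac{1}{2}\relent{P}{Q}}\geq 2^{-\frac{1}{2}\relent{\rho_A}{\sigma_A}}$, completing the proof.

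The main obstacle is the Fuchs--Caves equality, which is the only non-elementary quantum ingredient. I would either import it as a known fact (consistent with the paper citing~\cite{Jain:2003a}) or prove it by hand in two steps: first verify the one-sided bound $\F(\rho_A,\sigma_A)\leq \sum_i\sqrt{P_iQ_i}$ for every POVM via an operator-level Cauchy--Schwarz on the $\ell_1$ norm applied to $\sqrt{\rho_A}\sqrt{M_i}$ and $\sqrt{\sigma_A}\sqrt{M_i}$, and then exhibit an optimal POVM read off from the spectral decomposition of $\sigma_A^{-1/2}\rho_A\sigma_A^{-1/2}$ (with $\sigma_A$ first regularized as above). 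Everything else in the argument is either operator monotonicity of $\log$ or a line of elementary classical manipulation, so no further substantive work is required.
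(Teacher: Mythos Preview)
Your proof is correct. Note, however, that the paper does not actually prove this statement: it is recorded as a \emph{Fact} imported from \cite{Jain:2003a} (Lemma~5) and used as a black box, so there is no in-paper argument to compare against. Your write-up supplies a complete self-contained proof, combining operator monotonicity of $\log$ for the $\relent{\cdot}{\cdot}\leq \dmax{\cdot}{\cdot}$ step with the Fuchs--Caves optimal measurement plus classical Jensen and data processing for the $\F\geq 2^{-\frac{1}{2}\relent{\cdot}{\cdot}}$ step; this is the standard route and requires no further work beyond what you outlined.
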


\begin{fact}
\label{scalarpurified}
Let $\rho_A,\sigma_A\in \mathcal{D}(\H_A)$ be quantum states. Let $\alpha<1$ be a positive real number. If $\P(\alpha\rho_A,\alpha\sigma_A)\leq \eps$, then $$\P(\rho_A,\sigma_A)\leq \eps\sqrt{\frac{2}{\alpha}}.$$
\end{fact}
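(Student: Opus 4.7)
The plan is to unpack both instances of the purified distance using the explicit form of the generalized fidelity, exploit that scaling both arguments by the same factor $\alpha$ has a particularly simple effect on $\F$, and then invert the relation to recover a bound on $\F(\rho_A,\sigma_A)$.

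First I would observe that since $\rho_A$ and $\sigma_A$ are normalized, $\F(\rho_A,\sigma_A)=\onenorm{\sqrt{\rho_A}\sqrt{\sigma_A}}$, so for any $0<\alpha<1$ the subnormalized states $\alpha\rho_A$ and $\alpha\sigma_A$ satisfy
\[
\F(\alpha\rho_A,\alpha\sigma_A)
= \onenorm{\sqrt{\alpha\rho_A}\sqrt{\alpha\sigma_A}} + \sqrt{(1-\alpha)(1-\alpha)}
= \alpha\,\F(\rho_A,\sigma_A) + (1-\alpha).
\]
This is the key identity; the rest is algebra. Translating the hypothesis $\P(\alpha\rho_A,\alpha\sigma_A)\le\eps$ via the definition $\P^2=1-\F^2$ yields $\F(\alpha\rho_A,\alpha\sigma_A)\ge\sqrt{1-\eps^2}$, and combined with the identity above,
\[
\F(\rho_A,\sigma_A) \;\ge\; 1 - \frac{1-\sqrt{1-\eps^2}}{\alpha}.
\]

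Next I would convert this lower bound on fidelity into the desired upper bound on the purified distance. Writing $\delta\defeq 1-\sqrt{1-\eps^2}$, one has the elementary estimate $\delta\le\eps^2$ (since $\sqrt{1-\eps^2}\ge 1-\eps^2$ for $\eps\in[0,1]$). Therefore
\[
\F^2(\rho_A,\sigma_A) \;\ge\; \left(1-\tfrac{\delta}{\alpha}\right)^{2} \;\ge\; 1-\tfrac{2\delta}{\alpha},
\]
and hence
\[
\P^2(\rho_A,\sigma_A) \;=\; 1-\F^2(\rho_A,\sigma_A) \;\le\; \tfrac{2\delta}{\alpha} \;\le\; \tfrac{2\eps^2}{\alpha},
\]
from which the claim $\P(\rho_A,\sigma_A)\le\eps\sqrt{2/\alpha}$ follows by taking square roots.

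There is no real obstacle here: the only subtlety is remembering to include the $\sqrt{(1-\Tr\rho)(1-\Tr\sigma)}$ correction term in the generalized fidelity definition for subnormalized states, which is what makes the relation $\F(\alpha\rho_A,\alpha\sigma_A)=\alpha\F(\rho_A,\sigma_A)+(1-\alpha)$ hold cleanly. Once that identity is written down, the proof is just the two elementary estimates $\sqrt{1-\eps^2}\ge 1-\eps^2$ and $(1-x)^2\ge 1-2x$.
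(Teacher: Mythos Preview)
Your proof is correct and follows essentially the same approach as the paper: both use the identity $\F(\alpha\rho_A,\alpha\sigma_A)=\alpha\F(\rho_A,\sigma_A)+(1-\alpha)$ together with the two elementary estimates $\sqrt{1-\eps^2}\ge 1-\eps^2$ and $(1-x)^2\ge 1-2x$. The only cosmetic difference is the order in which the estimate $\sqrt{1-\eps^2}\ge 1-\eps^2$ is applied.
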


\begin{proof}
$\P(\alpha\rho_A,\alpha\sigma_A)\leq \eps$ implies $\F(\alpha\rho_A,\alpha\sigma_A)\geq \sqrt{1-\eps^2}\geq 1-\eps^2$. But, 
$\F(\alpha\rho_A,\alpha\sigma_A)= \alpha\|\sqrt{\rho_A}\sqrt{\sigma_A}\|_1+(1-\alpha)$. Thus, $$\F(\rho_A,\sigma_A)=\|\sqrt{\rho_A}\sqrt{\sigma_A}\|_1 \geq 1-\frac{\eps^2}{\alpha},$$ which leads to $\P(\rho_A,\sigma_A)\leq \sqrt{1-(1-\frac{\eps^2}{\alpha})^2}\leq \sqrt{\frac{2\eps^2}{\alpha}}$.
\end{proof}

\begin{fact}[Joint concavity of fidelity, \cite{Watrouslecturenote}, Proposition 4.7]\label{fact:fidelityconcave}
Given quantum states \\$\rho^1_A,\rho^2_A\ldots\rho^k_A,\sigma^1_A,\sigma^2_A\ldots\sigma^k_A \in \mathcal{D}(\H_A)$ and positive numbers $p_1,p_2\ldots p_k$ such that $\sum_ip_i=1$. Then $$\F(\sum_ip_i\rho^i_A,\sum_ip_i\sigma^i_A)\geq \sum_ip_i\F(\rho^i_A,\sigma^i_A).$$
\end{fact}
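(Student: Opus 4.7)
The plan is to derive joint concavity from Uhlmann's theorem (Fact \ref{uhlmann}), which is already available in the excerpt. Since each state $\rho^i_A,\sigma^i_A$ is normalized, the generalized fidelity reduces to $\F(\rho^i_A,\sigma^i_A)=\onenorm{\sqrt{\rho^i_A}\sqrt{\sigma^i_A}}$, and this equals the maximum overlap of purifications on a common auxiliary space. So the first step is: introduce a reference register $R\equiv A$ and, for each $i$, choose purifications $\ket{\phi^i}_{AR}$ of $\rho^i_A$ and $\ket{\psi^i}_{AR}$ of $\sigma^i_A$ such that the inner product attains fidelity; by adjusting a global phase on $\ket{\psi^i}_{AR}$ we may assume $\braket{\phi^i}{\psi^i}=\F(\rho^i_A,\sigma^i_A)\geq 0$ is real and nonnegative.

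Next, introduce a classical flag register $F$ of dimension $k$ with orthonormal basis $\{\ket{i}_F\}$, and form the coherent superpositions
\[
\ket{\Phi}_{ARF}=\sum_i\sqrt{p_i}\,\ket{\phi^i}_{AR}\otimes\ket{i}_F,\qquad \ket{\Psi}_{ARF}=\sum_i\sqrt{p_i}\,\ket{\psi^i}_{AR}\otimes\ket{i}_F.
\]
Tracing out $RF$ yields $\sum_i p_i\rho^i_A$ and $\sum_i p_i\sigma^i_A$ respectively, so these are purifications of the two convex combinations on the common space $\H_R\otimes\H_F$. By Uhlmann's theorem, the fidelity between the two averaged states is at least the absolute value of the overlap of any pair of their purifications, hence
\[
\F\!\left(\sum_i p_i\rho^i_A,\sum_i p_i\sigma^i_A\right)\;\geq\;\bigl|\braket{\Phi}{\Psi}\bigr|.
\]

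Finally, the orthogonality of the basis $\{\ket{i}_F\}$ collapses the cross terms and gives $\braket{\Phi}{\Psi}=\sum_i p_i\braket{\phi^i}{\psi^i}=\sum_i p_i \F(\rho^i_A,\sigma^i_A)$, which by the phase choice is already a nonnegative real number, so the absolute value can be dropped. Combining this equality with the previous inequality yields the claim.

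The only subtle point, and the one I expect to need the most care to state cleanly, is ensuring that the optimal purifications from Uhlmann's theorem can be chosen so that all the inner products $\braket{\phi^i}{\psi^i}$ are simultaneously nonnegative reals; this is precisely why a separate phase adjustment on each $\ket{\psi^i}_{AR}$ (which does not change the fact that it purifies $\sigma^i_A$) is invoked before building the superpositions. Everything else is routine bookkeeping with purifications and the $\ell_1$-norm characterization of fidelity on normalized states.
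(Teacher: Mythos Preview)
Your proof is correct and is essentially the standard Uhlmann-based argument for joint concavity of fidelity. The paper itself does not prove this statement; it is listed as a Fact with a citation to \cite{Watrouslecturenote}, Proposition 4.7, and no proof is supplied in the text. So there is nothing to compare against, and your argument stands on its own.
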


\begin{fact}[Alicki-Fannes inequality, \cite{fannes73}]
\label{fact:fannes}
Given quantum states $\rho^1_A,\rho^2_A\in \mathcal{D}(\H_A)$, such that $\P(\rho^1_A,\rho^2_A)= \eps \leq \frac{1}{2\mathrm{e}}$, it holds that $$|S(\rho^1_A)-S(\rho^2_A)|\leq \eps\log|A|+1.$$   
\end{fact}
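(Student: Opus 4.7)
The plan is to reduce the statement to the standard Alicki–Fannes inequality (stated in terms of trace distance) via the Fuchs–van de Graaf relation between purified distance and trace distance. The only substantive work is the conversion between the two metrics; once that is done, the standard continuity bound for the von Neumann entropy finishes the job, and the minor clean-up consists of bounding a binary-entropy term by $1$.

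First I would invoke the one-sided Fuchs–van de Graaf inequality, namely $\frac{1}{2}\onenorm{\rho^1_A-\rho^2_A}\leq \P(\rho^1_A,\rho^2_A)$, which follows from the definition $\P(\rho,\sigma)=\sqrt{1-\F^2(\rho,\sigma)}$ together with the standard chain $1-\F(\rho,\sigma)\leq \frac{1}{2}\onenorm{\rho-\sigma}\leq \sqrt{1-\F^2(\rho,\sigma)}$. Setting $T\defeq \frac{1}{2}\onenorm{\rho^1_A-\rho^2_A}$, this gives $T\leq \eps\leq \frac{1}{2\mathrm{e}}$.

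Next I would apply the Alicki–Fannes continuity bound in its trace-distance form: whenever $T\leq \frac{1}{\mathrm{e}}$ one has $|S(\rho^1_A)-S(\rho^2_A)|\leq T\log|A|+h(T)$, where $h(\cdot)$ is the binary entropy. Since $h(T)\leq 1$ for all $T\in[0,1]$ and since $T\leq \eps$, this yields the claimed bound $|S(\rho^1_A)-S(\rho^2_A)|\leq \eps\log|A|+1$.

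The only nontrivial ingredient is the trace-distance version of Alicki–Fannes itself, which is a standard result whose proof proceeds by diagonalizing $\rho^1_A-\rho^2_A$ into its positive and negative parts and comparing the entropies via the concavity of $S$ and a telescoping argument, but this is precisely what is being cited as \cite{fannes73}; so the genuinely new content of the lemma above is just the conversion of the hypothesis from trace distance to purified distance via Fuchs–van de Graaf, and I do not foresee any serious obstacle. A minor point to watch is that the statement uses $\eps\leq \frac{1}{2\mathrm{e}}$ rather than $\eps\leq \frac{1}{\mathrm{e}}$, which conveniently ensures $T\leq \frac{1}{2\mathrm{e}}<\frac{1}{\mathrm{e}}$ so that the hypothesis of the classical Alicki–Fannes bound is satisfied without any further care.
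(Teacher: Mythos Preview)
The paper does not supply a proof of this statement: it is listed as a \emph{Fact} with a citation to \cite{fannes73} and is used as a black box. Your derivation---passing from purified distance to trace distance via the Fuchs--van de Graaf inequality $\frac{1}{2}\onenorm{\rho^1_A-\rho^2_A}\leq \P(\rho^1_A,\rho^2_A)$, invoking the classical Fannes continuity bound in trace-distance form, and then absorbing the binary-entropy correction into the additive constant $1$---is a correct and standard way to obtain the stated form, and is essentially what the citation is meant to encode.
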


\begin{fact}[Concavity of entropy, \cite{Watrouslecturenote}, Theorem 10.9]
\label{entropyconcave}
For quantum states $\rho^1_A,\rho^2_A\ldots \rho^n_A\in \cD(\H_A)$, and positive real numbers $\lambda_1,\lambda_2\ldots \lambda_n$ satisfying $\sum_i \lambda_i=1$, it holds that  $$S(\sum_i \lambda_i\rho^i_A)\geq \sum_i\lambda_iS(\rho^i_A).$$
\end{fact}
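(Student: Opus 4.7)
The plan is to derive the concavity of the von Neumann entropy from the non-negativity of quantum relative entropy, $\relent{\rho_A}{\sigma_A}\geq 0$. This latter fact is itself an immediate consequence of Fact~\ref{pinsker}, since $\F(\rho_A,\sigma_A)\leq 1$ combined with $\F(\rho_A,\sigma_A)\geq 2^{-\frac{1}{2}\relent{\rho_A}{\sigma_A}}$ forces $\relent{\rho_A}{\sigma_A}\geq 0$ for every pair of quantum states.

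Set $\bar{\rho}_A \defeq \sum_j \lambda_j \rho^j_A$. The key step is to apply non-negativity to each pair $(\rho^i_A,\bar{\rho}_A)$, yielding $\Tr(\rho^i_A \log \rho^i_A) \geq \Tr(\rho^i_A \log \bar{\rho}_A)$ for every $i$. Multiplying the $i$-th inequality by $\lambda_i$ and summing over $i$, the left side becomes $-\sum_i \lambda_i S(\rho^i_A)$, while the right side collapses by linearity in the first slot into $\Tr(\bar{\rho}_A \log \bar{\rho}_A) = -S(\bar{\rho}_A)$. Rearranging then gives precisely the stated inequality $S(\bar{\rho}_A) \geq \sum_i \lambda_i S(\rho^i_A)$.

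The only subtlety is ensuring that $\log \bar{\rho}_A$ is well-defined on the supports of the individual $\rho^i_A$; since each $\rho^i_A$ with $\lambda_i > 0$ is dominated by $\bar{\rho}_A$, one may restrict attention to the support of $\bar{\rho}_A$ without loss. There is no genuine obstacle here: the entire argument is a short manipulation once non-negativity of $\relent{\cdot}{\cdot}$ is in hand. (An equivalent route, if one prefers to avoid manipulating operator logarithms directly, would be to introduce an auxiliary classical register $X$ and form $\rho_{XA} = \sum_i \lambda_i \ketbra{i}_X \otimes \rho^i_A$, then observe that subadditivity of entropy applied to $\rho_{XA}$ gives the same bound after cancelling the shared term $H(\lambda)$; this reduction would require subadditivity as an additional input, which the excerpt does not list explicitly, so the relative-entropy route is cleaner in the present context.)
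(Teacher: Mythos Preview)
Your proof is correct. The paper does not actually prove this statement; it cites it as a known fact from Watrous's lecture notes (Theorem~10.9). Your argument via non-negativity of relative entropy is the standard one and is cleanly executed: $\relent{\rho^i_A}{\bar\rho_A}\geq 0$ unfolds to $\Tr(\rho^i_A\log\rho^i_A)\geq \Tr(\rho^i_A\log\bar\rho_A)$, and the convex combination collapses the right-hand side to $-S(\bar\rho_A)$. The support remark is appropriate.

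One small inaccuracy in your parenthetical: the paper \emph{does} list subadditivity explicitly as Fact~\ref{subadditive}, so the alternative route through the classical-quantum extension $\rho_{XA}=\sum_i\lambda_i\ketbra{i}_X\otimes\rho^i_A$ and $S(\rho_{XA})\leq S(\rho_X)+S(\rho_A)$ is equally available within the paper's framework. Either argument is fine here.
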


\begin{fact}[Subadditivity of entropy, \cite{LiebAraki70}]
\label{subadditive}
For a quantum state $\rho_{AB}\in \mathcal{D}(\H_{AB})$, it holds that  $|S(\rho_A)-S(\rho_B)|\leq S(\rho_{AB})\leq S(\rho_A)+S(\rho_B)$.
\end{fact}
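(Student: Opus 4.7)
The plan is to prove the two inequalities separately, handling the upper bound first and then deriving the lower bound from it via the purification trick.

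For the subadditivity upper bound $S(\rho_{AB}) \leq S(\rho_A) + S(\rho_B)$, I observe that by Definition~\ref{def:infquant}, the quantum mutual information satisfies $\mutinf{A}{B}_{\rho} = \relent{\rho_{AB}}{\rho_A\otimes\rho_B} = S(\rho_A) + S(\rho_B) - S(\rho_{AB})$. So the desired inequality is equivalent to the non-negativity of relative entropy between two normalized states. This follows directly from Fact~\ref{pinsker}: since $\F(\rho_{AB}, \rho_A\otimes\rho_B) \leq 1$ for normalized states, the bound $\F(\rho_{AB}, \rho_A\otimes\rho_B) \geq 2^{-\frac{1}{2}\relent{\rho_{AB}}{\rho_A\otimes\rho_B}}$ forces $\relent{\rho_{AB}}{\rho_A\otimes\rho_B} \geq 0$, which rearranges to the claimed inequality.

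For the Araki--Lieb lower bound $|S(\rho_A) - S(\rho_B)| \leq S(\rho_{AB})$, I use a purification argument. Introduce a purifying register $C$ and let $\ket{\psi}_{ABC}$ be a purification of $\rho_{AB}$, which exists by the standard Schmidt-decomposition construction. Because the joint state is pure, the Schmidt decomposition across any bipartition shows that the two marginals have identical non-zero spectra, so in particular $S(\rho_A) = S(\rho_{BC})$, $S(\rho_B) = S(\rho_{AC})$ and $S(\rho_{AB}) = S(\rho_C)$. Applying the subadditivity just proved to the bipartite state $\rho_{BC}$ gives $S(\rho_{BC}) \leq S(\rho_B) + S(\rho_C)$, which translates into $S(\rho_A) - S(\rho_B) \leq S(\rho_{AB})$. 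Interchanging the roles of $A$ and $B$ and repeating the argument yields the matching bound $S(\rho_B) - S(\rho_A) \leq S(\rho_{AB})$, establishing the absolute-value inequality.

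The main obstacle, to the extent that there is one in this short derivation, is the non-negativity of quantum relative entropy itself; every other step is either a definitional rearrangement or an invocation of the purification symmetry $S(\rho_X) = S(\rho_{\bar X})$ for pure states on $X\bar X$. In the present setup this obstacle is cleanly bypassed by Fact~\ref{pinsker}, which packages a Pinsker-type bound and therefore delivers $D\geq 0$ for free.
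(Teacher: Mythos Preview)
Your argument is correct: the upper bound follows from the non-negativity of relative entropy (which you extract from Fact~\ref{pinsker} via $2^{-D/2}\leq\F\leq 1\Rightarrow D\geq 0$), and the Araki--Lieb lower bound is obtained by the standard purification trick combined with the upper bound. Note, however, that the paper does not actually prove this statement; it is recorded as a Fact with a citation to the original Araki--Lieb paper, so there is no ``paper's own proof'' to compare against. Your write-up supplies the classical textbook derivation that the paper omits.
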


\begin{fact}
\label{informationbound}
For a quantum state $\rho_{ABC}\in \cD(\H_{ABC})$, it holds that $$\mutinf{A}{C}_{\rho}\leq 2S(\rho_C),$$ $$\condmutinf{A}{C}{B}_{\rho}\leq \mutinf{AB}{C}_{\rho}\leq 2S(\rho_C).$$  
\end{fact}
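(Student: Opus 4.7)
Both claims are essentially restatements of Fact \ref{subadditive} (the Araki--Lieb and subadditivity bounds on the von Neumann entropy), so the plan is simply to unfold the definitions and apply those inequalities in the right order.

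For the first claim, I would write out $\mutinf{A}{C}_{\rho} = S(\rho_A) + S(\rho_C) - S(\rho_{AC})$ and then use the lower-bound half of Fact \ref{subadditive}, namely $S(\rho_{AC}) \geq |S(\rho_A) - S(\rho_C)| \geq S(\rho_A) - S(\rho_C)$. Substituting yields $\mutinf{A}{C}_{\rho} \leq S(\rho_A) + S(\rho_C) - (S(\rho_A) - S(\rho_C)) = 2S(\rho_C)$, as desired.

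For the second claim, I would first verify $\condmutinf{A}{C}{B}_{\rho} \leq \mutinf{AB}{C}_{\rho}$ by expanding both sides:
\begin{align*}
\mutinf{AB}{C}_{\rho} - \condmutinf{A}{C}{B}_{\rho}
 &= \bigl(S(\rho_{AB}) + S(\rho_C) - S(\rho_{ABC})\bigr) - \bigl(S(\rho_{AB}) + S(\rho_{BC}) - S(\rho_B) - S(\rho_{ABC})\bigr) \\
 &= S(\rho_B) + S(\rho_C) - S(\rho_{BC}),
\end{align*}
which is nonnegative by the upper-bound half of Fact \ref{subadditive} applied to $\rho_{BC}$. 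Then $\mutinf{AB}{C}_{\rho} \leq 2 S(\rho_C)$ follows immediately from the first claim by regarding $AB$ as a single register.

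There is no real obstacle here; the only minor thing to be careful about is citing the correct direction of subadditivity in each step (lower bound $S(\rho_{AC}) \geq S(\rho_A) - S(\rho_C)$ for the first claim, upper bound $S(\rho_{BC}) \leq S(\rho_B) + S(\rho_C)$ for the intermediate inequality of the second claim), both of which are packaged together in Fact \ref{subadditive}.
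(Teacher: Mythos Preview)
Your proposal is correct and follows essentially the same approach as the paper: both reduce everything to Fact~\ref{subadditive} (Araki--Lieb and subadditivity). In fact you supply more detail than the paper's proof, which only writes out the first inequality and leaves the second chain implicit.
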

\begin{proof}
From Fact \ref{subadditive}, $\mutinf{A}{C}_{\rho} = S(\rho_A)+S(\rho_C)-S(\rho_{AC}) \leq 2S(\rho_{C})$. 
\end{proof}

\begin{fact}
\label{fact:cqimax}
For a \textit{classical-quantum} state $\rho_{AB}\in \cD(\H_{AB})$ of the form $\rho_{AB}=\sum_j p(j)\ketbra{j}_A\otimes \sigma^j_B$, it holds that $\imax{A}{B}_{\rho}\leq \log(|B|)$.
\end{fact}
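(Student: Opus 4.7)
The plan is to prove the bound by exhibiting an explicit choice of $\sigma_B$ in the infimum defining $\imax{A}{B}_\rho$, namely the maximally mixed state $\sigma_B = \id_B/|B|$. With this choice, the goal reduces to showing
\[
\rho_{AB} \leq |B| \cdot \rho_A \otimes \frac{\id_B}{|B|} = \rho_A \otimes \id_B,
\]
which would immediately give $\dmax{\rho_{AB}}{\rho_A \otimes \id_B/|B|} \leq \log|B|$, and therefore $\imax{A}{B}_\rho \leq \log|B|$.

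The key step is to exploit the classical-quantum structure. Writing $\rho_A = \sum_j p(j)\ketbra{j}_A$ and using the given form of $\rho_{AB}$, the difference becomes
\[
\rho_A \otimes \id_B - \rho_{AB} = \sum_j p(j)\, \ketbra{j}_A \otimes (\id_B - \sigma^j_B).
\]
Since each $\sigma^j_B$ is a quantum state, we have $\sigma^j_B \leq \id_B$, so $\id_B - \sigma^j_B \geq 0$. Each term in the sum is a tensor product of positive semidefinite operators (with nonnegative weight $p(j)$), hence positive semidefinite, and the sum of PSD operators is PSD. This establishes the operator inequality and completes the argument.

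There is no real obstacle here; the proof is short and follows directly from the definition of $\dmax{\cdot}{\cdot}$ together with the diagonal structure of $\rho_A$ in the basis indexing the classical register, which ensures the cross terms one might worry about are absent.
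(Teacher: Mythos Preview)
Your proof is correct and takes essentially the same approach as the paper: choose $\sigma_B = \id_B/|B|$ in the infimum and verify the operator inequality $\rho_{AB} \preceq \rho_A \otimes \id_B$ using the block-diagonal structure. The paper's proof is slightly terser, asserting $\sigma^j_B \preceq \id_B$ blockwise without writing out the difference, but the argument is identical.
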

\begin{proof}
By definition, $\imax{A}{B}_{\rho}\leq \dmax{\rho_{AB}}{\rho_A\otimes\frac{\text{I}_B}{|B|}}$.  Also, $$\rho_{AB}=\sum_j p(j)\ketbra{j}_A\otimes \sigma^j_B \preceq |B|\sum_j p(j)\ketbra{j}_A\otimes \frac{\text{I}_B}{|B|} = |B| \rho_A\otimes \frac{\text{I}_B}{|B|}.$$ Thus, the fact follows.
\end{proof}

\begin{fact}
\label{cqmutinf}
For a \textit{classical-quantum} state $\rho_{ABC}\in \cD(\H_{ABC})$ of the form $\sum_j p(j)\ketbra{j}_A\otimes \rho^j_{BC}$, it holds that 
$$\mutinf{AB}{C}_{\rho}\geq \sum_j p(j)\mutinf{B}{C}_{\rho^j}.$$ 
\end{fact}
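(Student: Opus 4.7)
My plan is to expand both sides of the inequality using the definition of mutual information in terms of von Neumann entropies, exploit the classical-quantum structure to simplify the entropies of registers containing $A$, and then reduce the inequality to the concavity of the von Neumann entropy (Fact \ref{entropyconcave}).

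Concretely, I would first note that since $A$ is classical with respect to the basis $\{\ket{j}_A\}_j$, the marginals take the form $\rho_{AB} = \sum_j p(j)\ketbra{j}_A \otimes \rho^j_B$ and $\rho_{ABC} = \sum_j p(j)\ketbra{j}_A \otimes \rho^j_{BC}$, while $\rho_C = \sum_j p(j)\rho^j_C$. A standard direct computation (writing the entropy of a block-diagonal classical-quantum state as $H(p) + \sum_j p(j) S(\cdot)$) then gives
\begin{align*}
S(\rho_{AB}) &= H(p) + \sum_j p(j)\, S(\rho^j_B), \\
S(\rho_{ABC}) &= H(p) + \sum_j p(j)\, S(\rho^j_{BC}).
\end{align*}

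Next I would plug these into the definition $\mutinf{AB}{C}_{\rho} = S(\rho_{AB}) + S(\rho_C) - S(\rho_{ABC})$. The $H(p)$ terms cancel, and I am left with
\[
\mutinf{AB}{C}_{\rho} = S(\rho_C) + \sum_j p(j)\bigl[S(\rho^j_B) - S(\rho^j_{BC})\bigr].
\]
Subtracting $\sum_j p(j)\mutinf{B}{C}_{\rho^j} = \sum_j p(j)\bigl[S(\rho^j_B) + S(\rho^j_C) - S(\rho^j_{BC})\bigr]$ from both sides collapses everything except the $C$-marginal terms, giving
\[
\mutinf{AB}{C}_{\rho} - \sum_j p(j)\,\mutinf{B}{C}_{\rho^j} = S\!\left(\sum_j p(j)\rho^j_C\right) - \sum_j p(j)\, S(\rho^j_C).
\]

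Finally, concavity of the von Neumann entropy (Fact \ref{entropyconcave}) ensures the right-hand side is non-negative, which yields the claim. There is no real obstacle here; the only mildly subtle step is the classical-quantum entropy decomposition, which is routine and follows from the block-diagonal structure of $\rho_{AB}$ and $\rho_{ABC}$ in the $A$ basis. An alternative path that avoids the explicit entropy bookkeeping is to write $\mutinf{AB}{C}_\rho = \mutinf{A}{C}_\rho + \condmutinf{B}{C}{A}_\rho$, observe that $\condmutinf{B}{C}{A}_\rho = \sum_j p(j)\mutinf{B}{C}_{\rho^j}$ when $A$ is classical, and use non-negativity of $\mutinf{A}{C}_\rho$; but the first approach is self-contained and directly invokes facts already stated in the preliminaries.
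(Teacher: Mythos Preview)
Your proposal is correct and follows essentially the same route as the paper's proof: expand $\mutinf{AB}{C}_\rho$ into entropies, use the classical-quantum block-diagonal structure to simplify $S(\rho_{AB})$ and $S(\rho_{ABC})$, and invoke concavity of entropy (Fact~\ref{entropyconcave}) on the $C$-marginal. If anything, your version is slightly more careful in explicitly tracking the $H(p)$ terms and noting that they cancel, whereas the paper silently drops them.
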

\begin{proof}
From the definition of mutual information, we have  
\begin{eqnarray*}
\mutinf{AB}{C}_{\rho}&=& S(\rho_{AB}) + S(\rho_C) - S(\rho_{ABC})\\ &=& S(\sum_j p(j)\ketbra{j}_A\otimes\rho^j_B) + S(\sum_j p(j)\rho^j_C) - S(\sum_j p(j)\ketbra{j}_A\otimes\rho^j_{BC}) \\&=& \sum_j p(j)S(\rho^j_B)+S(\sum_j p(j)\rho^j_C) - \sum_j p(j)S(\rho^j_{BC}) \\&\geq&  \sum_j p(j)S(\rho^j_B)+\sum_j p(j)S(\rho^j_C) - \sum_j p(j)S(\rho^j_{BC}) \quad (\text{Fact \ref{entropyconcave}})\\ &=& \sum_j p(j)\mutinf{B}{C}_{\rho^j}
\end{eqnarray*}
\end{proof}

\begin{fact}[Matrix Hoeffding Bound, \cite{Tropp2012}]
\label{matrixhoeff}
Let $Z_1,Z_2\ldots Z_r$ be independent and identically distributed random $d \times d$ Hermitian matrices with $\mathbb{E}(Z_i) = 0$ and $\norm{Z_i}_{\infty}\leq \lambda$. Then $$\text{Prob}(\|\frac{1}{r}\sum_i Z_i\|_{\infty}\geq \eps) \leq d\cdot e^{-\frac{n\eps^2}{8\lambda}}$$ and 
$$\text{Prob}(\|\frac{1}{r}\sum_i Z_i\|_1\geq \eps) \leq d\cdot  e^{-\frac{n\eps^2}{8d^2\lambda}}.$$

\end{fact}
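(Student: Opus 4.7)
The plan is to prove the operator norm tail bound via the matrix Laplace transform method (the Ahlswede--Winter style argument that Tropp refines), and then deduce the trace-norm bound as a short corollary. Set $Y \defeq \frac{1}{r}\sum_{i=1}^r Z_i$. Since $\|Y\|_\infty = \max(\lambda_{\max}(Y),\, \lambda_{\max}(-Y))$, a union bound reduces the first inequality to controlling $\Pr(\lambda_{\max}(Y) \geq \eps)$ and applying the same argument to $-Z_i$ (which satisfies the same hypotheses because $\mathbb{E}[Z_i]=0$ and $\|Z_i\|_\infty \leq \lambda$). The starting point is the matrix Markov inequality: for every $\theta > 0$,
$$\Pr(\lambda_{\max}(Y) \geq \eps) \;\leq\; e^{-\theta r \eps}\, \mathbb{E}\bigl[\mathrm{Tr}\, e^{\theta \sum_i Z_i}\bigr].$$

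Next I would decouple the summands using Lieb's concavity theorem applied iteratively (equivalently, the Ahlswede--Winter master inequality), which yields
$$\mathbb{E}\bigl[\mathrm{Tr}\, e^{\theta \sum_i Z_i}\bigr] \;\leq\; \mathrm{Tr}\, \exp\Bigl(\textstyle\sum_i \log \mathbb{E}[e^{\theta Z_i}]\Bigr).$$
The identical distribution of the $Z_i$ makes this $\mathrm{Tr}\, \exp(r \log \mathbb{E}[e^{\theta Z_1}])$. The key technical step, and the one I expect to take the most care, is the matrix Hoeffding MGF lemma: for a zero-mean Hermitian $Z$ with $\|Z\|_\infty \leq \lambda$ we must show
$$\mathbb{E}[e^{\theta Z}] \;\preceq\; e^{\theta^2 \lambda^2 / 2}\, \id.$$
This I would prove by writing, for $|z| \leq \lambda$, $e^{\theta z}$ as the convex combination $\tfrac{\lambda + z}{2\lambda} e^{\theta\lambda} + \tfrac{\lambda - z}{2\lambda} e^{-\theta\lambda}$, applying this identity to $Z$ in its eigenbasis, taking expectations (the linear term vanishes because $\mathbb{E}[Z]=0$), and bounding $\cosh(\theta\lambda) \leq e^{\theta^2 \lambda^2/2}$. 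Operator monotonicity of $\log$ on positive matrices then gives $\log \mathbb{E}[e^{\theta Z_1}] \preceq (\theta^2 \lambda^2 / 2)\, \id$.

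Combining, $\mathrm{Tr}\, \exp(r \theta^2 \lambda^2/2 \cdot \id) = d \cdot e^{r \theta^2 \lambda^2 / 2}$, and optimizing over $\theta$ at $\theta = \eps / \lambda^2$ produces
$$\Pr(\lambda_{\max}(Y) \geq \eps) \;\leq\; d \cdot e^{-r \eps^2 / (2\lambda^2)}.$$
Doubling to account for $-Y$ and absorbing constants recovers the stated operator-norm bound. For the trace-norm inequality I would simply invoke $\|A\|_1 \leq d\, \|A\|_\infty$ for any Hermitian $d \times d$ matrix $A$, so that $\Pr(\|Y\|_1 \geq \eps) \leq \Pr(\|Y\|_\infty \geq \eps/d)$; substituting $\eps/d$ into the first inequality produces exactly the factor of $d^2$ in the denominator of the claimed exponent.

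The main obstacle is the matrix Hoeffding MGF lemma together with the Lieb decoupling step: the scalar Hoeffding proof does not extend directly because $Z_i$ and $Z_j$ do not commute, so one cannot simply factor $\mathbb{E}[e^{\theta \sum_i Z_i}]$ as a product of MGFs. Lieb's concavity theorem is precisely the ingredient that compensates for this non-commutativity. Everything else is a routine optimization over $\theta$ and a union bound, and the bound $\|A\|_1 \leq d \|A\|_\infty$ is elementary. I note that the original reference \cite{Tropp2012} carries out exactly this program in greater generality (allowing $Z_i^2 \preceq A_i^2$ for arbitrary deterministic $A_i$), from which our i.i.d.\ bounded case is a direct specialization.
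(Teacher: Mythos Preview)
The paper does not prove this fact; it is quoted from \cite{Tropp2012} without argument. Your sketch is exactly the matrix Laplace transform method that Tropp uses, so there is nothing to compare against here---you have reproduced the cited proof.

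One small point: your optimization yields $d\cdot e^{-r\eps^2/(2\lambda^2)}$, while the fact as stated has $e^{-n\eps^2/(8\lambda)}$ (note $\lambda$ rather than $\lambda^2$, and $n$ rather than $r$). The $n$ is clearly a typo for $r$, and the $\lambda$ versus $\lambda^2$ discrepancy is almost certainly a typo in the paper as well; your bound is the standard Hoeffding form and is what Tropp actually proves. In the paper's application (Section~\ref{sec:example}) the relevant matrices satisfy $\|Z_i\|_\infty \leq 1$, so the distinction is immaterial there. But ``absorbing constants'' does not convert $\lambda^2$ to $\lambda$, so you should either flag the typo or state the bound you actually prove.
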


\begin{fact}[Weyl's inequality, \cite{Bhatia96}]
\label{weylinequality}
Let $M,H,P$ be three matrices such that $M=H+P$. Let eigenvalues of $M,H,P$ arranged in descending order be $\{m_1,m_2,\ldots m_n\}$, $\{h_1,h_2,\ldots h_n\}$ and $\{p_1,p_2,\ldots p_n\}$ respectively. Then it holds that 
$$h_i+p_n \leq m_i \leq h_i + p_1.$$
\end{fact}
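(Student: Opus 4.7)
The plan is to prove Weyl's inequality via the Courant--Fischer variational characterization of the eigenvalues of a Hermitian matrix (since having real eigenvalues ordered in descending order tacitly assumes $M,H,P$ are Hermitian). Recall that for any Hermitian matrix $A$ with eigenvalues $a_1 \geq a_2 \geq \ldots \geq a_n$, one has
\begin{equation*}
a_i \;=\; \max_{\substack{S \subseteq \mathbb{C}^n \\ \dim S = i}} \; \min_{\substack{x \in S \\ \|x\|=1}} \langle x, Ax \rangle \;=\; \min_{\substack{S \subseteq \mathbb{C}^n \\ \dim S = n-i+1}} \; \max_{\substack{x \in S \\ \|x\|=1}} \langle x, Ax \rangle.
\end{equation*}
The key observation is that $\langle x, Mx\rangle = \langle x, Hx \rangle + \langle x, Px \rangle$ for every unit vector $x$, so both inequalities reduce to selecting a test subspace adapted to $H$ and then bounding the $P$-contribution by the extreme eigenvalues $p_1$ (largest) and $p_n$ (smallest).

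For the upper bound $m_i \leq h_i + p_1$, I would use the min--max form for $m_i$ and choose the test subspace $S$ to be the span of the eigenvectors of $H$ corresponding to $h_i, h_{i+1}, \ldots, h_n$, which has dimension $n-i+1$. For every unit vector $x \in S$, the spectral decomposition of $H$ gives $\langle x, Hx \rangle \leq h_i$, while $\langle x, Px \rangle \leq p_1$ holds universally. Therefore $\max_{x \in S, \|x\|=1} \langle x, Mx \rangle \leq h_i + p_1$, and the Courant--Fischer upper bound for $m_i$ closes the argument.

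For the lower bound $m_i \geq h_i + p_n$, I would dually use the max--min form and take $S$ to be the span of the top $i$ eigenvectors of $H$ (corresponding to $h_1,\ldots,h_i$), of dimension $i$. Every unit vector $x \in S$ satisfies $\langle x, Hx \rangle \geq h_i$ and $\langle x, Px \rangle \geq p_n$, hence $\min_{x \in S, \|x\|=1} \langle x, Mx \rangle \geq h_i + p_n$, and $m_i \geq h_i + p_n$ follows.

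I do not anticipate a serious obstacle: once the Courant--Fischer characterization is in hand, the entire proof is a matter of selecting the right test subspace (built from eigenvectors of $H$, not $M$) and distinguishing carefully between the max--min and min--max forms to get the correct direction of inequality. The only point where one must be slightly cautious is verifying that for $x$ in the span of the eigenvectors of $H$ with indices $\geq i$, the Rayleigh quotient $\langle x,Hx\rangle$ is bounded above by $h_i$ rather than below, and symmetrically for the other bound.
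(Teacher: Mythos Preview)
Your proof is correct. Note, however, that the paper does not supply its own proof of this statement: it is recorded as a Fact with a citation to Bhatia's \emph{Matrix Analysis}, so there is no in-paper argument to compare against. Your Courant--Fischer approach is precisely the standard textbook proof (and is essentially what one finds in the cited reference), so nothing further is needed.
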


\begin{fact}[Prefix-free encoding, \cite{Elias75}]
\label{prefixfree}
For every integer $n\geq 0$, there exists an encoding of $n$ into a string of length $\lceil\log n\rceil + 2\lceil\log\log n\rceil +1$ in a prefix-free manner. 
\end{fact}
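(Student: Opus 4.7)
My plan is to follow Elias's classical construction of a universal self-delimiting code for the nonnegative integers; specifically, to reproduce the $\delta$-code, which is obtained by applying the simpler $\gamma$-code recursively to describe the length of the binary expansion.

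First, I would fix the basic ingredient: the standard binary expansion of $n$, which for $n \geq 1$ uses exactly $\ell := \lfloor \log n \rfloor + 1 \leq \lceil \log n \rceil + 1$ bits and always starts with a $1$. Writing this expansion down is what will contribute the leading $\lceil \log n \rceil$ summand in the target length; all the work is in attaching a prefix-free header that tells the decoder what $\ell$ is, so it knows where the codeword ends. (The case $n = 0$ can be handled separately by reserving a single short codeword, which fits comfortably inside the additive constant.)

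As a warm-up, I would describe the Elias $\gamma$-code: encode a positive integer $m$ by writing $\lfloor \log m \rfloor$ zeros followed by the binary expansion of $m$, for a total of $2\lfloor \log m \rfloor + 1$ bits. This is prefix-free because the decoder simply counts leading zeros up to the first $1$, learns $\lfloor \log m \rfloor$, and then reads the appropriate number of additional bits. Applying $\gamma$ directly to $n$ would cost about $2\log n$ bits, which is too much, but applying it to the much smaller integer $\ell$ costs only $2\lfloor \log \ell \rfloor + 1 \leq 2\lceil \log \log n \rceil + 1$ bits.

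The main step is then the Elias $\delta$-code: output the $\gamma$-code of $\ell$, followed by the binary expansion of $n$. The total length is at most
\[
\bigl(2\lceil \log \log n \rceil + 1\bigr) + \lceil \log n \rceil = \lceil \log n \rceil + 2\lceil \log \log n \rceil + 1,
\]
matching the claim. For prefix-freeness, given any concatenation of several such codewords, the decoder first parses the $\gamma$-prefix to recover $\ell$ (which is well-defined because $\gamma$-codes are themselves prefix-free), and then consumes exactly $\ell$ further bits as the binary expansion of $n$; the codeword boundary is thus uniquely determined.

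There is no real obstacle here beyond bookkeeping: the only subtleties are the edge cases $n \in \{0,1\}$ and the fact that $\log \log n$ is undefined or negative for small $n$, but these issues are absorbed into the additive constants in the stated length bound. The construction above is standard (Elias, 1975), so the proof reduces to writing out the code, verifying the length bound, and confirming that $\gamma$-codes are prefix-free.
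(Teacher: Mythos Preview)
Your proposal is correct and reproduces precisely the Elias $\delta$-code construction from the cited reference. The paper itself does not give a proof of this fact; it simply states it with the citation to Elias (1975), so your write-up is exactly the argument the reader would find by following that citation.
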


\begin{fact}[Hoeffding-Chernoff bound, \cite{Hoeff63}, see also \cite{Goemans15}]
\label{fact:chernoff}
Let $X_1 , \ldots, X_n$ be independent random variables, with each $X_i \in [0,1]$ always. Let $X \defeq X_1 + \cdots + X_n$ and  $\mu \defeq \expec{}{X}= \expec{}{X_1} + \cdots + \expec{}{X_n}$. Then for any $\varepsilon>0$,
\begin{align*}
\prob{X \geq (1+\varepsilon)\mu} &\leq \exp\left(-\frac{\varepsilon^2}{2+\varepsilon}\mu\right) \\
\end{align*}
\end{fact}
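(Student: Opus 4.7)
The plan is to apply the standard Chernoff recipe: bound the tail via Markov's inequality applied to a moment generating function, then optimize the free parameter. For any $t>0$, Markov applied to $e^{tX}$ gives
\[
\prob{X \geq (1+\varepsilon)\mu} = \prob{e^{tX} \geq e^{t(1+\varepsilon)\mu}} \leq \frac{\bigE[e^{tX}]}{e^{t(1+\varepsilon)\mu}},
\]
and by independence $\bigE[e^{tX}] = \prod_{i=1}^n \bigE[e^{tX_i}]$. Since each $X_i$ takes values in $[0,1]$, convexity of $x \mapsto e^{tx}$ on this interval yields the pointwise upper bound $e^{tx} \leq 1 + x(e^t-1)$, so $\bigE[e^{tX_i}] \leq 1 + \mu_i(e^t-1) \leq e^{\mu_i(e^t-1)}$, where $\mu_i := \bigE[X_i]$. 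Multiplying the $n$ factors and using $\sum_i \mu_i = \mu$ gives $\bigE[e^{tX}] \leq e^{\mu(e^t - 1)}$, hence
\[
\prob{X \geq (1+\varepsilon)\mu} \leq \exp\!\left(\mu(e^t-1) - t(1+\varepsilon)\mu\right).
\]

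Next I would optimize over $t > 0$. Setting the derivative of the exponent to zero yields the minimizer $t^{\ast} = \ln(1+\varepsilon)$, which produces the classical relative-entropy form
\[
\prob{X \geq (1+\varepsilon)\mu} \leq \exp\!\Big(-\mu\bigl[(1+\varepsilon)\ln(1+\varepsilon) - \varepsilon\bigr]\Big).
\]
To recover the cleaner form in the statement, it then suffices to establish the elementary inequality $(1+\varepsilon)\ln(1+\varepsilon) - \varepsilon \geq \frac{\varepsilon^2}{2+\varepsilon}$ for every $\varepsilon > 0$.

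I would verify this last inequality by observing that, after simplification, it is equivalent to $\ln(1+\varepsilon) \geq \frac{2\varepsilon}{2+\varepsilon}$. Setting $f(\varepsilon) := \ln(1+\varepsilon) - \frac{2\varepsilon}{2+\varepsilon}$, one has $f(0) = 0$ and
\[
f'(\varepsilon) = \frac{1}{1+\varepsilon} - \frac{4}{(2+\varepsilon)^2} = \frac{(2+\varepsilon)^2 - 4(1+\varepsilon)}{(1+\varepsilon)(2+\varepsilon)^2} = \frac{\varepsilon^2}{(1+\varepsilon)(2+\varepsilon)^2} \geq 0,
\]
so $f \geq 0$ on $[0,\infty)$, completing the derivation. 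The main obstacle here, such as it is, is only this last analytic comparison, which pins down the particular exponent $\varepsilon^2/(2+\varepsilon)$ rather than the weaker $\varepsilon^2/3$ sometimes quoted; the Chernoff/moment generating function machinery itself is standard and requires no new ideas.
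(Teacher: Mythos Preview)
Your argument is correct and is exactly the standard Chernoff derivation (Markov on the moment generating function, the convexity bound $e^{tx}\le 1+x(e^t-1)$ for $x\in[0,1]$, optimization at $t^\ast=\ln(1+\varepsilon)$, and the elementary comparison $\ln(1+\varepsilon)\ge 2\varepsilon/(2+\varepsilon)$). The paper does not supply its own proof of this statement; it is recorded as a cited fact from \cite{Hoeff63,Goemans15}, so there is nothing to compare against beyond noting that your proof is the textbook one.
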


\section{Classical-quantum state splitting task}
\label{sec:oneway}

We begin with a formal definition of the classical-quantum state splitting task.
\begin{task}[\textbf{Classical-quantum state splitting}] \label{def:entsharing}
 Fix registers $A'C$, the associated Hilbert space $\H_{A'C}$ and an $\eta\in (0,1)$. Alice receives an input $x\sim p\br{\cdot}$ associated with a bipartite pure quantum state $\ket{\Psi^x}_{A'C}\in \cD(\H_{A'C})$, where $p\br{\cdot}$ is a distribution over a finite set $\X$ and $x\in\X$. The goal for Alice and Bob is to share a state $\Phi^x_{A'C}\in \cD(\H_{A'C})$, with Alice holding the register $A'$ and Bob holding the register $C$, such that $\sum_x p(x)\F^2(\Psi^x_{A'C},\Phi^x_{A'C})\geq 1-\eta^2$.
\end{task}

The parameter $\eta$ appearing above is referred to as the average error. In this section, we give a near optimal characterization of the expected communication cost of Task \ref{def:entsharing}. 

\begin{definition} 
\label{def:tightcharac}
Fix a finite set $\X$, $\eta\in (0,1)$ and integer $r\geq 1$. Consider an ensemble $\{(p(x), \ketbra{\Psi^x}_{A'C})\}_{x\in \X}$, where $\Psi^x_{A'C} \in \cD(\H_{A'C})$. Let $T_r$ be the set of all tuples $(i_r,i_{r-1}, \ldots, i_1)$ of positive integers. Let $Q(\eta,r)$ be defined as 
\begin{align}
Q(\eta,r) &\defeq \underset{\stackrel{\{\omega_C^{i_r,i_{r-1}, \ldots, i_1}\}_{(i_r, i_{r-1}, \ldots, i_1) \in T_r}}{\stackrel{\{\eps^x_{i_r,i_{r-1}, \ldots, i_1}\}_{(i_r, i_{r-1}, \ldots, i_1) \in T_r, x\in \X}}{ \{p^x_{i_r,i_{r-1}, \ldots, i_1}\}_{(i_r, i_{r-1}, \ldots, i_1) \in T_r, x\in \X}}}}{\inf} \sum_{x\in \X, (i_r, i_{r-1}, \ldots, i_1) \in T_r}p(x) p^x_{i_r,i_{r-1}, \ldots, i_1}\log\left(i_r\cdot i_{r-1}\cdots i_1\right) \nonumber\\
\text{s.t.} &\nonumber \\
& \sum_x p(x) \sum_{i_r, i_{r-1}, \ldots, i_1} p^x_{i_r,i_{r-1}, \ldots, i_1} \left(\eps^x_{i_r,i_{r-1}, \ldots, i_1}\right)^2 \leq \eta^2;  \nonumber\\
&\forall x \in \X\nonumber\\
  &\hspace{2cm} \sum_{i_r, i_{r-1}, \ldots, i_1}  p^x_{i_r,i_{r-1}, \ldots, i_1} = 1; \nonumber\\
&\forall (i_r, i_{r-1}, \ldots, i_1) \in T_r, x\in \X \nonumber\\
&\hspace{2cm}0\leq p^x_{i_r,i_{r-1}, \ldots, i_1} \leq 2^{- \dmaxepss{\Psi^x_C}{\omega_C^{i_r,i_{r-1}, \ldots, i_1}}{\eps^x_{i_r,i_{r-1}, \ldots, i_1}}} ,\nonumber\\
&\hspace{2cm}\omega_C^{i_r,i_{r-1}, \ldots, i_1} \in \cD(\H_C),  \quad \eps^x_{i_r,i_{r-1}, \ldots, i_1} \in (0,1). \nonumber
\end{align}

\end{definition}
The crucial aspect of the above definition is the upper bound on the probabilities $p^x_{i_r,i_{r-1}, \ldots, i_1}$. While this definition involves a complex optimization problem, it has two utilities. First is that it is a near optimal characterization of the expected communication cost, as shown below. Second is that it can be lower bounded by a simple quantity that will be crucial in the proof of Theorem \ref{thm:maingeneral}. 

Using this definition, we prove the following theorem which gives a near optimal characterization of the expected communication cost of the classical-quantum state splitting task, for a given ensemble. The statement of the theorem uses the notion of `entanglement-assisted interactive protocol', which will be defined in the following subsections.

\begin{theorem}
\label{theo:tightcharac}
Fix a finite set $\X$, $\eta\in (0,1)$ and an integer $r\geq 1$. Consider an ensemble $\{(p(x), \ketbra{\Psi^x}_{A'C})\}_{x\in \X}$, where $\Psi^x_{A'C} \in \cD(\H_{A'C})$. For any $r$-round entanglement-assisted interactive protocol for the classical-quantum state splitting task of above ensemble with average error $\eta$, the expected communication cost is lower bounded by $$Q(\eta, r).$$ Further, for every $\delta\in (0,1)$, there exists an entanglement-assisted one-way protocol that achieves the classical-quantum state splitting task of the above ensemble with average error $\eta+3\sqrt{\delta}$ and expected communication cost $$Q(\eta,r) + 2r\log Q(\eta,r) + 4r + 2\log\frac{4}{\delta}.$$
\end{theorem}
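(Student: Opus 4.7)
The theorem has a converse $(\ge Q(\eta,r))$ and an achievability direction, and my plan is to handle them separately. For the converse, I will turn any $r$-round interactive protocol $\Pi$ with average error $\eta$ into a feasible point of the optimisation defining $Q(\eta,r)$ whose objective equals the expected communication cost of $\Pi$. Encode the message in round $j$ as a positive integer $i_j$, so that the transcript lies in $T_r$; let $p^x_{i_r,\ldots,i_1}$ be its conditional probability given Alice's input $x$ and $\Phi^{(i_r,\ldots,i_1),x}_{A'C}$ be the resulting conditional joint output. Set $\eps^x_{i_r,\ldots,i_1}\defeq \P\br{\Psi^x_C,\Phi^{(i_r,\ldots,i_1),x}_C}$, so that $\Phi^{(i_r,\ldots,i_1),x}_C\in\ball{\eps^x_{i_r,\ldots,i_1}}{\Psi^x_C}$. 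Monotonicity of fidelity under the partial trace, joint concavity (Fact~\ref{fact:fidelityconcave}) and the error guarantee together will yield $\sum_x p(x)\sum_{(i_r,\ldots,i_1)}p^x_{i_r,\ldots,i_1}\br{\eps^x_{i_r,\ldots,i_1}}^2\le \eta^2$, which is the first constraint in $Q(\eta,r)$.

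The heart of the converse is the key inequality
\[
p^x_{i_r,\ldots,i_1}\cdot \Phi^{(i_r,\ldots,i_1),x}_C \preceq \tilde{\omega}^{i_r,\ldots,i_1}_C,\qquad \Tr\br{\tilde{\omega}^{i_r,\ldots,i_1}_C}\le 1,
\]
where $\tilde{\omega}^{i_r,\ldots,i_1}_C$ is sub-normalised and independent of $x$. I plan to prove this by induction on the rounds of $\Pi$: in a round in which Alice sends, no-signalling forces Bob's conditional states to sum over $i_j$ to the unconditional state, so each positive-semidefinite term is dominated by the sum and the inductive inequality propagates with no trace loss; in a round in which Bob sends, his measurement is implemented by a CP, trace non-increasing map which preserves semidefinite inequalities and can only shrink the trace. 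After applying Bob's final CPTP decoding and setting $\omega^{i_r,\ldots,i_1}_C\defeq \tilde{\omega}^{i_r,\ldots,i_1}_C/\Tr(\tilde{\omega}^{i_r,\ldots,i_1}_C)$, the bound $\Tr(\tilde\omega)\le 1$ absorbs the renormalisation and, combined with $\Phi^{(i_r,\ldots,i_1),x}_C\in \ball{\eps^x_{i_r,\ldots,i_1}}{\Psi^x_C}$, yields $p^x_{i_r,\ldots,i_1}\le 2^{-\dmaxepss{\Psi^x_C}{\omega^{i_r,\ldots,i_1}_C}{\eps^x_{i_r,\ldots,i_1}}}$. This exhibits a feasible point of $Q(\eta,r)$ whose objective is the expected cost of $\Pi$.

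For the achievability, I will pick a near-optimal feasible point $\br{\cbr{p^x_{\vec i}},\cbr{\omega^{\vec i}_C},\cbr{\eps^x_{\vec i}}}$ of the optimisation, truncate to a finite subset of $T_r$ whose discarded mass is controlled by $\delta$, and design a one-way protocol via the classical-quantum rejection sampling paradigm of~\cite{Jain:2005,Jain:2008} combined with a variant of the convex split lemma of~\cite{ADJ14}. For each retained $\vec i$ Alice and Bob will pre-share sufficiently many independent purifications of $\omega^{\vec i}_C$ so that, conditioned on a measurement on Alice's side whose design is dictated precisely by the inequality $p^x_{\vec i}\le 2^{-\dmaxepss{\Psi^x_C}{\omega^{\vec i}_C}{\eps^x_{\vec i}}}$, Bob's corresponding share approximates the desired purification of $\Psi^x_C$ up to an error controlled by $\eps^x_{\vec i}$ and by $\delta$. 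Alice then transmits $\vec i$ using the prefix-free integer encoding of Fact~\ref{prefixfree}, inflating each $\log i_j$ to $\log i_j + 2\log\log i_j + O(1)$; Jensen's inequality on $\log\log$ converts this per-round overhead into $2\log Q(\eta,r)+O(1)$ on average, summing to $2r\log Q(\eta,r)+4r$, while the convex-split analysis and the truncation account for the extra $2\log(4/\delta)$ bits of communication and the added $3\sqrt{\delta}$ in the error.

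The hardest step will be the inductive derivation of the $x$-independent sub-normalised upper bound through alternating Alice/Bob rounds: on Bob rounds the operator $\tilde\omega^{i_r,\ldots,i_1}_C$ absorbs a history-dependent composition of CP maps, and one must verify carefully that $\Tr(\tilde\omega^{i_r,\ldots,i_1}_C)\le 1$ is preserved, since otherwise the smooth max-relative entropy picks up an unwanted additive $\log\Tr(\tilde\omega)$ term that would weaken the converse. On the achievability side the delicate bookkeeping lies in converting the abstract data $(p^x_{\vec i},\omega^{\vec i}_C,\eps^x_{\vec i})$ of the optimisation into an explicit measurement-and-sample scheme and in keeping the prefix-free overhead within the stated $2r\log Q(\eta,r)+4r+2\log(4/\delta)$ additive terms.
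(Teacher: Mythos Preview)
Your plan is correct and matches the paper's approach closely. For the converse you organise the argument inductively round by round, while the paper (Lemma~\ref{lem:twoway}) writes one long chain of L\"owner inequalities on the full product of measurement operators, summing out each of Alice's $x$-dependent operators via completeness and then adding positive terms over Bob's outcomes to renormalise; the resulting $x$-independent dominating state and the bound $p^x_{\vec i}\le 2^{-\dmaxepss{\Psi^x_C}{\omega^{\vec i}_C}{\eps^x_{\vec i}}}$ are exactly what you arrive at.

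For the achievability you name the right ingredients (rejection sampling, the convex split variant Lemma~\ref{lem:convsplit}, prefix-free coding), but two details deviate from the paper. First, the paper's convex split lemma requires every $p^x_{\vec i}<1-\delta$, so before invoking it the paper splits inputs into $\G=\{x:\forall\vec i,\ p^x_{\vec i}<1-\delta\}$ and $\B=\X\setminus\G$; for $x\in\B$ the single dominant $\vec i_x$ already has $\dmax{\Psi'^{x,\vec i_x}_C}{\omega^{\vec i_x}_C}\le\log\frac1{1-\delta}$, so Bob's share of $\omega^{\vec i_x}$ is $O(\sqrt\delta)$-close to the target with no further sampling. You do not mention this case split, and without it the convex split step would fail on $\B$. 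Second, the $2\log(4/\delta)$ additive term does not come from truncating $T_r$ (the paper simply uses that optimal feasible points have finite support) but from Huffman-coding the geometrically distributed index $k$ of the first successful rejection-sampling trial.
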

\begin{proof}
The first part of the theorem follows from Lemma \ref{lem:twoway}. The second part of the theorem follows from Lemma \ref{lem:expecachieve}. 
\end{proof}

The additive factor of $2r\log Q(\eta, r) + 4r$ arises due to the prefix-free encoding of the integers (Fact \ref{prefixfree}). To illustrate the key ideas involved in the first part of the theorem, we will first prove the lower bound for the one-way case ($r=1$).

\subsection{Lower bound on the one way protocols}
A general one-way quantum communication protocol $\mathcal{P}_1$ for the classical-quantum state splitting task with average error $\eta$ is described as follows.

\bigskip
\begin{mdframed}
\bigskip
Alice holds a register $A$ and Bob holds a register $B$, with pre-shared entanglement $\ketbra{\theta}_{AB} \in \cD(\H_{AB})$. Alice is given input $x\in \X$ with probability $p(x)$.

\begin{itemize}
\item Conditioned on input $x$, Alice applies the measurement defined by a set of POVM elements $\{M^x_1,M^x_2\ldots\}$ (where $M^x_i \in \cL(\H_A)$ and $\sum_i \br{M^x_i}^{\dagger}M^x_i=\id_A$)  and sends the outcome $i$ to Bob.  Let $p^x_i\defeq \Tr(\br{M^x_i}^{\dagger}M^x_i\theta_A)$ be the probability of outcome $i$. 
\item Alice applies a quantum operation $\E_i: \cL(\H_A)\rightarrow \cL(\H_{A'})$ on her registers to produce the register $A'$. Bob applies a quantum operation $\cF_i: \cL(\H_B)\rightarrow \cL(\H_C)$ on his registers to produce the register $C$. Let $\sigma^{x,i}_{A'C}\in \cD(\H_{A'C})$ be the resulting quantum state. 
\item The final state is $\sum_i p^x_i\sigma^{x,i}_{A'C} \in \cD(\H_{A'C}) $, which satisfies 
\begin{equation}
\label{eq:onewaycorrect}
\sum_xp(x)\bra{\Psi^x}_{A'C}\br{\sum_i p^x_i\sigma^{x,i}_{A'C}}\ket{\Psi^x}_{A'C}\geq 1-\eta^2
\end{equation}
 due to the correctness of the protocol.
\end{itemize}
\bigskip
\end{mdframed}
\centerline{Protocol $\mathcal{P}_1$}
\bigskip

The following lemma shows that for every entanglement-assisted one-way protocol, there exist quantities that are feasible for the optimization problem in the definition of $Q(\eta, 1)$. Recall that $T_1$ (setting $r=1$ in Definition \ref{def:tightcharac}) is the set of all positive integers.

\begin{lemma}
\label{lem:oneway}
Given protocol $\mathcal{P}_1$, there exist positive reals $\{\eps^x_i\}_{i\in T_1, x\in \X},\{p^x_i\}_{i\in T_1, x\in \X}$ and quantum states $\set{\omega^i_C}_{i\in T_1}$ such that 
$$\sum_x p(x) \sum_i p^x_i \left(\eps^x_i\right)^2 \leq \eta^2, \quad \eps^x_i \in (0,1)$$ and
$$\sum_i p^x_i =1, \quad p^x_i \leq 2^{-\dmaxepss{\Psi^x_C}{\omega^i_C}{\eps^x_i}}.$$
 Further, the expected communication cost of $\mathcal{P}_1$ is lower bounded by $\sum_x p(x)\sum_i p^x_i\log(i).$
\end{lemma}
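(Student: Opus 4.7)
The plan is to read off the required data $(\{p^x_i\}, \{\omega^i_C\}, \{\eps^x_i\})$ directly from the protocol $\mathcal{P}_1$. Set $p^x_i := \Tr((M^x_i)^{\dagger} M^x_i\,\theta_A)$, so that $\sum_i p^x_i = 1$ holds automatically because $\{M^x_i\}_i$ is a POVM, and let $\sigma^{x,i}_{A'C}\in\cD(\H_{A'C})$ denote the normalized joint output of Alice and Bob conditioned on input $x$ and outcome $i$. Writing $p_i := \sum_x p(x) p^x_i$, define
\[\omega^i_C \;\defeq\; \frac{1}{p_i}\sum_{x\in\X} p(x)\,p^x_i\,\sigma^{x,i}_C,\]
a state on $\H_C$ independent of $x$---namely Bob's $C$-marginal conditioned only on the classical message $i$. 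Take $\eps^x_i \defeq \P(\Psi^x_C,\sigma^{x,i}_C)$; since $\F(\Psi^x_C,\sigma^{x,i}_C)=\sqrt{1-(\eps^x_i)^2}\geq 1-\eps^x_i$, we have $\sigma^{x,i}_C \in \ball{\eps^x_i}{\Psi^x_C}$, so $\sigma^{x,i}_C$ is an admissible smoothing witness inside $\dmaxepss{\Psi^x_C}{\omega^i_C}{\eps^x_i}$. Monotonicity of fidelity under partial trace (Fact~\ref{fact:monotonequantumoperation}) together with correctness of $\mathcal{P}_1$ gives
\[\sum_x p(x)\sum_i p^x_i (\eps^x_i)^2 \;\leq\; \sum_x p(x)\sum_i p^x_i\bigl(1-\F^2(\Psi^x_{A'C},\sigma^{x,i}_{A'C})\bigr)\;\leq\;\eta^2.\]

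For the max-relative entropy bound, the crucial dominance is $\omega^i_C \succeq (p(x)\,p^x_i/p_i)\,\sigma^{x,i}_C$, obtained by dropping all but the $x$-th term in the convex combination; combined with the choice $\tilde{\Psi}^x_C := \sigma^{x,i}_C$ this yields $\dmaxepss{\Psi^x_C}{\omega^i_C}{\eps^x_i} \leq \log\frac{p_i}{p(x)\,p^x_i}$. The target bound $p^x_i\leq 2^{-\dmaxepss{\Psi^x_C}{\omega^i_C}{\eps^x_i}}$ reads $\dmaxepss{\Psi^x_C}{\omega^i_C}{\eps^x_i}\leq \log(1/p^x_i)$, which matches exactly what we have precisely when $p(x)\geq p_i$. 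I expect the main obstacle is closing the residual $\log(p_i/p(x))$ gap for the ``atypical'' pairs with $p_i > p(x)$; this should be handled by picking a different smoothing witness $\tilde{\Psi}^x_C\in\ball{\eps^x_i}{\Psi^x_C}$ such that $p^x_i\tilde{\Psi}^x_C\preceq \omega^i_C$ (note $\omega^i_C/p^x_i$ has trace $1/p^x_i\geq 1$, leaving room for a near-$\Psi^x_C$ state), or by a small perturbation of $\omega^i_C$ by a universal reference state, at the cost of enlarging $\eps^x_i$ only on those atypical pairs (affordable because they carry small aggregate $p(x)p^x_i$-mass).

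Finally, the expected communication cost of $\mathcal{P}_1$ equals $\sum_x p(x)\sum_i p^x_i\,\ell_i$, where $\ell_i$ is the bit-length of the prefix-free codeword Alice uses to transmit message $i$. After relabelling so that $\ell_1\leq\ell_2\leq\cdots$, Kraft's inequality $\sum_i 2^{-\ell_i}\leq 1$ forces $\ell_i\geq \log i$, and hence the expected cost is at least $\sum_x p(x)\sum_i p^x_i\log i$, as claimed.
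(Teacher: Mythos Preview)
Your choice of $\omega^i_C$ is the source of a genuine gap. By taking $\omega^i_C=\frac{1}{p_i}\sum_{x'}p(x')p^{x'}_i\sigma^{x',i}_C$ you only obtain $p^x_i\,\sigma^{x,i}_C\preceq \frac{p_i}{p(x)}\,\omega^i_C$, and the extra factor $p_i/p(x)$ can be arbitrarily large (for rare inputs $x$). Your proposed fixes---finding another smoothing witness inside $\ball{\eps^x_i}{\Psi^x_C}$ dominated by $\omega^i_C/p^x_i$, or perturbing $\omega^i_C$ by a universal state---are not justified: there is no a priori reason such a witness exists with your $\omega^i_C$, and enlarging $\eps^x_i$ on ``atypical'' pairs does not help because the bound $p^x_i\leq 2^{-\dmaxepss{\Psi^x_C}{\omega^i_C}{\eps^x_i}}$ is required for \emph{every} $(x,i)$, not merely on average.

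The paper avoids averaging over $x$ altogether. It sets $\omega^i_C\defeq \cF_i(\theta_B)$, i.e.\ Bob's decoding map applied to his marginal of the shared entanglement. Since Alice's measurement acts only on $A$, one has the operator inequality
\[
\theta_B \;=\; \Tr_A\!\bigl((M^x_i)^\dagger M^x_i\,\theta_{AB}\bigr)\;+\;\Tr_A\!\bigl((\id-(M^x_i)^\dagger M^x_i)\,\theta_{AB}\bigr)\;\succeq\; p^x_i\,\rho^{x,i}_B,
\]
where $\rho^{x,i}_B$ is Bob's post-measurement state. Applying $\cF_i$ and noting $\sigma^{x,i}_C=\cF_i(\rho^{x,i}_B)$ gives $p^x_i\,\sigma^{x,i}_C\preceq \cF_i(\theta_B)=\omega^i_C$ for \emph{every} $x$, so $\dmax{\sigma^{x,i}_C}{\omega^i_C}\leq \log(1/p^x_i)$ directly, and since $\sigma^{x,i}_C\in\ball{\eps^x_i}{\Psi^x_C}$ the smooth bound follows. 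The point you were missing is that the ``$x$-independent dominating state'' should come from the shared entanglement itself, not from averaging the outputs.

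Your treatment of the error constraint and of the communication-cost lower bound via Kraft's inequality is fine.
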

\begin{proof}
We set $p^x_i$ to be the probabilities as given in the definition of the protocol $\mathcal{P}_1$. Let $\eps^x_i \defeq \P(\ketbra{\Psi^x}_{A'C}, \sigma^{x,i}_{A'C}) =  \sqrt{1-\left( \bra{\Psi^x}_{A'C}\sigma^{x,i}_{A'C}\ket{\Psi^x}_{A'C}\right)^2}$. Eq.~\eqref{eq:onewaycorrect} ensures that $$\sum_x p(x) \sum_i p^x_i \left(\eps^x_i\right)^2 \leq \eta^2.$$ Define $\omega_C^i \defeq \E_i(\theta_B)$. Consider,
$$\theta_B=\Tr_A(\br{M^x_i}^{\dagger}M^x_i\theta_{AB}) + \Tr_A((\text{I}-\br{M^x_i}^{\dagger}M^x_i)\theta_{AB})\succeq \Tr_A(\br{M^x_i}^{\dagger}M^x_i\theta_{AB})=p_i^x\rho^{x,i}_B,$$ where we have defined $\rho^{x,i}_B \defeq \frac{\Tr_A(\br{M^x_i}^{\dagger}M^x_i\theta_{AB})}{p_i^x}$. Thus, $\rho^{x,i}_B \preceq \frac{1}{p^x_i}\theta_B.$ By the definition of max-relative entropy, this implies that $2^{\dmax{\rho^{x,i}_B}{\theta_B}}\leq\frac{1}{p^x_i}$. Now we use the monotonicity of max-relative entropy under quantum operations (Fact \ref{fact:monotonequantumoperation}) to obtain 
\begin{equation}
\label{onewayprobupperbound}
p^x_i\leq 2^{-\dmax{\rho^{x,i}_B}{\theta_B}}\leq 2^{-\dmax{\sigma^{x,i}_B}{\E_i(\theta_B)}} \leq 2^{-\dmaxepss{\Psi^x_C}{\omega^i_C}{\eps^x_i}}.
\end{equation}
The last inequality follows since $\sigma^{x,i}_C \in \ball{\eps^x_i}{\Psi^x_C}$. The expected communication cost of $\mathcal{P}_1$ is $$\sum_x p(x)\sum_i p^x_i\lceil\log(i)\rceil \geq \sum_x p(x)\sum_i p^x_i\log(i).$$ This completes the proof. 
\end{proof} 

\subsection{Lower bound for the interactive protocols}
\label{sec:interactive}

Now we extend the arguments in the previous subsection to the interactive communication setting. 

Alice is given input $x$ with probability $p(x)$. A general $r$-round interactive protocol $\mathcal{P}_2$ (where $r$ is an odd number) with average error $\eta$ is described below. We assume that Alice and Bob only perform local quantum measurements and exchange classical bits. This is without loss of generality due to quantum teleportation, as they are allowed to pre-share arbitrary entanglement. It may be noted that Bob's operations do not depend on $x$. The protocol has been graphically represented in Figure \ref{fig:interactive}.
\bigskip
\begin{mdframed}
\bigskip
\label{multirounddescription}
Alice holds a register $A$ and Bob holds a register $B$, with pre-shared entanglement $\ketbra{\theta}_{AB} \in \cD(\H_{AB})$. Alice is given an input $x\in \X$ with probability $p(x)$.

\begin{itemize}
\item Alice performs a measurement $\M=\{M^{x,1}_{A},M^{x,2}_{A}\ldots \}$, where $M^{x,i}_A \in \cL(\H_A)$ and $\sum_i \br{M_A^{x,i}}^{\dagger}M_A^{x,i}=\id_{A}$. The probability of outcome $i_1$ is $p^x_{i_1}\defeq\Tr M^{x,i_1}_{A}\theta_{A}\br{M^{x,i_1}_{A}}^{\dagger}$. Let $\phi^{x,i_1}_{AB}\in \cD(\H_{AB})$ be the post-measured quantum state, conditioned on the outcome $i_1$. She sends the message $i_1$ to Bob.
 
\item Upon receiving the message $i_1$ from Alice, Bob performs a measurement $$\M^{i_1}=\{M^{1,i_1}_{B},M^{2,i_1}_{B},\ldots\},$$ 
where $M^{1,i_1}_{B}\in \cL(\H_B)$ and $\sum_i \br{M_B^{i,i_1}}^{\dagger}M_B^{i,i_1}=\id_B$. The probability of outcome $i_2$ is $$p^x_{i_2|i_1}\defeq \Tr M^{i_2,i_1}_{B}\phi^{x,i_1}_{B}\br{M_B^{i_2,i_1}}^{\dagger}.$$ Let $\phi^{x,i_2,i_1}_{AB}\in \cD(\H_{AB})$ be the post-measured state conditioned on the outcome $i_2$. Bob sends the message $i_2$ to Alice. 

\item For an odd round $k$, let the
post-measured state conditioned on the measurement outcomes $i_1,i_2\ldots i_{k-1}$ in the previous rounds be $\phi^{x,i_{k-1},i_{k-2},\ldots,i_1}_{AB}$. Alice performs the measurement $$\M^{x,i_{k-1},i_{k-2},\ldots,i_2,i_1}=\{M^{x,1,i_{k-1},i_{k-2},\ldots,i_2,i_1}_{A},M^{x,2,i_{k-1},i_{k-2},\ldots,i_2,i_1}_{A},\ldots\},$$ where $M^{x,i_k,i_{k-1},i_{k-2},\ldots,i_2,i_1}_{A} \in \cL(\H_A)$ and $$\sum_{i_k}\br{M^{x,i_k,i_{k-1},i_{k-2},\ldots,i_2,i_1}_{A}}^{\dagger}M^{x,i_k,i_{k-1},i_{k-2},\ldots,i_2,i_1}_{A} = \id_A.$$ She obtains the outcome $i_k$ with probability $$p^x_{i_k|i_{k-1},i_{k-2},\ldots,i_2,i_1}\defeq\Tr M^{x,i_k,i_{k-1},i_{k-2},\ldots,i_2,i_1}_{A}\phi^{x,i_{k-1},i_{k-2},\ldots,i_1}_{A}\br{M^{x,i_k,i_{k-1},i_{k-2},\ldots,i_2,i_1}_{A}}^{\dagger}.$$ Let the post-measured state with the outcome $i_k$ be $\phi^{x,i_k,i_{k-1},i_{k-2},\ldots,i_1}_{AB}\in \cD(\H_{AB})$. Alice sends the outcome $i_k$ to Bob. 

\item For an even round $k$, let the post-measured state with outcomes $i_1,i_2\ldots i_{k-1}$ in the previous rounds be $\phi^{x,i_{k-1},i_{k-2},\ldots,i_1}_{AB}$. Bob performs the measurement $$\M^{i_{k-1},i_{k-2},\ldots,i_2,i_1}=\{M^{1,i_{k-1},i_{k-2},\ldots,i_2,i_1}_{B},M^{2,i_{k-1},i_{k-2},\ldots,i_2,i_1}_{B},\ldots\},$$ where
$M^{i_k,i_{k-1},i_{k-2},\ldots,i_2,i_1}_{B} \in \cL(\H_B)$ and $$\sum_{i_k}\br{M^{i_k,i_{k-1},i_{k-2},\ldots,i_2,i_1}_{B}}^{\dagger}M^{i_k,i_{k-1},i_{k-2},\ldots,i_2,i_1}_{B} = \id_B.$$
He obtains the outcome $i_k$ with probability $$p^x_{i_k|i_{k-1},i_{k-2},\ldots,i_2,i_1}\defeq\Tr M^{i_k,i_{k-1},i_{k-2},\ldots,i_2,i_1}_{B}\phi^{x,i_{k-1},i_{k-2},\ldots,i_1}_{B}\br{M^{i_k,i_{k-1},i_{k-2},\ldots,i_2,i_1}_{B}}^{\dagger}.$$ Let the post-measured state with the outcome $i_k$ be $\phi^{x,i_k,i_{k-1},i_{k-2},\ldots,i_1}_{AB}\in \cD(\H_{AB})$. Bob sends the outcome $i_k$ to Alice. 

\item After receiving the message $i_r$ from Alice at the end of round $r$, Bob applies a unitary $U_{i_r,i_{r-1},\ldots,i_1}:\H_B\rightarrow \H_{B'C}$ such that $B\equiv B'C$. Define $$\ket{\tau^{x,i_r,i_{r-1},\ldots,i_1}}_{AB'C}\defeq U_{i_r,i_{r-1},\ldots,i_1}\ket{\phi^{x,i_r,i_{r-1},\ldots,i_1}}_{AB}.$$

\item For every $k\leq r$, define $$p^x_{i_1,i_2\ldots i_k}\defeq p^x_{i_1}\cdot p^x_{i_2|i_1}\cdot p^x_{i_3|i_2,i_1}\ldots p^x_{i_k|i_{k-1},i_{k-2},\ldots,i_1}.$$ 
\item Alice outputs the register $A'$ (where $A\equiv A'D$, for some register $D$) and Bob outputs register $C$. The final state in registers $A'C$, averaged over all the measurement outcomes, is $\Phi^x_{A'C}\defeq\sum_{i_r,i_{r-1},\ldots,i_1}p^x_{i_1,i_2,\ldots,i_r}\tau^{x,i_r,i_{r-1},\ldots,i_1}_{A'C}$, which satisfies  $\sum_x p(x)\F^2(\Phi^x_{A'C},\Psi^x_{A'C})\geq 1-\eta^2$.
\end{itemize}
\bigskip
\end{mdframed}
\centerline{Protocol $\mathcal{P}_2$}
\bigskip

\begin{remark}
\label{nontrivialmessage}
In the above protocol, we can assume, without loss of generality, that each measurement has at least $2$ outcomes. This is because a measurement with just one outcome is a local isometry. This local isometry can be merged with the next measurement by the same party, as the local operations of Alice and Bob commute. 
\end{remark}

\begin{remark}
\label{abortconvention}
It is possible that either of the parties abort the protocol after some number of rounds, conditioned on the input $x$ or the previous messages. Given an input $x$, let $(i_1,i_2\ldots i_k)$ be a sequence of messages after which Alice aborts (if $k$ is odd) or Bob aborts (if $k$ is even). We can extend this sequence to $(i_1, i_2, \ldots i_k, 1,1, \ldots 1)$, by appending $r-k$ `ones'. By Remark \ref{nontrivialmessage}, it is a unique sequence of messages, since any non-aborting round $k$ requires $i_k>1$. As a result, we can assume that all the sequences of the messages (including those with an abort) are of length $r$.
 
\end{remark}

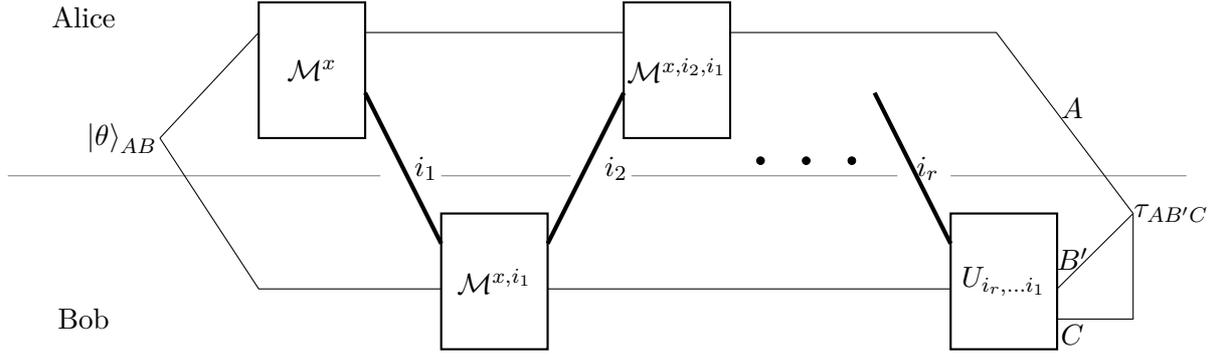
\begin{figure}
\centering
\begin{tikzpicture}

\node at (1.5,7) {Alice};
\node at (1.5,3) {Bob};
\draw [gray] (0.5,4.9) -- (5.4,4.9); 
\draw [gray] (6.2,4.9) -- (7.9,4.9); 
\draw [gray] (8.7,4.9) -- (12.2,4.9); 
\draw [gray] (12.9,4.9) -- (16,4.9);

\node at (2,5.4) {$\ket{\theta}_{AB}$};
\draw (2.5,5.4) -- (3.8,6.8);
\draw (5.2,6.8) -- (8.6,6.8);
\draw (10,6.8) -- (13.5,6.8) -- (15.3,4.4);

\draw (2.5,5.4) -- (3.8,3.4) -- (6.2,3.4);
\draw (7.6,3.4) -- (12.9,3.4);
\draw (14.3,3.4) -- (15.3,4.4);
\draw (14.3,3.0) -- (15.3,3.0) -- (15.3,4.4);

\node at (15.8,4.4) {$\tau_{AB'C}$};
\node at (14.5,2.8) {$C$};
\node at (14.5,3.8) {$B'$};
\node at (14.5,5.8) {$A$};

\draw [thick] (3.8,7.2) rectangle (5.2,5.4);
\node at (4.5,6.3) {$\M^x$};
\draw [ultra thick] (5.2,6) -- (6.2,4.0);
\node at (6,5) {$i_1$};
\draw [thick] (6.2,2.6) rectangle (7.6,4.4);
\node at (6.9,3.5) {$\M^{x,i_1}$};
\draw [ultra thick] (7.6,4.0) -- (8.6,6);
\node at (8.5,5) {$i_2$};
\draw [thick] (8.6,7.2) rectangle (10,5.4);
\node at (9.3,6.3) {$\M^{x,i_2,i_1}$};
\draw [ultra thick] (11.9,6) -- (12.9,4.0);
\node at (12.6,5) {$i_r$};
\draw [thick] (12.9,2.6) rectangle (14.3,4.4);
\node at (13.6,3.5) {$U_{i_r,\ldots i_1}$};

\draw [fill] (10.4,5.1) circle [radius = 0.05];
\draw [fill] (11,5.1) circle [radius = 0.05];
\draw [fill] (11.6,5.1) circle [radius = 0.05];

\end{tikzpicture}
\label{fig:interactive}
\caption{Graphical representation of the interactive protocol for classical-quantum state splitting. The output register is $A'C$.}
\end{figure}

\begin{definition}\label{def:expectedcc}
	The expected communication cost of the protocol $\mathcal{P}_2$ is the expected length of the messages over all probability outcomes, where the expectation is taken over the distribution of the input and the internal randomness of the protocol. Thus, it can be expressed as follows. 
	\begin{eqnarray}
	\label{eq:interactiveexpec}
	&&\sum_x p(x) \bigg(\sum_{i_1}p^x_{i_1}\lceil\log(i_1)\rceil + \sum_{i_1,i_2}p^x_{i_1}p^x_{i_2|i_1}\lceil\log(i_2)\rceil+\cdots+\sum_{i_1,i_2,\ldots,i_r}p^x_{i_1,i_2\ldots i_{r-1}}p^x_{i_r|i_{r-1},i_{r-2},\ldots, i_1}\lceil\log(i_r)\rceil\bigg)\nonumber\\ && = 
	\sum_x p(x)\sum_{i_1,i_2,\ldots, i_r}p^x_{i_1,i_2,\ldots,i_r}\left(\lceil\log(i_1)\rceil+\lceil\log(i_2)\rceil+\cdots+\lceil\log(i_r)\rceil\right)
	\end{eqnarray}
	
\end{definition}

Now we prove the following lemma, which is analogous to Lemma \ref{lem:oneway} in the one-way case. 
\begin{lemma}
\label{lem:twoway}
Fix a protocol $\mathcal{P}_2$ and average error $\eta \in (0,1)$. There exists a collection of quantum states $\{\omega_C^{i_r,i_{r-1}, \ldots, i_1}\}_{(i_r, i_{r-1}, \ldots, i_1) \in T_r}$ and non-negative reals $\{\eps^x_{i_r,i_{r-1}, \ldots, i_1}\}_{(i_r, i_{r-1}, \ldots, i_1) \in T_r, x\in \X}$, \\ $\{p^x_{i_r,i_{r-1}, \ldots, i_1}\}_{(i_r, i_{r-1}, \ldots, i_1) \in T_r, x\in \X}$ such that 
$$p^x_{i_r,i_{r-1}, \ldots, i_1} \leq 2^{- \dmaxepss{\Psi^x_C}{\omega_C^{i_r,i_{r-1}, \ldots, i_1}}{\eps^x_{i_r,i_{r-1}, \ldots, i_1}}},  \sum_{i_r, i_{r-1}, \ldots, i_1}  p^x_{i_r,i_{r-1}, \ldots, i_1} = 1 , \forall x \in \X$$ and 
$$ \sum_x p(x) \sum_{i_r, i_{r-1}, \ldots, i_1} p^x_{i_r,i_{r-1}, \ldots, i_1} \left(\eps^x_{i_r,i_{r-1}, \ldots, i_1}\right)^2 \leq \eta^2, \quad \eps^x_{i_r,i_{r-1}, \ldots, i_1} \leq 1.$$ Furthermore, the expected communication cost of $\mathcal{P}_2$ is lower bounded by 
$$\sum_xp(x)\sum_{i_1,i_2,\ldots,i_r}p^x_{i_r,i_{r-1},\ldots, i_1}\log(i_1\cdot i_2\cdots i_r).$$
\end{lemma}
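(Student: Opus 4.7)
The plan is to extend the argument of Lemma \ref{lem:oneway} from the one-way to the interactive setting. I take $p^x_{i_r, \ldots, i_1}$ to be the joint probability of the message sequence in $\mathcal{P}_2$, and set $\eps^x_{i_r, \ldots, i_1} := \P(\ketbra{\Psi^x}_{A'C}, \tau^{x, i_r, \ldots, i_1}_{A'C})$. Since $\ket{\Psi^x}$ is pure, $\F^2(\Psi^x_{A'C}, \Phi^x_{A'C}) = \bra{\Psi^x}_{A'C} \Phi^x_{A'C} \ket{\Psi^x}_{A'C}$ is linear in $\Phi^x_{A'C} = \sum_{i_r, \ldots, i_1} p^x_{i_r, \ldots, i_1} \tau^{x, i_r, \ldots, i_1}_{A'C}$, so the correctness bound $\sum_x p(x) \F^2(\Psi^x_{A'C}, \Phi^x_{A'C}) \geq 1-\eta^2$ converts directly into $\sum_x p(x) \sum_{i_r, \ldots, i_1} p^x_{i_r, \ldots, i_1} (\eps^x_{i_r, \ldots, i_1})^2 \leq \eta^2$. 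The expected-cost inequality follows immediately from $\lceil \log i_k \rceil \geq \log i_k$ applied termwise in \eqref{eq:interactiveexpec}, so the real work lies in constructing the states $\omega_C^{i_r, \ldots, i_1}$ and proving the max-relative-entropy inequality on $p^x_{i_r, \ldots, i_1}$.

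The key structural observation is that, because Alice's measurements act on $A$ and Bob's on $B$, the combined Kraus operator along a fixed branch $(i_1, \ldots, i_r)$ factorizes as a tensor product $A^x_{i_r, \ldots, i_1} \otimes B_{i_r, \ldots, i_1}$, after which Bob applies his final unitary $U_{i_r, \ldots, i_1} : \H_B \to \H_{B'C}$ on the $B$ side. Here $A^x_{i_r, \ldots, i_1}$ is the ordered composition (latest round to earliest) of Alice's measurement operators $M^{x, i_k, i_{k-1}, \ldots, i_1}_A$ over the odd rounds $k$, and $B_{i_r, \ldots, i_1}$ is the analogous composition of $M^{i_k, i_{k-1}, \ldots, i_1}_B$ over even $k$. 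Each individual factor $M^{(k)}$ satisfies $\sum_{i_k} (M^{(k)})^{\dagger} M^{(k)} = \id$, hence $\|M^{(k)}\|_{\infty} \leq 1$, giving $(A^x_{i_r, \ldots, i_1})^{\dagger} A^x_{i_r, \ldots, i_1} \preceq \id_A$ and $B_{i_r, \ldots, i_1}^{\dagger} B_{i_r, \ldots, i_1} \preceq \id_B$.

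I then define the sub-normalized reference operator $\widetilde{\omega}_C^{i_r, \ldots, i_1} := \Tr_{B'}[U_{i_r, \ldots, i_1} B_{i_r, \ldots, i_1} \theta_B B_{i_r, \ldots, i_1}^{\dagger} U_{i_r, \ldots, i_1}^{\dagger}]$, which is the marginal on $C$ that Bob would produce by running his half of the protocol branch on his share of the entanglement in isolation. Pushing the partial trace over $A$ through the factorization and invoking $(A^x_{i_r, \ldots, i_1})^{\dagger} A^x_{i_r, \ldots, i_1} \preceq \id_A$ yields $p^x_{i_r, \ldots, i_1} \tau^{x, i_r, \ldots, i_1}_C \preceq \widetilde{\omega}_C^{i_r, \ldots, i_1}$ as operators, in direct analogy with the one-way identity $p^x_i \rho^{x,i}_B \preceq \theta_B$ used in Lemma \ref{lem:oneway}. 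The crucial normalization fact is $q_{i_r, \ldots, i_1} := \Tr(\widetilde{\omega}_C^{i_r, \ldots, i_1}) = \Tr(B_{i_r, \ldots, i_1}^{\dagger} B_{i_r, \ldots, i_1} \theta_B) \leq 1$, so the properly normalized state $\omega_C^{i_r, \ldots, i_1} := \widetilde{\omega}_C^{i_r, \ldots, i_1}/q_{i_r, \ldots, i_1} \in \cD(\H_C)$ still satisfies $\tau^{x, i_r, \ldots, i_1}_C \preceq (1/p^x_{i_r, \ldots, i_1}) \, \omega_C^{i_r, \ldots, i_1}$, and therefore $\dmax{\tau^{x, i_r, \ldots, i_1}_C}{\omega_C^{i_r, \ldots, i_1}} \leq -\log p^x_{i_r, \ldots, i_1}$.

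To convert this into the smoothed bound, I apply monotonicity of the purified distance under partial trace (Fact \ref{fact:monotonequantumoperation}) to get $\P(\Psi^x_C, \tau^{x, i_r, \ldots, i_1}_C) \leq \eps^x_{i_r, \ldots, i_1}$, so by the definition of $\dmaxepss{\cdot}{\cdot}{\cdot}$ we have $\dmaxepss{\Psi^x_C}{\omega_C^{i_r, \ldots, i_1}}{\eps^x_{i_r, \ldots, i_1}} \leq \dmax{\tau^{x, i_r, \ldots, i_1}_C}{\omega_C^{i_r, \ldots, i_1}} \leq -\log p^x_{i_r, \ldots, i_1}$, which is exactly what is required. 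The step I expect to be most delicate is the bookkeeping of the cyclicity manipulation inside the partial trace needed to push $(A^x_{i_r, \ldots, i_1})^{\dagger} A^x_{i_r, \ldots, i_1}$ onto $\theta_{AB}$ cleanly across all $r$ rounds; once that factorization is in place, the argument is essentially a round-by-round imitation of Lemma \ref{lem:oneway}, the genuinely new ingredient being the observation $q_{i_r, \ldots, i_1} \leq 1$ that lets us pass from the sub-normalized $\widetilde{\omega}_C^{i_r, \ldots, i_1}$ to a genuine density matrix in $\cD(\H_C)$ without sacrificing any factor in the bound on $p^x_{i_r, \ldots, i_1}$.
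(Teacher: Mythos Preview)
Your argument is correct and in fact slightly cleaner than the paper's, though the underlying idea---remove all dependence on $x$ from Bob's side and bound via L\"owner order---is the same.

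The paper does not use your tensor factorization $A^x_{i_r,\ldots,i_1}\otimes B_{i_r,\ldots,i_1}$ directly. Instead it runs a recursive chain of L\"owner inequalities: at each of Alice's rounds it adds the positive terms corresponding to all her other outcomes and uses cyclicity of the partial trace together with the completeness relation $\sum_{i_k}(M^{(k)})^{\dagger}M^{(k)}=\id_A$ to strip that round's operator. Once all of Alice's operators are gone, the paper then \emph{also} sums over Bob's outcomes at every even round, obtaining a state
\[
\sigma^{i_{r-2},\ldots,i_1}_B=\sum_{i'_{r-1}}M^{i'_{r-1},i_{r-2},\ldots,i_1}_B\Bigl(\cdots\sum_{i'_2}M^{i'_2,i_1}_B\theta_B(M^{i'_2,i_1}_B)^{\dagger}\cdots\Bigr)(M^{i'_{r-1},i_{r-2},\ldots,i_1}_B)^{\dagger},
\]
which is automatically normalized because it is the output of a sequence of CPTP maps on $\theta_B$; then $\omega_C^{i_r,\ldots,i_1}:=\Tr_{B'}\bigl(U_{i_r,\ldots,i_1}\sigma_B U_{i_r,\ldots,i_1}^{\dagger}\bigr)$.

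You instead keep Bob's specific branch $B_{i_r,\ldots,i_1}\theta_B B_{i_r,\ldots,i_1}^{\dagger}$ and handle the normalization by the observation $q_{i_r,\ldots,i_1}=\Tr(B^{\dagger}B\theta_B)\leq 1$, which lets you divide without losing anything in the bound on $p^x_{i_r,\ldots,i_1}$. This avoids the paper's second round of ``adding positive operators'' and yields a (generally different, typically tighter) reference state $\omega_C$. Both constructions are valid feasible points for $Q(\eta,r)$; yours is shorter and makes the role of the product structure $A^x\otimes B$ explicit, while the paper's has the minor convenience that no sub-normalized intermediate state appears.
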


\begin{proof}
Let $p^x_{i_r, i_{r-1},\ldots,i_1}$ be as given in the description of the protocol $\mathcal{P}_2$. Define 
$$\eps^x_{i_r, i_{r-1}, \ldots, i_1} \defeq \P(\Psi^x_{A'C},\tau^{x,i_r,i_{r-1},\ldots,i_1}_{A'C}) = \sqrt{1- \bra{\Psi^x}_{A'C}\tau^{x,i_r,i_{r-1},\ldots,i_1}_{A'C}\ket{\Psi^x}_{A'C}}.$$ It follows from the definition of the average error $\eta$ that 
$$ \sum_x p(x) \sum_{i_r, i_{r-1}, \ldots, i_1} p^x_{i_r,i_{r-1}, \ldots, i_1} \left(\eps^x_{i_r,i_{r-1}, \ldots, i_1}\right)^2 \leq \eta^2.$$ In order to define $\omega_C^{i_r, i_{r-1} \ldots i_1}$ and prove the upper bound on $p^x_{i_r, i_{r-1},\ldots,i_1}$, we consider the following chain of L\"owener inequalities. It involves recursively removing the operators on Alice's registers (which depend on $x$) by adding positive semidefinite operators. 
\begin{eqnarray*}
&&\Tr_A\br{\phi_{AB}^{x,i_r,i_{r-1},\ldots,i_1}}\\
&=&\frac{\Tr_A\br{M^{x,i_r,i_{r-1},\ldots,i_1}_AM^{i_{r-1},\ldots, i_1}_B\ldots M^{x,i_1}_A\theta_{AB}\br{M^{x,i_1}_A}^{\dagger}\ldots \br{M^{i_{r-1},\ldots,i_1}_B}^{\dagger}\br{M^{x,i_r,i_{r-1},\ldots,i_1}_A}^{\dagger}}}{p^x_{i_r,i_{r-1},\ldots,i_1}} \\
&\preceq& \frac{\sum_{i_r'}\Tr_A\br{M^{x,i_r',i_{r-1},\ldots,i_1}_AM^{i_{r-1},\ldots,i_1}_B\ldots M^{x,i_1}_A\theta_{AB}\br{M^{x,i_1}_A}^{\dagger}\ldots \br{M^{i_{r-1},\ldots, i_1}_B}^{\dagger}\br{M^{x,i_r',i_{r-1},\ldots,i_1}_A}^{\dagger}}}{p^x_{i_r,i_{r-1},\ldots,i_1}} \\ && (\text{Adding positive matrices}) \\ 
&=& \frac{\Tr_A\br{M^{i_{r-1},\ldots, i_1}_BM^{x,i_{r-2},\ldots, i_1}_A\ldots M^{x,i_1}_A\theta_{AB}\br{M^{x,i_1}_A}^{\dagger}\ldots M^{x,i_{r-2},\ldots, i_1\dagger}_A\br{M^{i_{r-1},\ldots, i_1}_B}^{\dagger}}}{p^x_{i_r,i_{r-1},\ldots,i_1}} \\ && (\text{Using the cyclicity of trace and the identity } \sum_{i'_r} \br{M^{x,i_r',i_{r-1},\ldots,i_1}_A}^{\dagger}M^{x,i_r',i_{r-1},\ldots,i_1}_A = \id_A)\\
&\preceq& \frac{\sum_{i_{r-2}'}\Tr_A\br{M^{i_{r-1},\ldots,i_1}_BM^{i_{r-2}',\ldots,i_1}_A\ldots M^{i_1}_A\theta_{AB}\br{M^{i_1}_A}^{\dagger}\ldots M^{i_{r-2}',\ldots,i_1\dagger}_A\br{M^{i_{r-1},\ldots,i_1}_B}^{\dagger}}}{p^x_{i_r,i_{r-1},\ldots,i_1}} \\ && (\text{Index }i'_{r-2}\text{ is different from the index }i_{r-2})\\
&=& \frac{\sum_{i_{r-2}'}\Tr_A\br{M^{i_{r-1},\ldots,i_1}_BM^{i_{r-2}',\ldots,i_1\dagger}_AM^{i_{r-2}',\ldots,i_1}_A\ldots M^{i_1}_A\theta_{AB}\br{M^{i_1}_A}^{\dagger}\ldots \br{M^{i_{r-1},\ldots,i_1}_B}^{\dagger}}}{p^x_{i_r,i_{r-1},\ldots,i_1}} \\ &&(\text{Alice's operations and Bob's operations commute, and using the cyclicity of trace}) \\&=& \frac{\Tr_A\br{M^{i_{r-1},\ldots,i_1}_BM^{i_{r-3},\ldots,i_1}_B\ldots M^{i_1}_A\theta_{AB}\br{M^{i_1}_A}^{\dagger}\ldots M^{i_{r-3},\ldots,i_1\dagger}_B\br{M^{i_{r-1},\ldots,i_1}_B}^{\dagger}}}{p^x_{i_r,i_{r-1},\ldots,i_1}} \\ &&(\text{Continuing the same way for all of the Alice's operations}) \\ 
&\preceq& \frac{\Tr_A\br{M^{i_{r-1},\ldots,i_1}_BM^{i_{r-3},\ldots,i_1}_B\ldots M^{i_2,i_1}_B\theta_{AB}\br{M^{i_2,i_1}_B}^{\dagger}\ldots \br{M^{i_{r-3},\ldots,i_1}_B}^{\dagger}\br{M^{i_{r-1},\ldots,i_1}_B}^{\dagger}}}{p^x_{i_r,i_{r-1},\ldots,i_1}}\\ &=& \frac{M^{i_{r-1},\ldots,i_1}_BM^{i_{r-3},\ldots,i_1}_B\ldots M^{i_2,i_1}_B\theta_{B}\br{M^{i_2,i_1}_B}^{\dagger}\ldots \br{M^{i_{r-3},\ldots,i_1}_B}^{\dagger}\br{M^{i_{r-1},\ldots,i_1}_B}^{\dagger}}{p^x_{i_r,i_{r-1},\ldots,i_1}} \\ 
&\preceq& \frac{\sum_{i_{r-1}'} M^{i_{r-1}',i_{r-2},\ldots, i_1}_B\br{\ldots\br{\sum_{i_2'} M^{i_2',i_1}_B\theta_{B}\br{M^{i_2',i_1}_B}^{\dagger}}\ldots}\br{M^{i_{r-1}',i_{r-2},\ldots, i_1}_B}^{\dagger}}{p^x_{i_r,i_{r-1},\ldots,i_1}} \\ &&(\text{Adding positive operators to make the numerator a quantum state})
\end{eqnarray*}

Now define $$\sigma^{i_{r-2},\ldots, i_1}_B \defeq \sum_{i_{r-1}'} M^{i_{r-1}',i_{r-2},\ldots, i_1}_B\br{\ldots\br{\sum_{i_2'} M^{i_2',i_1}_B\theta_{B}\br{M^{i_2',i_1}_B}^{\dagger}}\ldots}\br{M^{i_{r-1}',i_{r-2},\ldots, i_1}_B}^{\dagger}$$ 
which is independent of $x$. Then we have 
$$p^x_{i_r,i_{r-1},\ldots,i_1} \leq 2^{-\dmax{\Tr_A\br{\phi^{x,i_r,i_{r-1},\ldots,i_1}_{AB}}}{\sigma_B^{i_{r-2},\ldots, i_1}}},$$ by the definition of max-relative entropy. Applying Bob's final unitary and tracing out the register $B'$, we obtain  $$\omega_C^{i_r,i_{r-1},\ldots,i_1}\defeq \Tr_{B'}\br{U_{i_r,i_{r-1},\ldots,i_1}\sigma^{i_{r-2},\ldots, i_1}_BU^{\dagger}_{i_r,i_{r-1},\ldots,i_1}}$$ which is independent of $x$. Now using the monotonicity of max-relative entropy under quantum operations (Fact \ref{fact:monotonequantumoperation}), we find that
$$p^x_{i_r,i_{r-1},\ldots,i_1} \leq 2^{-\dmax{\tau_C^{x,i_r,i_{r-1},\ldots,i_1}}{\omega_C^{i_r,i_{r-1},\ldots,i_1}}} \leq 2^{-\dmaxepss{\Psi^x_C}{\omega_C^{i_r,i_{r-1},\ldots,i_1}}{\eps^x_{i_r, i_{r-1}, \ldots, i_1}}}.$$ Here, the last inequality uses the definition of $\eps^x_{i_r, i_{r-1}, \ldots, i_1}$. 

From Definition~\ref{def:expectedcc}, it follows that the expected communication cost of the protocol $\mathcal{P}_2$ is lower bounded by $$\sum_x p(x)\sum_{i_1,i_2,\ldots,i_r}p^x_{i_1,i_2,\ldots,i_r}\log\left(i_1\cdot i_2 \cdots i_r\right).$$ This completes the proof of the lemma.
\end{proof}

\subsection{Achievability proof}

In this subsection, we construct an entanglement-assisted protocol for the classical-quantum state splitting task of any given ensemble with the expected communication cost close to $Q(\eta,r)$. In order to do so, we will require the following lemma. Its proof is inspired by the proof of convex split lemma as shown in \cite[Lemma 3.1, arXiv version 1]{ADJ14}. 

\begin{lemma}
\label{lem:convsplit}
Let $\delta \in (0,1)$ and $\cI$ be a finite set. Let $\{\Psi^i_C\}_{i\in \cI}$, $\{\omega^i_C\}_{i\in \cI}$ be two collections of quantum states belonging to $\cD(\H_C)$ and $\{p_i\}_{i\in \cI}$ be a collection of non-negative reals satisfying $$p_i \in (0,1-\delta), \quad \sum_i p_i=1, \quad p_i \leq 2^{-\dmax{\Psi^i_C}{\omega^i_C}}.$$ Introduce a set of registers $\set{C_{i}}_{i\in \cI}$ such that $C_{i}\equiv C$ for all $i$. There exist a collection of quantum states $\set{\tau^{(-i)}}_{i \in \cI}$ (that depend on $\delta$) such that $\tau^{(-i)} \in \cD\left(\bigotimes_{j\in \cI, j\neq i}\H_{C_j}\right)$ and a quantum state $\tau \in \cD\left(\bigotimes_{i\in \cI}\H_{C_i}\right)$ defined as
$$\tau \defeq \sum_{i\in \cI} p_i \Psi^i_{C_i}\otimes \tau^{(-i)}$$ such that 
$$\dmax{\tau}{\bigotimes_{i \in \cI}\omega^i_{C_i}}\leq 2\log\frac{1}{\delta}.$$
\end{lemma}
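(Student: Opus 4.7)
The plan is to follow the convex split technique from~\cite{ADJ14}. The first step is to use the hypothesis $p_i \leq 2^{-\dmax{\Psi^i_C}{\omega^i_C}}$, which is equivalent to $p_i\Psi^i_C \preceq \omega^i_C$ in the L\"owner order, to decompose each $\omega^i_C$ as
$$\omega^i_C = p_i\Psi^i_C + (1-p_i)\nu^i_C, \qquad \nu^i_C \defeq \frac{\omega^i_C - p_i\Psi^i_C}{1-p_i} \in \cD(\H_C),$$
where $\nu^i_C$ is a valid quantum state because $1-p_i > \delta > 0$ guarantees positivity and the trace is one.

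For the construction of $\tau^{(-i)}$, I will use a $\delta$-dependent product state on the remaining registers. Specifically, I pick a mixing parameter $\lambda \in (0,1)$ (to be chosen as a function of $\delta$) and introduce the auxiliary state
$$\tilde\omega^j_C \defeq \lambda\,\omega^j_C + (1-\lambda)\nu^j_C = \lambda p_j\Psi^j_C + (1-\lambda p_j)\nu^j_C,$$
then set $\tau^{(-i)} \defeq \bigotimes_{j\neq i}\tilde\omega^j_{C_j}$, which is a legitimate state on $\bigotimes_{j\neq i}\H_{C_j}$ depending on $\delta$ through $\lambda$. Expanding both $\sigma \defeq \bigotimes_i\omega^i_{C_i}$ and $\tau = \sum_i p_i\Psi^i_{C_i}\otimes\tau^{(-i)}$ in the $\Psi$--$\nu$ basis gives nonnegative combinations of the PSD operators $\xi_S \defeq \bigotimes_{i\in S}\Psi^i_{C_i}\bigotimes_{j\notin S}\nu^j_{C_j}$ indexed by $S \subseteq \cI$. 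A direct computation yields, for $S\neq\emptyset$, the coefficient ratio
$$\frac{[\tau]_{\xi_S}}{[\sigma]_{\xi_S}} \;=\; |S|\,\lambda^{|S|-1}\prod_{k\notin S}\frac{1-\lambda p_k}{1-p_k},$$
while the coefficient of $\xi_\emptyset$ in $\tau$ vanishes. Since each $\xi_S$ is PSD and these operators enter additively, showing $\tau \preceq \delta^{-2}\sigma$ reduces to bounding this scalar ratio uniformly over $S$ by $\delta^{-2}$.

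The main obstacle is precisely this uniform bound, together with the correct choice of $\lambda$. Using $1-p_k > \delta$ combined with $\sum_k p_k = 1$, the product factor satisfies $\prod_{k\notin S}\frac{1-\lambda p_k}{1-p_k} \leq \exp\br{(1-\lambda)/\delta}$, while $|S|\lambda^{|S|-1}$ attains its maximum over positive integers $|S|$ near $1/\abs{\ln\lambda}$ with value of order $1/(e\abs{\ln\lambda})$. Balancing the two factors (heuristically, $\lambda$ of order $1-\delta$) is what should yield the target bound $\delta^{-2}$, but since the naive exponential estimate overshoots the claim, the real work will be to replace it by a tighter accounting that uses $\sum_k p_k = 1$ more sharply---for instance, by separating the cases $|S|=1$ and $|S|\geq 2$, or by optimising $\lambda$ against the full product rather than its exponential upper bound. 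Once the ratio is controlled, we conclude $\tau \preceq \delta^{-2}\sigma$ and hence $\dmax{\tau}{\bigotimes_i\omega^i_{C_i}} \leq 2\log(1/\delta)$.
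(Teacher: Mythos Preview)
Your construction is genuinely different from the paper's, and in fact it works---your worry that ``the naive exponential estimate overshoots'' is misplaced. With the choice $\lambda = 1-\delta$ you yourself suggest, the product factor obeys
\[
\prod_{k\notin S}\frac{1-\lambda p_k}{1-p_k}\;\leq\;\exp\!\Bigl(\sum_{k}\frac{(1-\lambda)p_k}{1-p_k}\Bigr)\;\leq\;\exp\!\Bigl(\frac{1-\lambda}{\delta}\sum_k p_k\Bigr)=e,
\]
using $1-p_k>\delta$ and $\sum_k p_k=1$. The factor $|S|\lambda^{|S|-1}=n(1-\delta)^{n-1}$ is maximised at $n\approx 1/\delta$ with value $\approx 1/(e\delta)$, so the ratio is $O(1/\delta)$, comfortably inside $\delta^{-2}$ (for small~$\delta$; for the full range one may need to tune constants, but the paper's own inequality $\frac{1}{\delta(1-\delta)}\log\frac{1}{\delta}\leq\frac{1}{\delta^2}$ has the same caveat). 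So your product-state ansatz does close, and even yields a slightly sharper bound than $2\log(1/\delta)$.

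For comparison, the paper does \emph{not} take $\tau^{(-i)}$ to be a product state. It writes $\omega^j=p_j\Psi^j+(1-p_j)\Psi'^j$, expands $\bigotimes_j\omega^j$ over binary strings $\vec{s}\in\{0,1\}^{|\cI|}$ indicating $\Psi$ vs.\ $\Psi'$ at each site, and defines $\tau^{(-i)}$ as the \emph{normalised restriction} of $\bigotimes_{j\neq i}\omega^j$ to the ``typical'' set of strings with at most $\frac{1}{\delta}\log\frac{1}{\delta}$ occurrences of $\Psi$. A Chernoff bound shows the normalisation $N_i\geq 1-\delta$; then the coefficient of $\Psi_{\vec{s}}$ in $\tau$ is simply $(\sum_{i:s_i=0}1/N_i)\,p^{\vec{s}}$, bounded by $\frac{1}{1-\delta}\cdot\frac{1}{\delta}\log\frac{1}{\delta}\cdot p^{\vec{s}}\leq\delta^{-2}p^{\vec{s}}$ because the typical-set constraint caps the number of summands directly. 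Your exponential damping via $\lambda$ and the paper's hard truncation to a typical set are two ways of suppressing strings with many $\Psi$'s; yours keeps $\tau^{(-i)}$ product (which is cleaner for some purposes), while the paper's truncation makes the counting in the final step completely explicit.
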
 
\begin{proof}
From the fact that $p_i\leq 2^{-\dmax{\Psi^i_C}{\omega^i_C}}$ and the definition of max-relative entropy, there exists a quantum state $\Psi'^i_C$ such that $\omega^i_C = p_i \Psi^i_C + (1-p_i)\Psi'^i_C$. Without loss of generality, we may assume that $\cI=\set{1,\ldots,\abs{\cI}}$. Let $\Psi^i_0 \defeq \Psi^i_C$, $\Psi^i_1\defeq \Psi'^i_C$, $p_i^0\defeq p_i$ and $p_i^1\defeq 1-p_i$. For a string $\vec{s} \in \{0,1\}^{|\cI|}$, we define $$\Psi_{\vec{s}} \defeq \otimes_{i\in \cI}\Psi^i_{s_i}, \quad p^{\vec{s}} \defeq \Pi_{i\in \cI}p_i^{s_i},$$ where $s_i$ is the $i$-th bit of $\vec{s}$.

Let $\Typ$ be the set of all strings in $\{0,1\}^{|\cI|-1}$ for which the number of $0$'s is at most $\frac{1}{\delta}\log\frac{1}{\delta}$. The quantum state $\tau^{(-i)}$, as promised in the statement of the lemma, is defined as 
$$\tau^{(-i)} \defeq \frac{1}{N_i}\sum_{\vec{t}\in \Typ}p_1^{t_1}\cdots p_{i-1}^{t_{i-1}}p_{i+1}^{t_i}\cdots p_{|\cI|}^{t_{|\cI|-1}}\Psi^{1}_{t_1}\otimes\cdots\otimes \Psi^{i-1}_{t_{i-1}}\otimes \Psi^{i+1}_{t_i}\otimes\cdots\otimes \Psi^{|\cI|}_{t_{|\cI|-1}},$$ where $N_i = \sum_{\vec{t}\in \Typ}p_1^{t_1}\cdots p_{i-1}^{t_{i-1}}p_{i+1}^{t_i}\cdots p_{|\cI|}^{t_{|\cI|-1}}$ is the normalization factor. From the relations 
$$p_1^{0}+ \cdots + p_{i-1}^{0}+ p_{i+1}^{0}+\cdots+p_{|\cI|}^{0}\leq \sum_{i\in \cI} p_i =1,$$
$$p_1^{0}+ \cdots + p_{i-1}^{0}+ p_{i+1}^{0}+\cdots+p_{|\cI|}^{0} = 1- p_i^{0} \geq \delta$$
and the Hoeffding-Chernoff bound (Fact \ref{fact:chernoff}), it holds that 
$$N_i = \Pr[\Typ] \geq 1- e^{- \log\frac{1}{\delta}}= 1-\delta.$$
Observe that 
$$\bigotimes_{i \in \cI}\omega^i_{C_i} = \sum_{\vec{s}\in \{0,1\}^{|\cI|}}p^{\vec{s}}\Psi_{\vec{s}}.$$ Let $q(\vec{s})$ be such that 
$$\sum_i p^0_i \Psi^i_0\otimes \tau^{(-i)} = \sum_{\vec{s}\in \{0,1\}^{|\cI|}}q(\vec{s})\Psi_{\vec{s}}.$$
We claim that $q(\vec{s}) \leq \frac{1}{\delta^2}p^{\vec{s}}$ for all $\vec{s}$, which implies that 
$$\sum_i p^0_i \Psi^i_0\otimes \tau^{(-i)} \preceq \frac{1}{\delta^2}\bigotimes_{i \in \cI}\omega^i_{C_i},$$
proving the lemma. To prove the claim, observe that if $\vec{s}$ is such that the number of $0$'s in it is more than $\frac{1}{\delta}\log\frac{1}{\delta}+1$, then $q(\vec{s})=0$, by definition of the set $\Typ$. Otherwise, we have 
\begin{eqnarray*}
q(\vec{s}) &=& \sum_{i: s_i=0} \frac{p^0_i}{N_i}p_1^{s_1}\cdots p_{i-1}^{s_{i-1}}p_{i+1}^{s_{i+1}}\cdots p_{|\cI|}^{s_{|\cI|}}\\ &=&
\sum_{i: s_i=0} \frac{1}{N_i}p_1^{s_1}\cdots p_{i-1}^{s_{i-1}}p^0_{i}p_{i+1}^{s_{i+1}}\ldots p_{|\cI|}^{s_{|\cI|}}\\ &=& \br{\sum_{i: s_i=0} \frac{1}{N_i}} p^{\vec{s}} \leq \frac{1}{\delta(1-\delta)}\log\frac{1}{\delta}\cdot p^{\vec{s}} \leq \frac{1}{\delta^2}p^{\vec{s}}.
\end{eqnarray*}
This completes the proof.
\end{proof}

The above lemma allows us to construct a protocol for a large class of feasible solutions (which also includes the optimal solutions) to the optimization problem in Definition \ref{def:tightcharac}. The intuition behind the protocol is that Alice performs a measurement with outcomes over the set $T_r$, such that the probability of outcome $(i_r, i_{r-1}, \ldots, i_1)$ is equal to $p^x_{i_r, i_{r-1}, \ldots, i_1}$. Alice communicates this outcome to Bob, and the task of Bob is to simply pick up a register associated to the outcome $(i_r, i_{r-1}, \ldots, i_1)$. This leads to the desired expected communication cost and the success of the protocol is guaranteed by Lemma \ref{lem:convsplit}. We shall also use a rejection sampling step, along the lines similar to \cite{Jain:2005, Jain:2008}.

\begin{lemma}
\label{lem:expecachieve}
Given an ensemble $\set{p(x), \ketbra{\Psi^x}_{A'C}}$, real numbers $\eta, \delta \in (0,1)$ and integer $r\geq 1$, there exists an entanglement-assisted one-way protocol that achieves the classical-quantum state splitting task for this ensemble with average error $\eta+3\sqrt{\delta}$ and expected communication cost at most 
$$Q(\eta,r) + 2r\log Q(\eta,r) + 4r + 2\log\frac{4}{\delta}.$$
\end{lemma}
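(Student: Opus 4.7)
The plan is to construct the protocol by combining Lemma \ref{lem:convsplit} (convex split) with a classical-quantum rejection sampling argument in the spirit of \cite{Jain:2005, Jain:2008}; the one-way protocol simulates the $r$-round freedom in $Q(\eta,r)$ by having Alice transmit the entire tuple $(i_r,\ldots,i_1)$ at once. First, I fix a feasible solution $(p^x_{\vec{i}}, \omega^{\vec{i}}_C, \eps^x_{\vec{i}})_{\vec{i}\in T_r,\,x\in\X}$ whose objective is within an additive $\delta$ of $Q(\eta,r)$, and pick smoothed states $\tilde{\Psi}^{x,\vec{i}}_C \in \ball{\eps^x_{\vec{i}}}{\Psi^x_C}$ attaining each $\dmaxepss{\Psi^x_C}{\omega^{\vec{i}}_C}{\eps^x_{\vec{i}}}$, so that $p^x_{\vec{i}} \leq 2^{-\dmax{\tilde{\Psi}^{x,\vec{i}}_C}{\omega^{\vec{i}}_C}}$. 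I then truncate $T_r$ to the finite subset $\T$ of tuples satisfying $\sum_k \log i_k \leq 2Q(\eta,r)/\delta$; by Markov the truncated mass satisfies $\sum_x p(x)(1-\Pi^x)\leq \delta/2$, where $\Pi^x = \sum_{\vec{i}\in\T} p^x_{\vec{i}}$. Applying Lemma \ref{lem:convsplit} with parameter $\delta/4$ to the renormalized collection $\{p^x_{\vec{i}}/\Pi^x,\, \tilde{\Psi}^{x,\vec{i}}_C,\, \omega^{\vec{i}}_C\}_{\vec{i}\in\T}$ produces states $\tau^{x,(-\vec{i})}$ such that
\[
\tau^x \defeq \sum_{\vec{i}\in\T} \frac{p^x_{\vec{i}}}{\Pi^x}\,\tilde{\Psi}^{x,\vec{i}}_{C_{\vec{i}}} \otimes \tau^{x,(-\vec{i})} \preceq \frac{16}{\delta^2} \bigotimes_{\vec{i}\in\T}\omega^{\vec{i}}_{C_{\vec{i}}}.
\]

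Given this, the protocol I propose has Alice and Bob pre-share, for each trial index $m=1,2,\ldots$, an independent copy $\bigotimes_{\vec{i}\in\T}\ket{\omega^{\vec{i}}}_{A^{(m)}_{\vec{i}} C^{(m)}_{\vec{i}}}$ (Alice holds the $A^{(m)}_{\vec{i}}$, Bob holds the $C^{(m)}_{\vec{i}}$). Given $x$, Alice processes $m=1,2,\ldots$ in turn: on copy $m$ she applies a two-outcome POVM $\{(F^x_A)^\dagger F^x_A,\, \id - (F^x_A)^\dagger F^x_A\}$ on her $\{A^{(m)}_{\vec{i}}\}$, where $F^x_A$ is built from the standard purification identity applied to the dominance $\frac{\delta^2}{16}\tau^x \preceq \bigotimes_{\vec{i}}\omega^{\vec{i}}_{C_{\vec{i}}}$, so that on success (probability $\delta^2/16$) the marginal on Bob's $\{C^{(m)}_{\vec{i}}\}$ becomes exactly $\tau^x$. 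Upon success, Alice additionally coherently measures the classical register $V$ of the natural purification of $\tau^x$ to obtain $\vec{i}$ with probability $p^x_{\vec{i}}/\Pi^x$, leaving $\tilde{\Psi}^{x,\vec{i}}_C$ on $C^{(m)}_{\vec{i}}$. She encodes $(m,\vec{i})$ prefix-freely via Fact \ref{prefixfree} and sends it to Bob, who outputs $C^{(m)}_{\vec{i}}$. On the good event, $C^{(m)}_{\vec{i}}$ holds $\tilde{\Psi}^{x,\vec{i}}_C$, which is $\eps^x_{\vec{i}}$-close to $\Psi^x_C$; combining the constraint $\sum_x p(x)\sum_{\vec{i}}p^x_{\vec{i}}(\eps^x_{\vec{i}})^2\leq \eta^2$ with the $O(\sqrt{\delta})$ contributions from the truncation and from the small probability that the protocol fails to accept within a chosen bounded number of trials (controlled via Fact \ref{scalarpurified}) yields total average error at most $\eta + 3\sqrt{\delta}$. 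For the expected cost, $m$ is essentially geometric with mean $16/\delta^2$, so $\bigE[\log m]\leq 2\log(4/\delta)$ by Jensen; $\bigE[\sum_k \log i_k]\leq Q(\eta,r)+\delta$ by near-optimality; and the prefix-free overhead on the $r$ indices contributes at most $2r\log Q(\eta,r)+O(r)$ using $\log\log i_k \leq \log Q(\eta,r)$, with the lower-order $\log\log(1/\delta)$ term from encoding $m$ absorbed into the $4r$ slack.

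I expect the main obstacle to be the joint construction of $F^x_A$ together with the subsequent coherent $V$-measurement: the specific purification of $\tau^x$ reached from $\bigotimes_{\vec{i}}\ket{\omega^{\vec{i}}}$ after Alice's success outcome must be arranged, via a carefully chosen isometry on Alice's side, so that measuring $V$ actually produces $\tilde{\Psi}^{x,\vec{i}}_C$ in $C^{(m)}_{\vec{i}}$ and not some other purification of the same marginal---this requires fixing a purification of $\tau^x$ compatible with the pre-shared structure. The simultaneous bookkeeping of the four error contributions (smoothing, truncation, convex-split slack, rejection failure) so that they compose to $3\sqrt{\delta}$ via Fact \ref{fact:trianglepurified} and Fact \ref{scalarpurified} is routine but requires care.
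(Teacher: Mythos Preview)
Your overall strategy---convex split (Lemma~\ref{lem:convsplit}) followed by rejection sampling over i.i.d.\ copies of $\bigotimes_{\vec{i}}\ket{\omega^{\vec{i}}}$, with Alice then revealing the tuple $\vec{i}$---is precisely the paper's approach, and the communication-cost bookkeeping is essentially right. However, there is a genuine gap in how you invoke Lemma~\ref{lem:convsplit}. That lemma carries the hypothesis $p_i \in (0,1-\delta)$ for \emph{every} $i$, and nothing in the definition of $Q(\eta,r)$ prevents a feasible solution from having some $p^x_{\vec{i}}$ equal (or arbitrarily close) to $1$: this occurs exactly when $\dmaxepss{\Psi^x_C}{\omega^{\vec{i}}_C}{\eps^x_{\vec{i}}}=0$, i.e.\ when the smoothed $\Psi^x_C$ already coincides with $\omega^{\vec{i}}_C$. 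For such $x$ your renormalized probabilities still violate the hypothesis and the lemma is inapplicable. The paper handles this by an explicit case split: for inputs $x$ in a ``bad'' set $\B$ (those with some $p^x_{\vec{i}_x}\geq 1-\delta$) Alice sends a flag bit and runs a different sub-protocol that exploits $\dmax{\Psi'^{x,\vec{i}_x}_C}{\omega^{\vec{i}_x}_C}\leq \log\frac{1}{1-\delta}$ directly via Uhlmann and Fact~\ref{pinsker}, bypassing convex split entirely. Your proposal omits this case distinction.

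A secondary issue: your Markov truncation controls $\sum_x p(x)(1-\Pi^x)$ only on average over $x$, so for individual $x$ the mass $\Pi^x$ can be arbitrarily small; then the renormalized $p^x_{\vec{i}}/\Pi^x$ need no longer satisfy the other hypothesis $p_i\leq 2^{-\dmax{\tilde{\Psi}^{x,\vec{i}}_C}{\omega^{\vec{i}}_C}}$ of Lemma~\ref{lem:convsplit}. The paper avoids truncation altogether by working on the finite support $T'_r$ of an optimal feasible solution and bounding the prefix-free overhead $2r\log\log(i_1\cdots i_r)$ via concavity of $\log$ (Jensen), rather than pointwise. Finally, since the task is state \emph{splitting}, Alice must also produce the $A'$ register; the paper arranges this by fixing purifications $\ket{\Psi'^{x,\vec{i}}}_{AC}$ and swapping the appropriate $A_{\vec{i}}$ into $A'$, a step your sketch leaves implicit.
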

\begin{remark}
	Note that $Q(\eta,r)$ is a lower bound on $r$-round entanglement-assisted protocols for classical-quantum state splitting. Lemma~\ref{lem:expecachieve} implies that 
\begin{equation}
\label{one_way_r_way}
Q(\eta+ 3\sqrt{\delta}, 1) \leq Q(\eta,r) + 2r\log Q(\eta,r) + 4r + 2\log\frac{4}{\delta}.
\end{equation}
This suggests that interactions do not significantly save the expected communication cost for classical-quantum state splitting, which is intuitive to understand as Bob does not initially have any information about the input $x$. We conjecture that interactions do not save the expected communication cost at all.
\end{remark}
\begin{proof}[Proof of Lemma \ref{lem:expecachieve}]
 Abbreviate the tuple $(i_r, i_{r-1}, \ldots, i_1)$ with $\vec{i}$. Given a feasible solution $\set{\omega_C^{\vec{i}}}_{\vec{i} \in T_r}$, $\set{\eps^x_{\vec{i}}}_{\vec{i} \in T_r, x\in \X}$ and $\set{p^x_{\vec{i}}}_{\vec{i} \in T_r, x\in \X}$ for $Q(\eta,r)$, let $T'_r$ be  
 the union of the supports of $\set{p^x_{\vec{i}}}_x$. It can be verified that for optimal solutions, the associated $T'_r$ is a finite subset of $T_r$. 

Let $A$ be a register large enough such that $\ket{\omega^{\vec{i}}}_{AC}$ is a purification of $\omega_C^{\vec{i}}$. Introduce the registers $A_{\vec{i}}, C_{\vec{i}}$ respectively (with $A_{\vec{i}}\equiv A, C_{\vec{i}}\equiv C$) for every $\vec{i}\in T'_r$. The following quantum state $\ketbra{\omega}\in \cD(\bigotimes_{\vec{i}\in T'_r}\H_{A_{\vec{j}}C_{\vec{j}}})$ is a purification of $\bigotimes_{\vec{i}\in T'_r}\omega^{\vec{i}}_{C_{\vec{i}}}$:
$$\ket{\omega}\defeq \bigotimes_{\vec{i}\in T'_r}\ket{\omega^{\vec{i}}}_{A_{\vec{i}}C_{\vec{i}}}.$$

Let $\B$ be the set of all $x$ for which there exists an $\vec{i}$ such that $p^x_{\vec{i}}\geq 1-\delta$. Let $\G$ be the set of the rest of the $x$. Let $\Psi'^{x,\vec{i}}_C\in \cD(\H_C)$ be the quantum state achieving the optimum in the definition of $\dmaxepss{\Psi^x_C}{\omega_C^{\vec{i}}}{\eps^x_{\vec{i}}}$. By definition of $Q(\eta,r)$, we have $\P(\Psi'^{x,\vec{i}}_C, \Psi^x_C)\leq \eps^x_{\vec{i}}$. Let $\ketbra{\Psi'^{x,\vec{i}}}_{AC}\in \cD(\H_{AC})$ be a purification of $\Psi'^{x, \vec{i}}_C$ such that $\P(\ketbra{\Psi'^{x,\vec{i}}}_{AC}, \ketbra{\Psi^{x}}_{AC}) = \P(\Psi'^{x,\vec{i}}_C, \Psi^x_C)$. Applying Lemma \ref{lem:convsplit} for every $x\in \G$ (which uses the finiteness of $T'_r$), there exist quantum states $\tau^{x, \vec{i}}\in \cD(\bigotimes_{\vec{j}\in T'_r, \vec{j}\neq \vec{i}}\H_{C_{\vec{j}}})$ such that 
$$\dmax{\tau^x}{\bigotimes_{\vec{i}\in T'_r}\omega^{\vec{i}}_{C_{\vec{i}}}} \leq 2\log\frac{1}{\delta},$$ 
where
$$\tau^x \defeq \sum_{\vec{i}\in T'_r} p^x_{\vec{i}} \cdot\Psi'^{x, \vec{i}}_{C_{\vec{i}}}\otimes \tau^{x,\vec{i}}.$$This implies that there exists a quantum state $\theta^x$ such that 
\begin{equation}
\label{eq:omegasplit}
\bigotimes_{\vec{i}\in T'_r}\omega^{\vec{i}}_{C_{\vec{i}}} = \delta^2\tau^x + (1-\delta^2)\theta^x = \delta^2\cdot\sum_{\vec{i}\in T'_r} p^x_{\vec{i}} \cdot\Psi'^{x, \vec{i}}_{C_{\vec{i}}}\otimes \tau^{x,\vec{i}} + (1-\delta^2)\theta^x.
\end{equation}
Let $\ketbra{\tau^{x,\vec{i}}}\in \cD(\bigotimes_{\vec{j}\in T'_r, \vec{j}\neq \vec{i}}\H_{A_{\vec{j}}C_{\vec{j}}})$ be a purification of $\tau^{x,\vec{i}}$ and $\ketbra{\theta^x}\in \cD(\bigotimes_{\vec{i}\in T'_r}\H_{A_{\vec{j}}C_{\vec{j}}})$ be a purification of $\theta^x$.  By Eq.\eqref{eq:omegasplit}, the following quantum state is a purification of $\bigotimes_{\vec{i}\in T'_r}\omega^{\vec{i}}_{C_{\vec{i}}}$:
$$\ket{\bar{\omega}^x}\defeq \delta\sum_{\vec{i}\in T'_r} \sqrt{p^x_{\vec{i}}}\ket{\Psi'^{x, \vec{i}}}_{A_{\vec{i}}C_{\vec{i}}}\ket{\tau^{x,\vec{i}}}\ket{1}_P\ket{\vec{i}}_J + \sqrt{1-\delta^2}\ket{\theta^x}\ket{0}_P\ket{\vec{0}}_J,$$ where we have introduced registers $P,J$ of sufficiently large dimensions. From Uhlmann's theorem (Fact \ref{uhlmann}), there exists an isometry $$V^x: \bigotimes_{\vec{i}\in T_r}\H_{A_{\vec{i},k}}\rightarrow \H_{PJ}\otimes\bigotimes_{\vec{i}\in T_r}\H_{A_{\vec{i}}}$$ such that 
$$\ket{\bar{\omega}^x} = V^x\ket{\omega}.$$
The protocol is now as follows. 
\begin{enumerate}
\item {\bf Shared entanglement:} Alice and Bob share infinitely many copies of the quantum state $\ket{\omega}$ and assign a unique positive integer in $\{1,2,3, \ldots\}$, as an index, to each copy. \footnote{We remark that the shared entanglement can be made finite by introducing some additional errors in the protocol. }
\item {\bf Alice's encoding:}
\begin{enumerate}
\item If $x\in \G$, Alice communicates to Bob the bit $0$.
\begin{enumerate}  
\item {\bf Rejection sampling step:} Alice applies the isometry $V^x$ and then measures the register $P$ in each copy, according to ascending order of the index. Let $k$ be the first index where she obtains the outcome `$1$'. She communicates the index $k$ to Bob using the Huffman coding scheme for the probability distribution $\{(1-\delta^2)^{i-1}\delta^2\}_{i\in \{1,2,3,\ldots\}}$.
\item Alice measures the register $J$ in the $k$-th copy, to obtain an outcome $\vec{i}\in T'_r$. She then communicates $\vec{i} = (i_r, i_{r-1}, \ldots, i_1)$ by encoding each $i_\ell$ (for $\ell \in [r]$) in a prefix-free manner (as guaranteed by Fact \ref{prefixfree}).
\item Alice introduces the register $A'$ in the quantum state $\ket{0}_{A'}$ and swaps $A_{\vec{i}}$ (from the $k$-th copy) with $A'$.
\end{enumerate}
\item If $x\in \B$, Alice communicates to Bob the bit $1$.
\begin{enumerate}
\item Let $\vec{i}_x\in T'_r$ be such that $p^x_{\vec{i}_x}\geq 1-\delta$. Alice communicates $\vec{i}=\vec{i}_x$ to Bob with probability $p^x_{\vec{i}}$ and $\vec{i}=(1,1,\ldots 1)$ with probability $1-p^x_{\vec{i}}$, using the prefix-free encoding.
\item Let $V^{x,\vec{i}_x}: \H_{A}\rightarrow \H_A$ be the isometry such that $$\P\br{\Psi'^{x, \vec{i}_x}_C, \omega^{\vec{i}_x}_C} = \P\br{\ketbra{\Psi'^{x, \vec{i}_x}}_{AC}, V^{x,\vec{i}_x}\ketbra{\omega^{\vec{i}_x}}_C\br{V^{x,\vec{i}_x}}^{\dagger}},$$ as guaranteed by Uhlmann's theorem (Fact \ref{uhlmann}). Alice applies the isometry $V^{x,\vec{i}_x}$ on the register $A_{\vec{i}_x}$ (from the $1$-st copy of the shared entanglement $\ket{\omega}$). 
\item Alice introduces the register $A'$ in the quantum state $\ket{0}_{A'}$ and swaps $A_{\vec{i}}$ (from the $1$-st copy of the shared entanglement $\ket{\omega}$) with $A'$.
\end{enumerate}

\end{enumerate}
\item {\bf Bob's decoding:}
\begin{enumerate}
\item Receiving the message from Alice, Bob uses the first bit to check whether $x\in \G$ or $x\in \B$. If $x\in \G$, Bob first decodes index $k$, which is possible due to the prefix-free encoding of the Huffman coding scheme. If $x\in \B$, Bob sets $k=1$. He then decodes $\vec{i}$, which is possible due to the prefix-free encoding (Fact \ref{prefixfree}). 
\item Bob introduces the register $C$ in the quantum state $\ket{0}_{C}$ and swaps $C_{\vec{i}}$ (from the $k$-th copy) with $C$. 
\end{enumerate}
\item Final output $\Phi^x_{A'C}$ is obtained in the registers $A'C$.
\end{enumerate}

\vspace{0.1in}

\noindent {\bf Error analysis:} There always exists an index $k$ where Alice obtains the measurement outcome `$1$'. If $x\in \G$, Alice and Bob output the quantum state $\ket{\Psi'^{x,\vec{i}}}_{A'C}$ with probability $p^x_{\vec{i}}$. Thus, the average error for any $x\in \G$ is at most 
$$\sum_{\vec{i}\in T'_r}p^x_{\vec{i}}\P^2\br{\ketbra{\Psi'^{x,\vec{i}}}_{A'C}, \ketbra{\Psi^x}_{A'C}} = \sum_{\vec{i}\in T'_r}p^x_{\vec{i}}\br{\eps^x_{\vec{i}}}^2.$$
If $x\in \B$, then $\dmax{\Psi^{x, \vec{i}_x}_C}{\omega^{\vec{i}_x}_C}\leq \log\frac{1}{p^x_{\vec{i}_x}}\leq \log\frac{1}{1-\delta}$. This implies, using Fact \ref{pinsker}, that $$ \P\br{\ketbra{\Psi'^{x, \vec{i}_x}}_{AC}, V^{x,\vec{i}_x}\ketbra{\omega^{\vec{i}_x}}_C\br{V^{x,\vec{i}_x}}^{\dagger}} = \P\br{\Psi'^{x, \vec{i}_x}_C, \omega^{\vec{i}_x}_C} \leq \sqrt{\delta}.$$ Since Alice and Bob output $V^{x,\vec{i}_x}\ketbra{\omega^{\vec{i}_x}}_C\br{V^{x,\vec{i}_x}}^{\dagger}$ with probability at least $1-\delta$, the average error is at most 
\begin{eqnarray*}
&&(1-\delta)\P^2(\ketbra{\Psi^{x}}_{AC}, V^{x,\vec{i}_x}\ketbra{\omega^{\vec{i}_x}}_C\br{V^{x,\vec{i}_x}}^{\dagger}) + \delta\\
&& \leq (1-\delta)\P^2\br{\ketbra{\Psi^{x}}_{AC}, \ketbra{\Psi'^{x,\vec{i}_x}}_{A'C}} + 2\sqrt{\delta} + \delta\\ && \hspace{0.3cm}(\text{Using the triangle inequality for purified distance, Fact \ref{fact:trianglepurified}})\\ && \leq p^x_{\vec{i}_x}\P^2\br{\ketbra{\Psi^{x}}_{AC}, \ketbra{\Psi'^{x, \vec{i}_x}}_{A'C}} + 2\sqrt{\delta} + \delta\\ && \leq \sum_{\vec{i}\in T'_r}p^x_{\vec{i}}\P^2\br{\ketbra{\Psi^{x}}_{AC}, \ketbra{\Psi'^{x, \vec{i}}}_{A'C}} + 2\sqrt{\delta}+ \delta\\
&& = \sum_{\vec{i}\in T'_r}p^x_{\vec{i}}\br{\eps^x_{\vec{i}}}^2 + 2\sqrt{\delta}+ \delta.
\end{eqnarray*}
This implies that the overall average error of the protocol is at most
$$\sum_x p(x)\sum_{\vec{i}\in T'_r}p^x_{\vec{i}}\br{\eps^x_{\vec{i}}}^2 + 2\sqrt{\delta}+ \delta \leq \eta^2 + 3\sqrt{\delta}.$$

\vspace{0.1in}

\noindent {\bf Expected communication cost:}  The expected communication cost of the protocol can be upper bounded by the addition of the expected communication cost of the rejection sampling step, the expected communication cost for communicating $\vec{i}$ and the worst case communication cost for the first bit. 

Using Fact \ref{prefixfree}, the number of bits used in the communication of $\vec{i} = (i_r, i_{r-1}, \ldots, i_1)$ is at most 
\begin{eqnarray*}
&&\log\frac{1}{\delta} + \log \br{i_r\cdot i_{r-1}\cdots i_1} + 2\log(\log i_r \cdot \log i_{r-1}\ldots \log i_1) + 4r \\&& \leq \log (i_r\cdot i_{r-1}\cdots i_1) + 2r\log\log (i_r \cdot i_{r-1}\cdots i_1) + 4r,
\end{eqnarray*}
where we have used the identity $a_1\cdot a_2\cdot \ldots a_r \leq (a_1+a_2+\cdots+a_r)^r$, for positive reals $a_1, a_2, \ldots a_r$. If $x\in \B$, then the tuple $\vec{i}_x$ is communicated with probability $p^x_{\vec{i}_x}$ and the tuple $(1,1,\ldots 1)$ is communicated with the remaining probability. Since the tuple $(1,1,\ldots 1)$ has the smallest encoding length in the prefix-free encoding in Fact \ref{prefixfree}, the expected communication cost is smaller than the expected communication cost for transmitting $\vec{i}$ with probability $p^x_{\vec{i}}$. Thus, the  expected communication cost for communicating $\vec{i}$ is upper bounded by 
\begin{eqnarray*}
&&\sum_x p(x)\sum_{i_r, i_{r-1}, \ldots, i_1} p^x_{i_r,i_{r-1}, \ldots, i_1} \left(\log (i_r\cdot i_{r-1}\cdots i_1) + 2r\log\log (i_r \cdot i_{r-1}\cdots i_1) + 4r\right) \\ && \leq  \left(\sum_x p(x)\sum_{i_r, i_{r-1}, \ldots, i_1} p^x_{i_r,i_{r-1}, \ldots, i_1} \log (i_r\cdot i_{r-1}\cdots i_1)\right) + \\ &&  2r\log\left(\sum_x p(x)\sum_{i_r, i_{r-1}, \ldots, i_1} p^x_{i_r,i_{r-1}, \ldots, i_1}\log (i_r \cdot i_{r-1}\cdots i_1)\right) + 4r.
\end{eqnarray*}
The expected communication cost for the Huffman coding of the probability distribution $\{(1-\delta^2)^{i-1}\delta^2\}_{i\in \{1,2,3\ldots\}}$ is at most the Shannon entropy of the probability distribution up to additional one bit, which is 
\begin{eqnarray*}
&&\sum_{i=1}^{\infty}(1-\delta^2)^{i-1}\delta^2\log\frac{1}{(1-\delta^2)^{i-1}\delta^2}+1\\ 
&&=\log\frac{1-\delta^2}{\delta^2} + \sum_{i=1}^{\infty}i(1-\delta^2)^{i-1}\cdot\delta^2\log\frac{1}{1-\delta^2} + 1\\
&&=\log\frac{1-\delta^2}{\delta^2} + \frac{1}{\delta^4}\cdot\delta^2\log\frac{1}{1-\delta^2} + 1\\ 
&&\leq 2\log\frac{1}{\delta}+\frac{1}{\delta^2}\log\frac{1}{1-\delta^2}+1 \leq 2\log\frac{4}{\delta}, 
\end{eqnarray*}
where we have used the inequality $\log\frac{1}{1-\delta^2} \leq 2\delta^2$. The proof now follows from the definition of $Q(\eta,r)$.
\end{proof}

\subsection{A simple lower bound on $Q(\eta, r)$}

For our application, we simplify the bound given in Definition \ref{def:tightcharac}. This is achieved in the following lemma. 

\begin{lemma}
\label{lem:simplerlow}
Fix a finite set $\X$, $\eta, \gamma\in (0,1)$ and an integer $r\geq 1$. For any ensemble $\{(p(x), \ketbra{\Psi^x}_{A'C})\}_{x\in \X}$, where $\Psi^x_{A'C} \in \cD(\H_{A'C})$, the associated $Q(\eta, r)$ satisfies that 
$$Q(\eta, r) \geq (1-\gamma)^2\cdot\frac{-\log \left(\max_{\omega_C \in \cD(\H_C)} \sum_xp(x)2^{-\dmaxepss{\Psi^x_C}{\omega_C}{\frac{\eta}{\gamma}}}\right) + 2\log(1-\gamma)}{2\log r + 8},$$ if $r>1$ and  
$$Q(\eta, 1) \geq (1-\gamma)^2\cdot\br{-\log \left(\max_{\omega_C \in \cD(\H_C)} \sum_xp(x)2^{-\dmaxepss{\Psi^x_C}{\omega_C}{\frac{\eta}{\gamma}}}\right) + 2\log(1-\gamma) - 1},$$ for $r=1$.
\end{lemma}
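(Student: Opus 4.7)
The plan is to derive the claimed lower bound on $Q(\eta,r)$ by analyzing any feasible triple $(\omega_C^{\vec{i}},\eps^x_{\vec{i}},p^x_{\vec{i}})$ for the optimization in Definition~\ref{def:tightcharac} (I abbreviate $(i_r,\ldots,i_1)$ by $\vec{i}$). The argument proceeds in three steps: a Markov bound on the error constraint, a convex-combination trick that converts the per-$\vec{i}$ bound $p^x_{\vec{i}}\leq 2^{-\dmaxepss{\Psi^x_C}{\omega_C^{\vec{i}}}{\eps^x_{\vec{i}}}}$ into a bound involving a single state $\omega_C$, and finally a counting argument on tuples.

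First I would apply Markov's inequality to the constraint $\sum_{x,\vec{i}} p(x)p^x_{\vec{i}}(\eps^x_{\vec{i}})^2\leq\eta^2$. Setting $G=\{(x,\vec{i}):\eps^x_{\vec{i}}\leq\eta/\gamma\}$, this yields $\sum_{(x,\vec{i})\in G}p(x)p^x_{\vec{i}}\geq 1-\gamma^2$, and on $G$ monotonicity of the smooth max-relative entropy in its smoothing parameter gives $p^x_{\vec{i}}\leq 2^{-\dmaxepss{\Psi^x_C}{\omega_C^{\vec{i}}}{\eta/\gamma}}$. Next, for any probability distribution $\{q^{\vec{i}}\}_{\vec{i}\in T_r}$ with $q^{\vec{i}}>0$, define $\omega_C=\sum_{\vec{j}}q^{\vec{j}}\omega_C^{\vec{j}}$. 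Since $\omega_C\succeq q^{\vec{i}}\omega_C^{\vec{i}}$, the definition of $\dmaxepss{\cdot}{\cdot}{\cdot}$ gives $\dmaxepss{\Psi^x_C}{\omega_C}{\eta/\gamma}\leq\dmaxepss{\Psi^x_C}{\omega_C^{\vec{i}}}{\eta/\gamma}+\log(1/q^{\vec{i}})$, hence $q^{\vec{i}}p^x_{\vec{i}}\leq 2^{-\dmaxepss{\Psi^x_C}{\omega_C}{\eta/\gamma}}$ on $G$. Averaging over $x$ with weights $p(x)$ and setting $\mu_{\vec{i}}:=\sum_{x:(x,\vec{i})\in G}p(x)p^x_{\vec{i}}$, I obtain the key inequality $q^{\vec{i}}\mu_{\vec{i}}\leq M$ for every $\vec{i}$ and every distribution $q$, where $M=\max_{\omega_C}\sum_x p(x)2^{-\dmaxepss{\Psi^x_C}{\omega_C}{\eta/\gamma}}$.

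Two specializations drive the last step. Taking $q$ to be a point mass at an arbitrary $\vec{i}$ gives the pointwise bound $\mu_{\vec{i}}\leq M$, and taking $q$ uniform on any finite set $S\subseteq T_r$ gives $\sum_{\vec{i}\in S}\mu_{\vec{i}}\leq|S|M$. Applying this to $S_T:=\{\vec{i}:\log(i_1\cdots i_r)\leq T\}$ and combining with $\sum_{\vec{i}}\mu_{\vec{i}}\geq 1-\gamma^2$ shows that, when $|S_T|M$ is sufficiently small, most of the mass of $\mu$ sits on tuples of large product, forcing $L:=\sum_{\vec{i}}\mu_{\vec{i}}\log(i_1\cdots i_r)$ to be large. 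For $r=1$ one has $|S_T|\leq 2^T$, and a direct balancing between $2^T M$ and the mass budget yields $L\geq(1-\gamma)^2(-\log M+2\log(1-\gamma)-1)$. For $r>1$, I would bound $|S_T|$ by encoding each $i_k$ via the prefix-free code of Fact~\ref{prefixfree} (length at most $\log i_k+2\log\log i_k+O(1)$) and invoking Kraft's inequality; an AM--GM step on $\{\log i_k\}_k$ collapses $\sum_k\log\log i_k$ into an $r\log(T/r)$-type contribution, so $|S_T|$ admits a bound of the form $2^{(2\log r+8)T}$ after tracking constants, and choosing $T$ to nearly saturate $|S_T|M\leq(1-\gamma)^2$ produces the denominator $2\log r+8$ and an additive $2\log(1-\gamma)$ on the numerator.

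The most delicate step is this combinatorial counting for $r>1$: the challenge is to encode tuples so that the $r$-dependent overhead appears as a $\log r$ multiplicative factor in the denominator rather than a linear-in-$r$ factor, which forces the AM--GM simplification on the coordinate-wise $\log\log i_k$ terms and careful bookkeeping of the additive $+1$ per component from Fact~\ref{prefixfree}. The remaining arithmetic — balancing the leftover mass $1-\gamma^2-|S_T|M$ against $(1-\gamma)^2$ and turning the Markov slack into the prefactor $(1-\gamma)^2$ in the final bound — is routine two-parameter optimization.
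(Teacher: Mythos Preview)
Your overall architecture (Markov pruning, then a pointwise bound $\mu_{\vec{i}}\leq M$, then a counting argument) matches the paper's proof. Two remarks, one minor and one substantive.

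\textbf{Minor.} The convex-combination trick is superfluous: the point-mass specialisation already recovers $\mu_{\vec{i}}\leq M$ by taking $\omega_C=\omega_C^{\vec{i}}$ directly, which is exactly what the paper does. (Also, your uniform-on-$S$ specialisation, as stated, only gives $\mu_{\vec{i}}\leq |S|M$ per coordinate and hence $\sum_{\vec{i}\in S}\mu_{\vec{i}}\leq |S|^2M$; the bound $\sum_{\vec{i}\in S}\mu_{\vec{i}}\leq |S|M$ you want already follows trivially from the pointwise estimate.) The paper also does the Markov step in two stages (first on $x$, then on $\vec{i}$ for good $x$), which is where the two factors of $(1-\gamma)$ come from.

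\textbf{Substantive gap.} The counting step for $r>1$ does not go through as written. Encoding each coordinate $i_k$ by the code of Fact~\ref{prefixfree} and applying Kraft yields $|S_T|\leq 2^{L}$ with $L=\max_{\vec{i}\in S_T}\sum_k(\log i_k+2\log\log i_k+O(1))$. The additive $O(1)$ per coordinate already forces $L\geq\Omega(r)$ for every tuple, so no choice of $T$ gives $L\leq(2\log r+8)T$ once $T\lesssim r/\log r$; and your AM--GM step $\sum_k\log\log i_k\leq r\log(T/r)$ is only valid when $T\geq r$ and all $i_k\geq 2$. Since the lemma must hold for all values of $Q^*$ relative to $r$, this is a genuine hole: prefix-free coding of independent coordinates cannot produce a $\log r$ denominator, only an $r$-type loss. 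The paper avoids this by a number-theoretic count (Claim~\ref{combinatorics}): the number $N(k)$ of ordered $r$-tuples with product exactly $k$ satisfies $N(k)<k^{\log r+O(1)}$ via the multiplicativity of $N$ and the bound $\binom{a+r-1}{r-1}<2^{a(\log r+O(1))}$ on each prime power, whence $|S_T|\leq 2^{T(\log r+O(1))}$. That is the missing ingredient.
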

\begin{proof}
Define $$Q^*\defeq -\log \left(\max_{\omega_C \in \cD(\H_C)} \sum_xp(x)2^{-\dmaxepss{\Psi^x_C}{\omega_C}{\frac{\eta}{\gamma}}}\right).$$ 
Our proof proceeds in the following steps.
\begin{enumerate}
\item \textbf{Pruning out $x$ with large error:}

Let $\G$ be the set of all indices $x$ such that $$\sum_{i_r, i_{r-1}, \ldots, i_1} p^x_{i_r,i_{r-1}, \ldots, i_1} \left(\eps^x_{i_r,i_{r-1}, \ldots, i_1}\right)^2\leq \frac{\eta^2}{\gamma}.$$ Let $\B$ be the rest of the indices. Then by Markov's inequality, we have $\sum_{x\in\G}p(x)\geq 1-\gamma$. Let $q(x)$ be defined as $q(x) \defeq \frac{p(x)}{\sum_{x'\in\G}p(x')}$ if $x\in \G$ and $0$, otherwise.

\item \textbf{Removing $(i_r,i_{r-1},\ldots,i_1)$ with large error:}

Now for each $x\in\G$, we define $\B_x$ to be the set of tuples $(i_r,i_{r-1},\ldots,i_1)$ for which $\eps^x_{i_r,i_{r-1}, \ldots, i_1} \geq \frac{\eta}{\gamma}$. Let $\G_x$ be rest of the indices. Then we have $\sum_{(i_r,i_{r-1},\ldots,i_1)\in \B_x}p^x_{i_r,i_{r-1},\ldots,i_1} \leq \gamma$. And hence for all $(i_r,i_{r-1},\ldots,i_1)\notin \B_x$, we obtain $$p^x_{i_r,i_{r-1},\ldots,i_1} \leq 2^{-\dmaxepss{\Psi^x_C}{\omega_C^{i_r,i_{r-1},\ldots,i_1}}{\frac{\eta}{\gamma}}}.$$

\item \textbf{Upper bound on average probability of a message:}

We define a new probability distribution $q^x_{i_r,i_{r-1},\ldots,i_1}$ which is $0$ whenever $(i_r,i_{r-1},\ldots,i_1)\in \B_x$ and equal to $p^x_{i_r,i_{r-1},\ldots,i_1} / p^x(\G_x)$ otherwise, where we define $$p^x(\G_x)\defeq\sum_{(i_r,i_{r-1},\ldots, i_1)\in \G_x}p^x_{i_r,i_{r-1},\ldots, i_1}.$$ It follows that 
$$q^x_{i_r,i_{r-1},\ldots,i_1} \leq \frac{1}{1-\gamma}2^{-\dmaxepss{\Psi^x_C}{\omega_C^{i_r,i_{r-1},\ldots,i_1}}{\frac{\eta}{\gamma}}}.$$

Define $s_{i_ri_{r-1},\ldots,i_1}\defeq \sum_x q(x)q^x_{i_r,i_{r-1},\ldots,i_1}$. We have $\sum_{i_r,i_{r-1},\ldots,i_1}s_{i_r,i_{r-1},\ldots,i_1}=1$. Furthermore,
 
\begin{eqnarray}
\label{simultiupperbound}
s_{i_r,i_{r-1},\ldots,i_1}&\leq& \frac{1}{1-\gamma}\sum_xq(x)2^{-\dmaxepss{\Psi^x_C}{\omega_C^{i_r,i_{r-1},\ldots,i_1}}{\frac{\eta}{\gamma}}}\nonumber\\ &\leq& \frac{1}{(1-\gamma)^2}\sum_xp(x)2^{-\dmaxepss{\Psi^x_C}{\omega_C^{i_r,i_{r-1},\ldots,i_1}}{\frac{\eta}{\gamma}}}\nonumber\\&\leq & \frac{2^{-Q^*}}{(1-\gamma)^2}.
\end{eqnarray}
\item \textbf{Lower bound on $Q(\eta, r)$:}

Consider
\begin{eqnarray*}
Q(\eta,r)&\geq& \sum_{x\in\G}p(x) \sum_{i_r,i_{r-1},\ldots,i_1 \in \G_x}p^x_{i_r,i_{r-1},\ldots,i_1} \log(i_ri_{r-1},\ldots,i_1)\\&\geq& (1-\gamma)^2 \sum_xq(x) \sum_{i_r,i_{r-1},\ldots,i_1}q^x_{i_r,i_{r-1},\ldots,i_1} \log(i_ri_{r-1},\ldots,i_1) \\ &=& (1-\gamma)^2 \sum_{i_r,i_{r-1},\ldots,i_1}s_{i_r,i_{r-1},\ldots,i_1} \log(i_ri_{r-1},\ldots,i_1)
\end{eqnarray*}

where the last equality follows from the definition of $s_{i_r,i_{r-1},\ldots,i_1}$.  Invoking Equation~\eqref{simultiupperbound}  and Claim~\ref{combinatorics} with $b= Q^*+2\log(1-\gamma)$, the lemma follows.

\end{enumerate}
\end{proof}

\begin{claim}
\label{combinatorics}
Suppose $s_{i_r,i_{r-1},\ldots, i_1}\leq 2^{-b}$ for all $(i_1,i_2,\ldots,i_r)$. Then we have $$\sum_{i_1,i_2\ldots i_1} s_{i_r,i_{r-1},\ldots, i_1}\log(i_ri_{r-1},\ldots,i_1) \geq \frac{b}{2(\log r + 4)},$$ for $r>1$ and 
$$\sum_{i_1} s_{i_1}\log(i_1) \geq b-1,$$ for $r=1$.
\end{claim}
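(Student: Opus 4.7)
The plan is to handle the $r=1$ and $r>1$ cases separately, since the former reduces to a one-dimensional extremal problem, while the latter needs a combinatorial bound on $r$-tuples with a bounded product.

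For $r=1$, the minimum of $\sum_{i_1} s_{i_1} \log i_1$ subject to $\sum_{i_1} s_{i_1} = 1$ and $s_{i_1} \leq 2^{-b}$ is a linear program whose optimum is attained by the greedy extremal distribution $s_{i_1} = 2^{-b}$ for $1 \leq i_1 \leq 2^b$. Equivalently, a direct Markov estimate gives $\Pr_s[\log i_1 < t] \leq 2^{t-b}$ (since there are at most $2^t$ positive integers $i_1$ with $\log i_1 < t$), and integrating over $t$ yields
\[
\expec{s}{\log i_1} \;=\; \int_0^{\infty}\prob{\log i_1 > t}\,dt \;\geq\; \int_0^b (1 - 2^{t-b})\,dt \;=\; b - \frac{1 - 2^{-b}}{\ln 2},
\]
which lies within a small additive constant of the claimed $b-1$.

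For $r>1$, the first step is to establish the combinatorial bound
\[
f_r(N) \;\defeq\; \#\bigl\{(i_1,\ldots,i_r) \in \Npl^r : i_1\cdots i_r \leq N\bigr\} \;\leq\; N(\log N + 1)^{r-1}
\]
by induction on $r$, using $f_r(N) = \sum_{i=1}^{N} f_{r-1}(N/i)$ together with the harmonic-sum estimate $\sum_{i=1}^{N}1/i \leq \ln N + 1 \leq \log_2 N + 1$. Given this, the hypothesis $s_{\vec{i}} \leq 2^{-b}$ yields a Markov-type estimate $\Pr_s[\log(i_1\cdots i_r) < L_0] \leq 2^{-b} f_r(2^{L_0}) \leq 2^{L_0 - b}(L_0 + 1)^{r-1}$ for any $L_0 > 0$. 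Setting $L_0 = b/(\log r + 4)$ and verifying algebraically that this probability is at most $1/2$ (which reduces to $(r-1)\log(L_0+1) \leq b(\log r + 3)/(\log r + 4) - 1$) then gives
\[
\sum_{\vec i} s_{\vec i}\,\log(i_1\cdots i_r) \;\geq\; L_0 \cdot \Pr_s\!\left[\log(i_1\cdots i_r) \geq L_0\right] \;\geq\; \frac{L_0}{2} \;=\; \frac{b}{2(\log r + 4)},
\]
which is the claim.

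The main obstacle is that the algebraic check above does not succeed uniformly for every $(b,r)$: when $b$ is small relative to $r\log r$, the Markov bound based on $f_r(N) \leq N(\log N+1)^{r-1}$ becomes vacuous. In that regime, however, the target $b/(2(\log r + 4))$ is itself small, and the gap can be closed by elementary alternatives. The simplest is that $\Pr_s[\vec i = (1,\ldots,1)] \leq 2^{-b}$ already forces $\sum_{\vec i}s_{\vec i}\log(i_1\cdots i_r) \geq 1 - 2^{-b}$, which exceeds the target whenever $b \leq \log r + 4$; for the narrow intermediate range that remains, a tighter count $f_r(2^L) \leq 2^L\binom{L+r}{r}$, obtained by parametrizing each tuple by $(\lfloor \log_2 i_k\rfloor)_{k=1}^{r}$, sharpens the Markov estimate enough to close the argument. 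A short case analysis then covers all $(b,r)$ and completes the proof.
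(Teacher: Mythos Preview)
Your $r=1$ argument is essentially the paper's: both identify the greedy extremal distribution $s_i=2^{-b}$ on $\{1,\dots,2^b\}$ and evaluate it directly. (Your integration variant actually gives $b-1/\ln 2\approx b-1.44$ rather than $b-1$; the same arithmetic applies to the direct evaluation, so this is a constant the paper also glosses over.)

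For $r>1$ your route is genuinely different. The paper bounds the number $N(k)$ of $r$-tuples with product \emph{exactly} $k$ via the prime factorisation $k=\prod p_j^{a_j}$, obtaining $N(k)=\prod_j\binom{a_j+r-1}{r-1}<k^{\log r+\gamma}$ and hence $M(k)=\sum_{k'\le k}N(k')<k^{\log r+\gamma+1}$. This polynomial-in-$k$ bound with exponent $\approx\log r$ is exactly the shape that makes the Markov cutoff at $\log k^\ast\approx b/(\log r+\gamma+1)$ work \emph{uniformly} in $(b,r)$, with no case analysis. Your inductive Dirichlet-type bound $f_r(N)\le N(\log N+1)^{r-1}$ is more elementary and tighter for large $N$, but for the relevant threshold $L_0=b/(\log r+4)$ it degrades badly when $L_0$ is small relative to $r$ (the factor $(L_0+1)^{r-1}$ swamps $2^{b-L_0}$), which is precisely why you are forced into the case split.

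The gap in your write-up is that this case split is only asserted, not executed. The ``$b\le\log r+4$'' regime is indeed handled by the all-ones tuple bound, but the ``narrow intermediate range'' is not narrow: it includes, e.g., $b\sim r$ for large $r$, where you need the sharper count $f_r(2^L)\le 2^L\binom{L+r}{r}$ together with a genuine estimate of $\log\binom{L_0+r}{r}$ against $b-L_0$. That computation can be made to work, but it is several lines of real inequalities, not a one-liner. If you want to avoid the case analysis entirely, the cleaner fix is to replace your inductive bound by one that directly yields $f_r(N)\le N^{c\log r}$ for a constant $c$; the paper's prime-factorisation argument is one way to get this.
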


\begin{proof}
We first consider the case $r=1$. $\sum_{i_1} s_{i_1}\log(i_1)$ is minimized when $s_{i_1} = 2^{-b}$ for all $i_1\in [2^b]$. For this, we have $$\sum_{i_1} s_{i_1}\log(i_1) = 2^{-b} \sum_{i=1}^{2^b}\log i \geq \log 2^b -1 \geq b-1.$$

Now we consider the case $r>1$. For an integer $k$ let $N(k)$ be the number of {ordered} tuples $(i_1,i_2,\ldots, i_r)$ such that $k=i_1\cdot i_2\cdots i_r$. Let $M(k)=\sum_{k'=1}^k N(k')$. The quantity $\sum_{i_1,i_2\ldots i_1} s_{i_r,i_{r-1},\ldots, i_1}\log(i_r\cdot i_{r-1}\cdots i_1)$ is minimized when all $s_{i_r,i_{r-1},\ldots, i_1}$ with the {smallest possible values} of the product $i_1\cdot i_2\cdot\ldots i_r$ have taken the value $2^{-b}$. Let $k^{*}$ be the largest integer such that $M(k^{*}) < 2^b$. Let $N'(k^{*}+1)\defeq 2^b-M(k^{*})$.  Then $$\sum_{i_1,i_2\ldots i_1} s_{i_ri_{r-1},\ldots,i_1}\log(i_r\cdot i_{r-1}\cdots i_1) = 2^{-b}(\sum_{k=1}^{k^{*}} N(k)\log(k)+N'(k^{*}+1)\log(k^{*}+1)).$$ 
 
Our lower bound proceeds by evaluating $N(k)$. Let $k= 2^{a_1}3^{a_2}\ldots p_t^{a_t}$ be the prime decomposition of $k$. 
Each of the $r$ integers that multiply to give $k$ can be written as $n_f=2^{a^f_1}3^{a^f_2}\ldots p_t^{a^f_t}$ where $f\in [r]$. Since $n_1\cdot n_2\ldots n_r = k$, we have that $$a^1_1+a^2_1+\cdots+a^r_1 = a_1,\quad a^1_2+a^2_2+\cdots+a^r_2 = a_2, \quad a^1_t+a^2_t+\cdots+a^r_t = a_t.$$

We need to compute the number of ways of selecting the ordered tuple \\$(a^1_1,a^2_1,\ldots, a^r_1, a^1_2,\ldots, a^r_2, \ldots a^1_t,a^2_t,\ldots, a^r_t)$ that satisfy the above constraints. We note that the order matters. For the first constraint, the number of ways is equal to ${a_1+r-1 \choose r-1}$. Similar argument holds for rest of the constraints, and each being independent, we obtain that the number of ways is : $${a_1+r-1 \choose r-1}{a_2+r-1 \choose r-1}\cdots {a_t+r-1 \choose r-1}.$$ Since 
$${a_1+r-1 \choose r-1} = (a_1+1)\br{\frac{a_1}{2}+1}\cdots\br{\frac{a_1}{r-1}+1}<2^{a_1+\frac{a_1}{2}+\cdots+\frac{a_1}{r-1}} < 2^{a_1(\log r +\gamma)},$$ where $\gamma$ is Euler-Mascheroni constant, we have that 
$$N(k)<2^{(a_1+a_2+\cdots+a_t)(\log r + \gamma)} < k^{(\log r + \gamma)}.$$ Thus, $M(k)<k^{(\log r + \gamma+1)}$. 

Now, $k^{*}$ is the integer such that $M(k^{*})<2^b <M(k^{*}+1)$. This means, $k^{*}> 2^{\frac{b}{\log r+\gamma+1}}-1$. 
Now we are in a position to obtain the final lower bound. Consider,
\begin{eqnarray*}
&&\sum_{i_1,i_2\ldots i_1} s_{i_ri_{r-1},\ldots,i_1}\log(i_r\cdot i_{r-1}\cdots i_1) \\&=&2^{-b}\br{\sum_{k=1}^{k^{*}} N(k)\log(k)+N'(k^{*}+1)\log\br{k^{*}+1}}\\&>& 2^{-b}\br{\sum_{k=\sqrt{k^{*}}}^{k^{*}} N(k)\log(k)+N'(k^{*}+1)\log\br{k^{*}+1}}\\ &>& \frac{\log(k^{*})}{2}\cdot 2^{-b}\br{\sum_{k=\sqrt{k^{*}}}^{k^{*}} N(k)+N'\br{k^{*}+1}}\\ &=& \frac{\log(k^{*})}{2}\cdot \br{1-2^{-b}\sum_{k=1}^{\sqrt{k^{*}}-1} N(k)} \\&>& \frac{\log(k^{*})}{2}\cdot \br{1-2^{-b}\br{\sqrt{k^{*}}}^{\log r+\gamma+1}} > \frac{b(1-2^{-b/2})}{2(\log r+\gamma+1)}.
\end{eqnarray*}

This proves the claim.
\end{proof}

\section{Separating expected communication and information for classical-quantum state transfer}
\label{sec:example}

We formally define the task of classical-quantum state transfer. 

\begin{task}[\textbf{Classical-quantum state transfer}]
\label{def:quantumhuffman}
Fix a Hilbert space $\H$ and a $\eta\in (0,1)$. Alice receives an input $x\sim p\br{\cdot}$ associated with a quantum pure state $\ketbra{\Psi^x}\in \cD(\H)$, where $p\br{\cdot}$ is a distribution over a finite set $\X$ and $x\in\X$. The goal is that Bob outputs a quantum state $\Phi^x\in \cD(\H)$ satisfying $\sum_x p(x)\F^2(\Psi^x,\Phi^x)\geq 1-\eta^2$.
\end{task}

The parameter $\eta$ will be referred to as the average error of a protocol achieving the task. This section is devoted to the proof of the following theorem. For the ease of presentation and noting that only one register is involved, we will drop the register labels on the quantum states. 

\begin{theorem}
	\label{thm:maingeneral}
	Fix positive integer $d>4$ and $\delta\in(0,1/4)$. There exists a collection of $N\defeq
	8d^7$ states $\{\ket{\Psi^x}\}_{x=1}^N$ belonging to a $d$-dimensional Hilbert space $\H$, and a probability distribution $\{p(x)\}_{x=1}^N$, such that following holds for the ensemble $\{(p(x),\Psi^x)\}_{x=1}^N$.
	
	\begin{itemize}
		\item  The von Neumann entropy of the average state satisfies $\mathrm{S}(\sum_x p(x) \Psi^x) \leq \delta\log(d)+H(\delta)+2$
		\item For any one-way protocol achieving the classical-quantum state transfer of the above ensemble with average error $\eta \in (0,(\frac{\delta}{8})^2)$, the expected communication cost is lower bounded by $(1-\sqrt{\eta})^2 \log(\frac{d\delta}{128})$.
		
		\item For any $r$-round interactive protocol achieving the classical-quantum state transfer of the above ensemble with average error $\eta \in (0,(\frac{\delta}{8})^2)$, the expected communication cost is lower bounded by $$\frac{1}{20}\cdot\frac{\log(\frac{d\delta}{128})}{(\log r)}.$$
		\item For any interactive protocol (with arbitrary many rounds) achieving the classical-quantum state transfer of the above ensemble with average error $\eta \in (0,(\frac{\delta}{10})^4)$, the expected communication cost is lower bounded by 
		$$\frac{1}{30}\cdot\frac{\log(\frac{d\delta}{128})}{(\log\log(d)-2\log\eta)}.$$
	\end{itemize}
\end{theorem}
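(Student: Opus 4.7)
My plan is to construct the ensemble randomly via the Haar measure and invoke the machinery of Lemma~\ref{lem:simplerlow} to obtain the three lower bounds. Set $\ket{\Psi^x}=\sqrt{1-\delta}\ket{0}+\sqrt{\delta}\ket{x}$ for $x\in[N]$ with $N=8d^7$, where $\ket{0}$ is a fixed unit vector in $\H$ and the $\ket{x}$ are drawn i.i.d.\ from the Haar measure on the $(d-1)$-dimensional subspace $\H'$ orthogonal to $\ket{0}$, with $p(x)=1/N$. The argument is probabilistic: I show that a typical Haar draw simultaneously satisfies all four claims. The entropy bound holds for \emph{every} realisation, since the average state decomposes as $(1-\delta)\ketbra{0}+\delta\bar\rho$ with $\bar\rho\in\cD(\H')$, and concavity of the von Neumann entropy together with $S(\bar\rho)\le\log(d-1)$ yields $S(\rho)\le H(\delta)+\delta\log d+2$.

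The three communication lower bounds all follow from Lemma~\ref{lem:simplerlow} once I establish the key estimate
\[
M(\eta)\;\defeq\;\max_{\omega\in\cD(\H)}\,\sum_{x}p(x)\,2^{-\dmaxepss{\Psi^x}{\omega}{\sqrt{\eta}}}\;\le\;\frac{128}{d\delta}
\]
with positive probability over the Haar draws. For any fixed $\omega$, a Lagrangian optimisation of $\min_{\ket\phi}\bra{\phi}\omega^{-1}\ket{\phi}$ over unit vectors with $|\langle\phi|\Psi^x\rangle|^2\ge 1-\eta$ shows that in the regime $\sqrt{\eta}\le\delta/8$ the smoothing decreases $\log\bra{\Psi^x}\omega^{-1}\ket{\Psi^x}$ by only an $O(1)$ additive constant, so $\dmaxepss{\Psi^x}{\omega}{\sqrt{\eta}}\ge\log\bra{\Psi^x}\omega^{-1}\ket{\Psi^x}-O(1)$. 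Decomposing $\omega=\lambda_0\ketbra{0}+\omega_\perp+(\text{off-diagonal})$ with $\lambda_0=\bra{0}\omega\ket{0}$, the AM--HM inequality gives $\Tr(\omega_\perp^{-1})\ge(d-1)^2/(1-\lambda_0)$, and Levy's concentration of measure on the sphere of $\H'$ shows $\bra{x}\omega_\perp^{-1}\ket{x}\ge(d-1)/(1-\lambda_0)\cdot(1-o(1))$ except on a Haar-exponentially small event. Minimising $(1-\delta)/\lambda_0+\delta\bra{x}\omega_\perp^{-1}\ket{x}$ in $\lambda_0$ produces $\bra{\Psi^x}\omega^{-1}\ket{\Psi^x}\gtrsim d\delta$, with the cross term absorbed by Cauchy--Schwarz. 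For uniformity in $\omega$, I pass to the bounded scalars $Y_k=2^{-\dmaxepss{\Psi^{x_k}}{\omega}{\sqrt\eta}}\in[0,1]$, apply Chernoff (Fact~\ref{fact:chernoff}) across the $N=8d^7$ samples, and union-bound over an $\eps$-net of $\cD(\H)$ of size $\exp(O(d^2\log d))$, which is dominated by the $\exp(\Omega(d^6/\delta))$ Chernoff deviation factor.

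Given $M(\eta)\le 128/(d\delta)$, the one-way claim follows by applying Lemma~\ref{lem:simplerlow} at $r=1$ with $\gamma=\sqrt{\eta}$, and the $r$-round claim by the same lemma at general $r\ge 2$ (where the denominator $2\log r+8$ explains the factor $1/20$). For the arbitrary-round claim I combine this with the observation (Remark~\ref{nontrivialmessage}) that every non-trivial round contributes at least one bit to the expected length, so if the expected communication cost is $C$ the effective number of rounds is at most $C$. Substituting $r=C$ into the $r$-round bound, together with the additive $O(\log(1/\eta))$ term arising when routing through the one-way-to-interactive reduction~\eqref{one_way_r_way} with $\delta'=\Theta(\eta^2)$, yields the self-consistent inequality $C\gtrsim\log(d\delta)/(\log\log d+2\log(1/\eta))$, matching the last bullet.

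The main obstacle is the uniform-in-$\omega$ control of $M(\eta)$: pointwise Haar concentration for $\bra{\Psi^x}\omega^{-1}\ket{\Psi^x}$ carries a Lipschitz constant proportional to $\|\omega^{-1}\|_\infty$, which is unbounded over $\cD(\H)$; the resolution is to pass to the bounded scalars $Y_k\in[0,1]$ and average across the $N=8d^7$ Haar samples, turning the problem into a Chernoff/union-bound exercise with the favourable scaling $\mu N\sim d^6/\delta$ overwhelming the $\exp(O(d^2\log d))$ net size.
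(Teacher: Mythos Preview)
Your construction and your route through Lemma~\ref{lem:simplerlow} for the one-way and $r$-round bullets match the paper exactly, and your pinching argument for the entropy bound is in fact cleaner than the paper's (which goes through Alicki--Fannes and an extra matrix-Hoeffding estimate on the cross terms). Two of your steps, however, have genuine gaps.

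\textbf{The uniform bound on $M(\eta)$.} Your plan is to control $\frac{1}{N}\sum_k Y_k(\omega)$ for net points $\omega_0$ via Chernoff and then extend to all $\omega$. The extension step requires quantitative continuity of $Y_k(\omega)=2^{-\dmaxepss{\Psi^{x_k}}{\omega}{\sqrt\eta}}$ in $\omega$, and this is exactly where the unboundedness of $\omega^{-1}$ bites: writing $2^{-\dmax{\rho'}{\omega}}=\min_{v}\bra v\omega\ket v/\bra v\rho'\ket v$, the minimising $v$ can have $\bra v\rho'\ket v$ arbitrarily small when $\rho'$ is near rank one (as it is here), so an $\eps$-perturbation of $\omega$ in operator norm does not move $Y_k$ by $O(\eps)$. ``Passing to bounded scalars'' does not by itself supply the Lipschitz constant you need to bridge from net points to arbitrary $\omega$. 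The paper sidesteps this completely: it first freezes a single deterministic ensemble satisfying the three matrix-Hoeffding estimates of Eq.~\eqref{eq:niceset} (on $\frac{1}{m}\sum\ketbra{x_i}$, on the cross terms, and on $\frac{1}{m}\sum\ketbra{x_i}^{\otimes 2}$), and then for an \emph{arbitrary} $\omega$ it looks only at the projector $Q$ onto the small-eigenvalue space of $\omega$. Chebyshev on $\bra{\Psi_i}Q\ket{\Psi_i}$ (Lemma~\ref{lem:smalloverlapbound}) uses only the first and second empirical moments, which are controlled \emph{uniformly in $Q$} by Eq.~\eqref{eq:niceset}; no union bound over $\omega$ is ever taken. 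This is the key trick you are missing.

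\textbf{The round-independent bullet.} Remark~\ref{nontrivialmessage} gives you that the \emph{expected} number of non-trivial rounds is at most $C$, but the $r$-round lower bound (via Claim~\ref{combinatorics}) needs the \emph{worst-case} number of rounds to be $r$: the count $N(k)$ of $r$-tuples with product $k$ genuinely depends on $r$, not on an expected number of non-unit coordinates. So ``substituting $r=C$'' is not justified, and the detour through Eq.~\eqref{one_way_r_way} does not repair it. The paper instead truncates the protocol at $\ell=2\lceil\log(d)/\eta^2\rceil+1$ rounds: either the probability of reaching round $\ell$ exceeds $\eta^2$, forcing expected cost $\ge\eta^2\ell\ge\log d$, or the truncated protocol has error at most $\sqrt{3\eta}$ and the $r$-round bound with $r=\ell$ applies. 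That truncation step is what produces the $\log\log d-2\log\eta$ denominator and is missing from your plan.
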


The proof of this theorem, given in Subsection \ref{subsec:maingen}, shall follow from the construction given below. 

\subsection{Construction}

Let $\H$ be a $d$-dimensional Hilbert space; $\ket{0}$ be an arbitrary pure state in $\H$, and $V$ be the $\br{d-1}$-subspace orthogonal to $\ket{0}$.  Let $P$ be the projector onto $V$ and $S$ be the unit ball in $V$. Let $\mu$ be a Haar measure on $S$. We continue to use the ket notation to represent the vectors in $S$. We sample $m$ pure states $\{\ket{x_1},\ket{x_2}\ldots \ket{x_m}\}$,  independently from $\mu$, where $m$ is to be chosen later. Fix a $\delta\in (0,\frac{1}{4})$. For each $i$, define the following random hermitian matrices 
\begin{equation}\label{eqn:Zistatetransfer}
	Z^1_i \defeq \ketbra{x_i}, \quad Z^2_i \defeq \sqrt{\delta-\delta^2}(\ket{x_i}\bra{0} + \ket{0}\bra{x_i}) + \delta\ketbra{x_i}, \quad Z^3_i \defeq \ketbra{x_i}\otimes \ketbra{x_i}.
\end{equation}
 We have that $\|Z^1_i\|_{\infty}\leq 1, \|Z^2_i\|_{\infty}\leq 2\sqrt{\delta}< 1, \|Z^3_i\|_{\infty} \leq 1$. Furthermore, it holds that 

\begin{eqnarray}
&& \mathbb{E}(Z^1_i) = \frac{P}{d} ,\nonumber \\ 
&& \mathbb{E}(Z^2_i) = \delta\frac{P}{d},\nonumber \\ && \mathbb{E}(Z^3_i) = \frac{P\otimes P  + F}{d(d+1)}\label{eqn:xx}  
\end{eqnarray}
where $F$ is the swap operator on $V\times V$. Namely,
\[F\ket{x,y}\defeq\begin{cases}
\ket{y,x}&~\mbox{if $\ket{x}\in V,\ket{y}\in V$}\\
$0$&~\mbox{otherwise.}
\end{cases}\]
From the Matrix Hoeffding bound (Fact \ref{matrixhoeff}), the following three inequalities hold for any $\eps>0$: 
\begin{eqnarray*}
&&\prob{\onenorm{\frac{\sum_i Z^1_i}{m} - \frac{P}{d}}\geq\epsilon}\leq d\cdot e^{-\frac{m\eps^2}{8\cdot d^2}}, \nonumber\\
&&
\prob{\onenorm{\frac{\sum_i Z^2_i}{m} - \delta\frac{P}{d}} \geq \eps}\leq d\cdot e^{-\frac{m\eps^2}{8\cdot d^2}}, \nonumber\\ &&
\prob{\onenorm{\frac{\sum_i Z^3_i}{m} - \frac{P\otimes P  + F}{d(d+1)}}}\leq d\cdot e^{-\frac{m\eps^2}{8\cdot d^4}}.
\end{eqnarray*}
Setting $m = \frac{8d^5}{\eps^2}$, we find that for $d> 4$, all the upper bounds are less than $1/3$. Applying the union bound, there exists a set of pure states $\N_{\eps}\defeq\set{\ket{x_1},\ket{x_2}\ldots \ket{x_m}}\subseteq S$ satisfying  

\begin{eqnarray}
\label{eq:niceset}
&& \onenorm{\frac{\sum_i Z^1_i}{m} - \frac{P}{d}} \leq \eps, \nonumber \\ &&
\onenorm{\frac{\sum_i Z^2_i}{m} - \delta\frac{P}{d}} \leq \eps, \nonumber\\ &&
\onenorm{\frac{\sum_i Z^3_i}{m} - \frac{P\otimes P  + F}{d(d+1)}} \leq \eps,
\end{eqnarray}
where $\set{Z^1_i, Z^2_i, Z^3_i}_{1\leq i\leq m}$ are defined in Eq.~\eqref{eqn:Zistatetransfer}. The ensemble is now constructed as follows, with $\X$ being the set $[m]$.  
\begin{equation}\label{eqn:lambda}
	p(i)\defeq \frac{1}{m}, \quad \ket{\Psi_i}\defeq\sqrt{1-\delta}\ket{0}+\sqrt{\delta}\ket{x_i} \quad \text{for }i\in [m],
\end{equation}
 where $\ket{x_i}$ is the state given in $\N_{\eps}$. We will use the notation $\EE_i$ to represent expectation according to the distribution $p(i)$.

\subsection{Upper bound on the von-Neumann entropy}

We have the following lemma.
\begin{lemma}
\label{mutinfbound}
The von-Neumann entropy of the average state $\expec{i}{\Psi_i} = \sum_i p(i) \Psi_i$ satisfies $\mathrm
{S}\br{\expec{i}{\Psi_i}}\leq \br{\delta+\eps}\log d+H\br{\delta}+1$.
\end{lemma}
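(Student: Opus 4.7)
The plan is to exploit the orthogonal decomposition $\H = \mathbb{C}\ket{0} \oplus V$ via a dephasing argument, thereby reducing the problem to controlling the entropy of the average state $\frac{1}{m}\sum_i \ketbra{x_i}$ on $V$. Let $\Pi_0 \defeq \ketbra{0}$, so that $\Pi_0 + P = \id$, and consider the dephasing map $\Lambda(X) \defeq \Pi_0 X \Pi_0 + P X P$. Since $\Lambda(X) = \tfrac{1}{2}X + \tfrac{1}{2}(\Pi_0 - P)X(\Pi_0 - P)$ with $\Pi_0 - P$ a unitary (a reflection), $\Lambda$ is a mixture of unitaries and hence unital CPTP. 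Therefore $S(\Lambda(\bar\rho)) \geq S(\bar\rho)$, where $\bar\rho \defeq \expec{i}{\Psi_i}$.

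Using $\braket{0}{\Psi_i} = \sqrt{1-\delta}$ and $P\ket{\Psi_i} = \sqrt{\delta}\ket{x_i}$, the cross-terms $\ket{0}\bra{x_i}$ and $\ket{x_i}\bra{0}$ in $\ketbra{\Psi_i}$ are annihilated by $\Lambda$, so a direct computation gives
$$\Lambda(\bar\rho) \;=\; (1-\delta)\ketbra{0} \,+\, \delta\,\rho_V, \qquad \rho_V \;\defeq\; \tfrac{1}{m}\sum_{i=1}^m \ketbra{x_i}.$$
The two summands are supported on orthogonal subspaces, so the eigenvalues of $\Lambda(\bar\rho)$ are $1-\delta$ together with $\{\delta\lambda_j\}_j$, where $\{\lambda_j\}$ are the eigenvalues of $\rho_V$. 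A short calculation of $-\Tr(\Lambda(\bar\rho)\log\Lambda(\bar\rho))$ then yields
$$S(\Lambda(\bar\rho)) \;=\; H(\delta) \,+\, \delta\cdot S(\rho_V).$$

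It remains to bound $S(\rho_V)$. Since $\rho_V$ is supported on the $(d-1)$-dimensional subspace $V$, the trivial bound $S(\rho_V)\leq \log(d-1)\leq \log d$ immediately yields $S(\bar\rho)\leq H(\delta) + \delta\log d$, which is already stronger than the claim. To match the exact form $(\delta+\eps)\log d + H(\delta) + 1$ as stated, one instead compares $\rho_V$ with $P/(d-1)$: Eq.~\eqref{eq:niceset} gives $\onenorm{\rho_V - P/d}\leq \eps$, so $\onenorm{\rho_V - P/(d-1)}\leq \eps + 1/d$ by the triangle inequality, and an application of Alicki-Fannes (Fact~\ref{fact:fannes}) transfers this closeness into a bound on $|S(\rho_V) - \log(d-1)|$. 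No serious obstacle arises: once dephasing decouples the $\ket{0}$ block from the $V$ block, the analysis collapses to a one-dimensional binary-entropy term plus a standard max-entropy estimate on $V$, and the only real bookkeeping is keeping track of the normalization of $P/d$ versus $P/(d-1)$.
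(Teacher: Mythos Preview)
Your argument is correct and takes a genuinely different route from the paper. The paper works directly with the average state $\bar\rho$: it uses the second concentration inequality in Eq.~\eqref{eq:niceset} (the one for $Z^2_i$, controlling the cross terms $\ket{0}\bra{x_i}+\ket{x_i}\bra{0}$) to show that $\bar\rho$ is $\eps$-close in trace norm to the block-diagonal state $(1-\delta)\ketbra{0}+\delta P/d$, and then invokes Alicki--Fannes (Fact~\ref{fact:fannes}) to pay the continuity cost $\eps\log d+1$. You instead kill the cross terms for free via the pinching $\Lambda$ (which only increases entropy), and then evaluate the entropy of the resulting block-diagonal state exactly as $H(\delta)+\delta\,S(\rho_V)$. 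Combined with the trivial bound $S(\rho_V)\le\log(d-1)$, this yields $S(\bar\rho)\le H(\delta)+\delta\log d$, which is strictly stronger than the lemma as stated and uses neither the $Z^2$ estimate from Eq.~\eqref{eq:niceset} nor Alicki--Fannes.

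One small remark: your final paragraph about ``matching the exact form'' by comparing $\rho_V$ to $P/(d-1)$ via Alicki--Fannes is unnecessary and would in fact only weaken the bound you already have; the trivial $S(\rho_V)\le\log(d-1)$ is optimal here. Since your bound $H(\delta)+\delta\log d$ already implies $(\delta+\eps)\log d+H(\delta)+1$, you can simply drop that paragraph.
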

\begin{proof}
Consider,
 $$ \expec{i}{\ketbra{\Psi_i}}=\br{1-\delta}\ketbra{0}+\expec{i}{\sqrt{\delta(1-\delta)}(\ket{0}\bra{x_i}+\ket{x_i}\bra{0})+\delta\ketbra{x_i}}.$$
 From Eq.~\eqref{eq:niceset}, it follows that $$\onenorm{\expec{i}{\Psi_i}- \br{1-\delta}\ketbra{0}-\delta\frac{P}{d}} \leq \eps.$$ Now we use Alicki-Fannes inequality (Fact \ref{fact:fannes}) to conclude that 
 $$S\br{\expec{i}{\Psi_i}}\leq S\br{\br{1-\delta}\ketbra{0}+\delta\frac{P}{d}}+\eps\log d+1=\br{\delta+\eps}\log d+ H\br{\delta}+1.$$
\end{proof}


\subsection{Bound on the average smooth max-relative entropy}
\label{subsec:finalbound}

Given our construction and Lemma \ref{lem:simplerlow}, it is sufficient to provide an upper bound on $\EE_i 2^{-\dmaxepss{\Psi_i}{\omega}{\etaa}}$, for an arbitrary quantum state $\omega$ and small enough $\etaa$. We will prove the following lemma.

\begin{lemma}
\label{lem:uniformdmax}
Fix $\delta$ as defined above and let $\etaa\in (0,\frac{\delta}{8})$. Let $\eps = \frac{1}{d}$. Then it holds that 
$$\max_{\omega\in \cD(\H)}\expec{i}{2^{-\dmaxepss{\Psi_i}{\omega}{\etaa}}} \leq 2^{-\log(d\delta) + 6}.$$
\end{lemma}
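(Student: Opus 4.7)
The plan is to establish, for every $i$ and every state $\omega$, a pointwise bound
$$2^{-\dmaxepss{\Psi_i}{\omega}{\etaa}} \leq \frac{64}{9\delta}\,\bra{x_i}\omega\ket{x_i},$$
and then average over $i$. The averaging step will be straightforward: Eq.~\eqref{eq:niceset} (first line) gives $\big\|\EE_i\ketbra{x_i} - P/d\big\|_1 \leq \eps = 1/d$, so
$$\EE_i\bra{x_i}\omega\ket{x_i} \;=\; \Tr\!\br{\omega\,\EE_i\ketbra{x_i}} \;\leq\; \Tr(\omega P)/d + \eps\,\|\omega\|_\infty \;\leq\; 2/d.$$
Substituting would yield $\EE_i 2^{-\dmaxepss{\Psi_i}{\omega}{\etaa}} \leq \frac{128}{9 d\delta} < \frac{64}{d\delta} = 2^{-\log(d\delta)+6}$, as required.

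For the pointwise bound I would start from $2^{-\dmaxepss{\Psi_i}{\omega}{\etaa}} = \sup_{\Psi'_i \in \ball{\etaa}{\Psi_i}} 2^{-\dmax{\Psi'_i}{\omega}}$ together with the defining operator inequality $\Psi'_i \preceq 2^{\dmax{\Psi'_i}{\omega}}\,\omega$. Sandwiching this inequality between $\bra{x_i}$ and $\ket{x_i}$ yields (when $\bra{x_i}\omega\ket{x_i}>0$; otherwise $\dmax=+\infty$ and there is nothing to prove)
$$2^{-\dmax{\Psi'_i}{\omega}} \;\leq\; \frac{\bra{x_i}\omega\ket{x_i}}{\bra{x_i}\Psi'_i\ket{x_i}}.$$
The whole argument then reduces to a uniform lower bound $\bra{x_i}\Psi'_i\ket{x_i} \geq 9\delta/64$ valid for every $\Psi'_i$ in the fidelity ball around $\Psi_i$.

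The hard part will be exactly this lower bound. Since $\bra{x_i}\Psi_i\ket{x_i} = \delta$, the naive trace-distance route via Fuchs--van de Graaf gives $\|\Psi'_i-\Psi_i\|_1 \leq 2\sqrt{2\etaa}$ and hence $\bra{x_i}\Psi'_i\ket{x_i} \geq \delta - 2\sqrt{2\etaa}$; in the regime $\etaa<\delta/8$ this deteriorates to something like $\delta - \sqrt{\delta}$, which is vacuous for small $\delta$. I would instead apply the dephasing CPTP map $\E(\rho) \defeq P\rho P + \ketbra{0}\rho\ketbra{0}$, under which a direct computation gives $\E(\Psi_i) = \delta\ketbra{x_i} + (1-\delta)\ketbra{0}$, and a block-wise trace-norm calculation yields
$$\F\!\br{\E(\Psi_i),\E(\Psi'_i)} \;=\; \sqrt{\delta\,\bra{x_i}\Psi'_i\ket{x_i}} \,+\, \sqrt{(1-\delta)\,\bra{0}\Psi'_i\ket{0}}.$$
By monotonicity of fidelity (Fact~\ref{fact:monotonequantumoperation}) this is at least $\F(\Psi'_i,\Psi_i) \geq 1-\etaa$. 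Using $\bra{0}\Psi'_i\ket{0}\leq 1$, the elementary inequality $\sqrt{1-\delta}\leq 1-\delta/2$, and $\etaa\leq\delta/8$, I obtain
$$\sqrt{\delta\,\bra{x_i}\Psi'_i\ket{x_i}} \;\geq\; 1-\etaa-\sqrt{1-\delta} \;\geq\; \delta/2 - \delta/8 \;=\; 3\delta/8,$$
which yields $\bra{x_i}\Psi'_i\ket{x_i} \geq 9\delta/64$ and closes the argument. The key idea is that dephasing onto $V$ versus $\operatorname{span}\{\ket{0}\}$ captures the block structure of $\Psi_i$ exactly, so the monotonicity of fidelity gives a constraint of strength $\Omega(\delta)$ on $\bra{x_i}\Psi'_i\ket{x_i}$ rather than the much weaker $\Omega(\delta-\sqrt{\etaa})$ obtained from trace distance.
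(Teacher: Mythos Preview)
Your argument is correct and takes a genuinely different route from the paper's proof. The paper proceeds by analyzing the eigenstructure of $\omega$: it fixes a threshold $1/k$ (with $k=d/4$), lets $Q^-$ project onto the small-eigenvalue part of $\omega$, and bounds $2^{-\dmaxepss{\Psi_i}{\omega}{\etaa}}$ via $\omega^{-1}\succeq kQ^-$ together with an explicit (and somewhat tedious) computation of $\inf_{|\braket{\lambda}{\Psi_i}|^2\geq 1-2\etaa}\bra{\lambda}Q^-\ket{\lambda}$ (Claims~\ref{upperboundsmoothdmax} and~\ref{exactoverlap}). It then needs a Chebyshev/second-moment argument (Lemma~\ref{lem:smalloverlapbound}, using the $Z^3_i$ part of Eq.~\eqref{eq:niceset}) to show that $\bra{\Psi_i}Q\ket{\Psi_i}$ is large except on a bad set of probability $O(1/d)$.

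Your test-vector approach sidesteps all of this: sandwiching the operator inequality $\Psi'_i\preceq 2^{\dmax{\Psi'_i}{\omega}}\omega$ with $\bra{x_i},\ket{x_i}$ replaces the eigenstructure analysis by the single scalar $\bra{x_i}\omega\ket{x_i}$, and your dephasing-plus-monotonicity step replaces the optimization in Claim~\ref{exactoverlap} by a two-line inequality. Because you obtain a \emph{pointwise} bound (valid for every $i$), the bad-set/second-moment machinery is unnecessary; only the first line of Eq.~\eqref{eq:niceset} is used, and the resulting constant is even slightly sharper ($128/(9d\delta)$ versus the paper's $40/(d\delta)+96/d$). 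One small presentational note: your handling of the edge case $\bra{x_i}\omega\ket{x_i}=0$ relies on $\bra{x_i}\Psi'_i\ket{x_i}>0$, which you only establish afterwards; in a polished write-up you would swap the order. The paper's route, being based on the eigenprojectors of $\omega$ rather than the special direction $\ket{x_i}$, might generalize more readily to ensembles without such an explicit anchor vector, but for the present construction your argument is both shorter and stronger.
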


\begin{proof}[Proof of Lemma \ref{lem:uniformdmax}]
Fix a quantum state $\omega \in \cD(\H)$. Let $k<d$ be an integer and $Q^{-}$ ($Q^{+}$) be the projector onto the subspace where the eigenvalues of $\omega$ are less than (greater than or equal to) $\frac{1}{k}$. Since $\mathrm{rk}~Q^{+}\leq k$, it holds that $\mathrm{rk}~Q^{-}\geq d+1-k$.  Let $W$ be the subspace corresponding to $Q^-$. We have $\dim W=d-\mathrm{rk}~Q^+\geq d+1-k$. Let $W'$ be an arbitrary $\br{d-k}$-dimensional subspace of $W$ orthogonal to $\ket{0}$, the existence of which is easy to verify. Let $Q$ be a projector onto $W'$. 

We apply Lemma~\ref{lem:smalloverlapbound} by setting $k=\frac{d}{4}$, $\eps = \frac{1}{d}$ $\alpha=\frac{\delta}{4}$ and obtain $$\Pr_i\Br{\bra{\Psi_i}Q\ket{\Psi_i} < \delta/2+\delta/d} \leq \frac{96}{d}.$$ Let $\B$ be the set of all $i$ such that $\bra{\Psi_i}Q\ket{\Psi_i}< \frac{\delta}{2}$. Let $\G$ be the set of rest of the $i$. Above inequality implies that $\Pr_i\Br{\B} \leq \frac{96}{d}$. Now, consider
\begin{eqnarray*}
&&\expec{i}{2^{-\dmaxepss{\Psi_i}{\omega}{\etaa}}} \\
&&= \sum_{i\in \B}\frac{1}{m}\cdot 2^{-\dmaxepss{\Psi_i}{\omega}{\etaa}} + \sum_{i\in \G}\frac{1}{m}\cdot 2^{-\dmaxepss{\Psi_i}{\omega}{\etaa}}\\
&&\leq \Pr_i\Br{\B} + \sum_{i\in \G}\frac{1}{m}\cdot 2^{-\dmaxepss{\Psi_i}{\omega}{\etaa}}\\
&& \leq \frac{96}{d}+ \max_{i\in \G}2^{-\dmaxepss{\Psi_i}{\omega}{\etaa}}.
\end{eqnarray*}	
For an $i\in \G$, we have $\bra{\Psi_i} Q^{-}\ket{\Psi_i} \geq \bra{\Psi_i} Q\ket{\Psi_i}\geq \frac{\delta}{2}> 2\etaa$. Thus, we use Lemma \ref{maxentropybound} to conclude that for all $i\in \G$, 
	\begin{eqnarray*}
	&& 2^{-\dmaxepss{\Psi_i}{\omega}{\etaa}} \\ &&\leq \frac{4}{d(1-\etaa)\br{\sqrt{(1-2\etaa)(\frac{\delta}{2})}-\sqrt{2(1-\frac{\delta}{2})\etaa}}^2 }\\ && \leq \frac{4}{d(1-\frac{\delta}{8})\br{\sqrt{(1-\frac{\delta}{4})(\frac{\delta}{2})}-\sqrt{(1-\frac{\delta}{2})\frac{\delta}{4}}}^2 }\\ && \leq \frac{40}{d\delta}.
	\end{eqnarray*}
This leads to the upper bound
	\begin{eqnarray*}
	\expec{i}{2^{-\dmaxepss{\Psi_i}{\omega}{\etaa}}} \leq  \frac{96}{d} +  \frac{40}{d\delta} \leq 2^{-\log(d\delta) + 6},
	\end{eqnarray*}
which proves the lemma.  
\end{proof}

For the discussion below, we fix the quantum state $\omega \in \cD(\H)$ as appearing in the above proof. The following lemma provides an explicit lower bound on the smooth max-relative entropy between $\ketbra{\Psi_i}$ and $\omega$.
\begin{lemma}
\label{maxentropybound}
For any $i$ and $\etaa \in (0,1)$ satisfying  $\bra{\Psi_i} Q^{-}\ket{\Psi_i}>2\etaa$, it holds that $$2^{-\dmaxepss{\Psi_i}{\omega}{\etaa}}\leq  \frac{1}{k(1-\etaa)(\sqrt{(1-\etaa)\bra{\Psi_i}Q^{-}\ket{\Psi_i}}-\sqrt{\bra{\Psi_i}Q^{+}\ket{\Psi_i}\etaa})^2}.$$
\end{lemma}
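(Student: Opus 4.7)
The plan is to use the variational characterization
\[
 2^{-\dmax{\Psi'}{\omega}} \;=\; \inf_{\ket{\phi}:\,\bra{\phi}\Psi'\ket{\phi}>0} \frac{\bra{\phi}\omega\ket{\phi}}{\bra{\phi}\Psi'\ket{\phi}}
\]
combined with the minimax bound
\[
 \sup_{\Psi' \in \ball{\etaa}{\Psi_i}} 2^{-\dmax{\Psi'}{\omega}} \;\leq\; \frac{\bra{\phi}\omega\ket{\phi}}{\inf_{\Psi' \in \ball{\etaa}{\Psi_i}} \bra{\phi}\Psi'\ket{\phi}},
\]
valid for any single test vector $\ket{\phi}$. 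The entire proof reduces to picking a $\ket{\phi}$ for which the numerator is at most $1/k$ and the denominator admits a uniform lower bound of the shape written in the statement.

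For the numerator I would take $\ket{\phi}$ to be the normalised projection of $\ket{\Psi_i}$ onto the range of $Q^{-}$, i.e.\ $\ket{\phi} = Q^{-}\ket{\Psi_i}/\|Q^{-}\ket{\Psi_i}\|$. Since this $\ket{\phi}$ sits in the range of $Q^{-}$, the eigenvalue restriction defining $Q^{-}$ gives $\bra{\phi}\omega\ket{\phi} = \bra{\phi}Q^{-}\omega Q^{-}\ket{\phi}\leq \frac{1}{k}$. The overlap $c \defeq |\braket{\phi}{\Psi_i}| = \sqrt{\bra{\Psi_i}Q^{-}\ket{\Psi_i}}$ is, by the hypothesis $\bra{\Psi_i}Q^{-}\ket{\Psi_i}>2\etaa$, safely larger than $\sqrt{2\etaa}$, which will matter for positivity of the forthcoming lower bound. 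For the denominator I would decompose an arbitrary $\Psi' = \sum_j p_j \ketbra{\psi_j} \in \ball{\etaa}{\Psi_i}$ and each pure component as $\ket{\psi_j} = \sqrt{\alpha_j}\,e^{i\beta_j}\ket{\Psi_i} + \sqrt{1-\alpha_j}\,\ket{\Psi_i^{\perp,j}}$. Purity of $\Psi_i$ converts the ball condition $\F(\Psi_i,\Psi') \geq 1-\etaa$ into the scalar constraint $\sum_j p_j\alpha_j \geq (1-\etaa)^2$. Minimising $|\braket{\phi}{\psi_j}|^2$ over the phase $\beta_j$ and the direction $\ket{\Psi_i^{\perp,j}}$ (using $\max_{\ket{v}\perp\Psi_i,\|v\|=1}|\braket{\phi}{v}| = \sqrt{1-c^{2}}$) gives the convex, non-decreasing function
\[
 f(\alpha) \;=\; \max\!\Bigl(0,\;\bigl(\sqrt{\alpha}\,c - \sqrt{(1-\alpha)(1-c^{2})}\bigr)^{2}\Bigr),
\]
and Jensen's inequality yields $\bra{\phi}\Psi'\ket{\phi} \geq \sum_j p_j f(\alpha_j) \geq f(\sum_j p_j\alpha_j) \geq f((1-\etaa)^2)$.

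Combining the two halves gives $2^{-\dmaxepss{\Psi_i}{\omega}{\etaa}} \leq 1/(k\cdot f((1-\etaa)^{2}))$, and substituting $c^{2}=\bra{\Psi_i}Q^{-}\ket{\Psi_i}$, $1-c^{2}=\bra{\Psi_i}Q^{+}\ket{\Psi_i}$ produces a bound of the form stated in the lemma. The main obstacle I expect is the final algebraic step: the Jensen argument produces an expression with an inner $(1-\etaa)$ paired against $\sqrt{\etaa(2-\etaa)}$, whereas the lemma writes $(1-\etaa)(\sqrt{(1-\etaa)\bra{\Psi_i}Q^{-}\ket{\Psi_i}} - \sqrt{\bra{\Psi_i}Q^{+}\ket{\Psi_i}\,\etaa})^{2}$, which is the same as $\bigl(p(1-\etaa) - q\sqrt{\etaa(1-\etaa)}\bigr)^{2}$ via the identity $((1-\etaa)A - B\sqrt{1-\etaa})^{2} = (1-\etaa)(A\sqrt{1-\etaa}-B)^{2}$; reconciling the two forms will require tracking the elementary inequality between $\sqrt{\etaa(2-\etaa)}$ and $\sqrt{\etaa(1-\etaa)}$ carefully and absorbing the resulting slack into the external $(1-\etaa)$ factor. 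The hypothesis $\bra{\Psi_i}Q^{-}\ket{\Psi_i} > 2\etaa$ is exactly what keeps the inner linear combination strictly positive, so that $f((1-\etaa)^{2}) > 0$ and the displayed bound is non-vacuous.
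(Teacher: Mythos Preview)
Your minimax approach via a single test vector $\ket{\phi}=Q^{-}\ket{\Psi_i}/\|Q^{-}\ket{\Psi_i}\|$ is a genuinely different route from the paper. The paper instead takes the optimal $\rho$ in $\ball{\etaa}{\Psi_i}$, extracts its top eigenvector $\ket{\lambda_1}$, bounds $2^{\dmax{\rho}{\omega}}\geq \lambda_1\bra{\lambda_1}\omega^{-1}\ket{\lambda_1}\geq k\lambda_1\bra{\lambda_1}Q^{-}\ket{\lambda_1}$, and then computes $\inf_{|\braket{\lambda}{\Psi_i}|^2>1-2\etaa}\bra{\lambda}Q^{-}\ket{\lambda}$ exactly by a two–dimensional optimisation in $\mathrm{span}\{Q^{+}\ket{\Psi_i},Q^{-}\ket{\Psi_i}\}$. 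The key structural difference is that the paper's test direction is $\ket{\lambda_1}$, which \emph{depends on} $\rho$ and is measured against the whole projector $Q^{-}$; your $\ket{\phi}$ is fixed in advance and is only one direction inside the range of $Q^{-}$. That loss of freedom is exactly what makes your final bound weaker.

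Two concrete issues. First, your formula for $f$ is off: since $(\sqrt{\alpha}\,c-\sqrt{(1-\alpha)(1-c^2)})^2\geq 0$ always, your $\max(0,\cdot)$ is vacuous, and the resulting $f$ is \emph{not} non-decreasing (it dips to $0$ at $\alpha=1-c^2$ and is positive on both sides). The actual pointwise minimum over $\beta_j$ and $\ket{\Psi_i^{\perp,j}}$ is the piecewise function $\tilde f(\alpha)=0$ for $\alpha\leq 1-c^2$ and $\tilde f(\alpha)=(\sqrt{\alpha}\,c-\sqrt{(1-\alpha)(1-c^2)})^2$ for $\alpha>1-c^2$; this $\tilde f$ is convex and non-decreasing, so Jensen plus monotonicity does go through with it.

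Second, and this is the real gap: after that repair your method delivers
\[
2^{-\dmaxepss{\Psi_i}{\omega}{\etaa}}\ \leq\ \frac{1}{k\bigl((1-\etaa)\sqrt{p}-\sqrt{\etaa(2-\etaa)\,q}\bigr)^{2}},
\]
whereas the lemma's right-hand side equals $\dfrac{1}{k\bigl((1-\etaa)\sqrt{p}-\sqrt{\etaa(1-\etaa)\,q}\bigr)^{2}}$ (using the identity you noted). Since $\sqrt{\etaa(2-\etaa)}>\sqrt{\etaa(1-\etaa)}$, your denominator is strictly \emph{smaller}, so your upper bound is strictly \emph{larger}. The ``slack'' you hope to absorb therefore points the wrong way; there is no elementary inequality that turns your bound into the stated one. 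Your argument does give a bound of the same order (and would suffice for the downstream application in Lemma~\ref{lem:uniformdmax}), but it does not prove Lemma~\ref{maxentropybound} as written. To recover the exact statement you would need either the paper's eigenvector reduction or, equivalently, to let the test direction vary with $\Psi'$ rather than fixing it as $Q^{-}\ket{\Psi_i}/\|Q^{-}\ket{\Psi_i}\|$.
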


\begin{proof}
 Define the quantity $$S^{\etaa}(\Psi_i||Q^{-})\defeq \text{inf}_{\ketbra{\lambda} \in \cD(\H): |\braket{\lambda}{\Psi_i}|^2>1-\etaa}\bra{\lambda}Q^{-}\ket{\lambda}.$$
The lemma follows from Claim~\ref{upperboundsmoothdmax} and Claim~\ref{exactoverlap}.

\end{proof}

\begin{claim}
\label{upperboundsmoothdmax}
For any $i$, it holds that 
$$2^{-\dmaxepss{\Psi_i}{\omega}{\etaa}} \leq \frac{1}{k(1-\etaa)S^{2\etaa}(\Psi_i||Q^{-})}.$$  
\end{claim}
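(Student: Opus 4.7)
The plan is to extract from any feasible $\rho'\in\ball{\etaa}{\Psi_i}$ a \emph{pure} state $\ket{\Psi_i'}$ that simultaneously (a) lies in the $2\etaa$-fidelity ball around $\ket{\Psi_i}$, making it feasible for $S^{2\etaa}(\Psi_i\|Q^{-})$, and (b) has max-relative entropy with $\omega$ that exceeds $\dmax{\rho'}{\omega}$ by only an additive slack of order $\log\tfrac{1}{1-\etaa}$. Combined with a rank-one lower bound $\dmax{\ketbra{\Psi_i'}}{\omega}\geq \log\bigl(k\bra{\Psi_i'}Q^{-}\ket{\Psi_i'}\bigr)$ that is immediate from the structural inequality $\omega^{-1}\succeq kQ^{-}$, these two properties yield the claim.

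The candidate pure state I would take is $\ket{\Psi_i'}\defeq \sqrt{\rho'}\ket{\Psi_i}/\|\sqrt{\rho'}\ket{\Psi_i}\|$. Property (b) is then essentially free: from $\sqrt{\rho'}\ketbra{\Psi_i}\sqrt{\rho'}\preceq \sqrt{\rho'}\sqrt{\rho'}=\rho'$ together with the bound $\|\sqrt{\rho'}\ket{\Psi_i}\|^2=\bra{\Psi_i}\rho'\ket{\Psi_i}=\F(\rho',\Psi_i)^2\geq (1-\etaa)^2$, one gets $\ketbra{\Psi_i'}\preceq 2^{\dmax{\rho'}{\omega}}(1-\etaa)^{-2}\omega$. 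The delicate step is property (a). Writing the spectral decomposition $\rho'=\sum_j p_j\ketbra{\lambda_j}$ and $a_j=|\braket{\lambda_j}{\Psi_i}|^2$, a direct expansion gives
\[
|\braket{\Psi_i'}{\Psi_i}|^2=\frac{\bigl(\sum_j\sqrt{p_j}\,a_j\bigr)^2}{\sum_j p_j\,a_j}.
\]
The key point is that $\sqrt{p_j}\geq p_j$ on $[0,1]$ and $a_j\geq 0$, so $\sum_j\sqrt{p_j}a_j\geq \sum_j p_j a_j$, which when squared and divided yields $|\braket{\Psi_i'}{\Psi_i}|^2\geq \bra{\Psi_i}\rho'\ket{\Psi_i}\geq (1-\etaa)^2\geq 1-2\etaa$. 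Thus $\ket{\Psi_i'}$ is feasible for $S^{2\etaa}(\Psi_i\|Q^{-})$ and hence $\bra{\Psi_i'}Q^{-}\ket{\Psi_i'}\geq S^{2\etaa}(\Psi_i\|Q^{-})$.

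The rest is a short structural computation. Since $Q^{-}$ projects onto an eigenspace of $\omega$ on which every eigenvalue is strictly less than $1/k$ and commutes with $\omega$, we have $\omega^{-1}\succeq kQ^{-}$; for the rank-one state $\ketbra{\Psi_i'}$ the max-relative entropy evaluates to $\log\bra{\Psi_i'}\omega^{-1}\ket{\Psi_i'}$, which is therefore at least $\log\bigl(kS^{2\etaa}(\Psi_i\|Q^{-})\bigr)$. Chaining this with property (b) rearranges into $2^{-\dmax{\rho'}{\omega}}\leq 1/\bigl(k(1-\etaa)^2\,S^{2\etaa}(\Psi_i\|Q^{-})\bigr)$, which is the bound in the claim; the small discrepancy between my exponent $(1-\etaa)^2$ and the stated $(1-\etaa)$ is mild and should be recoverable either by a slightly sharper choice of the intermediate pure state (e.g.\ via Uhlmann's theorem applied to a joint purification of $\rho'$ and $\Psi_i$) or absorbed into a constant. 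The main obstacle I foresee is precisely property (a): Cauchy--Schwarz only provides an \emph{upper} bound on the ratio $(\sum_j\sqrt{p_j}a_j)^2/\sum_j p_j a_j$, and the reverse inequality fails for general nonnegative sequences; it works here only because $\{p_j\}$ is a probability distribution, so one must invoke the pointwise monotonicity $\sqrt{p_j}\geq p_j$ at exactly the right moment.
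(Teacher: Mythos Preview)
Your proof is correct and follows the same overall strategy as the paper---extract from the optimizer $\rho'$ a pure state in the $2\etaa$-ball of $\ket{\Psi_i}$, then lower-bound $2^{\dmax{\rho'}{\omega}}$ via $\omega^{-1}\succeq kQ^{-}$---but the choice of pure state differs. The paper takes the \emph{top eigenvector} $\ket{\lambda_1}$ of $\rho'$: from $\bra{\Psi_i}\rho'\ket{\Psi_i}\geq 1-\etaa$ one gets $\lambda_1\geq 1-\etaa$, hence $|\braket{\Psi_i}{\lambda_1}|^2\geq 1-2\etaa$, and then $2^{\dmax{\rho'}{\omega}}=\|\omega^{-1/2}\rho'\omega^{-1/2}\|_\infty\geq \lambda_1\bra{\lambda_1}\omega^{-1}\ket{\lambda_1}$ directly yields the factor $(1-\etaa)$. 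Your normalized $\sqrt{\rho'}\ket{\Psi_i}$ is an equally natural choice, and your verification of property~(a) via the pointwise inequality $\sqrt{p_j}\geq p_j$ is clean; the operator inequality $\sqrt{\rho'}\ketbra{\Psi_i}\sqrt{\rho'}\preceq\rho'$ handling property~(b) is also correct.

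Regarding the discrepancy $(1-\etaa)^2$ versus $(1-\etaa)$: this is not a defect in your argument. The paper's proof asserts $\bra{\Psi_i}\rho_i\ket{\Psi_i}\geq 1-\etaa$, whereas the ball $\ball{\etaa}{\Psi_i}$ as defined in the preliminaries (fidelity $\geq 1-\etaa$) only guarantees $\bra{\Psi_i}\rho_i\ket{\Psi_i}=\F^2\geq(1-\etaa)^2$. If you start from the paper's assertion, your method also produces exactly $(1-\etaa)$: the normalization $\|\sqrt{\rho'}\ket{\Psi_i}\|^2\geq 1-\etaa$ gives $\ketbra{\Psi_i'}\preceq(1-\etaa)^{-1}\rho'$, and property~(a) becomes $|\braket{\Psi_i'}{\Psi_i}|^2\geq 1-\etaa>1-2\etaa$. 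So the two approaches are equivalent in strength; the top-eigenvector route is slightly more direct because the factor $\lambda_1$ appears immediately in the operator-norm bound without needing the sandwich $\sqrt{\rho'}\ketbra{\Psi_i}\sqrt{\rho'}\preceq\rho'$.
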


\begin{proof} 
	For a fixed $i$, let $\rho_i \in \cD(\H)$ be the state that achieves the infimum in the definition of $\dmaxepss{\Psi_i}{\omega}{\etaa}$. It satisfies $\bra{\Psi_i}\rho_i\ket{\Psi_i}\geq 1-\etaa$. This means that the largest eigenvalue of $\rho_i$ is at least $1-\etaa$. Thus, consider the eigen-decomposition $\rho_i=\lambda_1\ketbra{\lambda_1}+\sum_{j>1}\lambda_j\ketbra{\lambda_j}$. We have $\lambda_1\geq 1-\etaa$ or equivalently $\sum_{j>1}\lambda_j\leq \etaa$. Thus, $$1-\etaa\leq\bra{\Psi_i}\rho_i\ket{\Psi_i}=\lambda_1|\braket{\Psi_i}{\lambda_1}|^2+\sum_{j>1}\lambda_j|\braket{\Psi_i}{\lambda_j}|^2 \leq |\braket{\Psi_i}{\lambda_1}|^2+\sum_{j>1}\lambda_j \leq  |\braket{\Psi_i}{\lambda_1}|^2+\etaa.$$
Hence, $|\braket{\Psi_i}{\lambda_1}|^2\geq 1-2\etaa$. Moreover, $$2^{\dmax{\rho_i}{\omega}}= \|\omega^{-\frac{1}{2}}\rho\omega^{-\frac{1}{2}}\|_{\infty}\geq (1-\etaa)\|\omega^{-\frac{1}{2}}\ketbra{\lambda_1}\omega^{-\frac{1}{2}}\|_{\infty}=(1-\etaa)\bra{\lambda_1}\omega^{-1}\ket{\lambda_1},$$ 
where $\omega^{-1}$ is the pseudo-inverse of $\omega$. From the definition of the projector $Q^{-}$, the following inequality holds: $$\bra{\lambda_1}\omega^{-1}\ket{\lambda_1} \geq  k\bra{\lambda_1}Q^{-}\ket{\lambda_1}.$$ Thus we get 
	$$2^{\dmax{\rho_i}{\omega}} \geq k(1-\etaa)\bra{\lambda_1}Q^{-}\ket{\lambda_1}.$$ Inverting and using $|\braket{\Psi_i}{\lambda_1}|^2\geq 1-2\etaa$, we have 
	$$2^{-\dmax{\rho_i}{\omega}} \leq \frac{1}{k(1-\etaa)\bra{\lambda_1}Q^{-}\ket{\lambda_1}} \leq \frac{1}{k(1-\etaa)S^{2\etaa}(\Psi_i||Q^{-})}.$$  
This proves the claim.
	
\end{proof}

\begin{claim}
\label{exactoverlap}
 If $\bra{\Psi_i}Q^{-}\ket{\Psi_i}>\etaa$, then we have $$S^{\etaa}(\Psi_i||Q^{-}) = (\sqrt{(1-\etaa)\bra{\Psi_i}Q^{-}\ket{\Psi_i}}-\sqrt{\bra{\Psi_i}Q^{+}\ket{\Psi_i}\etaa})^2.$$ Else $S^{\etaa}(\Psi_i||Q^{-})=0$.  
\end{claim}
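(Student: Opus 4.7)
The plan is to reduce the optimization defining $S^{\etaa}(\Psi_i\|Q^{-})$ to a two‑dimensional problem and solve it explicitly with a quadratic formula. Set $a\defeq\sqrt{\bra{\Psi_i}Q^{+}\ket{\Psi_i}}$ and $b\defeq\sqrt{\bra{\Psi_i}Q^{-}\ket{\Psi_i}}$, so that $a^2+b^2=1$. Write $\ket{\Psi_i^{+}}\defeq Q^{+}\ket{\Psi_i}=a\ket{\hat\Psi^{+}}$ and $\ket{\Psi_i^{-}}\defeq Q^{-}\ket{\Psi_i}=b\ket{\hat\Psi^{-}}$ for unit vectors $\ket{\hat\Psi^{\pm}}$ supported on the ranges of $Q^{\pm}$ respectively (well defined since $b>\sqrt{\etaa}>0$ and, without loss of generality, $a>0$).

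For any candidate $\ket{\lambda}$, decompose $\ket{\lambda}=\ket{\lambda^{+}}+\ket{\lambda^{-}}$ with $\ket{\lambda^{\pm}}\in\mathrm{Range}(Q^{\pm})$, and let $\alpha\defeq\norm{\ket{\lambda^{+}}}$, $\beta\defeq\norm{\ket{\lambda^{-}}}$, so $\alpha^{2}+\beta^{2}=1$. Observe that $\bra{\lambda}Q^{-}\ket{\lambda}=\beta^{2}$ depends only on $\beta$, while by Cauchy–Schwarz
\[
|\braket{\lambda}{\Psi_i}|\;\le\;|\braket{\lambda^{+}}{\Psi_i^{+}}|+|\braket{\lambda^{-}}{\Psi_i^{-}}|\;\le\;\alpha a+\beta b,
\]
with equality achievable by taking $\ket{\lambda^{\pm}}=\alpha\ket{\hat\Psi^{+}}$ and $\beta\ket{\hat\Psi^{-}}$ (with the appropriate global phase). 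Hence minimizing $\beta^{2}$ subject to $|\braket{\lambda}{\Psi_i}|^{2}\ge 1-\etaa$ reduces to minimizing $\beta^{2}$ subject to $\alpha a+\beta b\ge\sqrt{1-\etaa}$ and $\alpha^{2}+\beta^{2}=1$, with $\alpha,\beta\ge 0$.

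At the optimum the inequality constraint is tight, so $\alpha a=\sqrt{1-\etaa}-\beta b$ with $\alpha=\sqrt{1-\beta^{2}}$. Squaring and using $a^{2}=1-b^{2}$ produces
\[
\beta^{2}-2b\sqrt{1-\etaa}\,\beta+(b^{2}-\etaa)=0,
\]
whose roots are $\beta=b\sqrt{1-\etaa}\pm a\sqrt{\etaa}$. The hypothesis $b^{2}>\etaa$ ensures $b\sqrt{1-\etaa}>a\sqrt{\etaa}$, so both roots are non‑negative; the smaller one, $\beta^{*}=b\sqrt{1-\etaa}-a\sqrt{\etaa}$, is the minimizer (the larger one corresponds to the constraint being tight on the ``wrong side'' after squaring, and one checks it yields a larger $\beta^{2}$). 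Squaring $\beta^{*}$ gives the claimed formula. For the degenerate case $\bra{\Psi_i}Q^{-}\ket{\Psi_i}\le\etaa$ one can simply take $\ket{\lambda}=\ket{\hat\Psi^{+}}$: then $|\braket{\lambda}{\Psi_i}|^{2}=a^{2}\ge 1-\etaa$ while $\bra{\lambda}Q^{-}\ket{\lambda}=0$, so $S^{\etaa}=0$.

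The only delicate step is the reduction to the two‑dimensional subspace spanned by $\ket{\hat\Psi^{+}},\ket{\hat\Psi^{-}}$, which is straightforward via Cauchy–Schwarz and phase alignment; the remaining calculation is an elementary quadratic.
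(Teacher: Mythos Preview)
Your proof is correct and follows essentially the same approach as the paper: reduce the optimization to the two-dimensional span of $Q^{+}\ket{\Psi_i}$ and $Q^{-}\ket{\Psi_i}$, then solve the resulting constrained quadratic. The paper carries out the reduction more laboriously---it parametrizes $\ket{\lambda}=aQ^{-}\ket{\Psi_i}+bQ^{+}\ket{\Psi_i}+c\ket{\theta}$, argues separately that $a,b,c$ may be taken real, fixes $c$ and solves an ellipse--line intersection, then shows the optimum is monotone in $c$ so that $c=0$---whereas your Cauchy--Schwarz step collapses all of this into one line by observing that for fixed norms $\alpha,\beta$ of the $Q^{\pm}$ components the overlap $|\braket{\lambda}{\Psi_i}|$ is maximized exactly when $\ket{\lambda^{\pm}}$ align with $\ket{\hat\Psi^{\pm}}$. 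One small point worth making explicit: the reason the constraint is tight at the optimum is that $f(\beta)=a\sqrt{1-\beta^{2}}+b\beta$ satisfies $f(0)=a<\sqrt{1-\etaa}$ under the hypothesis $b^{2}>\etaa$, so the smallest feasible $\beta$ lies on the boundary.
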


The proof of Claim~\ref{exactoverlap} involves direct but tedious calculations, which is deferred to Appendix~\ref{proof_claims}. 

\begin{lemma}
\label{lem:smalloverlapbound}
Let $Q$ be the projector onto a $\br{d-k}$-dimensional subspace of $\H$ such that $Q\ket{0}=0$. For every $\alpha\in (0,1)$, it holds that 
$$\Pr_i\Br{\bra{\Psi_i}Q\ket{\Psi_i} < \delta\frac{d-k}{d}+\delta\eps - \alpha} \leq \frac{\delta^2}{\alpha^2}\br{3\eps + \frac{3}{d}}.$$
\end{lemma}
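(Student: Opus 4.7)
The proof is a standard second-moment (Chebyshev) argument built on the empirical concentration bounds~\eqref{eq:niceset}. The key reduction is that since $Q\ket{0}=0$ and $\ket{\Psi_i}=\sqrt{1-\delta}\ket{0}+\sqrt{\delta}\ket{x_i}$ with $\ket{x_i}\in V$, all cross terms vanish, giving
$$\bra{\Psi_i}Q\ket{\Psi_i}=\delta\,\bra{x_i}Q\ket{x_i}.$$
Setting $Y_i\defeq\bra{x_i}Q\ket{x_i}$, it suffices to show that $Y_i$, under the uniform distribution over $i\in[m]$, concentrates around $(d-k)/d$, and then rescale by $\delta$.

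For the first moment, $\expec{i}{Y_i}=\Tr\!\bigl(Q\cdot\tfrac1m\sum_iZ^1_i\bigr)$, and since $Q$ projects onto a subspace of $V$, one has $QP=Q$ and $\Tr(QP/d)=(d-k)/d$. The first line of~\eqref{eq:niceset} together with the elementary bound $|\Tr(QM)|\le\|Q\|_\infty\onenorm{M}$ gives $|\expec{i}{Y_i}-(d-k)/d|\le\eps$. For the second moment, observe that $Y_i^2=\Tr\!\bigl((Q\otimes Q)Z^3_i\bigr)$. A direct calculation using $QP=Q$ yields $\Tr((Q\otimes Q)(P\otimes P))=(d-k)^2$, while the swap identity $\Tr((A\otimes B)F)=\Tr(AB)$, valid because $Q\otimes Q$ is supported on $V\otimes V$ where $F$ acts, gives $\Tr((Q\otimes Q)F)=\Tr(Q^2)=d-k$. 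Hence the third line of~\eqref{eq:niceset} yields $\bigl|\expec{i}{Y_i^2}-(d-k)(d-k+1)/(d(d+1))\bigr|\le\eps$.

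Combining these via the algebraic identity
$$\tfrac{(d-k)(d-k+1)}{d(d+1)}-\Bigl(\tfrac{d-k}{d}\Bigr)^2=\tfrac{k(d-k)}{d^2(d+1)}\le\tfrac{1}{d},$$
the two moment estimates produce $\Var_i(Y_i)\le 3\eps+3/d$. The event in the statement, after dividing by $\delta$, becomes $\{Y_i<(d-k)/d+\eps-\alpha/\delta\}$; since $\expec{i}{Y_i}\le(d-k)/d+\eps$, this event is contained in $\{\expec{i}{Y_i}-Y_i>\alpha/\delta\}$, so one-sided Chebyshev delivers the claimed $\delta^2(3\eps+3/d)/\alpha^2$ bound (the additive $+\delta\eps$ in the statement is precisely the slack needed to absorb the gap between the upper bound on $\expec{i}{Y_i}$ and its actual value).

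The main subtlety is the variance calculation: one must carefully exploit both the inclusion $Q\le P$ and the $V\otimes V$-support of $F$ to secure the cancellation producing $\Var_i(Y_i)=O(1/d+\eps)$ instead of an $O(1)$ quantity, which is what ultimately makes the tail bound small; once this variance estimate is in hand, everything else is a routine Chebyshev argument.
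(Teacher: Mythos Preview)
Your approach is essentially identical to the paper's: both proofs reduce to $\bra{\Psi_i}Q\ket{\Psi_i}=\delta\,\bra{x_i}Q\ket{x_i}$, estimate the first two empirical moments of $\bra{x_i}Q\ket{x_i}$ via the concentration bounds~\eqref{eq:niceset} (the paper packages these as Claims~\ref{averageofprojector} and~\ref{squareofprojector}), bound the variance by $3\eps+3/d$, and finish with one-sided Chebyshev.

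There is one directional slip in your final containment step. You argue that since $\expec{i}{Y_i}\le(d-k)/d+\eps$, the event $\{Y_i<(d-k)/d+\eps-\alpha/\delta\}$ is contained in $\{\expec{i}{Y_i}-Y_i>\alpha/\delta\}$. But the upper bound on the mean gives the \emph{reverse} containment: for the desired inclusion you would need $\expec{i}{Y_i}\ge(d-k)/d+\eps$, which is not what the first-moment estimate provides. Using the lower bound $\expec{i}{Y_i}\ge(d-k)/d-\eps$ instead yields only $\{Y_i<(d-k)/d+\eps-\alpha/\delta\}\subseteq\{\expec{i}{Y_i}-Y_i>\alpha/\delta-2\eps\}$, which produces $(\alpha/\delta-2\eps)^{-2}$ rather than $(\alpha/\delta)^{-2}$ in the Chebyshev denominator. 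In fact the $+\delta\eps$ in the lemma's threshold appears to be a sign typo for $-\delta\eps$; with that correction the lower bound on the mean gives the clean containment you want, and the application in Lemma~\ref{lem:uniformdmax} (where $\eps=1/d$ and one immediately passes to the looser threshold $\delta/2$) is unaffected either way.
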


The proof of this lemma uses the following two claims.

\begin{claim}
\label{averageofprojector}
It holds that
$$ \frac{\delta (d-k)}{d} + \delta\eps \geq  \expec{i}{\bra{\Psi_i}Q\ket{\Psi_i}}\geq \frac{\delta (d-k)}{d} - \delta\eps.$$
\end{claim}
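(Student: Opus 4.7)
The plan is to reduce the expectation to a trace inner product with the empirical average $\frac{1}{m}\sum_i Z^1_i$ and then use the first bound in Eq.~\eqref{eq:niceset}.

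First I would expand
\begin{align*}
\bra{\Psi_i}Q\ket{\Psi_i} &= (1-\delta)\bra{0}Q\ket{0} + \delta\bra{x_i}Q\ket{x_i} + \sqrt{\delta(1-\delta)}\bigl(\bra{0}Q\ket{x_i}+\bra{x_i}Q\ket{0}\bigr).
\end{align*}
Since $Q\ket{0}=0$ by hypothesis, the first and last terms vanish, leaving $\bra{\Psi_i}Q\ket{\Psi_i} = \delta\,\Tr(Q\,\ketbra{x_i}) = \delta\,\Tr(Q\,Z^1_i)$, where $Z^1_i$ was defined in Eq.~\eqref{eqn:Zistatetransfer}. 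Taking the expectation over $i$ with respect to the uniform distribution $p(i)=1/m$ gives
\begin{align*}
\expec{i}{\bra{\Psi_i}Q\ket{\Psi_i}} \;=\; \delta\,\Tr\!\left(Q\cdot\frac{1}{m}\sum_{i=1}^m Z^1_i\right).
\end{align*}

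Next I would split this off the ``ideal'' value by writing $\frac{1}{m}\sum_i Z^1_i = \frac{P}{d} + \Delta$, where $\Delta \defeq \frac{1}{m}\sum_i Z^1_i - \frac{P}{d}$ satisfies $\onenorm{\Delta}\leq \eps$ by the first inequality in Eq.~\eqref{eq:niceset}. Since $W' \subseteq V$ (so $QP = PQ = Q$) and $Q$ has rank $d-k$, we get $\Tr(Q\cdot P/d) = (d-k)/d$. For the error term, Hölder's inequality (in the form $|\Tr(AB)|\leq \norm{A}_\infty \onenorm{B}$) with $\norm{Q}_\infty = 1$ yields $|\Tr(Q\Delta)|\leq \eps$. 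Combining these two observations produces
\begin{align*}
\delta\!\left(\frac{d-k}{d} - \eps\right) \;\leq\; \expec{i}{\bra{\Psi_i}Q\ket{\Psi_i}} \;\leq\; \delta\!\left(\frac{d-k}{d} + \eps\right),
\end{align*}
which is exactly the claimed double inequality.

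There is no real obstacle here: the argument is essentially a one-line computation once one notices that the cross terms vanish because $Q\ket{0}=0$. The only ``content'' is invoking the concentration bound from Eq.~\eqref{eq:niceset} and Hölder's inequality to control the deviation, both of which are standard.
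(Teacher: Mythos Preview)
Your proof is correct and follows essentially the same route as the paper's: both reduce to $\delta\,\mathbb{E}_i\bra{x_i}Q\ket{x_i}$ using $Q\ket{0}=0$, then bound the deviation of the empirical average from $P/d$ via the first inequality in Eq.~\eqref{eq:niceset}. Your write-up is simply a bit more explicit (spelling out H\"older's inequality and $QP=Q$), but the argument is the same.
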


\begin{proof}
Since $Q$ is orthogonal to $\ket{0}$, we have that $$\expec{i}{\bra{\Psi_i}Q\ket{\Psi_i}}  = \delta \expec{i}{\bra{x_i}Q\ket{x_i}}.$$ 
Using Eq.\eqref{eq:niceset}, we find that 
$$\frac{\mathrm{rk}~Q}{d}+\eps\geq\Tr\br{Q\frac{P}{d}} + \eps\geq \mathbb{E}_{i} \bra{x_i}Q\ket{x_i} \geq \Tr\br{Q\frac{P}{d}} - \eps = \frac{\mathrm{rk}~Q}{d}-\eps.$$
The claim follows from the fact that $\mathrm{rk}~Q=d-k$.

\end{proof}

\begin{claim}
\label{squareofprojector}
It holds that 
$$ \delta^2\frac{(d-k)(d-k+1)}{d(d+1)}+\delta^2\eps \geq\expec{i}{\br{\bra{\Psi_i}Q\ket{\Psi_i}}^2}  \geq \delta^2\frac{(d-k)(d-k+1)}{d(d+1)}-\delta^2\eps.$$
\end{claim}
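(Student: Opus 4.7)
The plan is to proceed in direct analogy with the proof of Claim \ref{averageofprojector}, but using the third concentration bound in Eq.~\eqref{eq:niceset} (the one involving $Z_i^3 = \ketbra{x_i}\otimes\ketbra{x_i}$) instead of the first. The key identity is that since $Q\ket{0}=0$, we have $Q\ket{\Psi_i} = \sqrt{\delta}\,Q\ket{x_i}$, hence
\[
\br{\bra{\Psi_i}Q\ket{\Psi_i}}^2 \;=\; \delta^2 \br{\bra{x_i}Q\ket{x_i}}^2 \;=\; \delta^2 \,\Tr\!\br{(Q\otimes Q)\,\ketbra{x_i}^{\otimes 2}} \;=\; \delta^2 \,\Tr\!\br{(Q\otimes Q)\, Z_i^3}.
\]
Taking expectation over $i$ and using the third inequality of Eq.~\eqref{eq:niceset} together with H\"older's inequality $\abs{\Tr(MN)}\leq \norm{M}_\infty \onenorm{N}$ with $\norm{Q\otimes Q}_\infty = 1$, I get
\[
\Abs{\expec{i}{\br{\bra{\Psi_i}Q\ket{\Psi_i}}^2} \;-\; \delta^2\,\Tr\!\br{(Q\otimes Q)\cdot \frac{P\otimes P + F}{d(d+1)}}} \;\leq\; \delta^2\eps.
\]

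The remaining step is to evaluate $\Tr((Q\otimes Q)(P\otimes P + F))$ explicitly. Since $Q$ projects onto a subspace of $V$ (as $Q\ket{0}=0$), we have $Q\preceq P$ and in particular $QP=PQ=Q$, so
\[
\Tr\br{(Q\otimes Q)(P\otimes P)} \;=\; \Tr(QP)^2 \;=\; (\mathrm{rk}\,Q)^2 \;=\; (d-k)^2.
\]
For the swap term, I use the fact that $Q\otimes Q$ is supported on $V\otimes V$ (the subspace on which $F$ acts as the swap), so the standard identity $\Tr((A\otimes B)\,\mathrm{Swap}) = \Tr(AB)$ gives
\[
\Tr\br{(Q\otimes Q)\,F} \;=\; \Tr(Q^2) \;=\; \Tr(Q) \;=\; d-k.
\]
Adding the two contributions yields $\Tr((Q\otimes Q)(P\otimes P + F)) = (d-k)(d-k+1)$, and substituting back gives
\[
\Abs{\expec{i}{\br{\bra{\Psi_i}Q\ket{\Psi_i}}^2} \;-\; \delta^2\frac{(d-k)(d-k+1)}{d(d+1)}} \;\leq\; \delta^2\eps,
\]
which is exactly the two-sided inequality claimed.

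There is no real obstacle here; the only point requiring a moment of care is the swap-trace identity, since $F$ was defined to act only on $V\otimes V$ and to be zero elsewhere. This subtlety is harmless because $Q\preceq P$ forces $Q\otimes Q$ to be supported entirely on $V\otimes V$, so the partial definition of $F$ agrees with the usual swap on the support of $Q\otimes Q$ and the standard trace identity applies.
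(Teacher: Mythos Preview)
Your proof is correct and follows essentially the same approach as the paper: reduce via $Q\ket{0}=0$ to $\delta^2\,\Tr\!\br{(Q\otimes Q)\,\ketbra{x_i}^{\otimes 2}}$, apply the third inequality of Eq.~\eqref{eq:niceset} (together with $\norm{Q\otimes Q}_\infty=1$), and evaluate $\Tr\!\br{(Q\otimes Q)(P\otimes P+F)}=(d-k)^2+(d-k)$. Your explicit justification of the swap-trace step via $Q\preceq P$ is a bit more detailed than the paper's, but the argument is the same.
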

\begin{proof}
Since $Q$ is orthogonal to $\ket{0}$, we have that $$\expec{i}{\br{\bra{\Psi_i}Q\ket{\Psi_i}}^2} = \delta^2 \expec{i}{\br{\bra{x_i}Q\ket{x_i}}^2}  = \delta^2 \expec{i}{ \Tr\br{\br{Q\otimes Q}\br{\ketbra{x_i}\otimes \ketbra{x_i}}}}.$$ Using Eq~\eqref{eq:niceset}, we find that 

\begin{eqnarray*}
	&&\frac{(\mathrm{rk}~Q)^2+\mathrm{rk}~Q}{d(d+1)}+\eps\\
	&&=\Tr\br{\br{Q\otimes Q}\frac{P\otimes P + F}{d\br{d+1}}}+ \eps\\
	&&\geq\expec{i}{\Tr\br{\br{Q\otimes Q}\br{\ketbra{x_i}\otimes \ketbra{x_i}}}}\\
	&&\geq \Tr\br{\br{Q\otimes Q}\frac{P\otimes P + F}{d(d+1)}} - \eps \\
	&&= \frac{(\mathrm{rk}~Q)^2+\mathrm{rk}~Q}{d(d+1)}-\eps,
\end{eqnarray*}
where the both inequalities are from Eq.~\eqref{eqn:xx}. Using the value of $\mathrm{rk}~Q$, the claim follows.

\end{proof}

Using these claims, we now proceed to the proof of Lemma \ref{lem:smalloverlapbound}.

\begin{proof}[Proof of Lemma \ref{lem:smalloverlapbound}]
The variance of $\bra{\Psi_i}Q\ket{\Psi_i}$ can be upper bounded using Claims \ref{averageofprojector} and \ref{squareofprojector} as
\begin{eqnarray*}
	&&\expec{i}{\br{\bra{\Psi_i}Q\ket{\Psi_i}}^2}  - \expec{i}{\bra{\Psi_i}Q\ket{\Psi_i}}^2 \\
	&\leq& \delta^2\frac{(d-k)(d-k+1)}{d(d+1)}+\delta^2\eps - \br{\frac{\delta (d-k)}{d} - \delta\eps}^2\\
	&\leq&\delta^2\br{\eps- \eps^2 + 2\eps + \frac{(d-k)(d-k+1)}{d(d+1)} - \frac{(d-k)^2}{d^2}}\\
	&\leq&\delta^2\br{3\eps + \frac{3}{d}}.
\end{eqnarray*}
Now, using Chebyshev's inequality, we find that 
$$\Pr_i\Br{\bra{\Psi_i}Q\ket{\Psi_i} \leq \expec{i}{\br{\bra{\Psi_i}Q\ket{\Psi_i}} - \alpha} } \leq \frac{\delta^2}{\alpha^2}\br{3\eps + \frac{3}{d}}.$$
Using Claim \ref{averageofprojector}, the lemma follows.

\end{proof}

\subsection{Proof of Theorem \ref{thm:maingeneral}}
\label{subsec:maingen}

\begin{proof} 
	We use the construction above with $\eps = \frac{1}{d}$ as chosen in Lemma \ref{lem:uniformdmax}. Thus the set $\N_{\frac{1}{d}}$ has $8d^7$ elements. We prove each item as follows.
\begin{itemize}
\item From Lemma \ref{mutinfbound}, the von Neumann entropy of the average state is upper bounded by $$(\delta+\frac{1}{d})\log(d) + H(\delta) + 1 < \delta\log(d) + H(\delta)+2.$$
	
\item For the `one-way' part of the theorem, we set $\gamma = \sqrt{\eta}$ in Lemma \ref{lem:simplerlow} (with $r=1$) and apply Lemma \ref{lem:uniformdmax} with $\frac{\eta}{\gamma}\leftarrow \etaa$. We note that Lemma \ref{lem:uniformdmax} applies since $\frac{\eta}{\gamma} = \sqrt{\eta} \leq \frac{\delta}{8}$ by the choice of $\eta$. The resulting lower bound takes the form $$(1-\sqrt{\eta})^2 \log(\frac{d\delta}{64})-1 > (1-\sqrt{\eta})^2 \log(\frac{d\delta}{128}).$$
	
\item  The `round-dependent' part of the theorem follows in a similar manner, where we apply Lemma \ref{lem:simplerlow} with $r>1$. The lower bound we obtain takes the form $$(1-\sqrt{\eta})^2\frac{\log(\frac{d\delta}{128})}{2(\log r + 4)}> \frac{1}{20}\cdot\frac{\log(\frac{d\delta}{128})}{(\log r)}.$$
\begin{remark}
One can also use Equation \ref{one_way_r_way} to obtain a lower bound
$$\Omega\br{\log d\delta - r\log\log d\delta},$$ for small enough values of $r$. But it does not lead to a better lower bound in the interactive (round independent) part of the theorem.
\end{remark}
\item For the `round independent' part of the theorem, we proceed as follows. 
	Fix a communication protocol $\mathcal{P}$ with $r$ rounds and average error $\eta$. Define an odd number $\ell\defeq 2\lceil\frac{\log(d)}{\eta^2}\rceil + 1 > \frac{\log(d)}{\eta^2}$, which is assumed to be smaller than $r$. Let $\B$ denote the set of all instances $(x,i_1,i_2,\ldots, i_r)$ (input $x$ and messages exchanged) in which the protocol terminates before the round $\ell$. From Remark \ref{abortconvention}, such instances are of the form $(x,i_1,i_2,\ldots,1,1,\ldots, 1)$. Let $\G$ be the remaining set of instances. It is easy to infer that if $(x,i_r,i_{r-1},\ldots, i_1)\in \G$, then the number of bits exchanged in this instance is at least $\ell$ (as at least one bit must be exchanged in each round till round $\ell$). 
	
	Now we consider two cases. The first case is that $\sum_{(x,i_1,i_2,\ldots, i_r)\in \G}p(x)p^x_{i_r,i_{r-1},\ldots,i_1}> \eta^2$. Then the expected communication cost is lower bounded by $\eta^2\frac{\log(d)}{\eta^2} = \log(d)$. 
	
	The second case is that $\sum_{(i_1,i_2,\ldots, i_r)\in \G}p(x)p^x_{i_r,i_{r-1},\ldots,i_1}\leq \eta^2$. Let $\mathcal{P}'$ be the protocol that simulates $\mathcal{P}$ up to $\ell$ rounds.
	Namely, if round $\ell$ is reached then parties abort and Bob considers $\ket{0}_C$ as his output. If the protocol $\mathcal{P}'$ terminates before or in round $\ell$, then Bob outputs the same as in $\mathcal{P}$.
	
\suppress{If the protocol $\mathcal{P}'$ terminates after exchanging messages $(i_1,i_2\ldots i_k)$, for $k<\ell$, then following Remark \ref{abortconvention}, we define $p^x_{i_r,i_{r-1},\ldots,i_1}\defeq p^x_{i_k,i_{k-1},\ldots,i_1}$ iff $(i_r,i_{r-1},\ldots,i_1) = (1,1,\ldots,1,i_k,i_{k-1},\ldots,i_1)$ and 0 otherwise. We further define $\phi^{x,i_r,i_{r-1},\ldots,i_1}_{AB}\defeq \phi^{x,i_k,i_{k-1},\ldots,i_1}_{AB}$ iff $(i_r,i_{r-1},\ldots,i_1) = (1,1,\ldots,1,i_k,i_{k-1},\ldots,i_1)$ and $\ket{0}_{AB}$ otherwise. }
	Let $\tilde{\Phi}^x_C$ be the  state output from Bob, conditioned on input $x$. We have that 
	$$\tilde{\Phi}^x_C = \sum_{(x,i_1,i_2,\ldots, i_r)\in \B}p^x_{i_r,i_{r-1},\ldots,i_1}\tau^{x,i_r,i_{r-1},\ldots,i_1}_C + \beta\ketbra{0}_C$$ with $\beta\leq \eta^2$ by assumption. On the other hand, the final state $\Phi^x_C$ of the original protocol is 
	$$\Phi^x_C = \sum_{(x,i_1,i_2,\ldots, i_r)\in \B}p^x_{i_r,i_{r-1},\ldots,i_1}\tau^{x,i_r,i_{r-1},\ldots,i_1}_C + \sum_{(x,i_1,i_2,\ldots, i_r)\in \B}p^x_{i_r,i_{r-1},\ldots,i_1}\tau^{x,i_r,i_{r-1},\ldots,i_1}_C.$$
From the joint concavity of fidelity (Fact \ref{fact:fidelityconcave}), we obtain $\F(\tilde{\Phi}^x_C,\Psi^x_C)\geq 1-\eta^2$. This implies 
	$$\sum_x p(x)\F^2(\Psi^x,\tilde{\Phi}^x_C) \geq \sum_x p(x)\F^2(\Psi^x,\Phi^x_C)-\sum_xp(x)\|\tilde{\Phi}^x_C-\Psi^x_C\|_1\geq 1-\eta^2-2\eta\geq 1-3\eta.$$
Thus, $\mathcal{P}'$ is a protocol with $\ell$ rounds and average error $\sqrt{3\eta}<(\frac{\delta}{8})^2$. The expected communication cost of $\mathcal{P}'$ is lower bounded by (using the `round-dependent' part established above): 
	$$\frac{1}{20}\cdot\frac{\log(\frac{d\delta}{128})}{(\log \ell)}\geq \frac{1}{30}\cdot\frac{\log(\frac{d\delta}{128})}{(\log\log(d)-2\log\eta)}.$$ 
This is also the lower bound on the expected communication cost of $\mathcal{P}$, which proves the item.
\end{itemize}
\end{proof}

\section{Expected communication cost of quantum state redistribution}

The task of quantum state redistribution is formally defined as follows.

\begin{task}[\textbf{Quantum state redistribution}]\label{def:qstateredist}
Fix the registers $RBCA$ associated to a Hilbert space $\H_{RBCA}$ and an $\eps\in (0,1)$. Let a pure quantum state $\ketbra{\Psi}_{RBCA}\in \cD(\H_{RBCA})$ be shared among Alice (A,C), Bob (B) and Reference (R). Alice needs to transfer the register $C$ to Bob such that the final state between Alice (A), Bob (B,C) and Reference (R) is $\Psi'_{RBCA}\in \cD(\H_{RBCA})$. It is required that $\P(\Psi'_{RBCA},\Psi_{RBCA})\leq \eps^2$. 
\end{task}

The parameter $\eps$ will be referred to as the error of the protocol. We describe a  general structure of an interactive protocol for quantum state redistribution (Task \ref{def:qstateredist}) and its \textit{expected communication cost}. We assume that Alice and Bob only exchange classical messages via quantum teleportation. An $r$-round interactive protocol $\mathcal{P}$ (where $r$ is an odd number) with error $\eps$ and expected communication cost $C$ is as follows. It is also graphically depicted in Figure \ref{fig:interactivecoherent}. 

\bigskip

\begin{mdframed}
\bigskip
\label{stateredistdesc}
Let quantum state $\ket{\Psi}_{RBCA} \in \cD(\H_{RBCA})$ be shared among Alice $(A,C)$, Bob $(B)$ and Referee $(R)$. Alice and Bob possess the shared entanglement $\ketbra{\theta}_{E_AE_B}\in \cD(\H_{E_AE_B})$ in registers $E_A$ (with Alice) and $E_B$ (with Bob).

\begin{itemize}
\item Alice performs a measurement $\M=\set{M^1_{ACE_A},M^2_{ACE_A}\ldots },$ where $M^i_{ACE_A} \in \cL(\H_{ACE_A})$ and $\sum_i\br{M^i_{ACE_A}}^{\dagger}M^i_{ACE_A} = \id_{ACE_A}$. The probability of outcome $i_1$ is defined as  $p_{i_1}\defeq\Tr\br{M^{i_1}_{ACE_A}\Psi_{CA}\otimes\theta_{E_A}\br{M^{i_1}_{ACE_A}}^{\dagger}}$. Let $\phi^{i_1}_{RBACE_AE_B}$ be the global normalized quantum state, conditioned on this outcome. She sends the message $i_1$ to Bob.
 
\item Upon receiving the message $i_1$ from Alice, Bob performs a measurement  $$\M^{i_1}=\set{M^{1,i_1}_{BE_B},M^{2,i_1}_{BE_B},\ldots},$$ 
where $M^{i,i_1}_{BE_B} \in \cL(\H_{BE_B})$ and $\sum_i\br{M^i_{BE_B}}^{\dagger}M^i_{BE_B} = \id_{BE_B}$. The probability of outcome $i_2$ is $p_{i_2|i_1}\defeq \Tr\br{M^{i_2,i_1}_{BE_B}\phi^{i_1}_{BE_B}\br{M^i_{BE_B}}^{\dagger}}$. Let $\phi^{i_2,i_1}_{RBACE_AE_B}$ be the global normalized quantum state conditioned on this outcome $i_2$ and previous outcome $i_1$. Bob sends the message $i_2$ to Alice. 

\item Consider any odd round $1<k\leq r$. Let the measurement outcomes in all the previous rounds be $i_1,i_2,\ldots, i_{k-1}$ and the corresponding global normalized state be $\phi^{i_{k-1},i_{k-2},\ldots, i_1}_{RBACE_AE_B}$. Alice performs a measurement $$\M^{i_{k-1},i_{k-2},\ldots, i_2,i_1}=\set{M^{1,i_{k-1},i_{k-2},\ldots, i_2,i_1}_{ACE_A},M^{2,i_{k-1},i_{k-2},\ldots, i_2,i_1}_{ACE_A},\ldots},$$ where $M^{i,i_{k-1},i_{k-2},\ldots, i_2,i_1}_{ACE_A}\in \cL(\H_{ACE_A})$ and $\sum_i \br{M^{i,i_{k-1},i_{k-2},\ldots, i_2,i_1}_{ACE_A}}^{\dagger}M^{i,i_{k-1},i_{k-2},\ldots, i_2,i_1}_{ACE_A} = \id_{ACE_A}$. She obtains an outcome $i_k$ with probability $$p_{i_k|i_{k-1},i_{k-2},\ldots, i_2,i_1}\defeq\Tr\br{M^{i_k,i_{k-1},i_{k-2},\ldots, i_2,i_1}_{ACE_A}\phi^{i_{k-1},i_{k-2},\ldots, i_1}_{ACE_A}}.$$ Let the global normalized state conditioning on the outcome $i_k$ be $\phi^{i_k,i_{k-1},i_{k-2},\ldots, i_1}_{RBACE_BE_A}$. Alice sends the outcome $i_k$ to Bob. 

\item Consider an even round $2<k\leq r$. Let the measurement outcomes in previous rounds be $i_1,i_2\ldots i_{k-1}$ and the corresponding global normalized state be $\phi^{i_{k-1},i_{k-2},\ldots, i_1}_{RBACE_AE_B}$. Bob performs a measurement $$\M^{i_{k-1},i_{k-2},\ldots, i_2,i_1}=\set{M^{1,i_{k-1},i_{k-2},\ldots, i_2,i_1}_{BE_B},M^{2,i_{k-1},i_{k-2},\ldots, i_2,i_1}_{BE_B},\ldots},$$ where $M^{i,i_{k-1},i_{k-2},\ldots, i_2,i_1}_{BE_B}\in \cL(\H_{BE_B})$ and $\sum_i \br{M^{i,i_{k-1},i_{k-2},\ldots, i_2,i_1}_{BE_B}}^{\dagger}M^{i,i_{k-1},i_{k-2},\ldots, i_2,i_1}_{BE_B} = \id_{BE_B}$. He obtains an outcome $i_k$ with probability $$p_{i_k|i_{k-1},i_{k-2},\ldots, i_2,i_1}\defeq\Tr\br{M^{i_k,i_{k-1},i_{k-2},\ldots, i_2,i_1}_{BE_B}\phi^{i_{k-1},i_{k-2},\ldots, i_1}_{BE_B}}.$$ Let the global normalized state conditioning on the outcome $i_k$ be $\phi^{i_k,i_{k-1},i_{k-2},\ldots, i_1}_{RBACE_BE_A}$. Bob sends the outcome $i_k$ to Alice. 

\item After receiving the message $i_r$ from Alice at the end of round $r$, Bob applies a unitary $U^b_{i_r,i_{r-1},\ldots,i_1}:\H_{BE_B}\rightarrow \H_{BC_0T_B}$ such that $E_B\equiv C_0T_B$ and $C_0\equiv C$. Alice applies a unitary $U^a_{i_r,i_{r-1},\ldots,i_1}:\H_{ACE_A}\rightarrow \H_{ACE_A}$. Let $U_{i_r,i_{r-1},\ldots,i_1}\defeq U^a_{i_r,i_{r-1},\ldots,i_1}\otimes U^b_{i_r,i_{r-1},\ldots,i_1}$. Define $$\ket{\tau^{i_r,i_{r-1},\ldots,i_1}}_{RBACC_0T_BE_A}\defeq U_{i_r,i_{r-1},\ldots,i_1}\ket{\phi^{i_r,i_{r-1},\ldots,i_1}}_{RBACE_BE_A}.$$

\item For every $k\leq r$, define $$p_{i_1,i_2,\ldots,i_k}\defeq p_{i_1}\cdot p_{i_2|i_1}\cdot p_{i_3|i_2,i_1}\cdots p_{i_k|i_{k-1},i_{k-2},\ldots, i_1}.$$ The joint state in registers $RBC_0A$, after Alice and Bob's final unitaries, is $\Psi'_{RBC_0A}\defeq\sum_{i_r,i_{r-1},\ldots,i_1}p_{i_1,i_2,\ldots,i_r}\tau^{i_r,i_{r-1},\ldots,i_1}_{RBC_0A}$. It satisfies  $\P(\Psi'_{RBC_0A},\Psi_{RBC_0A})\leq \eps$ due to the correctness of the protocol. 
\end{itemize}
\vspace{0.1in}
\end{mdframed}
\centerline{Protocol $\mathcal{P}_3$}
\bigskip

The following fact is easily shown.
\begin{fact}
\label{expcost}
The expected communication cost of $\mathcal{P}_3$ is lower bounded by $$\sum_{i_1,i_2,\ldots,i_r}p_{i_1,i_2,\ldots,i_r}\log(i_1\cdot i_2\cdots i_r)$$
\end{fact}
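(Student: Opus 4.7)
The plan is to prove the bound round-by-round, using that the number of bits needed to transmit a positive integer outcome $i_k$ is at least $\log i_k$, and then collapsing the per-round sums via marginalization. This essentially parallels the derivation of Equation~\eqref{eq:interactiveexpec} in Definition~\ref{def:expectedcc} for the interactive classical-quantum state splitting protocol.

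First, I would argue the single-round bound. In round $k$, the transmitting party sends the index $i_k \in \{1,2,3,\ldots\}$ of their measurement outcome. Any unambiguous binary encoding of this index uses at least $\lceil \log i_k \rceil \geq \log i_k$ bits (this is the same convention used in Definition~\ref{def:expectedcc}; it follows, e.g., from the observation that there are at least $i_k$ possible outcome labels, so the $i_k$-th message requires at least $\lceil\log i_k\rceil$ bits in any fixed encoding, and this is the convention adopted for the expected communication cost of $\mathcal{P}_3$). Conditioning on the transcript $(i_1,\ldots,i_{k-1})$ of previous rounds, the expected number of bits communicated in round $k$ is therefore at least
\begin{equation*}
\sum_{i_k} p_{i_k \mid i_{k-1},\ldots,i_1} \log i_k.
\end{equation*}

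Next, I would average over the probabilities of the earlier rounds. Multiplying by $p_{i_1,\ldots,i_{k-1}}$ and summing over $(i_1,\ldots,i_{k-1})$, and using the definition $p_{i_1,\ldots,i_k} = p_{i_1,\ldots,i_{k-1}} \cdot p_{i_k \mid i_{k-1},\ldots,i_1}$, the expected communication cost of round $k$ is lower bounded by $\sum_{i_1,\ldots,i_k} p_{i_1,\ldots,i_k} \log i_k$. Summing over all rounds $k = 1,\ldots,r$, the total expected communication cost of $\mathcal{P}_3$ is lower bounded by
\begin{equation*}
\sum_{k=1}^{r} \sum_{i_1,\ldots,i_k} p_{i_1,\ldots,i_k} \log i_k.
\end{equation*}

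Finally, I would use the marginalization identity $p_{i_1,\ldots,i_k} = \sum_{i_{k+1},\ldots,i_r} p_{i_1,\ldots,i_r}$ to rewrite each summand as a sum over the full tuple $(i_1,\ldots,i_r)$, and interchange the two sums to get
\begin{equation*}
\sum_{i_1,\ldots,i_r} p_{i_1,\ldots,i_r} \sum_{k=1}^{r} \log i_k \;=\; \sum_{i_1,\ldots,i_r} p_{i_1,\ldots,i_r} \log\!\left(i_1 \cdot i_2 \cdots i_r\right),
\end{equation*}
which is the claimed bound. The argument involves no real obstacle beyond stating the per-round encoding bound and applying the telescoping marginalization; the only subtle point is the implicit convention, already fixed in Definition~\ref{def:expectedcc}, that a positive integer outcome $i_k$ costs at least $\log i_k$ bits to transmit.
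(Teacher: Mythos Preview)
Your proposal is correct and is exactly the natural argument: the paper itself omits the proof (declaring the fact ``easily shown''), but your derivation simply reproduces the computation already carried out in Equation~\eqref{eq:interactiveexpec} of Definition~\ref{def:expectedcc}, dropping the average over $x$ since $\mathcal{P}_3$ has no classical input. There is nothing to add.
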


Our main result of this section is the following theorem.
\begin{theorem}
	\label{thm:coherentmainagain}
	Fix a $p\in (0,1)$ and an $\eps \in [0,(\frac{1}{70})^{\frac{4}{1-p}}]$. There exists a pure quantum state $\Psi_{RBCA}\in \cD(\H_{RBCA})$ (that depends on $\eps$) such that any interactive entanglement-assisted communication protocol for its quantum state redistribution with error $\eps$ requires expected communication cost at least $\condmutinf{R}{C}{B}_{\Psi}\cdot(\frac{1}{\eps})^{p}$.
\end{theorem}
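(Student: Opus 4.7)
The strategy, as foreshadowed in the introduction, is to combine three ingredients: a Markov-style truncation of the protocol (converting an expected-cost bound into a worst-case bound), a ``rescaling'' step on the Reference register $R$ (amplifying the conditional mutual information while the protocol continues to succeed because it acts trivially on $R$), and the interactive worst-case lower bound for quantum state redistribution of Berta--Christandl--Touchette \cite{Berta14}. The hard state is built so that $\Psi_R$ has a dominant eigenvector $\ket{0}_R$ of weight $1-\delta$ and a ``rare'' orthogonal component of weight $\delta$ which carries all of the $C$--$R$ correlations. Concretely one may take $\ket{\Psi}_{RBCA}$ of the form $\sqrt{1-\delta}\,\ket{0}_R\ket{\alpha}_{BCA} + \sqrt{\delta}\,\ket{\beta}_{RBCA}$, with $\ket{\beta}$ orthogonal in $R$ to $\ket{0}$, $\ket{\alpha}_{BCA}$ a product state across $B$ and $CA$ (contributing nothing to $\condmutinf{C}{R}{B}$), and $\ket{\beta}_{RBCA}$ arranged so that after projecting $\ket{0}_R$ out and renormalizing, one obtains a state $\Phi_{RBCA}$ with $\condmutinf{C}{R}{B}_\Phi = \Omega(\log|R|)$. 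By Fact \ref{fact:fannes} and Fact \ref{informationbound}, $\condmutinf{C}{R}{B}_\Psi = O(\delta \log|R| + \mathrm{H}(\delta))$, so $\Phi$ amplifies the conditional mutual information by a factor $\asymp 1/\delta$.

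Fix an interactive protocol $\mathcal{P}$ for $\Psi$ with expected cost $C$ and error $\eps$. For a parameter $\beta$ to be optimized, Markov's inequality says $\mathcal{P}$ exceeds $C/\beta$ bits on at most a $\beta$-fraction of transcripts. Let $\widetilde{\mathcal{P}}$ be the truncation that aborts once $C/\beta$ bits have been exchanged and outputs a fixed state on the truncated branch. Applying joint concavity of fidelity (Fact \ref{fact:fidelityconcave}) to the completed versus truncated branches and then Fact \ref{fact:trianglepurified}, the error of $\widetilde{\mathcal{P}}$ on $\Psi$ is at most $\eps + O(\sqrt{\beta})$ and its worst-case cost is $\le C/\beta$. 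Now exploit the coherent setting: the quantum operation $\Lambda$ implemented by $\widetilde{\mathcal{P}}$ acts only on Alice and Bob's registers (plus entanglement) and commutes with any operation on $R$, in particular with the projector $Q_R := \id_R - \ketbra{0}_R$. Hence $\Lambda\br{Q_R \Psi Q_R} = Q_R \Lambda(\Psi) Q_R$. By monotonicity of purified distance under the CP trace-nonincreasing map $\rho \mapsto Q_R \rho Q_R$ (Fact \ref{fact:monotonequantumoperation}) combined with Fact \ref{scalarpurified} to undo the renormalization by $\delta = \Tr(Q_R \Psi)$, the protocol $\widetilde{\mathcal{P}}$ achieves state redistribution of the normalized $\Phi := Q_R \Psi Q_R / \delta$ with error at most $O\!\br{(\eps + \sqrt{\beta})/\sqrt{\delta}}$, still using worst-case communication at most $C/\beta$.

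To finish, apply the interactive worst-case lower bound for entanglement-assisted quantum state redistribution from \cite{Berta14}, which lower-bounds the communication of any $\eps'$-error protocol for $\Phi$ by roughly $\condmutinf{C}{R}{B}_\Phi$ (up to smoothing and an $O(\log(1/\eps'))$ additive slack). This yields
\begin{equation*}
\frac{C}{\beta} \;\gtrsim\; \condmutinf{C}{R}{B}_\Phi \;\gtrsim\; \frac{1}{\delta}\,\condmutinf{C}{R}{B}_\Psi,
\end{equation*}
so that $C \gtrsim \condmutinf{C}{R}{B}_\Psi \cdot (\beta/\delta)$. The optimal balance forces the rescaling error $(\eps+\sqrt{\beta})/\sqrt{\delta}$ to remain a small constant, which is achieved by choosing $\delta \asymp \eps^{1-p}$ and $\beta \asymp \delta$; this yields $\beta/\delta = \Theta(1)$ while $1/\beta \asymp \eps^{p-1}$, producing $C \ge \condmutinf{C}{R}{B}_\Psi \cdot (1/\eps)^{p}$ in the regime $\eps \le (1/70)^{4/(1-p)}$ claimed in the theorem.

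The main obstacle will be the error bookkeeping in the rescaling step: the factor $1/\sqrt{\delta}$ blowup from Fact \ref{scalarpurified} is in direct tension with the amplification ratio $\asymp 1/\delta$, and one must play these off very tightly (together with the Markov loss $\sqrt{\beta}$) to obtain the clean exponent $p$ with the explicit error threshold. A secondary subtlety is that $\Tr\br{Q_R \Lambda(\Psi) Q_R}$ does not exactly equal $\delta$, so the mismatch between the output normalization and the input normalization must be absorbed using the $\eps + O(\sqrt{\beta})$ bound on $\Psi$ before Fact \ref{scalarpurified} can be applied; and the final use of \cite{Berta14} must be to a sufficiently smooth version of the max-information-based lower bound so that the nonvanishing but controlled rescaling error $(\eps+\sqrt{\beta})/\sqrt{\delta}$ still falls within the smoothing regime where the bound approximates $\condmutinf{C}{R}{B}_\Phi$.
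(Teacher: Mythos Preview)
Your high-level plan (rescale on $R$ to amplify the conditional mutual information, Markov-truncate to pass from expected to worst-case cost, then invoke the Berta--Christandl--Touchette lower bound) is exactly the paper's plan. But your order of operations creates a fatal coupling in the error budget, and your final arithmetic does not actually deliver the claimed exponent.

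Concretely: you truncate first (error $\eps\to\eps+O(\sqrt{\beta})$, worst-case cost $\le C/\beta$), then project on $Q_R$ and renormalize by $\delta$ (error blows up by $1/\sqrt{\delta}$). Your own displayed inequality then reads
\[
\frac{C}{\beta}\;\gtrsim\;\condmutinf{C}{R}{B}_\Phi\;\gtrsim\;\frac{1}{\delta}\,\condmutinf{C}{R}{B}_\Psi,
\]
so $C\gtrsim(\beta/\delta)\,\condmutinf{C}{R}{B}_\Psi$. To obtain the theorem you need $\beta/\delta\gtrsim(1/\eps)^p$, i.e.\ $\beta\gtrsim\delta/\eps^p$. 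But your error constraint $(\eps+\sqrt{\beta})/\sqrt{\delta}=O(1)$ forces $\beta=O(\delta)$. These two requirements are incompatible for small $\eps$. Your stated choice $\beta\asymp\delta$ gives $\beta/\delta=\Theta(1)$ and hence only $C\gtrsim\condmutinf{C}{R}{B}_\Psi$, which is trivial; the appearance of $1/\beta\asymp\eps^{p-1}$ in your last line is a red herring because the bound depends on $\beta/\delta$, not on $1/\beta$.

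The paper avoids this trap by reversing the order, and this reversal is where the real work is. One cannot simply project on $R$ first and keep the expected-cost bound, because Alice's measurement outcome distribution depends on her marginal on $AC$, which changes under the $R$-projection. The paper's remedy is Lemma~\ref{goodcoh}: pass to the coherent representation (Lemma~\ref{cohlemma}) and show that, up to $2\sqrt{\eps}$ in purified distance, the global state has the factored form $\sum_{i_1,\dots,i_r}\sqrt{p'_{i_1,\dots,i_r}}\,\ket{\Psi}_{RBC_0A}\otimes\ket{\kappa^{i_r,\dots,i_1}}_{CE_AT_B}\ket{i_r}\cdots\ket{i_1}$, with the \emph{same} probabilities $p'$ still satisfying $\sum p'\log(i_1\cdots i_r)\le C/(1-\eps)$. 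Because $\ket{\Psi}_{RBC_0A}$ is now a tensor factor, applying $\Psi_R^{-1/2}$ (Lemma~\ref{convepr}) rescales it to the flat state $\ket{\omega}_{RBC_0A}$ \emph{without touching the $p'$}. Only after this rescaling does the paper truncate (Lemma~\ref{exptoworst}), so the truncation error $\sqrt{\mu}$ is \emph{not} amplified by the rescaling factor. The resulting error is $\sqrt{8\eps\beta}+\sqrt{\mu}$ with the two contributions decoupled, and one can take $\mu\beta\asymp 1/\eps^p$ while keeping both $\eps\beta$ and $\mu$ small---this is exactly what fails in your coupled bound $\sqrt{\beta/\delta}$.

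In short, the missing idea is Lemma~\ref{goodcoh}: the ``cleanup'' of the coherent protocol that separates $\Psi_{RBC_0A}$ from the transcript registers, which is what allows the rescaling on $R$ to preserve the expected-cost structure and hence lets truncation happen \emph{after} rescaling.
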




The proof of this theorem is given towards the end of this section. In order to facilitate the proof, we will introduce a coherent representation of the above protocol in the following lemma. 

\begin{lemma}
\label{cohlemma}
For every $k\leq r$, let $\O_k$ represent the set of all tuples $(i_1,i_2,\ldots,i_k)$ such that $\{i_1,i_2,\ldots,i_k\}$ is a sequence of measurement outcomes that occurs with non-zero probability up to $k$-th round of $\mathcal{P}_3$. 

There exist registers $M_1,M_2,\ldots,M_r$ and isometries $$\{U_{i_{k-1},i_{k-2},\ldots, i_2,i_1}: \H_{ACE_A}\rightarrow \H_{ACE_AM_k}| k >1, k \text{ odd }, (i_1,i_2\ldots i_{k-1})\in \O_{k-1}\},$$ $$\{U_{i_{k-1},i_{k-2},\ldots, i_2,i_1}: \H_{BE_B}\rightarrow \H_{BE_BM_k}|  k \text{ even }, (i_1,i_2\ldots i_{k-1})\in \O_{k-1}\}$$ and $U: \H_{ACE_A}\rightarrow \H_{ACE_AM_1}$, such that 
\begin{align*}
	&\ket{\Psi}_{RBCA}\ket{\theta}_{E_AE_B} \\
	&= U^{\dagger}\sum_{i_1,i_2,\ldots,i_r}\sqrt{p_{i_1,i_2,\ldots,i_r}}U^{\dagger}_{ i_1}U^{\dagger}_{ i_2,i_1}\cdots U^{\dagger}_{i_r,i_{r-1},\ldots,i_1}\ket{\tau^{i_r,i_{r-1},\ldots,i_1}}_{RBCAC_0T_BE_A}\ket{i_r}_{M_r}\ldots\ket{i_1}_{M_1},
\end{align*}
for some pure states $\ketbra{\tau^{i_r,i_{r-1},\ldots,i_1}}_{RBCAC_0T_BE_A}\in \cD(\H_{RBCAC_0T_BE_A})$. 
\end{lemma}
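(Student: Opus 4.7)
The plan is to coherentify every measurement in $\mathcal{P}_3$ via its Naimark/Stinespring dilation and then invert the whole chain of isometries. Concretely, for the first measurement I define
\[
U \defeq \sum_{i_1} M^{i_1}_{ACE_A}\otimes \ket{i_1}_{M_1}\;:\;\H_{ACE_A}\to \H_{ACE_A M_1},
\]
and for every $k\geq 2$ and every tuple $(i_1,\dots,i_{k-1})\in\O_{k-1}$, I define
\[
U_{i_{k-1},\dots,i_1} \defeq \sum_{i_k} M^{i_k,i_{k-1},\dots,i_1}_{ACE_A}\otimes \ket{i_k}_{M_k}
\]
when $k$ is odd, and the analogous sum on $\H_{BE_B}$ with measurement operators $M^{i_k,\dots,i_1}_{BE_B}$ when $k$ is even. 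The completeness relations $\sum_{i_k}(M^{i_k,\dots,i_1}_{ACE_A})^\dagger M^{i_k,\dots,i_1}_{ACE_A}=\id_{ACE_A}$ (and similarly for $BE_B$) imply that each such map is an isometry.

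Next I would argue by induction on $k$ that, starting from $\ket{\Psi}_{RBCA}\ket{\theta}_{E_AE_B}$ and applying the coherent isometries in the order dictated by the protocol, the global state after the $k$-th coherent step is
\[
\sum_{(i_1,\dots,i_k)\in\O_k}\sqrt{p_{i_1,\dots,i_k}}\,\ket{\phi^{i_k,\dots,i_1}}_{RBACE_AE_B}\ket{i_k}_{M_k}\cdots\ket{i_1}_{M_1}.
\]
The base case $k=1$ is immediate from the definitions of $U$ and $p_{i_1}$: applying $U$ and expanding yields $\sum_{i_1}\sqrt{p_{i_1}}\,\ket{\phi^{i_1}}\ket{i_1}_{M_1}$, because $M^{i_1}_{ACE_A}$ acting on the appropriate state yields $\sqrt{p_{i_1}}\ket{\phi^{i_1}}$ up to the isometric extension to $RB$ (the registers untouched by Alice's measurement remain in superposition, which is exactly why the joint post-measurement state is $\ket{\phi^{i_1}}$). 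The inductive step is the same computation, using that the $k$-th isometry acts on registers disjoint from the previously written $M_1,\dots,M_{k-1}$.

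After the $r$-th coherent step I apply the final unitaries in a controlled fashion (that is, apply $U_{i_r,\dots,i_1}$ on the branch labeled by $\ket{i_r}_{M_r}\cdots\ket{i_1}_{M_1}$, which by the definition of $\ket{\tau^{i_r,\dots,i_1}}$ converts the branch state to that vector) to obtain
\[
\sum_{i_1,\dots,i_r}\sqrt{p_{i_1,\dots,i_r}}\,\ket{\tau^{i_r,\dots,i_1}}_{RBCAC_0T_BE_A}\ket{i_r}_{M_r}\cdots\ket{i_1}_{M_1}.
\]
Since the whole chain is a composition of isometries, inverting it amounts to peeling off $U_{i_r,\dots,i_1}^\dagger,\dots,U_{i_1}^\dagger,U^\dagger$ on each branch, which yields exactly the identity claimed in the lemma.

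The main obstacle I anticipate is purely bookkeeping: the isometries for different rounds act on different sides (Alice's $ACE_A$ in odd rounds, Bob's $BE_B$ in even rounds), and one must check that they commute in the relevant sense so that the inductive hypothesis carries through with the correct post-measurement state. Since Alice's and Bob's local operations act on disjoint registers (with the new $M_k$ registers also being fresh at each step), this commutation is automatic, but it must be stated carefully so that the product $U_{i_r,\dots,i_1}^\dagger\cdots U_{i_1}^\dagger U^\dagger$ in the statement is well-defined and lands back on $\ket{\Psi}_{RBCA}\ket{\theta}_{E_AE_B}$.
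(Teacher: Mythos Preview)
Your approach is correct and slightly more direct than the paper's. You build each round's isometry explicitly as the Naimark dilation $\sum_{i_k} M^{i_k,\dots,i_1}\otimes\ket{i_k}_{M_k}$; the paper instead argues that the measurement at round $k$ leaves the marginal on the \emph{other} party's registers unchanged (Eq.~\eqref{roundconvsplit} in the appendix), writes down two purifications of that marginal---the pre-measurement pure state and the coherent superposition $\sum_{i_k}\sqrt{p_{i_k|\dots}}\ket{\phi^{i_k,\dots,i_1}}\ket{i_k}_{M_k}$---and invokes Uhlmann's theorem (Fact~\ref{uhlmann}) to obtain an isometry relating them. On the states that actually arise along the protocol the two constructions agree, so the final identity is the same; your dilation simply makes the isometry concrete and avoids the detour through Uhlmann. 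One small point worth spelling out in your write-up: in the lemma's displayed equation the operator $U^{\dagger}_{i_r,\dots,i_1}$ is not one of the measurement isometries you construct but the adjoint of the protocol's final unitary $U_{i_r,\dots,i_1}=U^a_{i_r,\dots,i_1}\otimes U^b_{i_r,\dots,i_1}$ (so that it sends $\ket{\tau^{i_r,\dots,i_1}}$ back to $\ket{\phi^{i_r,\dots,i_1}}$); you already handle this correctly when you ``apply the final unitaries in a controlled fashion,'' but it deserves an explicit sentence so the reader does not look for a round-$(r{+}1)$ dilation.
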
 

The proof of this lemma is deferred to Appendix \ref{appen:cohlemma}. The proof is an adaptation of the arguments related to the convex-split lemma introduced in~\cite{ADJ14}. The statement of the lemma is simplified by introducing a family of controlled unitaries.

\begin{figure}
	\centering
	\begin{tikzpicture}[rotate=90, scale = 0.9]
	
	\draw [gray] (-1,16.5) -- (-1,-2);
	\draw [gray] (6,16.5) -- (6,13.6);
	\draw [gray] (6,12.2) -- (6,10.5);
	\draw [gray] (6,6.5) -- (6,4.5);
	\draw [gray] (6,-1) -- (6,-2);
	\node at (-2,16) {Referee};
	\node at (2,16) {Alice};
	\node at (8,16) {Bob};
	
	\node at (5,15.5) {$\Psi_{RBAC}$};
	\node at (-1.7,13) {$R$};
	\node at (2.2,13) {$A$};
	\node at (3.7,13) {$C$};
	\node at (8.8,13) {$B$};
	\draw [thick] (5, 14.7) -- (-1.5,13.5) -- (-1.5 , 0.5) -- (5,-1);
	\draw [thick] (5, 14.7) -- (2.5,13.5) -- (2.5,12);
	\draw [thick] (5, 14.7) -- (9,13.5) -- (9,9.5);
	\draw [thick] (5, 14.7) -- (3.5,13.5) -- (3.5,12);

	\node at (6,12.8) {$\theta_{E_AE_B}$};
	\node at (4.8,12.3) {$E_A$};
	\node at (7.3,12.3) {$E_B$};
	\draw [thick] (6,13.5) -- (4.5,13) -- (4.5,12);
	\draw [thick] (6,13.5) -- (7,13) -- (7,9.5);
	
	\draw [ultra thick] (2,12) rectangle (5.5,10.5);
	\node at (4,11.2) {$\M_{ACE_A}$};
	\node at (6,10.1) {$i_1$};
	\draw [thick] (2.5,10.5) -- (2.5, 3.5);
	\draw [thick] (3.5,10.5) -- (3.5, 3.5);
	\draw [thick] (4.5,10.5) -- (4.5, 3.5);
	\draw [thick] (5,10.5) -- (6.5, 9.5);
	
	\draw [ultra thick] (6,9.5) rectangle (9.5,8);
	\node at (8,8.8) {$\M^{i_1}_{BE_B}$};
	\draw [thick] (6.5,8) -- (5,6.5);
	\draw [thick] (9,8) -- (9,3);
	\draw [thick] (7,8) -- (7,3);
	\node at (6,7.1) {$i_2$};
	
	\draw [fill] (5.5,6) circle [radius=0.08];
	\draw [fill] (5.5,5.4) circle [radius=0.08];
	\draw [fill] (5.5,4.8) circle [radius=0.08];
	\node at (6,3.8) {$i_r$};

	\draw [ultra thick] (2,3.5) rectangle (5.5,2);
	\node at (4,2.7) {$U^a_{i_r\ldots i_1}$};
	\node at (2.2,1.7) {$A$};
	\node at (3.2,1.7) {$C$};
	\node at (4.2,1.7) {$E_A$};
	\draw [thick] (5,4.5) -- (5,3.5);
	\draw [thick] (2.5,2) -- (2.5, 0.5) -- (5, -1);
	\draw [thick] (3.5,2) -- (3.5, 0.5) -- (6, 0);
	\draw [thick] (4.5,2) -- (4.5, 0.5) -- (6,0);

	\draw [ultra thick] (6,1.5) rectangle (9.5,3);
	\node at (8,2.2) {$U^b_{i_r\ldots i_1}$};
	\draw [thick] (5,4.5) -- (6.5,3);
	\draw [thick] (7,1.5) -- (7,0.5) -- (6,0);
	\draw [thick] (8,1.5) -- (8,0.5) -- (5,-1);
	\draw [thick] (9,1.5) -- (9,0.5) -- (5,-1);
	\node at (6.7, 1.2) {$T_B$};
	\node at (7.7, 1.2) {$C_0$};
	\node at (8.7, 1.2) {$B$};
	
	\node at (5,-1.8) {$\Psi'_{RAC_0B}$};
	
	\end{tikzpicture}
	\caption{Graphical representation of an interactive protocol for Quantum state redistribution. The input state is $\Psi_{RBAC}$, the shared entanglement and the local registers are included in $E_A,E_B$ and the final state $\Psi'_{RAC_0B}$ satisfies $\F^2(\Psi'_{RAC_0B},\Psi_{RAC_0B})\geq 1-\eps^2$. The messages $i_1,i_2 \ldots $ are exchanged by Alice and Bob till the round $r$. The first measurement $\M \defeq \{M^{1}_{ACE_A},M^{2}_{ACE_A}\ldots\}$ is performed by Alice. Measurement $\M^{i_k,i_{k-1},\ldots i_1} = \{M^{1,i_{k},i_{k-1}\ldots i_2,i_1}_{BE_B},M^{2,i_{k},i_{k-1}\ldots i_2,i_1}_{BE_B}\ldots\}$ is performed by Alice (with registers $ACE_A$ if $k$ is even) and by Bob (with registers $BE_B$ if $k$ is odd). The final unitaries $U^a_{i_r \ldots i_1}$ and $U^b_{i_r \ldots i_1}$ are applied by Alice and Bob, respectively.}
	\label{fig:interactivecoherent}
\end{figure}
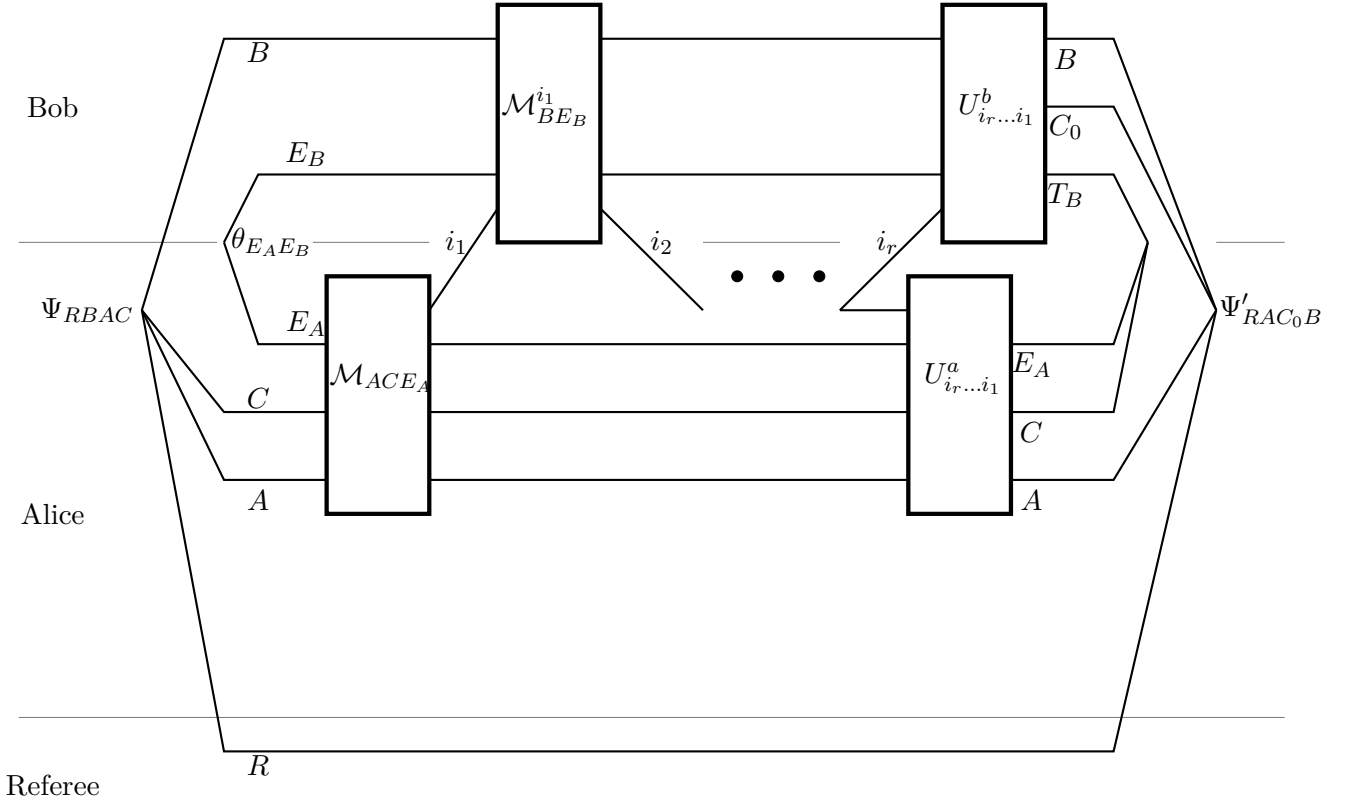

\begin{definition}
\label{shortunitaries}
We introduce the following definitions.

\begin{itemize}
\item Let $k>1$ be odd. Isometry $U_k: \H_{ACE_AM_1M_2\ldots M_{k-1}}\rightarrow \H_{ACE_AM_1M_2\ldots M_{k-1}M_k}$,  $$U_k \defeq \sum_{i_1,i_2\ldots i_{k-1}} \ketbra{i_1}_{M_1}\otimes \ketbra{i_2}_{M_2}\otimes\ldots\otimes\ketbra{i_{k-1}}_{M_{k-1}}\otimes U_{i_{k-1},i_{k-2},\ldots, i_2,i_1}.$$
\item  For $k$ even, Isometry $U_k: \H_{BE_BM_1M_2\ldots M_{k-1}}\rightarrow \H_{BE_BM_1M_2\ldots M_{k-1}M_k}$, $$U_k \defeq \sum_{i_1,i_2\ldots i_{k-1}} \ketbra{i_1}_{M_1}\otimes \ketbra{i_2}_{M_2}\otimes\ldots\otimes\ketbra{i_{k-1}}_{M_{k-1}}\otimes U_{i_{k-1},i_{k-2},\ldots, i_2,i_1}.$$
\item Unitary $U^a_{r+1}: \H_{ACE_AM_1M_2\ldots M_r} \rightarrow \H_{ACE_AM_1M_2\ldots M_r}$, $$U^a_{r+1} \defeq \sum_{i_1,i_2,\ldots,i_r} \ketbra{i_1}_{M_1}\otimes \ketbra{i_2}_{M_2}\otimes\ldots\otimes\ketbra{i_r}_{M_r}\otimes U^a_{i_r,i_{r-1},\ldots,i_1}.$$
\item Unitary $U^b_{r+1}: \H_{BE_BM_1M_2\ldots M_r} \rightarrow \H_{BC_0T_BM_1M_2\ldots M_r}$, $$U^b_{r+1} \defeq \sum_{i_1,i_2,\ldots,i_r} \ketbra{i_1}_{M_1}\otimes \ketbra{i_2}_{M_2}\otimes\ldots\otimes\ketbra{i_r}_{M_r}\otimes U^b_{i_r,i_{r-1},\ldots,i_1}.$$
\item Unitary $U_{r+1}: \H_{ACE_ABE_BM_1M_2\ldots M_r} \rightarrow \H_{ACE_ABC_0T_BM_1M_2\ldots M_r}$, $$U_{r+1} \defeq \sum_{i_1,i_2,\ldots,i_r} \ketbra{i_1}_{M_1}\otimes \ketbra{i_2}_{M_2}\otimes\ldots\otimes\ketbra{i_r}_{M_r}\otimes U_{i_r,i_{r-1},\ldots,i_1}.$$
\end{itemize}

\end{definition}

This leads to a more convenient representation of Lemma \ref{cohlemma}.  
\begin{cor}
\label{cohequation}
It holds that 
$$\ket{\Psi}_{RBCA}\ket{\theta}_{E_AE_B}=U^{\dagger}U_2^{\dagger}\cdots U_{r+1}^{\dagger} \sum_{i_1,i_2,\ldots,i_r}\sqrt{p_{i_1,i_2,\ldots,i_r}}\ket{\tau^{i_r,i_{r-1},\ldots,i_1}}_{RBCAC_0T_BE_A}\ket{i_r}_{M_r}\ldots\ket{i_1}_{M_1}.$$ and 
$$\P(\Psi_{RBC_0A},\sum_{i_1,i_2,\ldots,i_r}p_{i_1,i_2,\ldots,i_r}\tau^{i_r,i_{r-1},\ldots,i_1}_{RBC_0A})\leq \eps.$$
\end{cor}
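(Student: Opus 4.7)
The plan is to obtain Corollary \ref{cohequation} as a direct repackaging of Lemma \ref{cohlemma} using the controlled-unitary notation from Definition \ref{shortunitaries}, followed by a quick tracing-out argument for the purified-distance bound. Nothing in the proof is analytically deep; the only nontrivial aspect is careful bookkeeping of the registers on which the individual $U_{i_{k-1},\ldots,i_1}$ act.

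First, I would establish a single-term identity: for every fixed tuple $(i_1,\ldots,i_r)$ and every $2\leq k\leq r+1$,
$$
U_k^{\dagger}\Bigl(\ket{\tau^{i_r,\ldots,i_1}}_{RBCAC_0T_BE_A}\ket{i_r}_{M_r}\cdots\ket{i_1}_{M_1}\Bigr) \;=\; U_{i_{k-1},\ldots,i_1}^{\dagger}\,\ket{\tau^{i_r,\ldots,i_1}}_{RBCAC_0T_BE_A}\ket{i_r}_{M_r}\cdots\ket{i_1}_{M_1}.
$$
This is immediate from the definition $U_k=\sum_{j_1,\ldots,j_{k-1}}\ketbra{j_1}_{M_1}\otimes\cdots\otimes\ketbra{j_{k-1}}_{M_{k-1}}\otimes U_{j_{k-1},\ldots,j_1}$: when the control registers are in the computational-basis states $\ket{i_1}_{M_1}\cdots\ket{i_{k-1}}_{M_{k-1}}$, only the summand with $j_\ell=i_\ell$ survives.

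Second, because each $U_{i_{k-1},\ldots,i_1}$ acts only on Alice's registers $ACE_A$ (odd $k$) or Bob's registers $BE_B$ (even $k$), and none of them touches any message register $M_1,\ldots,M_r$, the operators $U_k^{\dagger}$ for different $k$ freely commute past the control projectors of one another when they meet the same basis tuple. Composing the single-term identity across $k=2,\ldots,r+1$ yields
$$
U_2^{\dagger}\cdots U_{r+1}^{\dagger}\,\ket{\tau^{i_r,\ldots,i_1}}_{RBCAC_0T_BE_A}\ket{i_r}_{M_r}\cdots\ket{i_1}_{M_1} \;=\; U_{i_1}^{\dagger}U_{i_2,i_1}^{\dagger}\cdots U_{i_r,\ldots,i_1}^{\dagger}\,\ket{\tau^{i_r,\ldots,i_1}}_{RBCAC_0T_BE_A}\ket{i_r}_{M_r}\cdots\ket{i_1}_{M_1}.
$$
Pulling $U^{\dagger}U_2^{\dagger}\cdots U_{r+1}^{\dagger}$ outside the weighted sum over $(i_1,\ldots,i_r)$ and invoking Lemma \ref{cohlemma} gives the first claim of the corollary.

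Third, for the purified-distance bound, I would trace out the registers $E_A$, $T_B$, $M_1,\ldots,M_r$ from the pure state on the right-hand side. Orthogonality of the computational-basis strings $\ket{i_r}_{M_r}\cdots\ket{i_1}_{M_1}$ kills all cross terms, so the resulting marginal on $RBC_0A$ is exactly $\sum_{i_1,\ldots,i_r}p_{i_1,\ldots,i_r}\tau^{i_r,\ldots,i_1}_{RBC_0A}$, which is by definition the final state $\Psi'_{RBC_0A}$ of protocol $\mathcal{P}_3$. The inequality $\P(\Psi_{RBC_0A},\Psi'_{RBC_0A})\leq\eps$ is then precisely the correctness guarantee of $\mathcal{P}_3$ stated in its description, finishing the corollary. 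The only place requiring care is matching the register assignments of Definition \ref{shortunitaries} to those of Lemma \ref{cohlemma}, which I would do by spelling out, once, the domain and codomain of each $U_k$.
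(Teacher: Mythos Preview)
Your proposal is correct and follows exactly the approach the paper intends: the paper's own proof is the single sentence ``The corollary follows immediately from Definition \ref{shortunitaries} and Lemma \ref{cohlemma},'' and your write-up simply spells out what ``immediately'' means, namely that each controlled isometry $U_k$ collapses to the branch $U_{i_{k-1},\ldots,i_1}$ on a computational-basis control string, and that the second inequality is precisely the correctness condition $\P(\Psi'_{RBC_0A},\Psi_{RBC_0A})\leq\eps$ recorded at the end of the description of $\mathcal{P}_3$. One small bookkeeping slip: when tracing out to obtain the marginal on $RBC_0A$ you must also discard the register $C$ (Alice's original copy), not just $E_A,T_B,M_1,\ldots,M_r$; and your phrase ``none of them touches any message register'' is slightly off since $U_{i_{k-1},\ldots,i_1}$ \emph{creates} $M_k$, though this does not affect the argument.
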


\begin{proof}

The corollary follows immediately from Definition \ref{shortunitaries} and Lemma \ref{cohlemma}.

\end{proof}

\subsection{Construction}
\label{sec:lowerbound}

In this subsection, we obtain a lower bound on expected communication cost of quantum state redistribution by considering a class of states defined below.

Let the register $R$ be composed of two registers $R_A,R'$, such that $R\equiv R_AR'$. Let $|R_A| = d_a$ and $|R'|=|C|=|B|=d$. \begin{definition}
\label{staterediststate}
Define the quantum state $\ketbra{\Psi}_{RBCA}\in \cD(\H_{RBCA})$ as $$\ket{\Psi}_{RBCA}\defeq\frac{1}{\sqrt{d_a}}\sum_{a=1}^{d_a}\ket{a}_{R_A}\ket{a}_A\ket{\psi^a}_{R'BC},$$ where $$\ket{\psi^a}_{R'BC}=\sum_{j=1}^d\sqrt{e_j}\ket{u_j}_{R'}\ket{v_j(a)}_B\ket{w_j(a)}_C,$$ and $e_1\geq e_2\geq \ldots \geq e_d>0$ and $\sum_{i=1}^d e_i = 1$. Furthermore, each of the three sets  $\{\ket{u_1},\ldots,\ket{u_d}\}$, $\{\ket{v_1(a)},\ldots,\ket{v_d(a)}\}$, $\{\ket{w_1(a)},\ldots,\ket{w_d(a)}\}$ forms an orthonormal basis (second and third bases may depend arbitrarily on $a$) in $\H_{R'}, \H_{B}$ and $\H_{C}$ respectively. 
\end{definition}

The choice of the probability distribution $\{e_1,e_2,\ldots, e_d\}$ is given by the following lemma. 
\begin{lemma}
\label{lowentropy}
Fix a $\beta \geq 1$ and an integer $d>1$. There exists a probability distribution $\mu=\{e_1,e_2,\ldots, e_d\}$, with $e_1\geq e_2 \ldots \geq e_d$, such that $e_d = \frac{1}{d\beta}$ and Shannon entropy $\mathrm{H}(\mu)\leq 2\frac{\log(d)}{\beta}$.
\end{lemma}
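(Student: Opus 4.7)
The natural candidate is the distribution that greedily concentrates mass on a single atom subject to the hard constraint $e_d = 1/(d\beta)$. Concretely, set
$$ e_1 \defeq 1 - \frac{d-1}{d\beta}, \qquad e_2 = e_3 = \cdots = e_d \defeq \frac{1}{d\beta}. $$
The weights sum to $1$ by construction, and the ordering $e_1 \geq e_d$ is equivalent to $\beta \geq 1$, so this is a legitimate ordered probability distribution. It is moreover the entropy-minimizing distribution subject to $e_d = 1/(d\beta)$: any other admissible distribution satisfies $e_i \geq 1/(d\beta)$ for every $i$, so $\sum_{i \leq k} e_i = 1 - \sum_{i > k} e_i \leq 1 - (d-k)/(d\beta) = \sum_{i \leq k}e_i^{\star}$ for every $k$, meaning any competitor is majorized by the above, and the minimality follows from Schur-concavity of entropy.

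Next I would compute the entropy of this explicit distribution,
$$ H(\mu) \;=\; -e_1 \log e_1 \;+\; \frac{d-1}{d\beta}\log(d\beta), $$
and bound the first summand using the elementary inequality $-(1-\delta)\log(1-\delta)\leq \delta\log e$ for $\delta\in[0,1]$, which follows from $-\ln(1-\delta)\leq \delta/(1-\delta)$ and hence $-(1-\delta)\ln(1-\delta) \leq \delta$. Applied at $\delta = (d-1)/(d\beta)$, it gives
$$ H(\mu) \;\leq\; \frac{d-1}{d\beta}\bigl(\log(d\beta)+\log e\bigr) \;\leq\; \frac{\log d + \log\beta + \log e}{\beta}. $$

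The desired bound $H(\mu) \leq 2\log d/\beta$ then follows as soon as $\log\beta + \log e \leq \log d$, i.e., in the parameter range $\beta \leq d/e$ in which the lemma is invoked in the subsequent construction of $\Psi_{RBCA}$. The construction and the entropy computation themselves are routine; the only subtle point that I anticipate is the bookkeeping that the $\log\beta/\beta$ correction is genuinely dominated by the $\log d/\beta$ main term in the parameter regime downstream, so that the constant $2$ in the stated bound is achieved without additional slack.
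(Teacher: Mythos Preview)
Your construction is exactly the paper's: $e_2=\cdots=e_d=\tfrac{1}{d\beta}$ and $e_1=1-\tfrac{d-1}{d\beta}$. The paper bounds the first summand more crudely via $x\log(1/x)<1$ (applied both to $x=e_1$ and, implicitly, to $x=1/\beta$ to kill the $\tfrac{\log\beta}{\beta}$ piece), arriving at $H(\mu)<2+\tfrac{\log d}{\beta}$ before the final step; your $-(1-\delta)\log(1-\delta)\leq\delta\log e$ is a sharper variant of the same move. Both arguments need an additional constraint (roughly $\beta\lesssim d$) for the last inequality $\leq 2\tfrac{\log d}{\beta}$ to go through, as you correctly observe; the paper leaves this implicit, relying on the fact that in the application $d$ is chosen after $\beta$ and can be taken large. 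Your majorization remark is a nice bonus but not used.
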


\begin{proof}
Set $e_2=e_3=\ldots= e_d = \frac{1}{d\beta}$. Then $e_1=1-\frac{d-1}{d\beta}$. Using $x\log\br{\frac{1}{x}}\leq \frac{\log e}{e} < 1$ for all $x>0$, we can upper bound the entropy of the distribution as $$\sum_i e_i\log\br{\frac{1}{e_i}} = \br{1-\frac{d-1}{d\beta}}\log\br{\frac{1}{1-\frac{d-1}{d\beta}}} + \frac{d-1}{d\beta}\log\br{d\beta} < 2 + \frac{\log d}{\beta}\leq 2\frac{\log d}{\beta}.$$ 
\end{proof}

Given $\ket{\psi^a}_{R'BC}$ from Definition~\ref{staterediststate}, we define a `GHZ state' $\ketbra{\omega^a}_{R'BC}\in \cD(\H_{R'BC})$ corresponding to it as $$\ket{\omega^a}_{R'BC}\defeq \frac{1}{\sqrt{d}}\sum_{j=1}^d\ket{u_j}_{R'}\ket{v_j(a)}_B\ket{w_j(a)}_C.$$ Using this, we define the state $\ketbra{\omega_{RBCA}}\in \cD(\H_{RBCA})$ as $$\ket{\omega_{RBCA}}\defeq \frac{1}{\sqrt{d_a}}\sum_{a=1}^{d_a}\ket{a}_{R_A}\ket{a}_A\ket{\omega^a}_{R'BC}.$$
The following relation is easy to verify. 
\begin{equation}
\label{psiandomega}
\ket{\omega}_{RBCA} = \frac{1}{\sqrt{d_a\cdot d}}\Psi_R^{-\frac{1}{2}}\ket{\Psi}_{RBCA}
\end{equation}

\bigskip

The protocol $\mathcal{P}_3$ achieves quantum state redistribution of $\Psi_{RBCA}$ with error $\eps$ and expected communication cost $C$. The following lemma is a refined form of corollary \ref{cohequation}. Its proof is deferred to Appendix \ref{proof:goodcoh}.

\begin{lemma}
\label{goodcoh}
Given a $r$-round protocol $\mathcal{P}_3$ achieving the quantum state redistribution for $\ket{\Psi}_{RBCA}$ with $\set{U_i}_{1\leq i\leq r+1},\set{p_{i_1,\ldots,i_r}}, U_{r+1}^a,U_{r+1}^b$ as defined in Definition~\ref{shortunitaries}. There exists a probability distribution $\{p'_{i_1,i_2,\ldots,i_r}\}$ and pure states $\ketbra{\kappa^{i_r,i_{r-1},\ldots,i_1}}_{CE_AT_B}\in \cD(\H_{CE_AT_B})$ such that 
$$\P(\Psi_{RBCA}\otimes\theta_{E_AE_B}, \nu_{RBCAE_AE_B})\leq 2\sqrt{\eps},$$
where $$\ket{\nu}_{RBCAE_AE_B} \defeq U^{\dagger}U_2^{\dagger}\cdots U_{r+1}^{\dagger}\sum_{i_1,i_2,\ldots,i_r}\sqrt{p'_{i_1,i_2,\ldots,i_r}}\ket{\Psi}_{RBC_0A}\otimes\kappa^{i_r,i_{r-1},\ldots,i_1}_{CE_AT_B}\ket{i_r}_{M_r}\ldots\ket{i_1}_{M_1}.$$ Furthermore,
\[\sum_{i_1,i_2,\ldots,i_r}p'_{i_1,i_2,\ldots,i_r}\log\br{i_1\cdot i_2\cdots i_r}\leq\frac{C}{1-\eps}.\]  
\end{lemma}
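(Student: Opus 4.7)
\textbf{Proof proposal for Lemma \ref{goodcoh}.} The plan is to start from the coherent representation supplied by Corollary \ref{cohequation}, identify the fraction of measurement trajectories on which the decoded state is already close to the target, and use Uhlmann's theorem on those trajectories to rewrite the coherent state in the advertised product form. Writing $\mathcal{U}\defeq U_{r+1}U_r\cdots U_2 U$, Corollary \ref{cohequation} gives
\[
\ket{\Psi}_{RBCA}\ket{\theta}_{E_AE_B}=\mathcal{U}^{\dagger}\ket{X},\qquad \ket{X}\defeq\sum_{i_1,\ldots,i_r}\sqrt{p_{i_1,\ldots,i_r}}\,\ket{\tau^{i_r,\ldots,i_1}}_{RBCAC_0T_BE_A}\ket{i_r}_{M_r}\cdots\ket{i_1}_{M_1},
\]
together with $\P(\Psi_{RBC_0A},\sum_{\vec i}p_{\vec i}\tau^{\vec i}_{RBC_0A})\leq \eps$. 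Since $\Psi_{RBC_0A}$ is pure, $\F^2(\Psi_{RBC_0A},\sum p_{\vec i}\tau^{\vec i})=\sum_{\vec i}p_{\vec i}\bra{\Psi}\tau^{\vec i}_{RBC_0A}\ket{\Psi}=\sum_{\vec i}p_{\vec i}\F^2(\Psi,\tau^{\vec i})\geq 1-\eps^2$, so $\sum_{\vec i}p_{\vec i}(1-\F^2(\Psi,\tau^{\vec i}))\leq \eps^2$. By Markov, the set $\mathcal{G}\defeq\set{\vec i:\F(\Psi_{RBC_0A},\tau^{\vec i}_{RBC_0A})\geq\sqrt{1-\eps}}$ has $Z\defeq\sum_{\vec i\in\mathcal{G}}p_{\vec i}\geq 1-\eps$.

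Next, for each $\vec i\in\mathcal{G}$, both $\ket{\tau^{\vec i}}_{RBCAC_0T_BE_A}$ and $\ket{\Psi}_{RBC_0A}\otimes\ket{\kappa}_{CE_AT_B}$ (for any $\kappa$) purify some state on $RBC_0A$ into the same auxiliary system $CT_BE_A$. By Uhlmann (Fact \ref{uhlmann}) I can choose $\ket{\kappa^{\vec i}}_{CE_AT_B}$ so that $\braket{\tau^{\vec i}}{\Psi\otimes\kappa^{\vec i}}=\F(\tau^{\vec i}_{RBC_0A},\Psi_{RBC_0A})\geq\sqrt{1-\eps}$, real and nonnegative. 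For $\vec i\notin\mathcal{G}$ fix $\kappa^{\vec i}$ arbitrarily. Define $p'_{\vec i}\defeq p_{\vec i}/Z$ for $\vec i\in\mathcal{G}$ and $p'_{\vec i}\defeq 0$ otherwise, and set
\[
\ket{Y}\defeq\sum_{\vec i}\sqrt{p'_{\vec i}}\,\ket{\Psi}_{RBC_0A}\otimes\ket{\kappa^{\vec i}}_{CE_AT_B}\ket{i_r}_{M_r}\cdots\ket{i_1}_{M_1},\qquad \ket{\nu}\defeq\mathcal{U}^{\dagger}\ket{Y}.
\]

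For the distance bound, since $\ket{\Psi}\otimes\ket{\theta}$ has unit norm, the generalized fidelity against the (possibly subnormalized) pure state $\nu$ reduces to $|\braket{\Psi\otimes\theta}{\nu}|=|\braket{X}{\mathcal{U}\mathcal{U}^{\dagger}Y}|=|\braket{X}{Y}|$ using $\braket{\Psi\otimes\theta}{\mathcal{U}^{\dagger}Y}=\braket{\mathcal{U}(\Psi\otimes\theta)}{Y}=\braket{X}{Y}$. Expanding with orthogonality of the message registers and the Uhlmann phase choice,
\[
\braket{X}{Y}=\sum_{\vec i\in\mathcal{G}}\sqrt{p_{\vec i}p'_{\vec i}}\,\F(\tau^{\vec i}_{RBC_0A},\Psi_{RBC_0A})\geq\sqrt{1-\eps}\cdot\frac{\sum_{\vec i\in\mathcal{G}}p_{\vec i}}{\sqrt{Z}}=\sqrt{Z(1-\eps)}\geq 1-\eps,
\]
so $\F(\Psi\otimes\theta,\nu)\geq 1-\eps$ and hence $\P(\Psi\otimes\theta,\nu)\leq\sqrt{1-(1-\eps)^2}\leq\sqrt{2\eps}\leq 2\sqrt{\eps}$.

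Finally, for the expected communication bound, Fact \ref{expcost} gives $\sum_{\vec i}p_{\vec i}\log(i_1\cdots i_r)\leq C$, so
\[
\sum_{\vec i}p'_{\vec i}\log(i_1\cdots i_r)=\frac{1}{Z}\sum_{\vec i\in\mathcal{G}}p_{\vec i}\log(i_1\cdots i_r)\leq\frac{C}{Z}\leq\frac{C}{1-\eps}.
\]
The only real subtlety is handling the possibility that $\ket{Y}$ is not in the image of $\mathcal{U}$, making $\ket{\nu}$ subnormalized; but the calculation above shows the generalized-fidelity bound goes through regardless, since for pure arguments $\F$ reduces to the absolute inner product plus a term that vanishes when one argument has unit norm.
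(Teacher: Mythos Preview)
Your argument is correct and follows essentially the same route as the paper: isolate by Markov the set $\mathcal{G}$ of trajectories with $\F^2(\Psi,\tau^{\vec i})\geq 1-\eps$, renormalize $p$ to $p'$ on $\mathcal{G}$, invoke Uhlmann on each good trajectory to produce $\ket{\kappa^{\vec i}}_{CE_AT_B}$, and push the bound through $\mathcal{U}^{\dagger}$.

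The only difference is in how the final distance bound is assembled. The paper uses the triangle inequality for purified distance in two steps---first replacing $p$ by $p'$ (cost $\sqrt{\eps}$), then replacing $\tau^{\vec i}$ by $\Psi\otimes\kappa^{\vec i}$ (cost $\sqrt{\eps}$)---and then transports the resulting $2\sqrt{\eps}$ bound through $\mathcal{U}^{\dagger}$ using that $\ket{X}$ lies in the image of $\mathcal{U}$. You instead compute the single inner product $\braket{X}{Y}$ directly, exploiting $\braket{\Psi\otimes\theta}{\mathcal{U}^{\dagger}Y}=\braket{\mathcal{U}(\Psi\otimes\theta)}{Y}=\braket{X}{Y}$ and the Uhlmann phase choice, and obtain the slightly sharper $\P\leq\sqrt{2\eps}$. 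Your handling of the possible subnormalization of $\ket{\nu}$ via the generalized fidelity (the defect term vanishes because $\Psi\otimes\theta$ is normalized) is also correct and in fact cleaner than relying on monotonicity under the trace-nonincreasing map $\rho\mapsto\mathcal{U}^{\dagger}\rho\,\mathcal{U}$.
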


We now use Lemma \ref{goodcoh} to prove the following lemma for the state $\omega_{RBCA}$. Recall that $e_d$ is the smallest eigenvalue of $\psi^a_{R'}$, independent of $a$.

\begin{lemma}
\label{convepr}
It holds that $$\P(\omega_{RBCA}\otimes\theta_{E_AE_B}, \omega_{RBC_0A}\otimes \bar{\nu}_{RBCAE_AE_B})\leq \sqrt{\frac{8\eps}{e_d\cdot d}},$$
where $$\ket{\bar{\nu}}_{RBCAE_AE_B} \defeq U^{\dagger}U_2^{\dagger}\cdots U_{r+1}^{\dagger}\sum_{i_1,i_2,\ldots,i_r}\sqrt{p'_{i_1,i_2,\ldots,i_r}}\ket{\omega}_{RBC_0A}\otimes\kappa^{i_r,i_{r-1},\ldots,i_1}_{CE_AT_B}\ket{i_r}_{M_r}\ldots\ket{i_1}_{M_1}.$$
Furthermore,
\[\sum_{i_1,i_2,\ldots,i_r}p'_{i_1,i_2,\ldots,i_r}\log\br{i_1\cdot i_2\cdots\cdot i_r}\leq\frac{C}{1-\eps}.\] 
\end{lemma}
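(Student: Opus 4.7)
The heart of the lemma is the identity, supplied by Eq.~\eqref{psiandomega}, that $\ket{\omega}_{RBCA}=K\ket{\Psi}_{RBCA}$ where $K\defeq \frac{1}{\sqrt{d_ad}}\Psi_R^{-1/2}$ acts only on the Reference register $R$. My plan is to exploit the fact that every unitary appearing in the protocol ($U,U_2,\ldots,U_{r+1}$) acts only on Alice's and Bob's registers, so $K$ commutes with all of them. Writing $\ket{\Psi_0}\defeq\ket{\Psi}_{RBCA}\ket{\theta}_{E_AE_B}$ and $\ket{\nu_0}\defeq\ket{\nu}_{RBCAE_AE_B}$ from Lemma~\ref{goodcoh}, the commutation shows
\[
K\ket{\Psi_0}=\ket{\omega}_{RBCA}\ket{\theta}_{E_AE_B}\eqqcolon\ket{\omega_0}, \qquad K\ket{\nu_0}=\ket{\bar\nu}_{RBCAE_AE_B}\eqqcolon\ket{\bar\nu_0},
\]
so the non-trace-preserving map $K$ sends the $\Psi$-pair to the $\omega$-pair without changing anything else. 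The inherited bound on the expected cost is immediate, since $\ket{\bar\nu_0}$ uses the same probabilities $\{p'_{i_1,\ldots,i_r}\}$ as $\ket{\nu_0}$.

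Next I would control the operator norm of $K$. By Definition~\ref{staterediststate}, $\Psi_R=\frac{1}{d_a}\sum_a\ketbra{a}_{R_A}\otimes\psi^a_{R'}$ has spectrum $\{e_j/d_a:1\le j\le d\}$ each with multiplicity $d_a$, so its smallest eigenvalue is $e_d/d_a$. Hence
\[
\|K\|_\infty=\frac{1}{\sqrt{d_ad}}\,\|\Psi_R^{-1/2}\|_\infty\le\frac{1}{\sqrt{d_ad}}\sqrt{\frac{d_a}{e_d}}=\frac{1}{\sqrt{d\cdot e_d}}.
\]
This is the key blow-up factor: rescaling through the reference converts error in the $\Psi$-redistribution task into error in the $\omega$-redistribution task, amplified by $\|K\|_\infty$.

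Now I would pass from purified distance to Euclidean distance, apply $K$, and pass back. From Lemma~\ref{goodcoh}, $\P(\Psi_0,\nu_0)\le 2\sqrt{\eps}$, i.e.\ $|\braket{\Psi_0}{\nu_0}|\ge\sqrt{1-4\eps}$ for suitably chosen global phase. Since both states are unit vectors,
\[
\|\ket{\Psi_0}-\ket{\nu_0}\|^2=2-2|\braket{\Psi_0}{\nu_0}|\le 2(1-\sqrt{1-4\eps})\le 8\eps,
\]
using $\sqrt{1-x}\ge 1-x$ for $x\in[0,1]$. Applying $K$ and using $\|K\|_\infty\le(d\cdot e_d)^{-1/2}$,
\[
\|\ket{\omega_0}-\ket{\bar\nu_0}\|\le\|K\|_\infty\,\|\ket{\Psi_0}-\ket{\nu_0}\|\le\sqrt{\tfrac{8\eps}{d\cdot e_d}}.
\]
Since $\ket{\omega_0}$ and $\ket{\bar\nu_0}$ are unit vectors, $|\braket{\omega_0}{\bar\nu_0}|\ge 1-\tfrac{1}{2}\|\ket{\omega_0}-\ket{\bar\nu_0}\|^2\ge 1-\tfrac{4\eps}{d\cdot e_d}$, whence $\F^2\ge 1-\tfrac{8\eps}{d\cdot e_d}$ and $\P\le\sqrt{8\eps/(d\cdot e_d)}$, as claimed.

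The only delicate point is that $K$ is not a quantum operation, so monotonicity of fidelity cannot be invoked directly; the argument instead relies on $K$ being bounded and on both $\ket{\omega_0}$ and $\ket{\bar\nu_0}$ being exactly normalized, which is guaranteed because $K$ is tailored precisely to normalize $\ket{\Psi_0}\mapsto\ket{\omega_0}$ and (structurally, via the $K$--unitary commutation) $\ket{\nu_0}\mapsto\ket{\bar\nu_0}$. This is the non-classical feature exploited here: the same eigenvalue-flattening trick would blow up the worst-case classical communication of a protocol, but in the coherent setting it merely rescales the reference and costs us a factor $(d\cdot e_d)^{-1/2}$ in error.
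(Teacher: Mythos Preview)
Your proof is correct and follows essentially the same approach as the paper: both apply the rescaling operator $\Psi_R^{-1/2}$ (which commutes with all protocol isometries since it acts only on $R$) to transform the $\Psi$-pair into the $\omega$-pair, and then track how the error is amplified by the operator norm $\|\Psi_R^{-1/2}\|_\infty$. The only cosmetic difference is that the paper packages the rescaling as a trace-non-increasing CP map $\tilde{\E}(\rho)=\frac{e_d}{d_a}\Psi_R^{-1/2}\rho\,\Psi_R^{-1/2}$ and invokes monotonicity together with Fact~\ref{scalarpurified}, whereas you work directly with the pure-state vectors and Euclidean norms---both routes yield exactly the bound $\sqrt{8\eps/(d\cdot e_d)}$.
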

\begin{proof}

Define a completely positive map $\tilde{\E}:\H_R\rightarrow \H_R$ as $ \tilde{\E}(\rho)\defeq \frac{e_d}{d_a}(\Psi^{-\frac{1}{2}}_R\rho\Psi^{-\frac{1}{2}}_R)$, which is trace non-increasing since $\Psi^{-1}_R \preceq \frac{d_a}{e_d}\text{I}_R$. From Equation~\eqref{psiandomega}, we have $$\tilde{\E}(\Psi_{RBCA}) = e_d\cdot d\cdot\omega_{RBCA}$$ and $$\tilde{\E}(\nu_{RBCAE_AE_B}) = e_d\cdot d\cdot\bar{\nu}_{RBCAE_AE_B}.$$

Consider,
\begin{eqnarray*}
2\sqrt{\eps} &\geq& \P(\Psi_{RBCA}\otimes\theta_{E_AE_B}, \nu_{RBCAE_AE_B})\\ &&\text{(Lemma \ref{goodcoh})}\\&\geq& \P(\tilde{\E}(\Psi_{RBCA})\otimes\theta_{E_AE_B},\tilde{\E}(\nu_{RBCAE_AE_B}))\\ && (\text{Fact \ref{fact:monotonequantumoperation}}) \\ &=& \P(d\cdot e_d\cdot\omega_{RBCA}\otimes\theta_{E_AE_B},d\cdot e_d\cdot \bar{\nu}_{RBCAE_AE_B}).
\end{eqnarray*}
Using Fact \ref{scalarpurified}, we thus obtain $$\P(\omega_{RBCA}\otimes\theta_{E_AE_B}, \bar{\nu}_{RBCAE_AE_B})\leq \sqrt{\frac{8\eps}{d\cdot e_d}}.$$
Furthermore, the probabilities $p'_{i_1,i_2,\ldots,i_r}$ are same as in Lemma \ref{goodcoh}. This completes the proof. 

\end{proof}

Now we exhibit an interactive entanglement-assisted communication protocol for the Quantum state redistribution of $\omega_{RBCA}$ with bounded worst case quantum communication cost. Its proof is deferred to Appendix \ref{append:exptoworst}.

\begin{lemma}
\label{exptoworst}
Fix a $\mu\in (0,1)$. There exists an entanglement-assisted $r$-round quantum communication protocol for the quantum state redistribution of $\omega_{RBCA}$ with worst case quantum communication cost at most $\frac{2C}{\mu(1-\eps)}$ and error at most $\sqrt{\frac{8\eps}{e_d\cdot d}}+\sqrt{\mu}$.
\end{lemma}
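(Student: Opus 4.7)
The plan is to construct the required protocol by executing the coherent (measurement-based) version of $\mathcal{P}_3$ on the input $\omega_{RBCA} \otimes \theta_{E_AE_B}$ in place of $\Psi_{RBCA} \otimes \theta_{E_AE_B}$, and then truncating the exchange via Markov's inequality to enforce a bounded worst-case cost. The enabling observation is Lemma \ref{convepr}: running the sequence of isometries $U, U_2, \ldots, U_{r+1}$ of $\mathcal{P}_3$ on $\omega \otimes \theta$ produces a state whose purified distance from $\bar{\nu}_{RBCAE_AE_B}$ is at most $\sqrt{8\eps/(e_d \cdot d)}$, and by direct inspection the reduced state of $\bar{\nu}$ on the target registers $RBC_0A$ equals $\omega_{RBC_0A}$, since $\sum_{\vec{i}} p'_{\vec{i}} = 1$ and the states $\ket{\kappa^{\vec{i}}}$ are traced out.

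Concretely, I would define the protocol $\mathcal{P}'$ as follows. Alice and Bob run the classical-message version of $\mathcal{P}_3$ on $\omega_{RBCA} \otimes \theta_{E_AE_B}$, transmitting each message $i_k$ via the prefix-free encoding of Fact \ref{prefixfree}. Since both parties share the full transcript after every round, they can synchronously detect whenever the cumulative number of bits would exceed the threshold $T := 2C/(\mu(1-\eps))$; in that event they abort and Bob outputs a fixed default state. By Lemma \ref{convepr}, the message distribution $p'_{\vec{i}}$ satisfies $\sum_{\vec{i}} p'_{\vec{i}} \log(i_1 \cdots i_r) \leq C/(1-\eps)$, so Markov's inequality gives $\Pr_{\vec{i}}[\log(i_1 \cdots i_r) > T] \leq \mu/2$, bounding the abort probability by $\mu/2$. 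The worst-case quantum communication cost is therefore at most $T = 2C/(\mu(1-\eps))$, absorbing the logarithmic prefix-free overhead into the choice of threshold.

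For the error analysis, let $\sigma_{RBC_0A}$ be the marginal output of the non-truncated protocol $\mathcal{P}_3$ on $\omega \otimes \theta$. Monotonicity of purified distance under partial trace (Fact \ref{fact:monotonequantumoperation}) combined with Lemma \ref{convepr} gives $\P(\sigma_{RBC_0A}, \omega_{RBC_0A}) \leq \sqrt{8\eps/(e_d \cdot d)}$. Let $\sigma'_{RBC_0A}$ be the output of $\mathcal{P}'$: letting $q \geq 1 - \mu/2$ denote the no-abort probability, one can decompose $\sigma = q \sigma_G + (1-q) \sigma_B$ and $\sigma' = q \sigma_G + (1-q) \rho_0$ where $\sigma_G$, $\sigma_B$ are the conditional outputs on the good and bad events and $\rho_0$ is the default state. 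Joint concavity of fidelity (Fact \ref{fact:fidelityconcave}) then yields $\F(\sigma', \sigma) \geq q \geq 1 - \mu/2$, so $\P(\sigma', \sigma) \leq \sqrt{1 - (1 - \mu/2)^2} \leq \sqrt{\mu}$. The triangle inequality for purified distance (Fact \ref{fact:trianglepurified}) then gives the claimed total error $\sqrt{8\eps/(e_d \cdot d)} + \sqrt{\mu}$.

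The main technical subtlety I anticipate is ensuring that the abort condition is detected synchronously by both parties without any additional communication; this hinges on both parties holding the same classical transcript after every round and being able to compute the running bit count locally from it. A minor secondary concern is the overhead introduced by prefix-free encoding of variable-length messages, which can be absorbed into the constant factor of $2$ in the threshold by a suitably conservative choice of $T$.
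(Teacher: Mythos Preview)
Your approach is close in spirit to the paper's, but the error analysis contains a genuine gap. You apply Markov's inequality to the distribution $p'_{\vec{i}}$ from Lemma~\ref{convepr} and conclude that the abort probability is at most $\mu/2$. However, $p'$ is the distribution appearing in the coherent decomposition of $\bar{\nu}$; it is \emph{not} the message distribution obtained when you run the measurement-based protocol $\mathcal{P}_3$ on the input $\omega_{RBCA}\otimes\theta_{E_AE_B}$. That distribution, call it $\tilde{p}$, agrees with $p'$ only up to the closeness supplied by Lemma~\ref{convepr}, so your claimed bound $q\geq 1-\mu/2$ on the no-abort probability of $\mathcal{P}'$ acting on $\omega\otimes\theta$ is not justified as written.

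The paper avoids this by never reasoning about the transcript distribution on $\omega\otimes\theta$ directly. It defines the truncated protocol $\mathcal{P}'$ as a fixed CPTP map $\mathcal{E}$ (aborting coherently whenever the prefix of the transcript leaves the good set $\G'$ determined by $p'$), and introduces an intermediate pure state $\pi$ obtained by renormalizing $\bar{\nu}$ onto $\G'$. By construction $\mathcal{E}(\pi)=\omega_{RBC_0A}$ exactly, while the Markov bound on $p'$ gives $\P(\pi,\bar{\nu})\leq\sqrt{\mu}$; the triangle inequality then yields $\P(\omega\otimes\theta,\pi)\leq\sqrt{8\eps/(e_d d)}+\sqrt{\mu}$, and a single application of monotonicity of $\mathcal{E}$ transfers this to the output. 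Your argument can be repaired along the same lines: analyze the output of your truncated protocol on $\bar{\nu}$ (where the message distribution really is $p'$, so your Markov step is valid and the conditional output on no-abort is exactly $\omega_{RBC_0A}$), and then use monotonicity of the protocol map together with $\P(\omega\otimes\theta,\bar{\nu})\leq\sqrt{8\eps/(e_d d)}$. The decomposition $\sigma=q\sigma_G+(1-q)\sigma_B$ on $\omega\otimes\theta$ is then unnecessary.
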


The next lemma obtains a lower bound on the worst case quantum communication cost of the Quantum state redistribution of $\omega_{RBCA}$.

\begin{lemma}
\label{redistworstcase}
Let $d>2^{18}$ be the dimension of the register $B$. Then the worst case quantum communication cost of any interactive entanglement-assisted quantum state redistribution protocol of the state $\omega_{RBCA}$, with error $\delta \in (0, \frac{1}{6})$, is at least $\frac{1}{6}\log(d)$.
\end{lemma}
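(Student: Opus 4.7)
The plan is to use a quantum-information-flow argument combined with Alicki--Fannes continuity. The rough idea is that the initial correlation $I(R{:}B)_\omega$ between Reference and Bob's side is only $\log d$, whereas the correctness of the protocol forces $I(R{:}BC_0)_{\omega'}$ to be essentially $I(R{:}BC)_\omega = \log d_a + 2\log d$; each qubit of communication can raise Bob's mutual information with $R$ by at most $2$, so the communication cost must be at least $\frac{1}{2}$ of the gap $\log(d\, d_a)$ up to a continuity error.

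\textbf{Step 1 (Entropies of $\omega_{RBCA}$).} For each $a$, $\ket{\omega^a}_{R'BC}$ is a GHZ state on orthonormal bases $\{\ket{u_j}\},\{\ket{v_j(a)}\},\{\ket{w_j(a)}\}$, so its three single-system marginals are maximally mixed, while the three two-system marginals are classical states on a $d$-point set. Using this and the classical $R_A$--$A$ correlation, I obtain (choosing the bases $v(a),w(a)$ generically so that the induced classical distributions have distinct support on $(R_A,B)$, $(R_A,R',B)$, and $(R_A,B,C)$)
\begin{equation*}
S(\omega_R)=\log(d_a d),\ S(\omega_B)=\log d,\ S(\omega_{RB})=\log(d_a d),\ S(\omega_{BC})=\log(d_a d),\ S(\omega_{RBC})=S(\omega_A)=\log d_a.
\end{equation*}
Hence $I(R{:}B)_\omega=\log d$ and $I(R{:}BC)_\omega=\log d_a+2\log d$.

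\textbf{Step 2 (Information flow).} I run the protocol on $\omega_{RBCA}\otimes\theta_{E_AE_B}$; Bob's initial side is $BE_B$, and because $\theta$ is independent of $R$ we have $I(R{:}BE_B)=I(R{:}B)_\omega=\log d$. In any interactive entanglement-assisted quantum protocol with worst-case quantum communication $k$, local operations on Alice's side do not touch Bob's system, local operations on Bob's side can only decrease $I(R{:}\text{Bob})$ by data processing, Bob-to-Alice transmissions shrink Bob's side and hence can only decrease $I(R{:}\text{Bob})$, and Alice-to-Bob transmission of $m$ qubits can raise $I(R{:}\text{Bob})$ by at most $2m$ (from the dimension bound $|S(RX)-S(RXM)|\leq \log|M|$ applied twice). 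Summing over all rounds gives $I(R{:}BC_0T_B)_{\omega'}\leq \log d+2k$, where $T_B$ is Bob's residual register at the end.

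\textbf{Step 3 (Continuity and conclusion).} From $\P(\omega'_{RBC_0A},\omega_{RBC_0A})\leq\delta^2\leq 1/36<1/(2e)$ and monotonicity, $\P(\omega'_{RBC_0},\omega_{RBC})\leq\delta^2$. Applying Fact~\ref{fact:fannes} to each of $R$, $BC_0$, $RBC_0$ and adding,
\begin{equation*}
I(R{:}BC_0)_{\omega'}\geq I(R{:}BC)_\omega-\delta^2\bigl(2\log d_a+6\log d\bigr)-3.
\end{equation*}
Combining with $I(R{:}BC_0T_B)\geq I(R{:}BC_0)$ and the flow bound gives
\begin{equation*}
2k\ \geq\ \bigl(1-2\delta^2\bigr)\log d_a\ +\ \bigl(1-6\delta^2\bigr)\log d\ -\ 3.
\end{equation*}
For $\delta<1/6$ the coefficients are positive and $1-6\delta^2>5/6$; dropping the nonnegative $\log d_a$ term yields $k\geq \tfrac{5}{12}\log d-\tfrac32$, which for $d>2^{18}$ comfortably exceeds $\tfrac16\log d$.

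\textbf{Main obstacle.} The delicate point is Step 2: carefully verifying in the interactive setting (with pre-shared entanglement and possibly many rounds of teleported messages) that quantum-mutual-information flow across the Alice/Bob cut is indeed bounded by twice the quantum communication, and in particular that the shared entanglement $\theta_{E_AE_B}$ does not already contribute to $I(R{:}BE_B)$. The entropy computation and Alicki--Fannes accounting are then routine arithmetic, with the generous condition $d>2^{18}$ providing ample slack.
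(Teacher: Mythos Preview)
Your approach is genuinely different from the paper's and, modulo one point below, sound. The paper does not argue from first principles: it invokes a result of Berta--Christandl--Touchette (their Proposition~2) which lower-bounds the worst-case communication by $\tfrac12\bigl(\imaxdelta{R}{BC}_{\omega}-\imax{R}{B}_{\omega}\bigr)$, and then bounds these two one-shot quantities separately. Your Step~2 is essentially the asymptotic (von Neumann) shadow of that same bound, proved directly via the standard ``each qubit crosses the cut increases $\mutinf{R}{\cdot}$ by at most $2$'' argument together with Alicki--Fannes continuity. This is more self-contained, and your identification of Step~2 as the delicate step is accurate though the argument is standard and goes through exactly as you sketch.

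The one substantive wrinkle is your entropy computation $S(\omega_{BC})=\log(d_a d)$, equivalently $\mutinf{R}{BC}_{\omega}=\log d_a+2\log d$. This is \emph{not} true for arbitrary bases $\{\ket{v_j(a)}\},\{\ket{w_j(a)}\}$ in Definition~\ref{staterediststate}: if, say, the bases are the same for every $a$, then $\omega_{BC}$ has entropy only $\log d$ and $\mutinf{R}{BC}_{\omega}=2\log d$. Your parenthetical ``choosing the bases generically'' is thus an extra hypothesis not present in the definition (and it also forces $d_a\le d$, since $|BC|=d^2$). Without it you still get $\mutinf{R}{BC}_{\omega}\ge 2\log d$ from Fact~\ref{cqmutinf}, but then your continuity bound contributes a stray $-2\delta^2\log d_a$ that is not absorbed. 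The paper's proof exhibits the same fragility --- its ``$3\delta\log(d)$'' continuity term silently drops the $\log d_a$ contribution --- so your argument is no weaker than the paper's; for the application in Theorem~\ref{thm:coherentmainagain} one is free to construct the state, so taking $d_a$ small (even $d_a=1$) removes the issue entirely.
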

\begin{proof}
	As shown in~\cite{Berta14}( Section $5$, Proposition $2$), the worst quantum communication cost for the quantum state redistribution of the state $\omega_{RBCA}$, with error $\delta$ is lower bounded by $$\frac{1}{2}\br{\imaxdelta{R}{BC}_{\omega}-\imax{R}{B}_{\omega}}.$$
	From Definition~\ref{staterediststate},  $\omega_{RBC}=\frac{1}{d_a}\sum_{a=1}^{d_a} \ketbra{a}_{R_A}\otimes\omega^a_{R'BC}$ is a \textit{classical-quantum} state. Consider, 
\begin{eqnarray*}
\imaxdelta{R}{BC}_{\omega} &\geq& \inf_{\rho_{RBC}\in \ball{\delta}{\omega_{RBC}}}\mutinf{R}{BC}_{\rho}\\ &\geq& \inf_{\rho_{R}\in \ball{\delta}{\omega_{R}}}S(\rho_R) + \inf_{\rho_{BC}\in \ball{\delta}{\omega_{BC}}}S(\rho'_{BC}) - \sup_{\rho_{RBC}\in \ball{\delta}{\omega_{RBC}}}S(\rho_{RBC}) \\ &\geq& \mutinf{R}{BC}_{\omega} - 3\delta\log(d) - 3 \quad (\text{Fact \ref{fact:fannes}})\\ &\geq& \frac{1}{d_a}\sum_{a}\mutinf{R'}{BC}_{\omega^a} - 3\delta\log(d)-3 \quad (\text{Fact \ref{cqmutinf}}) \\ &=& 2\log(d)-3\delta\log(d)-3.
\end{eqnarray*}
To bound $\imax{R}{B}_{\omega}$, note that $\omega_{RB}=\frac{1}{d\cdot d_a}\sum_{a=1}^{d_a}\sum_{j=1}^d\ketbra{a}_{R_A}\otimes\ketbra{u_j}_{R'}\otimes\ketbra{v_j(a)}_{B}$ is also a \textit{classical-quantum} state. Using Fact \ref{fact:cqimax}, we obtain $\imax{R}{B}_{\omega} \leq \log(|B|) = \log(d)$. Thus, the worst case quantum communication cost is lower bounded by $$\frac{1}{2}\br{\imaxdelta{R}{BC}_{\omega}-\imax{R}{B}_{\omega}}\geq \frac{\log(d)-3\delta\log(d)-3}{2}=\frac{1-3\delta}{2}\log(d) - 1.5 > \frac{1}{6}\log(d),$$ for $d>2^{18}$.
\end{proof}

Now, we are in a position to prove Theorem \ref{thm:coherentmainagain}.

\begin{proof}[Proof of Theorem~\ref{thm:coherentmainagain}]
	Suppose there exists an $r$-round entanglement-assisted communication protocol $\mathcal{P}_3$ for the Quantum state redistribution of the state $\ket{\Psi_{RBCA}}$ with error $\eps$ and expected communication cost at most $\condmutinf{R}{C}{B}_{\Psi}\cdot (\frac{1}{\eps})^p$. Then we show a contradiction for $p< 1$.
	
Let $d>2^{18}$, $\mu\defeq 32\cdot \epsilon^{\frac{1-p}{2}}$ and $\beta\defeq\frac{128}{\mu\epsilon^p}$. We choose $\{e_1,e_2\ldots e_d\}$ (Definition \ref{staterediststate}) as constructed in Lemma \ref{lowentropy}. Thus, $$\condmutinf{R}{C}{B}_{\Psi}\leq 2S(\Psi_C)\leq 4\frac{\log(d)}{\beta} \quad \text{(Fact \ref{informationbound})}.$$ 
	Fix a $\mu \in (0,1)$. From Lemma \ref{exptoworst}, there exists a communication protocol $\mathcal{P}'$ for the quantum state redistribution of $\omega_{RBCA}$, with error at most $\sqrt{\mu}+ \sqrt{8\beta\eps}=8\sqrt{2}\cdot\eps^{\frac{1-p}{4}}$ and the worst case quantum communication cost at most 
$$\frac{2\cdot\condmutinf{R}{C}{B}_{\Psi}}{\mu(1-\eps)}\cdot (\frac{1}{\eps})^p\leq 8\frac{\log(d)}{\beta\mu(1-\eps)}\cdot (\frac{1}{\eps})^p \leq 16\frac{\log(d)}{\beta\mu}\cdot (\frac{1}{\eps})^p,$$
where the last inequality holds since $\eps<1/2$. Note that $\eps \in [0, (\frac{1}{70})^{\frac{4}{1-p}}]$. Thus, we have a protocol for the quantum state redistribution of $\omega_{RBCA}$, with error at most $8\sqrt{2}\cdot\eps^{\frac{1-p}{4}} < \frac{1}{6}$ and worst case communication at most $\frac{1}{8}\log(d)$, in contradiction with Lemma~\ref{redistworstcase}.  
\end{proof}

\section{Example for one-shot quantum channel simulation}
\label{reverseshannon}

An entanglement-assisted protocol $\mathcal{P}_4$ for communicating a $m$-bit classical message over a quantum channel $\E: \cL(\H_A)\rightarrow \cL(\H_B)$ is as follows. Alice holds a register $A'$ and Bob holds a register $B'$ such that the quantum state in these registers is $\ket{\theta}_{A'B'}$. Based on an input $x\in [2^m]$, Alice applies a map $\mathcal{A}_x: \cL(\H_{A'})\rightarrow \cL(\H_A)$ and sends the register $A$ through the channel $\E$. Upon receiving the output from the channel $\E$, Bob applies a decoding map $\mathcal{B}: \cL(\H_{BB'}) \rightarrow \cL(\H_{X'})$, where $X'$ is the output register with $|X'|= 2^m$ and a fixed basis $\{\ket{x}\}_{x\in [2^m]}$. Let the output of the protocol be $\rho^x_{X'}$. The error of the protocol, given an input $x$, is $\P^2(\ketbra{x}_{X'}, \rho^x_{X'})$. The worst case error of protocol is defined as
$\eta_{\mathcal{P}_4}\defeq \text{max}_{x}\P^2(\ketbra{x}_{X'}, \rho^x_{X'})$. 

The one-shot $\eta$-error entanglement-assisted classical capacity of channel $\E$ is the largest $m$ such that there exists a protocol $\mathcal{P}_4$ that communicates a $m$-bit classical message over $\E$ with worst case error $\eta_{\mathcal{P}_4}\leq \eta$.

The simulation of a quantum channel $\E$ can be regarded as a converse to the transmission of a message using the channel. The task is defined as follows.

\begin{task}[\textbf{Entanglement-assisted quantum channel simulation}]
\label{def:qreverseshannon}
Fix a quantum channel $\E:\cL(\H_A)\rightarrow \cL(\H_B)$ and $\eta\in (0,1)$. Alice receives input $\rho_A\in \cD(\H_A)$. Bob needs to output a quantum state $\sigma_B \in \cD(\H_B)$ such that $\P(\E(\rho_A),\sigma_B)\leq \eta$.
\end{task}

The parameter $\eta$ is referred to as the error of simulation of quantum channel $\E$. A general (possibly interactive) protocol $\Q$ for simulating quantum channel $\E$ follows the description as given in Section \ref{sec:interactive} with the final condition of the correctness changed, accordingly. Its expected communication cost is defined in the similar manner. For a given protocol $\Q$, we define the \textit{simulation cost} of $\Q$ as the maximum expected communication cost of $\Q$ over all inputs $\rho_A$.

In this section, we construct an example of a quantum channel which requires large amount of expected communication for its simulation. It uses the construction presented in Section \ref{sec:example} and we carry over the notations from Section \ref{sec:example}. 

As assumed in Lemma \ref{lem:uniformdmax}, we set $\eps=\frac{1}{d}$ and $m=\N_{\eps} = 8d^7$. Let $A$ be a register such that $|A|=m$ and let $\H_A$ have the basis $\set{\ket{j}}_{1\leq j\leq m}$. We consider the following classical-quantum channel $\E:\cL(\H_A)\rightarrow \cL(\H_B)$, $$\E(\rho)=\sum_{j} \bra{j}\rho\ket{j}\ketbra{\Psi_j}.$$ The entanglement-assisted classical capacity of this channel is given in~\cite{Bennett02} as $$\mathcal{C}(\E)\defeq \text{max}_{\rho_A}\mutinf{R}{A}_{\E(\rho_{RA})},$$ where $\rho_{RA}$ is a purification of $\rho_A$ on register $R$.

\begin{lemma}
	The entanglement-assisted classical capacity $\mathcal{C}(\E)$ of channel $\E$ is upper bounded by $\delta\log(d)+H(\delta)+2$.
\end{lemma}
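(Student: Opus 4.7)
The plan is to bound $\mathcal{C}(\E)$ by the maximum output entropy $\max_{q} S\br{\sum_j q(j)\Psi_j}$, and then to bound that quantity uniformly in the input distribution by replacing the Alicki--Fannes step used in \Cref{mutinfbound} with a unital-map monotonicity argument (which, unlike Alicki--Fannes, loses nothing and handles every distribution, not only the uniform one).

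For the first step, given any purified input $\rho_{RA}$ with $q(j)=\bra{j}_A\rho_A\ket{j}_A$, I would write the output explicitly as $\E(\rho_{RA})=\sum_j q(j)\,\sigma_j^R\otimes\Psi_j$ with $q(j)\sigma_j^R=\bra{j}_A\rho_{RA}\ket{j}_A$, and attach a classical register $J$ recording the index $j$ to obtain the extension $\rho_{RBJ}=\sum_j q(j)\,\sigma_j^R\otimes\Psi_j\otimes\ketbra{j}_J$. Purity of each $\Psi_j$ forces $S(\rho_{RBJ})=S(\rho_{RJ})=H(q)+\sum_j q(j)\,S(\sigma_j^R)$, so $S(B|RJ)=0$; strong subadditivity (applied with $R$ as the middle system) then gives $S(B|R)\geq S(B|RJ)=0$, which rearranges to
\[
\mutinf{R}{B}_{\E(\rho_{RA})}\leq S(\rho_B).
\]
Taking the supremum over $\rho_A$ and noting $\rho_B=\sum_j q(j)\Psi_j$ yields $\mathcal{C}(\E)\leq \max_{q} S\br{\sum_j q(j)\Psi_j}$.

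For the second step, let $\Delta$ be the dephasing channel on $\H$ with Kraus operators $\ketbra{0}$ and $P$; since $\ketbra{0}+P=\id$ it is CPTP and unital, hence $S(\Delta(\tau))\geq S(\tau)$ for every $\tau$. Using $\ket{\Psi_j}=\sqrt{1-\delta}\ket{0}+\sqrt{\delta}\ket{x_j}$ with $\ket{x_j}\in V$, the dephasing kills the cross terms and yields $\Delta(\Psi_j)=(1-\delta)\ketbra{0}+\delta\ketbra{x_j}$, so
\[
\Delta\br{\sum_j q(j)\Psi_j}=(1-\delta)\ketbra{0}+\delta\, M_q, \qquad M_q\defeq \sum_j q(j)\ketbra{x_j},
\]
where $M_q$ is a density operator supported on the $(d-1)$-dimensional subspace $V$ orthogonal to $\ket{0}$. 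Since the two summands live on orthogonal subspaces, this block-diagonal state has entropy exactly $H(\delta)+\delta\,S(M_q)\leq H(\delta)+\delta\log d$. Unitality of $\Delta$ upgrades this to $S\br{\sum_j q(j)\Psi_j}\leq H(\delta)+\delta\log d$ for every $q$, and combining with Step~1 gives $\mathcal{C}(\E)\leq \delta\log d+H(\delta)\leq \delta\log(d)+H(\delta)+2$.

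The main obstacle is the first step: the generic bound $\mutinf{R}{B}\leq 2S(\rho_B)$ from \Cref{informationbound} would double the $\delta\log d$ term and break the statement, so one genuinely needs to exploit that $\E$ has \emph{pure} outputs $\Psi_j$—this is exactly what forces $S(B|RJ)$ to vanish after conditioning on the classical index $J$, and hence what upgrades the factor of two to a factor of one. Once that is in place, Step~2 is essentially a cleaner substitute for the Alicki--Fannes argument used to prove \Cref{mutinfbound}, exchanging an approximate trace-distance estimate for an exact unital-channel monotonicity that applies to every distribution $q$.
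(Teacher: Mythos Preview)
Your proof is correct, and Step~2 takes a genuinely different---and cleaner---route than the paper's.

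For Step~1, the paper simply asserts $\mutinf{R}{B}_{\E(\rho_{RA})}\le S(\E(\rho_A))$; your argument via the auxiliary classical register $J$, the purity of each $\Psi_j$, and strong subadditivity supplies an explicit justification. (An alternative one-liner is that $\sum_j q(j)\sigma_j^R\otimes\Psi_j$ is separable, so $S(B|R)\ge 0$.)

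For Step~2 the two approaches diverge. The paper argues that concavity of $q\mapsto S\br{\sum_j q(j)\Psi_j}$ forces the maximum to sit at the uniform distribution, and then invokes \Cref{mutinfbound} (the Alicki--Fannes estimate for that specific distribution). That inference from concavity alone is not justified---concavity over the simplex does not single out the uniform point without an additional symmetry argument, which is absent here. Your route sidesteps this entirely: the unital dephasing $\Delta$ with Kraus operators $\ketbra{0},P$ gives $S\br{\sum_j q(j)\Psi_j}\le S\br{\Delta\br{\sum_j q(j)\Psi_j}}=H(\delta)+\delta\,S(M_q)\le H(\delta)+\delta\log d$ for \emph{every} $q$, so no maximizer needs to be identified. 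This both closes the gap and yields the sharper bound $\mathcal{C}(\E)\le \delta\log d+H(\delta)$, with the ``$+2$'' unused.
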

\begin{proof}
	Since $\E$ is a classical-quantum channel, the quantum state $\E(\rho_{RA})$ is a classical quantum state. More specifically, we have 
	$$\E(\rho_{RA})= \sum_{j}\bra{j}_A\ket{\rho}_{RA}\bra{\rho}_{RA}\ket{j}_A\otimes\ketbra{\Psi_j}_B =  \sum_{j} \rho^{\frac{1}{2}}\ketbra{j}_R\rho^{\frac{1}{2}}\otimes\ketbra{\Psi_j}_B.$$ 
	
	Note that $\mutinf{R}{A}_{\E(\rho_{RA})}\leq S(\E(\rho_A))$. This implies
	$$\mathcal{C}(\E)\leq \text{max}_{\rho_A} \mathrm{S}\br{\sum_j\bra{j}\rho\ket{j}\ketbra{\Psi_j}} = \text{max}_{\mu(j)}\mathrm{S}\br{\sum_j \mu(j)\ketbra{\Psi_j}}.$$
	The equality above holds since the quantity $\bra{j}\rho\ket{j}$ can be viewed as a probability distribution over indices in $\{1,2\ldots m\}$. Now, let $\mu(j),\mu'(j)$ be any two distributions over $\{1,2\ldots m\}$. Then it holds that 
	\begin{eqnarray*}
		&&\mathrm{S}\br{\sum_j (\alpha\mu(j)+(1-\alpha}\mu'(j))\ketbra{\Psi_j})\\&=&\mathrm{S}\br{\alpha\sum_j \mu(j)\ketbra{\Psi_j}+(1-\alpha)\sum_j \mu'(j)\ketbra{\Psi_j}}\\&\geq& \alpha\cdot \mathrm{S}\br{\sum_j\mu(j)\ketbra{\Psi_j}}+(1-\alpha)\cdot \mathrm{S}\br{\sum_j\mu'(j)\ketbra{\Psi_j}}
	\end{eqnarray*}
	
	This implies that the desired distribution achieving the maximum is the uniform distribution over $\{1,2\ldots K\}$. But for such a distribution, we have $$\text{max}_{\mu(j)}\mathrm{S}\br{\sum_j \mu(j)\ketbra{\Psi_j}}=\mathrm{S}\br{\sum_j\frac{1}{m}\ketbra{\Psi_j}}.$$ As computed in Lemma~\ref{mutinfbound}, this entropy is upper bounded by $\delta\log(d)+H(\delta)+2$. This proves the lemma.
\end{proof}

Now we are in a position to prove the main result of this section.

\begin{theorem}
	\label{reverseshannonseparationgeneral}
	Fix a positive integer $d>4$ and a $\delta \in (0,\frac{1}{4})$. There exists a register $A$ such that $|A| = 8d^7$ and a channel $\E:\cL(\H_A)\rightarrow \cL(\H_B)$ with one-shot $\eta$-error entanglement-assisted classical capacity upper bounded by $$ \frac{\delta\log(d)+H(\delta)+H(\eta)+2}{1-\eta},$$ such that for any protocol achieving the simulation of above channel with error at most $\eta$, following holds:
	\begin{itemize}
		\item If the protocol is one-way and $\eta\in (0,\br{\frac{\delta}{8}}^2)$, then simulation cost of the protocol is at least $(1-\sqrt{\eta})^2 \log\br{\frac{d\delta}{128}}$.
		\item If the protocol is interactive with $r$-rounds and $\eta\in (0,\br{\frac{\delta}{8}}^2)$, the simulation cost is lower bounded by $\frac{1}{20}\cdot\frac{\log\frac{d\delta}{128}}{\log r}$.
		\item If $\eta\in (0, \br{\frac{\delta}{8}}^4)$ and the protocol is interactive, the simulation cost of the protocol is lower bounded by 
		$$\frac{1}{30}\cdot\frac{\log\frac{d\delta}{128}}{\log\log d-2\log\eta}.$$
	\end{itemize}  
\end{theorem}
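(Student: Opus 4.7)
The plan is to reuse the ensemble $\{(1/m, \ket{\Psi_j})\}_{j=1}^{m}$ constructed in Section \ref{sec:example} (with $m=8d^7$ and the choice $\varepsilon=1/d$ from Lemma \ref{lem:uniformdmax}), together with the classical-quantum channel $\E(\rho)=\sum_j \bra{j}\rho\ket{j}\ketbra{\Psi_j}$ defined just above the theorem. The statement decomposes cleanly into two essentially independent parts: a one-shot capacity upper bound, and three lower bounds on simulation cost. I would dispatch them separately.

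For the capacity upper bound, the strategy is to combine the preceding lemma, which gives $\mathcal{C}(\E)\leq \delta\log d+H(\delta)+2$, with a Fano-type converse. Given any protocol $\mathcal{P}_4$ transmitting a uniformly random $m$-bit message $X$ through a single use of $\E$ with worst-case error $\eta$, the definition $\eta_{\mathcal{P}_4}=\max_x \P^2(\ketbra{x},\rho^x_{X'})\leq\eta$ implies $\Pr[\hat X\neq X]\leq \eta$ for the decoded message $\hat X$. Fano's inequality then yields $H(X\mid\hat X)\leq H(\eta)+\eta m$, hence $I(X:\hat X)\geq m(1-\eta)-H(\eta)$. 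I would then use data-processing through Bob's decoder to get $I(X:\hat X)\leq I(X:BB')$, and observe that the marginal of the shared entanglement on $B'$ is independent of $X$, so $I(X:BB')=I(X:B\mid B')\leq I(XB':B)\leq \mathcal{C}(\E)$ (the last step being the standard single-letter converse for entanglement-assisted coding, with $XB'$ playing the role of the reference). Rearranging produces $m\leq (\mathcal{C}(\E)+H(\eta))/(1-\eta)$, which matches the claimed bound.

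For the three lower bounds on simulation cost, the plan is to reduce channel simulation for $\E$ directly to the classical-quantum state transfer task for the ensemble of Section \ref{sec:example}, and then quote Theorem \ref{thm:maingeneral} verbatim. Given any simulation protocol $\mathcal{Q}$ for $\E$ with error at most $\eta$ and simulation cost $C$, I would feed $\mathcal{Q}$ the basis input $\rho_A=\ketbra{j}$ for each $j\in[m]$; since $\E(\ketbra{j})=\ketbra{\Psi^j}$, Task \ref{def:qreverseshannon} guarantees $\P(\Psi^j,\sigma^j_B)\leq\eta$, hence $\F^2(\Psi^j,\sigma^j_B)\geq 1-\eta^2$. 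Averaging under the uniform distribution gives $\sum_j (1/m)\F^2(\Psi^j,\sigma^j_B)\geq 1-\eta^2$, and the definition of simulation cost ensures the expected communication cost on each basis input is at most $C$. This is precisely a classical-quantum state transfer protocol for $\{(1/m,\Psi^j)\}$ with average error at most $\eta$ and expected cost at most $C$, so the three items of Theorem \ref{thm:maingeneral} apply with no parameter changes and lower bound $C$ by the three advertised expressions, the ranges of $\eta$ in the present theorem coinciding with those in Theorem \ref{thm:maingeneral}.

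Both halves are reductions to already-established results, so I do not anticipate a substantive obstacle. The only step requiring minor care is the data-processing chain $I(X:\hat X)\leq I(X:BB')\leq I(XB':B)\leq \mathcal{C}(\E)$ in the Fano argument; this is the textbook converse for entanglement-assisted classical coding and uses only mutual-information identities already implicit in the preceding sections.
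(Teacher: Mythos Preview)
Your proposal is correct and matches the paper's proof almost exactly: the lower-bound half is identical (reduce to classical-quantum state transfer on the uniform basis inputs and invoke Theorem~\ref{thm:maingeneral}), and for the capacity upper bound the paper simply cites an external result (Lemma~30 of \cite{WehnerMatthews14}) that gives $m\le(\mathcal{C}(\E)+H(\eta))/(1-\eta)$, whereas you reprove that same bound inline via the standard Fano/data-processing converse. The two approaches are substantively the same.
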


\begin{proof}
	From Lemma 30 in \cite{WehnerMatthews14}, the one-shot entanglement-assisted classical capacity of a channel $\E$ with error $\eta$ is upper bounded by $$\frac{\mathcal{C}(\E)+H(\eta)}{1-\eta} < \frac{\delta\log(d)+H(\delta)+H(\eta)+2}{1-\eta}.$$ 
	
	On the other hand, consider a protocol that simulates the action of the channel $\E$ with error $\eta$. Simulation cost of the protocol is maximum expected communication cost over all inputs to the channel. Thus, it is lower bounded by expected communication cost when inputs are given according to a fixed distribution. We consider a distribution over inputs as follows: Alice receives a $\ketbra{j}$ with probability $\frac{1}{m}$. The channel outputs the state $\ketbra{\Psi_j}$ and the protocol must simulate this output with error at most $\eta$. It is now easy to observe that the lower bound of Theorem \ref{thm:maingeneral} applies for respective choice of parameters, which proves the theorem.
\end{proof}

\section{Conclusion}\label{sec:conclusion}

In this work, we have studied the expected communication cost of three quantum tasks: Classical-quantum state transfer (Task \ref{def:quantumhuffman}), classical-quantum state splitting (Task \ref{def:entsharing}) and quantum state redistribution (Task \ref{def:qstateredist}). We have given a nearly optimal characterization of the expected communication cost of the classical-quantum state splitting task. For its special case of classical-quantum state transfer and the task of quantum state redistribution, we have shown large separations between the expected communication cost and the quantum information cost (which is the worst case quantum communication cost in the asymptotic and i.i.d. setting).  As an application of our main results, we show that in the one-shot setting, quantum channels cannot be simulated with an expected communication cost as small as their entanglement-assisted classical capacity. 

We have following questions for the future research.
\begin{itemize}
\item Theorem~\ref{thm:maingeneral} has a dependence on the number of rounds. We get rid of this dependence at the expense of weaker lower bound on the expected communication cost. But we conjecture that our techniques are not optimal and interactions cannot reduce the expected communication costs.

\item Is there an operational interpretation of the fundamental quantum information theoretic quantities in the one-shot settings?  Our result says that expected communication cost is not the right notion, but naturally we cannot rule out other notions.
\end{itemize}

\section*{Acknowledgment}

A.A., A.G. and P.Y. would like to thank the Institute for Mathematical Science, Singapore for their hospitality and their organized workshop "Semidefinite and Matrix Methods for Optimization and Communication". A.A would like to thank the Institute for Quantum Computing, University of Waterloo for their hospitality, where part of this work was done. We thank Dave Touchette for helpful comments on the manuscript. A.A. thanks Rahul Jain and Guo Yalei for helpful discussions. A.G. thanks Mohammed Bavarian and Henry Yuen for helpful discussions.

A.A. is supported by the National Research Foundation, Prime Minister’s Office, Singapore and the Ministry of Education, Singapore under the Research Centres of Excellence programme. Most of this research was done when A.G. was a graduate student at Princeton University and his research was partially supported by NSF grants CCF-1149888 and CCF-1525342, a
Simons fellowship for graduate students in theoretical computer science and a Siebel scholarship.  AWH is funded by NSF grants
CCF-1629809 and CCF-1452616. P.Y. is supported by the Department of Defense. Part of this work was done when P.Y.
was a postdoctoral fellow at IQC supported by NSERC and CIFAR.

\bibliographystyle{alpha}
\bibliography{references}
\appendix

\suppress{
\section{Proof of Lemma \ref{probboundinteractive}}
\label{proof_probboundinteractive}
\begin{proof}
We consider the following chain of L\"owener inequalities.
\begin{eqnarray*}
&&\Tr_A\br{\phi^{x,i_r,i_{r-1}\ldots i_1}}\\
&=&\frac{\Tr_A\br{M^{i_r,i_{r-1}\ldots i_1}_AM^{i_{r-1}\ldots i_1}_B\ldots M^{i_1}_A\theta_{AB}\br{M^{i_1}_A}^{\dagger}\ldots \br{M^{i_{r-1}\ldots i_1}_B}^{\dagger}\br{M^{i_r,i_{r-1}\ldots i_1}_A}^{\dagger}}}{p^x_{i_r,i_{r-1}\ldots i_1}} \\
&\leq& \frac{\sum_{i_r'}\Tr_A\br{M^{i_r',i_{r-1}\ldots i_1}_AM^{i_{r-1}\ldots i_1}_B\ldots M^{i_1}_A\theta_{AB}\br{M^{i_1}_A}^{\dagger}\ldots \br{M^{i_{r-1}\ldots i_1}_B}^{\dagger}\br{M^{i_r',i_{r-1}\ldots i_1}_A}^{\dagger}}}{p^x_{i_r,i_{r-1}\ldots i_1}} \\ 
&=& \frac{\Tr_A\br{M^{i_{r-1}\ldots i_1}_BM^{i_{r-2}\ldots i_1}_A\ldots M^{i_1}_A\theta_{AB}\br{M^{i_1}_A}^{\dagger}\ldots M^{i_{r-2}\ldots i_1\dagger}_A\br{M^{i_{r-1}\ldots i_1}_B}^{\dagger}}}{p^x_{i_r,i_{r-1}\ldots i_1}} \\ 
&\leq& \frac{\sum_{i_{r-2}'}\Tr_A\br{M^{i_{r-1}\ldots i_1}_BM^{i_{r-2}'\ldots i_1}_A\ldots M^{i_1}_A\theta_{AB}\br{M^{i_1}_A}^{\dagger}\ldots M^{i_{r-2}'\ldots i_1\dagger}_A\br{M^{i_{r-1}\ldots i_1}_B}^{\dagger}}}{p^x_{i_r,i_{r-1}\ldots i_1}} \\ 
&=& \frac{\sum_{i_{r-2}'}\Tr_A\br{M^{i_{r-1}\ldots i_1}_BM^{i_{r-2}'\ldots i_1\dagger}_AM^{i_{r-2}'\ldots i_1}_A\ldots M^{i_1}_A\theta_{AB}\br{M^{i_1}_A}^{\dagger}\ldots \br{M^{i_{r-1}\ldots i_1}_B}^{\dagger}}}{p^x_{i_r,i_{r-1}\ldots i_1}} \\ &&(\text{Alice's operations and Bob's operations commute, and partial trace is cyclic for Alice}) \\&=& \frac{\Tr_A\br{M^{i_{r-1}\ldots i_1}_BM^{i_{r-3}\ldots i_1}_B\ldots M^{i_1}_A\theta_{AB}\br{M^{i_1}_A}^{\dagger}\ldots M^{i_{r-3}\ldots i_1\dagger}_B\br{M^{i_{r-1}\ldots i_1}_B}^{\dagger}}}{p^x_{i_r,i_{r-1}\ldots i_1}} \\ &&(\text{Continuing the same way for all operations of Alice}) \\ 
&\leq& \frac{\Tr_A\br{M^{i_{r-1}\ldots i_1}_BM^{i_{r-3}\ldots i_1}_B\ldots M^{i_2,i_1}_B\theta_{AB}\br{M^{i_2,i_1}_B}^{\dagger}\ldots \br{M^{i_{r-3}\ldots i_1}_B}^{\dagger}\br{M^{i_{r-1}\ldots i_1}_B}^{\dagger}}}{p^x_{i_r,i_{r-1}\ldots i_1}}\\ &=& \frac{M^{i_{r-1}\ldots i_1}_BM^{i_{r-3}\ldots i_1}_B\ldots M^{i_2,i_1}_B\theta_{B}\br{M^{i_2,i_1}_B}^{\dagger}\ldots \br{M^{i_{r-3}\ldots i_1}_B}^{\dagger}\br{M^{i_{r-1}\ldots i_1}_B}^{\dagger}}{p^x_{i_r,i_{r-1}\ldots i_1}} \\ 
&\leq& \frac{\sum_{i_{r-1}'} M^{i_{r-1}',i_{r-2}\ldots i_1}_B\br{\ldots\br{\sum_{i_2'} M^{i_2',i_1}_B\theta_{B}\br{M^{i_2',i_1}_B}^{\dagger}}\ldots}\br{M^{i_{r-1}',i_{r-2}\ldots i_1}_B}^{\dagger}}{p^x_{i_r,i_{r-1}\ldots i_1}} \\ &&(\text{Adding positive operators to make numerator a quantum state})
\end{eqnarray*}

Now define $$\sigma^{i_{r-2}\ldots i_1} \defeq \sum_{i_{r-1}'} M^{i_{r-1}',i_{r-2}\ldots i_1}_B\br{\ldots\br{\sum_{i_2'} M^{i_2',i_1}_B\theta_{B}\br{M^{i_2',i_1}_B}^{\dagger}}\ldots}\br{M^{i_{r-1}',i_{r-2}\ldots i_1}_B}^{\dagger}$$ 
which is clearly independent of $x$. Then we have 
$p^x_{i_r,i_{r-1}\ldots i_1} \leq 2^{-\dmax{\Tr_A(\phi^{x,i_r,i_{r-1}\ldots i_1})}{\sigma^{i_{r-2}\ldots i_1}}}$. Applying Bob's final unitary and tracing out register $B'$, we obtain a state $$\omega^{i_r,i_{r-1}\ldots i_1}\defeq \Tr_{B'}(U_{i_r,i_{r-1}\ldots i_1}\sigma^{i_{r-2}\ldots i_1}U^{\dagger}_{i_r,i_{r-1}\ldots i_1})$$ which is independent of $x$. Now using monotonicity of max-entropy under quantum operations (Fact \ref{fact:monotonequantumoperation}), we find that
$$p^x_{i_r,i_{r-1}\ldots i_1} \leq 2^{-\dmax{\tau_C^{x,i_r,i_{r-1}\ldots i_1}}{\omega^{i_r,i_{r-1}\ldots i_1}}}.$$ This proves the lemma.

\end{proof}
}

\section{Proof of Claim~\ref{exactoverlap}}
\label{proof_claims}

\begin{proof}[Proof of Claim \ref{exactoverlap}]
Let $\ket{\lambda_i}$ be the state that achieves the infimum in the definition of $S^{\etaa}(\Psi_i||Q^{-})$. We know that $\ket{\lambda_i}$ satisfies $|\braket{\lambda}{\Psi_i}|^2>1-\etaa$ and also minimizes the overlap with the subspace $Q^{-}$. Intuitively, this state must lie in the span of two vectors $\{Q^{-}\ket{\Psi_i},Q^{+}\ket{\Psi_i}\}$, which is shown to be true below. 

Let us assume $$\ket{\lambda_i}=aQ^{-}\ket{\Psi_i}+bQ^{+}\ket{\Psi_i}+c\ket{\theta},$$ where $\ket{\theta}$ is a normalized vector orthogonal to $\{Q^{-}\ket{\Psi_i},Q^{+}\ket{\Psi_i}\}$. Then we have: 
\begin{equation}
\label{conditions}
|a|^2\bra{\Psi_i}Q^{-}\ket{\Psi_i}+|b|^2\bra{\Psi_i}Q^{+}\ket{\Psi_i}+|c|^2=1,\quad |a\bra{\Psi_i}Q^{-}\ket{\Psi_i}+b\bra{\Psi_i}Q^{+}\ket{\Psi_i}|>\sqrt{1-\etaa}
\end{equation}
 where the equality is a  normalization condition and the inequality indicates that the overlap between $\ket{\lambda_i}$ and $\ket{\Psi_i}$ is at least $\sqrt{1-\etaa}$. We need to minimize the function
\begin{equation}
\label{objective}
 \bra{\lambda_i}Q^{-}\ket{\lambda_i}=\bra{\lambda_i}(aQ^{-}\ket{\Psi_i}+cQ^{-}\ket{\theta})=|a|^2\bra{\Psi_i}Q^{-}\ket{\Psi_i}+|c|^2\bra{\theta}Q^{-}\ket{\theta},
\end{equation}
 where we have used $\bra{\Psi_i}Q^{-}Q^{-}\ket{\theta}=0$. 

First we show that $a,b,c$ can be chosen to be real. We can $c$ to be real without loss of generality as only $\abs{c}^2$ appears in Equations \ref{conditions} and \ref{objective}. The only place where $a,b$ appear as complex numbers is in the constraint $|a\bra{\Psi_i}Q^{-}\ket{\Psi_i}+b\bra{\Psi_i}Q^{+}\ket{\Psi_i}|>\sqrt{1-\etaa}$ (Equation \ref{objective}). Let $a=a_R+ia_I, b=b_R+ib_I$, where $a_R, a_I, b_R, b_I$ are real numbers. Then 
\begin{eqnarray*}
&&|a\bra{\Psi_i}Q^{-}\ket{\Psi_i}+b\bra{\Psi_i}Q^{+}\ket{\Psi_i}|^2 \\&&= (a_R\bra{\Psi_i}Q^{-}\ket{\Psi_i}+b_R\bra{\Psi_i}Q^{+}\ket{\Psi_i})^2+ (a_I\bra{\Psi_i}Q^{-}\ket{\Psi_i}+b_I\bra{\Psi_i}Q^{+}\ket{\Psi_i})^2\\&&= |a|^2\bra{\Psi_i}Q^{-}\ket{\Psi_i}^2+
 |b|^2\bra{\Psi_i}Q^{+}\ket{\Psi_i}^2 + 2(a_Rb_R+a_Ib_I)\bra{\Psi_i}Q^{-}\ket{\Psi_i}\bra{\Psi_i}Q^{+}\ket{\Psi_i}\\&&\leq
|a|^2\bra{\Psi_i}Q^{-}\ket{\Psi_i}^2+
 |b|^2\bra{\Psi_i}Q^{+}\ket{\Psi_i}^2 + 2(\sqrt{a_R^2+a_I^2}\sqrt{b_R^2+b_I^2})\bra{\Psi_i}Q^{-}\ket{\Psi_i}\bra{\Psi_i}Q^{+}\ket{\Psi_i}\\ &&= |a|^2\bra{\Psi_i}Q^{-}\ket{\Psi_i}^2+
 |b|^2\bra{\Psi_i}Q^{+}\ket{\Psi_i}^2 + 2|a||b|\bra{\Psi_i}Q^{-}\ket{\Psi_i}\bra{\Psi_i}Q^{+}\ket{\Psi_i}\\ &&= (|a|\bra{\Psi_i}Q^{-}\ket{\Psi_i}+|b|\bra{\Psi_i}Q^{+}\ket{\Psi_i})^2.
\end{eqnarray*}
Thus, changing $a,b$ to $|a|,|b|$ does not change the objective function (Equation \ref{objective}) and ensures that the constraints in Eq.~\eqref{conditions} are still satisfied. As a result, we can assume that $a$ and $b$ are both real, without loss of generality.

\bigskip
\begin{figure}[ht]
\centering
\begin{tikzpicture}[xscale=1.2,yscale=1.4]
\draw (3,0)--(3,4);
\draw (-1,2)--(7,2);
\draw[very thick] (1,2) to [out=90,in=90] (5,2); 
\draw[very thick] (5,2) to [out=270,in=270] (1,2);
\draw[very thick] (3,3.5) -- (5.5,2);
\draw[very thick] (0.5,2) -- (3,0.5);
\node at (4.5,1) {\textbf{E}};
\node at (5.5,2.3) {\textbf{L}};
\node at (2.6,0.5) {\textbf{L'}};
\node at (6.8,1.8) {$a$};
\node at (3.2,3.9) {$b$};
\node at (3,2.8) {$(b_1,0)$};
\node at (3,1.1) {$(-b_1,0)$};
\node at (3,3.6) {$(b_2,0)$};
\node at (3,0.2) {$(-b_2,0)$};
\node at (4.5,2) {$(0,a_1)$};
\node at (1.6,2) {$(0,-a_1)$};
\node at (6,2) {$(0,a_2)$};
\node at (0,2) {$(0,-a_2)$};
\end{tikzpicture}
\caption{Plot of the constraints}
 \label{fig:constraints}
\end{figure}
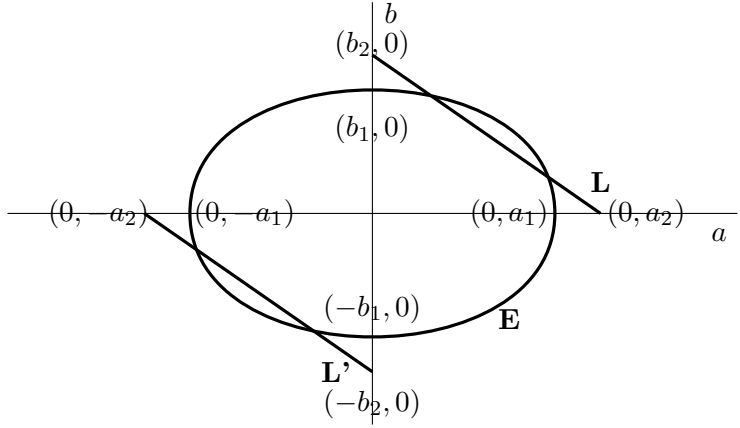
\bigskip

To find the optimal solution for Eq.~\eqref{objective}, we first fix $c$ and minimize $a^2$ with the constraints  $$a^2\bra{\Psi_i}Q^{-}\ket{\Psi_i}+b^2\bra{\Psi_i}Q^{+}\ket{\Psi_i}=1-c^2,\quad |a\bra{\Psi_i}Q^{-}\ket{\Psi_i}+b\bra{\Psi_i}Q^{+}\ket{\Psi_i}|>\sqrt{1-\etaa}.$$ 
We plot these constraints on $(a,b)$ plane in Figure~\ref{fig:constraints}. The ellipse $$E:~ a^2\bra{\Psi_i}Q^{-}\ket{\Psi_i}+b^2\bra{\Psi_i}Q^{+}\ket{\Psi_i}=1-c^2$$ intersects $a$-axis at $|a_1|=\sqrt{\frac{1-c^2}{\bra{\Psi_i}Q^{-}\ket{\Psi_i}}}$ and intersects $b$-axis at $|b_1|=\sqrt{\frac{1-c^2}{\bra{\Psi_i}Q^{+}\ket{\Psi_i}}}$. The lines $$L:~a\bra{\Psi_i}Q^{-}\ket{\Psi_i}+b\bra{\Psi_i}Q^{+}\ket{\Psi_i}=\sqrt{1-\etaa},\quad L':~a\bra{\Psi_i}Q^{-}\ket{\Psi_i}+b\bra{\Psi_i}Q^{+}\ket{\Psi_i}=-\sqrt{1-\etaa}$$ intersect $a$-axis at $|a_2|=\frac{\sqrt{1-\etaa}}{\bra{\Psi_i}Q^{-}\ket{\Psi_i}}$ and intersect $b$-axis at $|b_2|=\frac{\sqrt{1-\etaa}}{\bra{\Psi_i}Q^{+}\ket{\Psi_i}}$.  

First note that if $c^2>\etaa$, then there is no solution. For this, consider 
\begin{eqnarray*}
	&&1-\etaa < \br{a\bra{\Psi_i}Q^{-}\ket{\Psi_i}+b\bra{\Psi_i}Q^{+}\ket{\Psi_i}}^2\\
	&\leq& \br{\bra{\Psi_i}Q^{-}\ket{\Psi_i}+\bra{\Psi_i}Q^{+}\ket{\Psi_i}}\br{a^2\bra{\Psi_i}Q^{-}\ket{\Psi_i}+b^2\bra{\Psi_i}Q^{+}\ket{\Psi_i}}\\
	&=&\br{a^2\bra{\Psi_i}Q^{-}\ket{\Psi_i}+b^2\bra{\Psi_i}Q^{+}\ket{\Psi_i}} = 1-c^2.
\end{eqnarray*}
Thus, we assume that $c^2\leq \etaa$. Consider the first quadrant in Figure \ref{fig:constraints}. We can observe from the plot that $a=0$ is the minimum value of $a^2$ whenever the ellipse $E$ intersects the $b$-axis above the line $L$. This occurs when $$\sqrt{\frac{1-c^2}{\bra{\Psi_i}Q^{+}\ket{\Psi_i}}} > \frac{\sqrt{1-\etaa}}{\bra{\Psi_i}Q^{+}\ket{\Psi_i}} \rightarrow \bra{\Psi_i}Q^{+}\ket{\Psi_i} > \frac{1-\etaa}{1-c^2}.$$ But this is obvious, since the condition implies $\bra{\Psi_i}Q^{+}\ket{\Psi_i}>1-\etaa$, in which case there is a vector in $Q^{+}$ with high overlap with $\ket{\Psi_i}$ and hence the objective function is $0$.

So we assume that $\bra{\Psi_i}Q^{+}\ket{\Psi_i} < 1-\etaa$, in which case, for all $c$, the ellipse $E$ intersects the $b$-axis below the line $L$. To find the point of intersection, we simultaneously solve the equations for the line and the ellipse, that is $$a^2\bra{\Psi_i}Q^{-}\ket{\Psi_i}+b^2\bra{\Psi_i}Q^{+}\ket{\Psi_i}=1-c^2,\quad a\bra{\Psi_i}Q^{-}\ket{\Psi_i}+b\bra{\Psi_i}Q^{+}\ket{\Psi_i}=\sqrt{1-\etaa}.$$ The values of $a,b$ thus obtained are 
$$a=\sqrt{1-\etaa}-\sqrt{\frac{\bra{\Psi_i}Q^{+}\ket{\Psi_i}(\etaa-c^2)}{\bra{\Psi_i}Q^{-}\ket{\Psi_i}}}, \quad b=\sqrt{1-\etaa}+\sqrt{\frac{\bra{\Psi_i}Q^{-}\ket{\Psi_i}(\etaa-c^2)}{\bra{\Psi_i}Q^{+}\ket{\Psi_i}}}.$$
It is easy to verify that the solution satisfies the equations above. The other solution is with the signs reversed.  

Thus, we conclude that whenever $\bra{\Psi_i}Q^{+}\ket{\Psi_i} < 1-\etaa$, the minimum $|a|^2\bra{\Psi_i}Q^{-}\ket{\Psi_i}+|c|^2\bra{\theta}Q^{-}\ket{\theta}$ is 
$$\br{\sqrt{1-\etaa}-\sqrt{\frac{\bra{\Psi_i}Q^{+}\ket{\Psi_i}(\etaa-c^2)}{\bra{\Psi_i}Q^{-}\ket{\Psi_i}}}}^2\bra{\Psi_i}Q^{-}\ket{\Psi_i}+c^2\bra{\theta}Q^{-}\ket{\theta}.$$ 
This quantity is monotonically increasing with $c$. Hence the  quantity above is minimized when $c=0$. This justifies our intuition that the optimal vector lies in the plane $\{Q^{+}\ket{\Psi_i},Q^{-}\ket{\Psi_i}\}$. With this, we have found an overall minimum to be
$$\br{\sqrt{1-\etaa}-\sqrt{\frac{\bra{\Psi_i}Q^{+}\ket{\Psi_i}\etaa}{\bra{\Psi_i}Q^{-}\ket{\Psi_i}}}}^2\bra{\Psi_i}Q^{-}\ket{\Psi_i} =\br{\sqrt{(1-\etaa)\bra{\Psi_i}Q^{-}\ket{\Psi_i}}-\sqrt{\bra{\Psi_i}Q^{+}\ket{\Psi_i}\etaa}} ^2.$$ 
This proves the claim. 
\end{proof}

%

\section{Proof of Lemma \ref{cohlemma}}
\label{appen:cohlemma}
\begin{proof}

Fix an odd $k>1$. Let the messages prior to the $k$-th round be $(i_1,i_2\ldots i_{k-1})$. As defined in the protocol $\mathcal{P}$, the global quantum state before the $k$-th round is $\phi^{i_{k-1},i_{k-2}\ldots i_1}_{RBCAE_AE_B}$. Alice performs the measurement $$\{M^{1,i_{k-1},i_{k-2}\ldots i_2,i_1}_{ACE_A},M^{2,i_{k-1},i_{k-2}\ldots i_2,i_1}_{AXE_A}\ldots\}.$$ This implies
\begin{eqnarray}
\label{roundconvsplit}
\phi^{i_{k-1},i_{k-2}\ldots i_1}_{RBE_B} &=& \sum_{i_k} \Tr_{ACE_A}\br{M^{i_k,i_{k-1},i_{k-2}\ldots i_2,i_1}_{ACE_A}\phi^{i_{k-1},i_{k-2}\ldots i_1}_{RBCAE_BE_A}\br{M^{i_k,i_{k-1},i_{k-2}\ldots i_2,i_1}_{ACE_A}}^{\dagger}} \nonumber\\&=& \sum_{i_k} p_{i_k|i_{k-1},i_{k-2}\ldots i_2,i_1}\frac{\Tr_{ACE_A}\br{M^{i_k,i_{k-1},i_{k-2}\ldots i_2,i_1}_{ACE_A}\phi^{i_{k-1},i_{k-2}\ldots i_1}_{RBCAE_BE_A}\br{M^{i_k,i_{k-1},i_{k-2}\ldots i_2,i_1}_{ACE_A}}^{\dagger}}}{p_{i_k|i_{k-1},i_{k-2}\ldots i_2,i_1}}\nonumber\\&=& \sum_{i_k} p_{i_k|i_{k-1},i_{k-2}\ldots i_2,i_1}\phi^{i_k,i_{k-1},i_{k-2}\ldots i_2,i_1}_{RBE_B}.
\end{eqnarray}

A purification of $\phi^{i_{k-1},i_{k-2}\ldots i_1}_{RBE_B}$ on registers $RBCAE_BE_A$ is $\phi^{i_{k-1},i_{k-2}\ldots i_1}_{RBCAE_BE_A}$. Introduce a register $M_{k}$ (of sufficiently large dimension) and consider the pure state
\[\sum_{i_k}\sqrt{p_{i_k|i_{k-1},i_{k-2}\ldots i_2,i_1}}\ket{\phi^{i_k,i_{k-1},i_{k-2}\ldots i_2,i_1}}_{RBCAE_BE_A}\ket{i_k}_{M_k},\]
which purifies 
$$\sum_{i_k} p_{i_k|i_{k-1},i_{k-2}\ldots i_2,i_1}\phi^{i_k,i_{k-1},i_{k-2}\ldots i_2,i_1}_{RBE_B}$$ on register $RBCAE_BE_AM_k$.

By Uhlmann's theorem (Fact \ref{uhlmann}), there exists an isometry $U_{i_{k-1},i_{k-2}\ldots i_2,i_1}: \H_{ACE_A}\rightarrow \H_{ACE_AM_k}$ such that
\begin{equation}
\label{aliceunitary} 
U_{i_{k-1},i_{k-2}\ldots, i_1}\ket{\phi^{i_{k-1},i_{k-2}\ldots i_1}}_{RBCAE_BE_A} = \sum_{i_k}\sqrt{p_{i_k|i_{k-1},i_{k-2}\ldots, i_1}}\ket{\phi^{i_k,i_{k-1},i_{k-2}\ldots, i_1}}_{RBCAE_BE_A}\ket{i_k}_{M_k}.
\end{equation}
For $k=1$, introduce a register $M_1$ of sufficiently large dimension. Similar argument implies that there exists an isometry $U: \H_{ACE_A}\rightarrow \H_{ACE_AM_1}$ such that

\begin{equation}
\label{aliceunitary1} 
U\ket{\Psi}_{RBACE_BE_A} = \sum_{i_1}\sqrt{p_{i_1}}\ket{\phi^{i_1}}_{RBACE_BE_A}\ket{i_1}_{M_1}.
\end{equation}
For the case that $k$ is even, introduce register $M_k$ of sufficiently large dimension. Again by the similar argument, there exists an isometry $U_{i_{k-1},i_{k-2}\ldots i_2,i_1}: \H_{BE_B}\rightarrow \H_{BE_BM_k}$ such that 

\begin{equation}
\label{bobunitary} 
U_{i_{k-1},i_{k-2}\ldots, i_1}\ket{\phi^{i_{k-1},i_{k-2}\ldots i_1}}_{RBCAE_BE_A} = \sum_{i_k}\sqrt{p_{i_k|i_{k-1},i_{k-2}\ldots, i_1}}\ket{\phi^{i_k,i_{k-1},i_{k-2}\ldots, i_1}}_{RBCAE_BE_A}\ket{i_k}_{M_k}.
\end{equation}
Now, we recursively use Eq.~\eqref{aliceunitary}\eqref{aliceunitary1}\eqref{bobunitary}. Consider,
\begin{eqnarray*}
&&\ket{\Psi}_{RBCA}\ket{\theta}_{E_AE_B} = U^{\dagger}\sum_{i_1}\sqrt{p_{i_1}}\ket{\phi^{i_1}}_{RBCAE_BE_A}\ket{i_1}_{M_1} \\ &=& U^{\dagger}\sum_{i_1}\sqrt{p_{i_1}}U^{\dagger}_{i_1}\sum_{i_2}\sqrt{p_{i_2|i_1}}\ket{\phi^{i_2,i_1}}_{RBCAE_BE_A}\ket{i_2}_{M_2}\ket{i_1}_{M_1}\\ &=& U^{\dagger}\sum_{i_1,i_2}\sqrt{p_{i_1,i_2}}U_{i_1}^{\dagger}\ket{\phi^{i_2,i_1}}_{RBCAE_BE_A}\ket{i_2}_{M_2}\ket{i_1}_{M_1} \\&=& U^{\dagger}\sum_{i_1,i_2\ldots i_r}\sqrt{p_{i_1,i_2\ldots i_r}}U^{\dagger}_{i_1}U^{\dagger}_{i_2,i_1}\ldots U^{\dagger}_{i_r,i_{r-1}\ldots i_1}\ket{\tau^{i_r,i_{r-1}\ldots i_1}}_{RBCAB_0T_BE_A}\ket{i_r}_{M_r}\ldots\ket{i_1}_{M_1}.
\end{eqnarray*}
The last equality follows by recursion. This completes the proof.

\end{proof}

\section{Proof of Lemma \ref{goodcoh}}
\label{proof:goodcoh}
For the notational convenience in the proof below, we will sometimes represent the purified distance $P(\ketbra{v}, \ketbra{w})$ between two pure states $\ket{v}, \ket{w}$ as $\P(\ket{v}, \ket{w})$.
\begin{proof}
Let $\B$ be the set of tuples $(i_1,i_2\ldots i_r)$ for which $\F^2\br{\Psi_{RBC_0A},\tau^{i_r,i_{r-1}\ldots i_1}_{RBC_0A}}\leq 1-\eps$. Let $\G$ be the remaining set of tuples. From Corollary~\ref{cohequation} and the fact that $\Psi_{RBC_0A}$ is pure, it holds that
$$\sum_{i_1,i_2\ldots i_r}p_{i_1,i_2\ldots i_r}\F^2\br{\Psi_{RBC_0A},\tau^{i_r,i_{r-1}\ldots i_1}_{RBC_0A}}\geq 1-\eps^2.$$
Thus, $$(1-\eps)\sum_{(i_1,i_2\ldots i_r)\in\B}p_{i_1,i_2\ldots i_r}+ \sum_{(i_1,i_2\ldots i_r)\in\G}p_{i_1,i_2\ldots i_r} \geq 1-\eps^2,$$ which implies $\sum_{(i_1,i_2\ldots i_r)\in\B}p_{i_1,i_2\ldots i_r}\leq \eps$ and $\sum_{(i_1,i_2\ldots i_r)\in \G}p_{i_1,i_2\ldots i_r}\geq 1-\eps$. 

Define $p'_{i_1,i_2\ldots i_r}\defeq \frac{p_{i_1,i_2\ldots i_r}}{\sum_{{i_1,i_2\ldots i_r}\in \G} p_{i_1,i_2\ldots i_r}}$, if $(i_1,i_2\ldots i_r) \in \G$ and $p'_{i_1,i_2\ldots i_r}\defeq 0$ if $(i_1,i_2\ldots i_r)\in \B $. For all $(i_1,i_2\ldots i_r)\in \G$, $\F^2\br{\Psi_{RBC_0A},\tau^{i_r,i_{r-1}\ldots i_1}_{RBC_0A}}\geq 1-\eps$. Thus by Uhlmann's theorem (Fact \ref{uhlmann}),  there exists a pure state $\kappa^{i_r,i_{r-1}\ldots i_1}_{CE_AT_B} \in \cD(\H_{CE_AT_B})$ such that 
\begin{equation}
\label{goodproperty} 
\F^2\br{\Psi_{RBC_0A}\otimes\kappa^{i_r,i_{r-1}\ldots i_1}_{CE_AT_B},\tau^{i_r,i_{r-1}\ldots i_1}_{RBCAC_0T_BE_A}}\geq 1-\eps. 
\end{equation}
Consider,
\begin{eqnarray*}
&&\P\br{\sum_{i_1,i_2\ldots i_r}\sqrt{p_{i_1,i_2\ldots i_r}}\tau^{i_r,i_{r-1}\ldots i_1}_{RBCAC_0T_BE_A}\ket{i_r}_{M_r}\ldots\ket{i_1}_{M_1},\atop\sum_{i_1,i_2\ldots i_r}\sqrt{p'_{i_1,i_2\ldots i_r}}\tau^{i_r,i_{r-1}\ldots i_1}_{RBCAC_0T_BE_A}\ket{i_r}_{M_r}\ldots\ket{i_1}_{M_1}}\nonumber \\&=& \sqrt{1-\br{\sum_{i_1,i_2\ldots i_r}\sqrt{p_{i_1,i_2\ldots i_r}p'_{i_1,i_2\ldots i_r}}}^2} = \sqrt{1-\br{\sum_{i_1,i_2\ldots i_r\in \G}p_{i_1,i_2\ldots i_r}}}\leq \sqrt{\eps}
\end{eqnarray*}
 and 
\begin{eqnarray*}
&&\P\br{\sum_{i_1,i_2\ldots i_r}\sqrt{p'_{i_1,i_2\ldots i_r}}\tau^{i_r,i_{r-1}\ldots i_1}_{RBCAC_0T_BE_A}\ket{i_r}_{M_r}\ldots\ket{i_1}_{M_1},\atop \sum_{i_1,i_2\ldots i_r}\sqrt{p'_{i_1,i_2\ldots i_r}}\Psi_{RBC_0A}\otimes\kappa^{i_r,i_{r-1}\ldots i_1}_{CE_AT_B}\ket{i_r}_{M_r}\ldots\ket{i_1}_{M_1}} \\&=& \sqrt{1-\br{\sum_{i_1,i_2\ldots i_r}p'_{i_1,i_2\ldots i_r}\F\br{\tau^{i_r,i_{r-1}\ldots i_1}_{RBCAC_0T_BE_A},\Psi_{RBC_0A}\otimes\kappa^{i_r,i_{r-1}\ldots i_1}_{CE_AT_B}}}^2} \leq \sqrt{\eps}. \quad (\text{Equation \ref{goodproperty}})
\end{eqnarray*}
These together imply, using triangle inequality for purified distance (Fact \ref{fact:trianglepurified}),
\begin{eqnarray*}
\P\br{\sum_{i_1,i_2\ldots i_r}\sqrt{p_{i_1,i_2\ldots i_r}}\tau^{i_r,i_{r-1}\ldots i_1}_{RBCAC_0T_BE_A}\ket{i_r}_{M_r}\ldots\ket{i_1}_{M_1}\atop\sum_{i_1,i_2\ldots i_r}\sqrt{p'_{i_1,i_2\ldots i_r}}\Psi_{RBC_0A}\otimes\kappa^{i_r,i_{r-1}\ldots i_1}_{CE_AT_B}\ket{i_r}_{M_r}\ldots\ket{i_1}_{M_1}}\leq 2\sqrt{\eps}. 
\end{eqnarray*}
Thus, from corollary \ref{cohequation}, we have $$\P\br{\Psi_{RBCA}\otimes\theta_{E_AE_B},U^{\dagger}U_2^{\dagger}\ldots U_{r+1}^{\dagger}\sum_{i_1,i_2\ldots i_r}\sqrt{p'_{i_1,i_2\ldots i_r}}\Psi_{RBC_0A}\otimes\kappa^{i_r,i_{r-1}\ldots i_1}_{CE_AT_B}\ket{i_r}_{M_r}\ldots\ket{i_1}_{M_1}}\leq 2\sqrt{\eps}.$$
Furthermore, we have 
\begin{eqnarray*}
\sum_{i_1,i_2\ldots i_r} p'_{i_1,i_2\ldots i_r}\log(i_1\cdot i_2\ldots i_r) &\leq& \frac{1}{1-\eps}\sum_{{i_1,i_2\ldots i_r}\in \G}p_{i_1,i_2\ldots i_r}\log(i_1\cdot i_2\ldots i_r) \\ &\leq& \frac{1}{1-\eps}\sum_{i_1,i_2\ldots i_r}p_{i_1,i_2\ldots i_r}\log(i_1\cdot i_2\ldots i_r) =\frac{C}{1-\eps}.
 \end{eqnarray*}
This completes the proof.
\end{proof}

\section{Proof of Lemma \ref{exptoworst}}
\label{append:exptoworst}
\begin{proof}
From Lemma~\ref{convepr}, we have $$\P\br{\omega_{RBCA}\otimes\theta_{E_AE_B}, \bar{\nu}_{RBCAE_AE_B}}\leq \sqrt{\frac{8\eps}{e_d\cdot d}},$$ and $$\sum_{i_1,i_2\ldots i_r}p'_{i_1,i_2\ldots i_r}\log(i_1\cdot i_2\ldots i_r)\leq \frac{C}{1-\eps},$$ where 
$$\ket{\bar{\nu}}_{RBCAE_AE_B}=U^{\dagger}U_2^{\dagger}\ldots U_{r+1}^{\dagger}\sum_{i_1,i_2\ldots i_r}\sqrt{p'_{i_1,i_2\ldots i_r}}\omega_{RBC_0A}\otimes\kappa^{i_r,i_{r-1}\ldots i_1}_{CE_AT_B}\ket{i_r}_{M_r}\ldots\ket{i_1}_{M_1},$$ 
as defined in Lemma~\ref{convepr}.
Let $\B'$ be the set of tuples $(i_1,i_2\ldots i_r)$ which satisfy $i_1\cdot i_2\ldots\cdot i_r>2^{\frac{C}{(1-\eps)\mu}}$. Let $\G'$ be the set of rest of the tuples. Then $$\frac{C}{(1-\eps)} > \sum_{i_1,i_2\ldots i_r\in \B'}p'_{i_1,i_2\ldots i_r}\log(i_1\cdot i_2\ldots i_r) > \frac{C}{(1-\eps)\mu}\sum_{i_1,i_2\ldots i_r\in \B'}p'_{i_1,i_2\ldots i_r}.$$ This implies $\sum_{i_1,i_2\ldots i_r\in \B'}p'_{i_1,i_2\ldots i_r} < \mu$.
 Define a new probability distribution $q_{i_1,i_2\ldots i_r}\defeq \frac{p'_{i_1,i_2\ldots i_r}}{\sum_{(i_1,i_2\ldots i_r)\in \G'}p'_{i_1,i_2\ldots i_r}}$ for all $(i_1,i_2\ldots i_r)\in \G'$ and $q_{i_1,i_2\ldots i_r}=0$ for all $(i_1,i_2\ldots i_r)\in \B'$. 
Define $$\ket{\pi}_{RBCAE_AE_B}\defeq U^{\dagger}U_2^{\dagger}\ldots U_{r+1}^{\dagger}\sum_{i_1,i_2\ldots i_r\in \G'}\sqrt{q_{i_1,i_2\ldots i_r}}\omega_{RBC_0A}\otimes\kappa^{i_r,i_{r-1}\ldots i_1}_{CE_AT_B}\ket{i_r}_{M_r}\ldots\ket{i_1}_{M_1}.$$
Consider,
\begin{eqnarray*}
	&&\P\br{\pi_{RBCAE_AE_B}, \bar{\nu}_{RBCAE_AE_B}}\\
	&=&\sqrt{1-\br{\sum_{i_1,i_2\ldots i_r}\sqrt{p'_{i_1,i_2\ldots i_r}q_{i_1,i_2\ldots i_r}}}^2} = \sqrt{1-\sum_{(i_1,i_2\ldots i_r) \in \G'}p'_{i_1,i_2\ldots i_r}}\leq \sqrt{\mu}.
\end{eqnarray*}
Thus, the triangle inequality for purified distance (Fact \ref{fact:trianglepurified}) implies
\begin{eqnarray}
\label{eq:closegoodstate}
\P\br{\omega_{RBCA}\otimes\theta_{E_AE_B}, \pi_{RBCAE_AE_B}}\leq \sqrt{\frac{8\eps}{e_d\cdot d}}+\sqrt{\mu}.
\end{eqnarray}
Let $\T$ be the set of all tuples $(i_1,i_2\ldots i_k)$ (with $k\leq r$) that satisfy the following property: there exists a set of positive integers $\{i_{k+1},i_{k+2}\ldots i_r\}$ such that $(i_1,i_2\ldots i_k,i_{k+1}\ldots i_r)\in \G'$. Consider the following protocol $\mathcal{P'}$.
\bigskip
\begin{mdframed}
\bigskip

\textbf{Input:} A quantum state in registers $RBCAE_AE_B$.

\begin{itemize}
\item Alice applies the isometry $U:\H_{ACE_A}\rightarrow \H_{ACE_AM_1}$ (Definition \ref{shortunitaries}). She introduces a register $M'_1\equiv M_1$ in the state $\ket{0}_{M'_1}$ and performs the following unitary $W_1: \H_{M_1M'_1}\rightarrow \H_{M_1M'_1}$: $$W_1\ket{i}_{M_1}\ket{0}_{M'_1}=\ket{i}_{M_1}\ket{i}_{M'_1} \quad \text{if } (i)\in \T \quad, \quad W_1\ket{i}_{M_1}\ket{0}_{M'_1}=\ket{i}_{M_1}\ket{0}_{M'_1} \quad \text{if } (i)\notin \T.$$ 
She sends $M'_1$ to Bob.
\item Bob introduces a register $M'_2\equiv M_2$ in the state $\ket{0}_{M'_2}$. If he receives $\ket{0}_{M'_1}$ from Alice, he performs no operation. Else he applies the isometry $U_2: \H_{BE_BM'_1}\rightarrow \H_{BE_BM'_1M_2}$ and then performs the following unitary $W_2: \H_{M'_1M_2M'_2}\rightarrow \H_{M'_1M_2M'_2}$: $$W_1\ket{i}_{M'_1}\ket{j}_{M_2}\ket{0}_{M'_2}=\ket{i}_{M'_1}\ket{j}_{M_2}\ket{j}_{M'_2} \quad \text{if } (i,j)\in \T$$ and  $$W_1\ket{i}_{M'_1}\ket{j}_{M_2}\ket{0}_{M'_2}=\ket{i}_{M'_1}\ket{j}_{M_2}\ket{0}_{M'_2} \quad \text{if }(i,j)\notin \T.$$ 

He sends $M'_2$ to Alice. 

\item For every odd round $k>1$, Alice introduces a register $M'_k\equiv M_k$ in the state $\ket{0}_{M'_k}$. If she receives $\ket{0}_{M'_{k-1}}$ from Bob, she performs no further operation. Else, she applies the isometry $$U_k: \H_{ACE_AM_1M'_2M_3\ldots M'_{k-1}}\rightarrow \H_{ACE_AM_1M'_2M_3\ldots M'_{k-1}M_k}$$ and performs the following unitary $W_k: \H_{M_1M'_2\ldots M'_{k-1}M_kM'_k}\rightarrow \H_{M_1M'_2\ldots M'_{k-1}M_kM'_k}$: 
$$W_k\ket{i_1}_{M_1}\ket{i_2}_{M'_2}\ldots\ket{i_k}_{M_k}\ket{0}_{M'_k}=
\ket{i_1}_{M_1}\ket{i_2}_{M'_2}\ldots\ket{i_k}_{M_k}\ket{i_k}_{M'_k} \quad \text{if } (i_1,i_2\ldots i_k)\in \T$$ and  $$W_k\ket{i_1}_{M_1}\ket{i_2}_{M'_2}\ldots\ket{i_k}_{M_k}\ket{0}_{M'_k}=
\ket{i_1}_{M_1}\ket{i_2}_{M'_2}\ldots\ket{i_k}_{M_k}\ket{0}_{M'_k} \quad \text{if }(i_1,i_2\ldots i_k)\notin \T.$$ 

She sends $M'_k$ to Bob.

\item For every even round $k>2$, Bob introduces a register $M'_k\equiv M_k$ in the state $\ket{0}_{M'_k}$. If he receives $\ket{0}_{M'_{k-1}}$ from Alice, he performs no further operation.. Else, he applies the isometry $U_k: \H_{BE_BM'_1M_2M'_3\ldots M'_{k-1}}\rightarrow \H_{BE_BM'_1M_2M'_3\ldots M'_{k-1}M_k}$ and performs the following unitary $W_k: \H_{M'_1M_2\ldots M'_{k-1}M_kM'_k}\rightarrow \H_{M'_1M_2\ldots M'_{k-1}M_kM'_k}$: 
$$W_k\ket{i_1}_{M'_1}\ket{i_2}_{M_2}\ldots\ket{i_k}_{M_k}\ket{0}_{M'_k}=
\ket{i_1}_{M'_1}\ket{i_2}_{M_2}\ldots\ket{i_k}_{M_k}\ket{i_k}_{M'_k} \quad \text{if } (i_1,i_2\ldots i_k)\in \T$$ and  $$W_k\ket{i_1}_{M'_1}\ket{i_2}_{M_2}\ldots\ket{i_k}_{M_k}\ket{0}_{M'_k}=
\ket{i_1}_{M'_1}\ket{i_2}_{M_2}\ldots\ket{i_k}_{M_k}\ket{0}_{M'_k} \quad \text{if }(i_1,i_2\ldots i_k)\notin \T.$$ 

He sends $M'_k$ to Alice.

\item After round $r$, if Bob receives $\ket{0}_{M'_r}$ from Alice, he performs no further operation. Else he applies the unitary $U^b_{r+1}: \H_{BE_BM'_1M_2M'_3\ldots M'_r}\rightarrow \H_{BC_0T_BM'_1M_2M'_3\ldots M'_r}$. Alice applies the unitary $U^a_{r+1}: \H_{ACE_AM_1M'_2M_3\ldots M_r}\rightarrow \H_{ACE_AM_1M'_2M_3\ldots M_r}$. They trace out all of their registers except $A,B,C_0$. 
\end{itemize}
\bigskip
\end{mdframed}
\bigskip
Let $\E: \cL(\H_{RBCAE_AE_B})\rightarrow \cL(\H_{RBC_0A})$ be the quantum map generated by $\mathcal{P'}$. For any $k$, if any of the parties receive the state $\ket{0}_{M'_k}$, let this event be called \textit{abort}.

We show the following claim.
\begin{claim} It holds that
 $\E(\pi_{RBCAE_AE_B})=\omega_{RBC_0A}$
\end{claim}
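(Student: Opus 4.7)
The plan is to verify directly that the protocol $\mathcal{P}'$ applied to $\pi$ outputs $\omega_{RBC_0A}$. The key observation is that $\pi$ is defined precisely as the pre-image of a simple structured target state under the composite forward isometry, and the bookkeeping operations $W_k$ are designed to execute trivially on exactly this target.

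First I would consolidate notation: write $V \defeq U_{r+1}U_r\cdots U_2\, U$ for the composite forward isometry, and define the target vector
\[
\ket{\mathrm{target}} \defeq \sum_{(i_1,\ldots,i_r)\in \G'}\sqrt{q_{i_1,\ldots,i_r}}\,\ket{\omega}_{RBC_0A}\otimes\ket{\kappa^{i_r,\ldots,i_1}}_{CE_AT_B}\otimes\ket{i_r}_{M_r}\cdots\ket{i_1}_{M_1}.
\]
By construction $\ket{\pi} = V^{\dagger}\ket{\mathrm{target}}$, so $V\ket{\pi} = \ket{\mathrm{target}}$ (we may assume $\ket{\mathrm{target}}$ lies in the image of $V$, as otherwise $\ket{\pi}$ would not be a unit vector, contradicting its role as a pure state in Lemma~\ref{convepr}). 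Thus applying the forward isometries of $\mathcal{P}'$ to $\pi$ recovers the target superposition.

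Next I would observe that the copy unitaries $W_k$ commute with all subsequent forward isometries $U_{k'}$ for $k' > k$. Indeed, $W_k$ acts nontrivially only on $M'_k$ (it introduces a copy controlled by the basis of $M_k$ and by the earlier $M_j$'s), and it leaves the registers $M_1,\ldots,M_k$ — on which each $U_{k'}$ is diagonally controlled — untouched. Hence I can rewrite the whole action of $\mathcal{P}'$ on $\pi$ as
\[
\bigl(W_r W_{r-1}\cdots W_1\bigr)\,V\,\ket{\pi}\;=\;\bigl(W_r W_{r-1}\cdots W_1\bigr)\ket{\mathrm{target}}.
\]
Because $\ket{\mathrm{target}}$ is supported only on tuples $\vec{i}\in\G'$, every prefix $(i_1,\ldots,i_k)$ lies in $\T$; consequently each $W_k$ performs the "copy" branch rather than the "abort" branch, producing
\[
\sum_{\vec{i}\in \G'}\sqrt{q_{\vec{i}}}\,\ket{\omega}_{RBC_0A}\otimes\ket{\kappa^{i_r,\ldots,i_1}}_{CE_AT_B}\otimes\ket{\vec{i}}_{M_1\cdots M_r}\otimes\ket{\vec{i}}_{M'_1\cdots M'_r}.
\]

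Finally, tracing out every register except $RBC_0A$ collapses the $M_k$--$M'_k$ pairs and the $\kappa$ states into inner-product factors that are identically $1$, leaving the classical probabilities $q_{\vec{i}}$ which sum to $1$ over $\G'$. The only remaining factor on $RBC_0A$ is $\ketbra{\omega}_{RBC_0A}$, which is independent of $\vec{i}$ and factors outside the sum. Therefore $\E(\pi_{RBCAE_AE_B}) = \omega_{RBC_0A}$, proving the claim. The main technical subtlety — and the only place where one has to be careful — is the commutation argument that lets us collect all the $W_k$'s at the end; once that is verified, the rest is just linearity and partial-trace bookkeeping.
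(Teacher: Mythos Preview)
Your overall strategy matches the paper's: both proofs hinge on showing that the protocol never aborts on $\pi$ (equivalently, every $W_k$ takes the ``copy'' branch), after which the output is manifestly $\omega_{RBC_0A}$. There is, however, one genuine gap in your commutation step that you should tighten.

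In $\mathcal{P}'$ the isometry applied in round $k$ is not the unprimed $U_k$ of Definition~\ref{shortunitaries}; it is $U_k$ with the other party's control registers replaced by the \emph{primed} copies (e.g.\ Bob applies $U_2$ controlled on $M'_1$, not $M_1$). Your commutation claim --- that $W_k$ commutes with subsequent $U_{k'}$ because $W_k$ only writes to $M'_k$ while $U_{k'}$ is diagonal in $M_1,\ldots,M_{k'-1}$ --- is true for the unprimed $U_{k'}$, but it does not by itself let you rewrite the action of $\mathcal{P}'$ as $(W_r\cdots W_1)V$, since $\mathcal{P}'$ uses the primed versions. The identification ``primed $U_k$ = unprimed $U_k$'' holds only on states where $M'_j$ already equals $M_j$, i.e.\ only after you know the earlier $W_j$'s copied. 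So the no-abort fact is the \emph{enabling} condition for your rewriting, not a consequence of it, and the argument as written is circular.

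The fix (which is what the paper does, tersely) is to run an induction on $k$: since $U_{k+1}^\dagger\cdots U_{r+1}^\dagger$ is diagonal in $M_1,\ldots,M_k$, the $M_1,\ldots,M_k$ content of $U_k\cdots U_2 U\ket{\pi}=U_{k+1}^\dagger\cdots U_{r+1}^\dagger\ket{\mathrm{target}}$ coincides with that of $\ket{\mathrm{target}}$, hence is supported on prefixes in $\T$; therefore $W_k$ copies, $M'_k=M_k$, and in round $k{+}1$ the primed and unprimed controls agree. Once this is in place, your final partial-trace computation is correct.
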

\begin{proof}
We argue that the protocol never aborts when acting on $\pi_{RBCAE_AE_B}$. Consider the first round of the protocol. Define the projector $\Pi\defeq \sum_{i: (i)\notin \T}\ketbra{i}_{M_1}$. From Definition \ref{shortunitaries}, it is clear that the isometry $U^{\dagger}_2U^{\dagger}_3\ldots U^{\dagger}_{r+1}$ is of the form $\sum_i \ketbra{i}_{M_1}\otimes V_i$, for some set of isometries $\{V_i\}$ . Thus, from the definition of $\pi_{RBCAE_AE_B}$ (in which the summation is only over the tuples $(i_1,i_2\ldots i_r)\in \G'$), it holds that $$\Pi U\pi_{RBCAE_AE_B}=0.$$  This implies that Bob does not receive the state $\ket{0}_{M'_1}$ and hence he does not aborts.

Similar argument applies to other rounds, which implies that the protocol never aborts. Thus, the state at the end of the protocol is $$  \Tr_{CE_AT_B}(U_{r+1}U_{r}\ldots U_2U\pi_{RBCAE_AE_B}) = \omega_{RBC_0A}.$$
\end{proof}
Thus, from Equation \ref{eq:closegoodstate}, it holds that $$\P(\E(\omega_{RBCA}\otimes\theta_{E_AE_B}),\omega_{RBC_0A})\leq \sqrt{\frac{8\eps}{e_d\cdot d}}+\sqrt{\mu}.$$
Quantum communication cost of the protocol is at most $$\text{max}_{(i_1,i_2\ldots i_r)\in \G'}(\log((i_1+1)\cdot (i_2+1)\ldots (i_r+1)) \leq 2\cdot\text{max}_{(i_1,i_2\ldots i_r)\in \G'}(\log(i_1\cdot i_2\ldots i_r)\leq \frac{2C}{(1-\eps)\mu}.$$ This completes the proof.  
\end{proof}

\end{document}